\documentclass[11pt]{amsart}
\usepackage{graphicx} 

\usepackage{comment}
\usepackage{subcaption}
\usepackage{amsmath}
\usepackage{amssymb}
\usepackage{amsthm}
\usepackage{graphicx}
\usepackage{mathtools}
\usepackage{url}
\usepackage{enumerate}
\usepackage{color}
\usepackage{hyperref}
\usepackage{esint}

\usepackage[margin=1.0in]{geometry}

\newtheorem{theorem}{Theorem}

\newtheorem{proposition}[theorem]{Proposition}
\newtheorem{lemma}[theorem]{Lemma}

\newtheorem{corollary}[theorem]{Corollary}

\newtheorem{fact}[theorem]{Fact}

\newtheorem{prop}[theorem]{Proposition}

\theoremstyle{definition}
\newtheorem{definition}[theorem]{Definition}
\newtheorem{assumption}[theorem]{Assumption}
\newtheorem{question}[theorem]{Question}

\theoremstyle{remark}
\newtheorem{remark}[theorem]{Remark}

               \def\bfP{{\bf{P}}}          

\renewcommand{\H}{\mathbb{H}}
\newcommand{\G}{\Gamma}
\newcommand{\s}{\sigma}

\newcommand{\EE}{\mathbb{E}}
\newcommand{\PP}{\mathrm{P}}
\newcommand{\R}{\mathbb{R}}
\newcommand{\Z}{\mathbb{Z}}
\newcommand{\bN}{\mathbb{N}}
\newcommand{\M}{\mathbb{M}}
\newcommand{\bX}{\mathbb{X}}

\newcommand{\Mf}{\cP_{\mathsf{f}}}

\newcommand{\cO}{\mathcal{O}}
\newcommand{\cU}{\mathcal{U}}
\newcommand{\cP}{\mathtt{PM}}
\newcommand{\Closed}{\mathtt{Closed}}

\newcommand{\ann}{\operatorname{ann}}

\newcommand{\Hom}{\operatorname{Hom}}

\newcommand{\Ind}{\operatorname{Ind}}
\newcommand{\isom}{\operatorname{isom}}

\newcommand{\Isom}{\operatorname{Isom}}
\newcommand{\Sym}{\operatorname{Sym}}
\newcommand{\sym}{\operatorname{Sym}}
\newcommand{\density}{\operatorname{density}}
\newcommand{\opt}{\operatorname{opt}}
\newcommand{\per}{\operatorname{per}}
\newcommand{\Pois}{\operatorname{Pois}}
\newcommand{\Prob}{\operatorname{Prob}}
\newcommand{\Stab}{\operatorname{Stab}}
\newcommand{\Dopt}{D_{\opt}}
\newcommand{\Dper}{D_{\per}}
\newcommand{\DPois}{D_{\Pois}}
\newcommand{\Vol}{\operatorname{Vol}}
\newcommand{\vol}{\operatorname{Vol}}
\newcommand{\lam}{\lambda}
\newcommand{\Lam}{\Lambda}
\newcommand{\mes}{\mathrm{Vol}}
\newcommand{\dist}{\mathrm{dist}}
\newcommand{\eps}{\varepsilon}
\renewcommand{\epsilon}{\eps}

\newcommand{\one}{\mathbf{1}}

\newcommand{\Gibbs}{\operatorname{Gibbs}}

\renewcommand{\P}{\mathbb{P}}

\def\cc{{\curvearrowright}}

\def\bfPi{{\mbox{\boldmath $\Pi$}}}

\title{A phase transition for the hard sphere model on the hyperbolic plane}

\begin{document}

\author{Lewis Bowen}
\address{University of Texas at Austin, Department of Mathematics}
\email{lpbowen@math.utexas.edu}

\author{Marcus Michelen}
\address{Northwestern University, Department of Mathematics}
\email{michelen@northwestern.edu}

\author{Will Perkins}
\address{Georgia Institute of Technology, School of Computer Science}
\email{math@willperkins.org}

\begin{abstract}
    The hard sphere model is a classical model from statistical physics in which particles are represented by equal-sized spheres.  Longstanding predictions from the physics literature indicate that in $\R^2$ and $\R^3$ the system undergoes a phase transition, but it remains a major open problem to confirm this.  We prove the existence of a phase transition for this model in the hyperbolic plane.  
\end{abstract}
\maketitle

\section{Introduction}
\label{secIntro}

The hard sphere model is a classical model of a gas from statistical physics in which particles are represented by equal-sized spheres; the only interaction between particles is a hard-core exclusion: two spheres are not allowed to overlap.  

Classically, the model is studied in the physically relevant settings of $2$- and $3$-dimensional Euclidean space.   In $\R^3$ the model is expected to undergo a crystallization phase transition~\cite{alder1957phase,wood1957preliminary}, representing the freezing of a gas to a solid; in $\R^2$ a different type of phase transition is expected, with a breaking of rotational symmetry rather than translational~\cite{Krauth,richthammer2007translation}.  The hard sphere model has close connections to the classical sphere packing problem: determining the densest possible packing of equal-sized spheres in Euclidean space.  Maximum packings are the ground states for the hard sphere model, and  crystallization occurs when a random sphere packing of sub-optimal density aligns with a ground state. The study of the hard sphere model and the sphere packing problem in higher dimensions or in other underlying spaces is of interest both in physics and mathematics for several reasons including understanding the phase transition phenomenon; drawing analogies to  models of glasses; and studying idealized models of error-correcting codes~\cite{charbonneau2021three,cohn2016sphere,parisi2010mean}. 
However, despite over a century of work on the model (dating back at least as far as the work of Boltzmann), the existence of a phase transition in the hard sphere model in $\R^d$ has not been proved, for any dimension $d$. See~\cite{lowen1994melting,lowen2000fun,mulero2008theory} for background on the model.

Our main result is that the hard sphere model on the hyperbolic plane $\H^2$  exhibits a phase transition.  

We now define the model and the notion of a phase transition.  We begin by defining the hard sphere model in finite volume, i.e. on a bounded region in $\R^d$, $\H^d$, or  some other metric measure space of finite measure.

Let $(\bX, \dist, \mes)$ be a metric measure space and $\Lambda \subset \bX$ with $\mes(\Lambda)<\infty$. For $\lam > 0$ and $r > 0$, \textbf{the hard sphere model} on $\Lambda$ at activity $\lam$ and radius $r$ is the probability distribution on  finite point sets $X \subset \bX$ given by taking a Poisson point process $X$ of intensity $\lam$ on $\Lam$ and conditioning on the event that $\dist(x,y)\ge 2r$ for all distinct $x, y \in X$.  Let $\mu_{\Lambda, r,\lam}$ denote the law of the hard sphere model.  The support of $\mu_{\Lambda, r,\lam}$ is the set of all  point sets $X\subset \Lambda$ so that all pairwise distances are at least $2r$. We call this set $\cP_r(\Lambda)$ (the space of $2r$-separated point measures on $\Lambda$), noting that any such $X$ is the set of centers of a packing of interior-disjoint spheres of radius $r$ in $\Lambda$ (where only the centers of the spheres are  required to lie  in $\Lambda$).

The radius parameter $r$ governs the interaction distance (and size of the spheres) while the activity parameter (or ``chemical potential'' or ``fugacity'') $\lam$ governs the typical size of $X$.

If the space $\bX$ has infinite volume then we need a different approach to define the hard sphere model on $\bX$; this is accomplished via the Dobrushin-Lanford-Ruelle (DLR) equations~\cite{dobruschin1968description,lanford1969observables}.   We say a probability measure $\mu$ on $\cP_r(\bX)$ is {\bf Gibbs for the hard sphere model at radius $r$ and activity $\lambda$} if for every finite-volume measurable subset $\Lambda \subset \bX$, if $\bfP$ is a random sample of $\mu$ then the law of $\bfP$ restricted to $\Lambda$ conditioned on $\bfP \cap \Lambda^c$ is the finite volume  hard sphere model at activity $\lam$ on the set $\Lambda \setminus \{ x \in \Lambda: \dist (x, \bfP \cap \Lambda^c) <2r\}$.  In other words, the conditional distribution of $\mu$ restricted to $\Lambda$ given the realization of the points in $\Lambda^c$ is the finite-volume hard sphere model on $\Lambda$ with {\bf boundary conditions} imposed by the points  $\bfP \cap \Lambda^c$.   For brevity, we will say such a measure is $(r,\lambda)$-Gibbs.

A compactness argument in the space of probability measures on $\cP_r(\bX)$ shows there always exists at least one $(r,\lambda)$-Gibbs measure on $\bX$.  The main question on the topic is whether or not there is a unique Gibbs measure for a given $\bX, r, \lam$.  A {\bf phase transition} is the transition, as $\lam$ changes with $\bX$ and $r$ fixed, between uniqueness and non-uniqueness of Gibbs measure.  See e.g.~\cite{ruelle1999statistical,georgii1997stochastic,friedli2017statistical} for more   discussion.  Under fairly general conditions, a perturbative approach shows that for small $\lambda$ there is uniqueness of Gibbs measure (see e.g.\ \cite{jansen2019cluster} for a survey at a wide level of generality).  We introduce a threshold in $\lambda$ to capture a change from uniqueness to non-uniqueness:

\begin{definition}
Let $\lambda_{u}(\bX,r) \in [0,\infty]$ be the infimum over all activities $\lambda$ such that there are at least two distinct $(r,\lam)$-Gibbs measure for the hard sphere model on $\bX$. 
\end{definition}

We say a {\bf phase transition occurs in the hard sphere model on $\bX$ with spheres of radius $r$ if} $\lambda_u(\bX,r) \in (0,\infty)$.  The fact that there is uniqueness of Gibbs measure for small $\lambda$ assures that $\lambda_u(\bX,r) > 0$ and so the main work is to show $\lambda_u(\bX,r) < \infty$.
We note that repulsive models such as the hard sphere model need not obey monotonicity properties as $\lambda$ increases.  In particular, it may be the case that as $\lambda$ increases, there are multiple changes from uniqueness to non-uniqueness and back; examples where there is more than one phase transition have been constructed for analogous discrete models \cite{brightwell1999nonmonotonic}.

In Euclidean spaces, the question of existence of a phase transition  does not depend on the radius $r$ due to scale invariance, but in other spaces like $\H^d$ the answer might depend on $r$.    Currently only lower bounds on $\lam_u(\R^d,r)$ are known,  and it is a major open question whether $\lambda_u(\R^d, r)$ is finite for any dimension $d$.   There is no phase transition in dimension one: $\lam_u(\R,r)=\infty$. {The best bounds on absence of phase transitions in Euclidean space are given by~\cite{gobel2026uniquenessanalyticitymixinggibbs}; in greater generality (including $\H^d$), the best bounds are given by~\cite{michelen2021potential}.  A general bound that uses very little geometry is that $\lam_u(\bX,r) \ge e/ v(\bX,2r)$, where $v(\bX,2r)$ is  the volume of the ball of radius $2r$ in $\bX$~\cite{michelen2020analyticity,michelen2021potential}.}

The main result of this paper is that  there is an open unbounded set of radii $r$ such that $\lambda_u(\H^2, r)$ is finite; that is, a phase transition occurs in the hard sphere model on the hyperbolic plane.

\begin{theorem}\label{T:main}

There exists an unbounded open set $R_2 \subset (0,\infty)$ such that for every $r \in R_2$, $\lambda_{u}(\H^2,r) < \infty$. In fact, $R_2$ contains an interval of the form $(\rho,\infty)$ for some $\rho<\infty$.
\end{theorem}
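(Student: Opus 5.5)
Our strategy is to show non-uniqueness at large activity by exploiting that, for suitable radii, there is a rigid periodic packing of $\H^2$ which is a strict local optimum. Boundary conditions drawn from different translates of this packing then persist into the bulk. Concretely, we use the regular triangle tilings $\{3,q\}$ of $\H^2$. For $q \ge 7$, the vertices of the $\{q,3\}$ tiling (equivalently, the centers of the tiles of $\{3,q\}$) give a packing in which each disk touches exactly $3$ neighbors. The more useful choice is the vertex set of the $\{3,q\}$ tiling: a packing in which each disk of radius $r_q$ touches $q$ others and all interstices are curvilinear triangles. Here $2r_q$ is the side length of the equilateral triangle with angles $2\pi/q$.

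These radii form a discrete set, so openness and the interval $(\rho,\infty)$ in Theorem~\ref{T:main} must come from a stability argument. We would show that for $r$ in a neighborhood of each $r_q$, and more robustly for all large $r$, there is a packing that is \emph{locally jammed with a uniform gap in configuration space}. For large $r$, hyperbolic geometry helps: a disk of radius $r$ has circumference of order $e^r$, so each disk has many neighbors, and the triangle-tiling structure is essentially forced. Any $2r$-separated set that is locally dense, in the sense that the missing volume per disk is small, must locally coincide, up to small perturbation, with a $\{3,q\}$-like configuration.

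The key steps, in order, are as follows.
\begin{enumerate}[(1)]
\item Build a deterministic periodic packing $P$, invariant under a cocompact Fuchsian group $\Gamma$, together with a local density functional. The functional assigns to each point its Voronoi-cell area. We prove a \emph{local rigidity estimate}: a configuration in a ball whose Voronoi cells all have area within $\delta$ of the minimum is within $\eps$ of an isometric image of $P$.
\item Set up a Peierls-type contour argument. Fix two distinct ground states $P$ and $gP$ (for instance, related by a rotation not in $\Gamma$). Condition on boundary points from $P$ outside a large ball, and declare a cell \emph{good} if the configuration nearby is $\eps$-close to $P$. Contours are boundaries of connected clusters of bad cells, or interfaces between regions aligned with $P$ and with $gP$.
\item Bound the probability of each contour $\gamma$ by $e^{-c\,|\gamma|\log\lambda}$. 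This follows because each bad cell forces a volume deficit, hence fewer points, and costs a factor $\lambda^{-c}$ relative to a reference filling. The reference filling is obtained by a surgery that erases the contour and inserts pieces of $P$.
\item Sum over contours. In $\H^2$ the number of connected contours of size $n$ surrounding a given cell grows only exponentially in $n$, so for $\lambda$ large the origin is aligned with $P$ with probability close to $1$, uniformly in the volume. Taking limits along balls yields Gibbs measures $\mu_P$ and $\mu_{gP}$, which differ because the local alignment at the origin has different laws.
\end{enumerate}

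The main obstacle is step (3), in particular the surgery and its entropy. We must compare a configuration containing a contour with one where the contour is removed, gaining a power of $\lambda$ per unit contour length. Two complications arise. First, the continuous positional entropy of near-ground-state configurations (vibrations) has to be controlled. Second, and more seriously, hyperbolic boundary effects are severe: the boundary of a ball has area comparable to its interior, so the contour gain must beat boundary terms. We would try first to use the rigidity from step (1) in a quantitative form, namely that the free volume available to each point in an $\eps$-perturbed $P$ is $O(\eps^2)$ in size uniformly. This should make the vibrational entropy of good regions comparable, region for region, between the two configurations, so that only the bad cells contribute a net $\lambda^{-c}$ factor. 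Large $r$ should make this comparison cleanest, which is why we expect it to yield the interval $(\rho,\infty)$.
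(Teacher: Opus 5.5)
There is a genuine gap, and it sits where the statement is strongest: the openness of $R_2$ and the interval $(\rho,\infty)$.

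\textbf{The ground state exists only at the tight radii.} Steps (1)--(4) need, at the given radius $r$, a rigid periodic ground state $P$ such that refilling a contour with pieces of $P$ gains a factor $\lambda^{c}$ per unit length. The $\{3,q\}$ vertex set supplies such a candidate only at the countably many tight radii $r=r_q$, since equilateral triangles of side $2r$ close up around a vertex only when $r=r_q$.
\begin{itemize}
\item If $r>r_q$, the $\{3,q\}$ vertex set is not $2r$-separated.
\item If $r<r_q$, it is loose. Every disk has room of order $r_q-r$, independent of $\lambda$, so being far from $P$ (locally, or after displacements accumulate over long distances) need not cost any volume.
\item By the Bowen--Radin theorem quoted in the introduction, for all but countably many $r$ no periodic measure attains $\Dopt(\H^2,r)$. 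So $P$ is not even the densest structure.
\end{itemize}
Either way, the inequality behind step (3), that each bad cell costs $\lambda^{-c}$, has no justification at fixed $r\neq r_q$ as $\lambda\to\infty$. For the same reason, your claim that for large $r$ the triangle-tiling structure is ``essentially forced'' is unsupported. You also never use $r=\infty$, where the optimal horoball packing is unique and periodic. In the paper, that endpoint is exactly what produces a whole interval $(\rho,\infty)$.

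\textbf{Even at $r=r_q$, step (3) is an open problem, not a routine estimate.} It is essentially the crystallization problem moved to $\H^2$.
\begin{itemize}
\item The ground states form a continuum, the orbit $\Isom(\H^2)\cdot P$. A cell that is not $\eps$-close to $P$ may be a perfect local copy of $hP$ and carry no deficit at all. Your contours only separate $P$ from one fixed $gP$.
\item An interface between $P$-aligned and $hP$-aligned regions costs only $O(t)$ per unit length when $h$ is a translation by $t$ along the interface geodesic.
\item An exact piece of $P$ cannot be glued to an $\eps$-perturbed exterior without violating $2r$-separation.
\item The matching of vibrational entropy between the two configurations is only asserted.
\end{itemize}
Indeed, if steps (2)--(4) worked at even one $r_q$, then $\mu_{gP}=g_*\mu_P\neq\mu_P$ would be a non-isometry-invariant Gibbs measure. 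The paper lists the existence of such a measure as an open problem.

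\textbf{How the paper avoids this.} The paper never analyzes local structure. It uses two constructions and a density gap:
\begin{itemize}
\item For every $r$ and large $\lambda$, it builds an invariant Gibbs measure with density close to $\Dopt(\H^2,r)$. This comes from periodic packings on compact quotients that Benjamini--Schramm converge to $\H^2$, using $\Dper=\Dopt$ and the continuity of $\Dopt(\H^2,\cdot)$.
\item For every $r$ and $\lambda$, it builds an invariant Gibbs measure that is a weak Poisson factor, via Glauber dynamics from the empty configuration. Its density is therefore at most $\DPois(\H^2,r)$.
\item Annealed entropy shows that the unique optimal measures at $r_n$ and at $r=\infty$ are not weak Poisson factors, so $\DPois<\Dopt$ at those radii.
\item Continuity of $\Dopt$ and upper semicontinuity of $\DPois$ make this strict inequality an open condition on $(0,\infty]$.
\end{itemize}
This soft, density-based separation, which is stable under perturbation of $r$, is the idea your proposal is missing.
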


At a  high level, the proof of Theorem~\ref{T:main} proceeds by constructing two  $(r,\lam)$-Gibbs measures on $\H^d$ with different {\bf densities}.  Finding the right notion of density to use is essential to the approach and  is not entirely straightforward as there are complications to defining densities of sphere packings in hyperbolic space due to its non-amenability. 

To define the relevant notion of density, we follow the approach of the first author and Radin \cite{bowen2002densest}: rather than looking just at sphere packings, we look at isometry-invariant \emph{probability measures} on sphere packings.  The density is then the probability that a random packing sampled from this distribution covers a given point (see Section \ref{sec:optimal-packing-density}).  Our approach to showing a phase transition demonstrates that for certain radii, at large activity there are two Gibbs measures of different densities, which are thus distinct. 

The two Gibbs measures we construct can be thought of as a distribution of random-like packings and a distribution of  more structured packings respectively.  To generate the more structured packings, we note that in the hyperbolic plane at certain radii $\{r_n\}$ for $n \in \{7,8,9,\ldots\} \cup \{\infty\}$ there is a unique packing of maximum density and these packings are in fact lattice packings (see Sections \ref{sec:why-dim-2} and \ref{sec:packings} for their definitions, and Figure \ref{fig:optimal-packings} for figures).  These special radii $r_n$ are called \textbf{tight} radii.  For $r$ near a tight radius, we take quotients of $\H^2$ to yield hyperbolic surfaces whose optimum packing density is close to the optimum on $\H^2$; we then may take the hard sphere model on these finite volume surfaces, lift them to $\H^2$ and take a subsequential limit to obtain a Gibbs measure of near-optimal density.  This is accomplished in Section \ref{secDenseGibbs}.

To generate  random-like packings, we start with the empty configuration on $\H^2$ and run \textbf{spatial birth-death Glauber dynamics}, a continuous-time Markov chain that in finite volume has stationary distribution given by the hard sphere model.  Roughly, these dynamics attempt to add spheres randomly according to a Poisson process and remove them according to independent exponential clocks of rate $1$.   Running this process up to time $T$ yields an isometry-invariant probability measure $\mu_T$ on packings.  By taking a subsequence of $(\mu_T)_{T > 0}$ we obtain a limiting measure $\mu_\ast$ that satisfies the Georgii-Nguyen-Zessin (GNZ) equations and thus is a Gibbs measure (see Section \ref{sec:Gibbs-definitions} for preliminaries on Gibbs measures).  This is accomplished in Section \ref{sec:lower-density}. 

We now are faced with a seemingly innocuous task: how may we distinguish the ``random-like'' measure created via Glauber dynamics and the ``structured'' measure created from the lattice packing?  To begin with, we note that our ``random-like'' measure is a \textbf{weak Poisson factor}, namely a weak limit of probability measures given by an equivariant thinning of a (homogeneous) Poisson process (see Section \ref{sec:Poisson-factors} for a formal definition).  At the tight radii $\{r_n\}$, we recall that there is a unique probability measure $\mu_n$ on packings of optimum density.  Thus, to show that at these exact radii that  our ``random-like'' and ``structured'' packing are different, we  need to show that $\mu_n$ is not a weak Poisson factor. This will imply that all Poisson factors have density strictly less than the density of $\mu_n$, and so we just need to be sure our ``structured'' packing has density sufficiently close to the optimum in order to deduce it is distinct from our ``random-like'' packing.

In order to show that $\mu_n$ is not a weak Poisson factor, we use the theory of \textbf{annealed (sofic) entropy}, a notion of entropy for actions of certain non-amenable groups.  By using a non-amenable free group $F$ that lies within the isometries of $\H^2$, we will consider a probability measure preserving action of $F$ on $(M,\mu_n)$ where $M$ is a quotient of $\H^2$.  We will show that the periodic measure $\mu_n$ has annealed entropy equal to $-\infty$ while every weak Poisson factor has non-negative annealed entropy.  
This provides a quantitative basis for describing one Gibbs measure as ``random-like'' and the other as ``structured.''  

The easiest way to define annealed entropy is for an isometry-invariant probability measure $\mu$ on ${\tt A}^F$ where ${\tt A}$ is a finite set and $F$ is a non-amenable free group.  If $F$ has rank $r$, then the Cayley graph of $F$ with the usual generators is the infinite $2r$-regular tree.  Recall that random $2r$-regular graphs on $n$ vertices Benjamini-Schramm converge to the Cayley graph of $F$ as $n$ tends to infinity.  If we consider some $x: [n] \to {\tt A}$ and have a random $2r$-regular graph $G_n$, the empirical measure of $x$ may be thought of as the distribution on ${\tt A}^F$ where we place the root of $F$ at a random vertex in $G_n$.  The annealed entropy $h^{\mathrm{ann}}(\mu)$ is the exponential growth rate of the expected number of $x:[n] \to {\tt A}$ that approximate $\mu$ well, where the expectation is taken over random regular graphs.  We define these central quantities in Section \ref{sec:annealed-entropy}.

In our case, we will translate from a Poisson factor to a Bernoulli factor by considering a fundamental domain for $F$ in the group of isometries of $\H^2$.  We perform the reduction from our setting to the more classical setting of annealed entropy in Section \ref{sec:reduction-free}.  We then show that annealed entropy separates Bernoulli factors from periodic measures under an ergodicity assumption (Propositions \ref{P:nonnegative} and \ref{P:smoothactions}).

We briefly discuss the approach to densities we use, starting with an approach to the optimal sphere packing problem in hyperbolic space due to the first author and Radin~\cite{bowen2002densest}.

\subsection{Optimal packing density} \label{sec:optimal-packing-density}

Let $\bX$ be an infinite volume, locally compact homogeneous metric measure space, like $\R^d$ or $\H^d$.  Let $X \in \cP_r(\bX)$ be the set of centers of a packing of spheres of radius $r$.  The {\bf density} of the packing of spheres of radius $r$ around $X$ with respect to a basepoint $p\in \bX$ is 
$$\density_r(X,p) = \lim_{R \to \infty} \frac{\mes(B(p,R) \cap  \{y: \dist(y,X) \le r    \})}{\mes(B(p,R))}$$
where $B(p,R)$ is the closed  ball of radius $R$ centered at $p$. Of course, this limit need not exist.  If $\bX=\R^n$ and the limit exists then it does not depend on the choice of basepoint $p$. However, if $\bX=\H^n$  the limit may depend on $p$. 

An old and venerated problem asks for the maximum density of a  sphere packing of Euclidean space~\cite{conway2013sphere,cohn2017conceptual}. There has been exciting recent progress in a few specific dimensions \cite{viazovska2017sphere,cohn2016sphere} (optimal packing densities in dimensions $1,2,3,8$ and $24$ are now known)  and in the case of large dimensions~\cite{campos2023new,klartag2026lattice} (though the best upper and lower bounds still differ by an exponential factor in $d$). 

The analogous problem in hyperbolic space was stymied for a  time by the fact that density of a sphere packing may depend on the choice of basepoint (see, e.g.,~\cite{boroczky1978packing,toth1993packing,MR1604938}). To formulate a well-defined optimization problem, the first author and Radin focused attention on probability distributions on packings that are invariant under the isometries of $\H^d$~\cite{bowen2002densest}.

With $\bX \in \{ \R^d, \H^d\}$, define $\Prob(\cP_r(\bX))$ to be the space of all Borel probability measures on $\cP_r(\bX)$ and note that $\Prob(\cP_r(\bX))$ is compact\footnote{By the Riesz-Markov Theorem, we may consider $\Prob(\cP_r(\bX))$ as embedded in the Banach dual of $C(\cP_r(\bX))$, which is the space of continuous functions on $\cP_r(\bX)$. We endow the dual  $C(\cP_r(\bX))^*$ with the weak* topology. Thus a sequence $(\nu_j)_{j \geq 1}$ of measures converges to $\nu$ if for every continuous function $f:\cP_r(\bX) \to \R$ we have $\nu_j(f) \to \nu(f)$ where $\nu_j(f)=\int f~d\nu_j$.  It is an exercise to check that $\Prob(\cP_r(\bX))$ is closed in the unit ball.  So the Banach-Alaoglu Theorem implies $\Prob(\cP_r(\bX))$ is compact.}. 
Let $\M_r(\bX)\subset \Prob(\cP_r(\bX))$ be the space of Borel probability measures on $\cP_r(\bX)$ that are invariant under $\Isom(\bX)$, the group of isometries of $\bX$.  In particular, this is the set of probability measures $\nu$ supported on $\cP_r(\bX)$ so that for all Borel sets $E \subset \cP_r(\bX)$ and all $g \in \Isom(\bX)$ we have $\nu(gE) =\nu(E)$. This space is closed in $\Prob(\cP_r(\bX))$ and therefore is weak* compact.

Fix a point $\cO \in \bX$ which we call the origin.  The {\bf density} of an invariant measure $\mu \in \M_r(\bX)$ is the probability that the spheres of radius $r$ centered around  a random sample $X$ from $\mu$ cover the origin. That is,
$$\density_r(\mu) = \mu( \dist (\cO, X) \le r) \, .$$

The first author and Radin then define the {\bf optimal packing density} as the greatest density of $\mu\in \M_r(\bX)$:
$$\Dopt(\bX,r) := \sup\{ \density_r(\nu) : \nu \in \M_r(\bX)\}\,. $$

Assisted by an ergodic theorem due to Nevo and Stein~\cite{nevo1997analogs}, the first author and Radin proved: given $\mu \in \M_r(\H^d)$, for $\mu$-a.e.\ $X \in \cP_r(\H^d)$, $\density_r(X,p)$ exists and does not depend on the choice of basepoint $p$~\cite{bowen2002densest}. If $\mu$ is ergodic (that is, not a non-trivial convex combination of other measures in $\M_r(\bX)$), then $\density_r(X,p) = \density_r(\mu)$ for $\mu$-a.e.\ $X$. Moreover, the functional $\density_r: \M_r(\bX) \to \R$ is continuous (this is because the boundary of a disk has measure zero with respect to the $\Isom(\H^d)$-invariant measure on $\H^d$). Since $\M_r(\bX)$ is compact, the optimal density is achieved.  When $\bX = \R^d$, $\Dopt(\bX,r)$ coincides with the usual definition of the optimal sphere packing density (and does not depend on $r$), and so this is a natural generalization.

\subsection{Periodic packings}

We now specialize to $\bX = \H^d$ (and eventually to $\bX = \H^2$) and discuss a special family of packings, {\bf periodic packings}.

For a packing $P \in \cP_r(\H^d)$ let $\Sym(P) =\{g\in \Isom(\H^d):~gP=P\}\leq \Isom(\H^d)$ be its group of symmetries. If $P$ contains at least two distinct points then this group is discrete.
If $\Sym(P)$ is co-finite then we say that $P$ is {\bf periodic}.  For such a periodic packing, its orbit $O(P) := \Isom(\H^d) P$ is equivariantly homeomorphic with $\Isom(\H^d) / \Sym(P)$ and by co-finiteness of $\Sym(P)$ the Haar measure on $\Isom(\H^d) / \Sym(P)$ pulls back to a finite measure $\nu$ which we may normalize so that $\nu(O(P)) = 1$.  Extend the measure $\nu$ to $\cP_r(\H^d)$ by defining $\nu(\cP_r(\H^d) \setminus O(P)) = 0$.  A measure $\nu$ obtained in this fashion from a periodic packing is called a {\bf periodic measure}; we then extend this definition to say that $\nu \in \M_r(\H^d)$ is periodic if it is in the convex hull of all such measures.    The optimal periodic density is defined via $$\Dper(\H^d,r) := \sup\{\density_r(\nu) : \nu \in \M_r(\H^d) \text{ is periodic} \}\,. $$
Clearly $\Dper(\H^d,r) \leq \Dopt(\H^d,r)$.

\subsection{Why dimension \texorpdfstring{$2$}{2}?} \label{sec:why-dim-2}
To prove Theorem~\ref{T:main} we will use some special facts about optimal packing measures and periodic packings in $\H^2$. In~\cite{bowen2003periodic} the first author proved the following.

\begin{theorem}[\cite{bowen2003periodic}] \label{thm:dim2-facts}
    For each $r >0$, the space of periodic measures in $\M_r(\H^2)$ is dense in $\M_r(\H^2)$. In particular, $\Dopt (\H^2,r)=\Dper(\H^2,r)$.  Additionally, $\Dopt (\H^2,r)$ is a continuous function of $r$. 
\end{theorem}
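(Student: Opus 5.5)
The plan is to approximate an arbitrary ergodic invariant measure by periodic ones using compact hyperbolic surfaces; this follows the strategy of \cite{bowen2003periodic}. By the ergodic decomposition and convexity of the set of periodic measures, it suffices to approximate an ergodic $\mu\in\M_r(\H^2)$ in the weak* topology. Fix a large ball $B(\cO,R)$ and a tolerance $\eps$. By the Nevo--Stein ergodic theorem used in \cite{bowen2002densest}, a $\mu$-typical packing $X$ has the following property: for a huge ball $B(p,L)$, the empirical distribution of the patterns $g^{-1}(X)\cap B(\cO,R)$, with $g$ ranging over isometries moving $\cO$ into $B(p,L)$ (Haar-normalized), approximates the law of $X\cap B(\cO,R)$ under $\mu$ to within $\eps$.

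The next step, and the heart of the argument, is to transplant the patch $X\cap B(p,L)$ onto a closed hyperbolic surface. In dimension $2$ the relevant fact is that compact surfaces exist in abundance. First I would cut the ball $B(p,L)$ down to a convex hyperbolic polygon $Q$ containing most of its area. The difficulty is non-amenability: the boundary of $Q$ carries a positive fraction of the area, so we cannot simply discard a boundary layer. Instead I would use the flexibility available in $\H^2$. Choose $Q$ to be a large right-angled (or suitably angled) polygon that tiles, or whose sides can be paired by isometries to give a closed surface $S=\Gamma\backslash\H^2$. Then, following Bowen, build the surface from many copies of pieces of typical patches. Take a fine tiling of a closed surface of large genus by small convex polygons. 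Into each tile, place a translate of a typical patch of $X$. Along the tile boundaries (a thin region of proportional area $\eps$), delete points so that the result is $2r$-separated. Each tile must be small compared with $L$ but large compared with $R$, and the total boundary region of the tiles must have relative area $O(\eps)$. This is achievable because we tile a compact surface by bounded-size cells, not a ball, so the boundary fraction is controlled by the cell diameter. The lifted packing on $\H^2$ is $\Gamma$-invariant, so it is periodic. Averaging over $\Isom(\H^2)/\Gamma$ gives a periodic measure whose $R$-local statistics are within $O(\eps)$ of those of $\mu$.

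The main obstacle is exactly this step: producing, for arbitrarily large cell scale, closed surfaces (or cofinite groups) tiled by convex cells of diameter about $D$, with boundary area fraction $\to 0$ as $D\to\infty$ at fixed $R$, and arranging the patches so that the empirical statistics are correct. I would try to obtain the cells as Voronoi or Dirichlet cells of a maximal $D$-separated net on a closed surface of injectivity radius $\gg D$. Such surfaces exist by residual finiteness of surface groups. The patch placed in each cell would be sampled from the law of $\mu$ restricted to that cell shape, which is achievable in expectation by drawing $\mu$-random configurations independently per cell. Since the cells have bounded diameter, the $R$-neighbourhood of the cell boundaries has relative area $O(R/D)$ by a hyperbolic isoperimetric-type estimate for cells of a net, which controls the discrepancy. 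Weak* density then follows by letting $R\to\infty$ and $\eps\to0$. Since $\density_r$ is weak* continuous, $\Dopt=\Dper$ follows immediately.

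For continuity of $\Dopt(\H^2,r)$ in $r$, I would first get upper semicontinuity from compactness: optimizers at $r_j\to r$ have weak* limits supported on $\cP_r$, and density is continuous jointly in $(\mu,r)$. Lower semicontinuity would use the periodic approximation. Take a periodic near-optimal packing on a closed surface $S$ at radius $r$, with all cells well separated. It can be perturbed to radius $r'$ near $r$ by a slight scaling of a nearby hyperbolic structure (deforming the Fenchel--Nielsen coordinates of $S$) or by slightly thinning points, which changes the density continuously.
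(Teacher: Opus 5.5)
The paper does not prove this theorem itself: it is imported from \cite{bowen2003periodic}, and only the easy half of the continuity statement (upper semicontinuity and left-continuity) is reproved, in Lemma~\ref{L:PoissonGap2}. Your proposal has a genuine gap at exactly the step you call the heart of the argument.

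\textbf{The cell-transplant step.} Your claim that the $R$-neighbourhood of the cell boundaries has relative area $O(R/D)$ is false in $\H^2$. Since $\H^2$ has Cheeger constant $1$, the coarea formula gives $\Vol(N_R(E))\ge e^{R}\Vol(E)$ for every bounded $E\subset\H^2$, where $N_R(E)$ is the $R$-neighbourhood of $E$. Now take any cell $C$ that lifts isometrically to $\H^2$, of whatever diameter. The set $\{x\in C:\ B(x,R)\subset C\}$ has its $R$-neighbourhood inside $C$, so its area is at most $e^{-R}\Vol(C)$.

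So tiling a closed surface by bounded cells does not get around non-amenability; it is the same obstruction you identified for $B(p,L)$. If you fill the cells with independent patches, the $R$-local statistics are uncontrolled on at least a $(1-e^{-R})$-fraction of the surface. Moreover, the expected number of $2r$-conflicts you must delete is comparable to the total boundary length of the cells, which is at least the total area. Hence the density loss does not tend to $0$; try approximating $\mu_n$ itself this way.

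What is missing is a finite model that reproduces the joint law of the configurations in many adjacent cells at once, and this needs group-theoretic input. One way:
\begin{enumerate}
\item Take a lattice $\Gamma\le\Isom(\H^2)$ with fundamental domain $\Delta$ whose actions can be approximated by finite permutation actions. A free lattice such as $\Gamma(2)\le\mathrm{PSL}_2(\Z)$ works, since there are no relations and one can approximate one generator at a time; cf.\ the remark on property MD in Section~\ref{S:open}.
\item Regard $\mu$ as a $\Gamma$-invariant measure on $\mathtt{A}^{\Gamma}$, where $\mathtt{A}$ is the space of configurations in $\Delta$, and approximate it by measures on finite $\Gamma$-orbits.
\item Glue copies of $\Delta$ according to the permutations to get a finite cover of $\Gamma\backslash\H^2$.
\item Only then delete the few points involved in conflicts, which are now rare.
\end{enumerate}
The lift of the result is periodic and has nearly the right local statistics.

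\textbf{The continuity step.} The nontrivial direction is right-continuity, $\liminf_{r'\searrow r}\Dopt(\H^2,r')\ge\Dopt(\H^2,r)$, and neither mechanism you suggest gives it.

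A hyperbolic structure cannot be scaled, and any deformation of a closed surface (Fenchel--Nielsen or otherwise) preserves its area $2\pi(2g-2)$. Keeping all $N$ centres while raising the radius to $r'>r$ therefore multiplies the density by $\Vol(B_{r'})/\Vol(B_r)$. Starting from a finite quotient of the tight packing at $r=r_n$, this would exceed the simplex bound at $r'$. Indeed, the simplex bound divided by $\Vol(B_{r'})$ equals $3\alpha/(2\pi(\pi-3\alpha))$, where $\alpha$ is the angle of the equilateral triangle of side $2r'$, and this is strictly decreasing in $r'$.

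Thinning is not a small perturbation either. For the tight packing, every independent set in the contact graph contains at most a third of the centres, so upgrading its separation from $2r_n$ to $2r'$ by deletion loses two thirds of the density. Producing near-optimal packings at $r$ without near-tangencies is essentially the statement to be proved. Right-continuity needs a genuinely new construction, which is what \cite{bowen2003periodic} supplies and what the present paper cites.
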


The analogous results for $\R^d$ follow easily from  amenability of $\R^d$. It is unknown whether either of the results in Theorem \ref{thm:dim2-facts} extend to $\H^d$ for $d>2$.

One can ask if $\Dopt(\H^2,r)$ is in fact achieved by a periodic measure (rather than just approached by a sequence of such measures).  The first author and Radin proved that except for a countable set of radii $r$, $\Dopt(\H^2,r)$ is \textit{not} achieved by a periodic measure~\cite{bowen2002densest}.  However, there are infinitely many special radii for which the optimal density is achieved by a lattice packing.  In particular, for $n \ge 7$, define $r_n$ via the equation
\begin{equation}
\label{eqrndef}
    \cosh(2r_n) = \cot(2\pi/n)\cot(\pi/n) \,.
\end{equation}

The radius $r_n$ is the smallest radius so that one can have precisely $n$ disjoint balls of radius $r_n$ tangent to a single ball of radius $r_n$.  Recalling that in $\R^2$ the optimal packing has each ball touching 6 other balls, one may view the optimal packings at these radii as hyperbolic analogues of the optimal packing in $\R^2$.  In particular, one has a unique optimizer here. 

\begin{figure}[htbp]
  \centering
  \begin{subfigure}{0.32\textwidth}
    \centering
    \includegraphics[width=\linewidth]{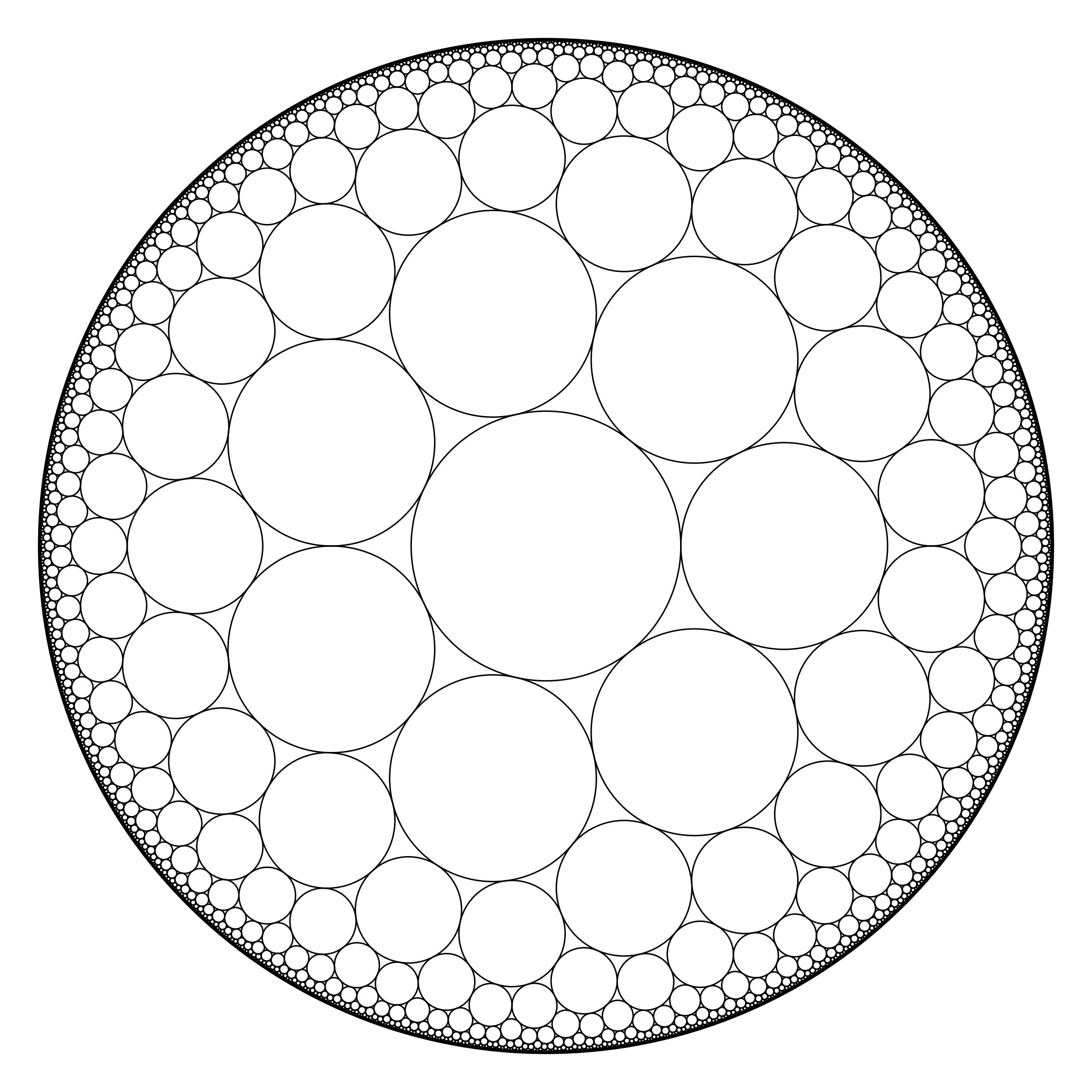}
  \end{subfigure}\hfill
  \begin{subfigure}{0.32\textwidth}
    \centering
    \includegraphics[width=\linewidth]{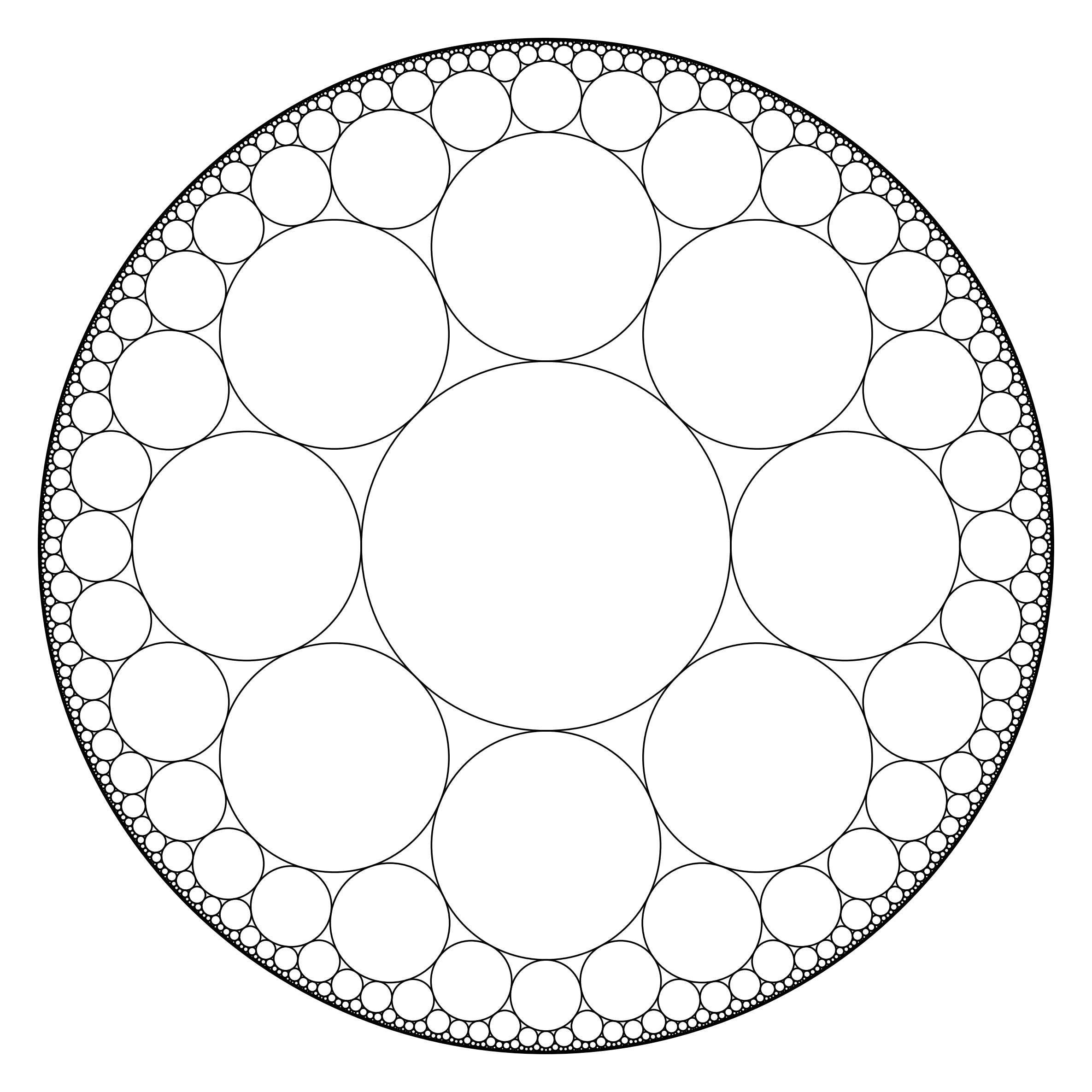}
  \end{subfigure}\hfill
  \begin{subfigure}{0.32\textwidth}
    \centering
    \includegraphics[width=\linewidth]{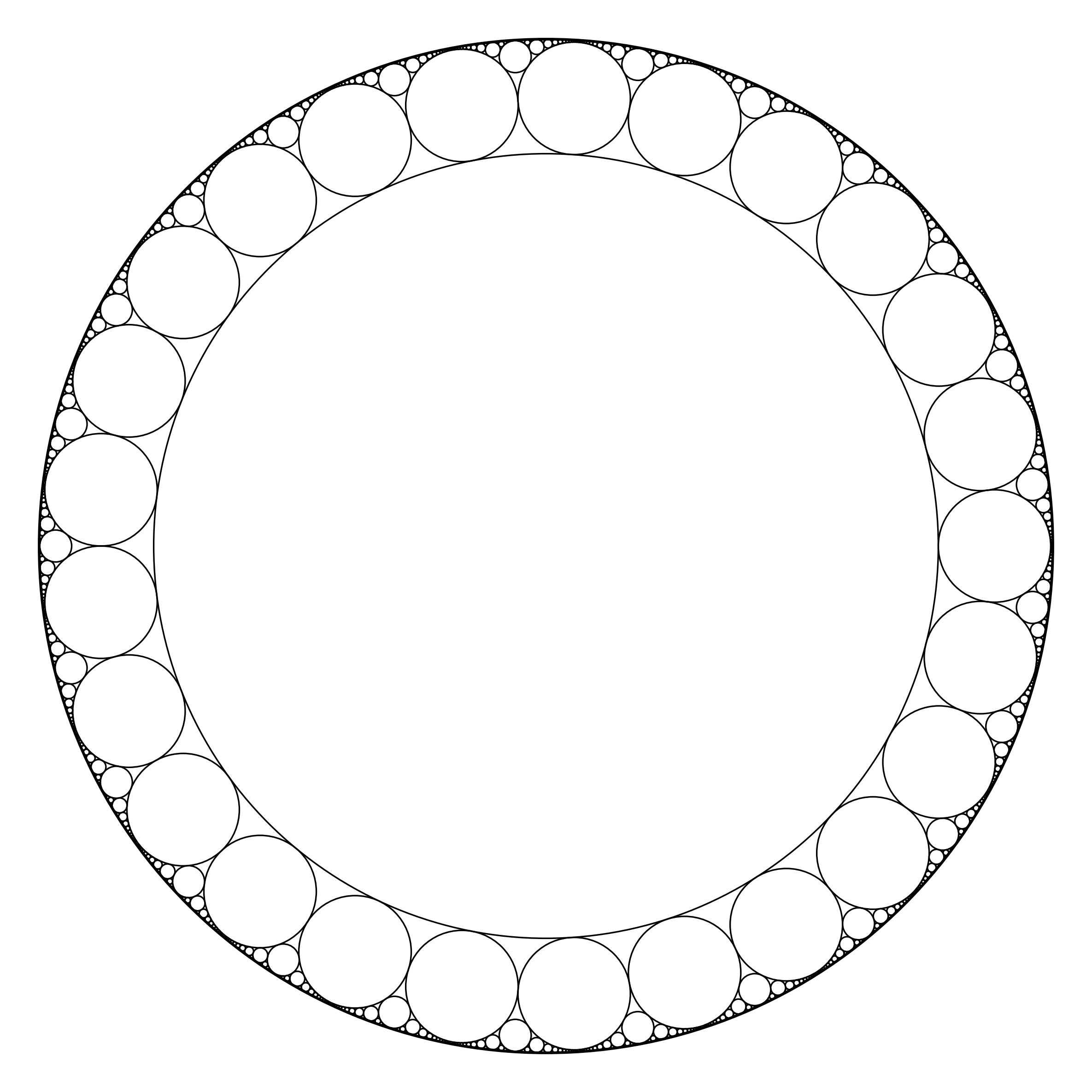}
  \end{subfigure}

  \caption{The optimal packings at radii $r_7, r_8$ and $r_{25}$}
  \label{fig:optimal-packings}
\end{figure}

\begin{theorem}[\cite{bowen2002densest}]\label{T:special}
    For each integer $n\ge7$ with $r_n$ defined via~\eqref{eqrndef} there is a unique  measure $\mu_n \in \M_{r_n}(\H^2)$ such that  $\density_{r_n}(\mu_n)=\Dopt(\H^2,r_n)$. Moreover, $\mu_n$ is a periodic measure.
\end{theorem}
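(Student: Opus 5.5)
The statement to prove is Theorem~\ref{T:special}, a result of \cite{bowen2002densest}. My plan is a local density bound via Dirichlet--Voronoi cells, combined with a rigidity argument for the equality case.

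\textbf{Step 1: the cell bound.} For a packing $X\in\cP_{r_n}(\H^2)$ and $x\in X$, let $V(x)$ be its Voronoi cell. I will show that
\[
\frac{\mes(B(x,r_n))}{\mes(V(x))}\le d_n ,
\]
where $d_n$ is the ratio of the disk area to the area of the regular $n$-gon with interior angle $2\pi/3$ and inradius $r_n$. This is the hyperbolic analogue of the Fejes T\'oth bound. The defining relation \eqref{eqrndef} is exactly the condition that the $\{n,3\}$ tiling has inradius $r_n$.

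The idea is as follows.
\begin{enumerate}[(i)]
\item Each cell $V(x)$ is star-shaped around $x$ and contains $B(x,r_n)$.
\item Decompose $V(x)$ into triangles (or sectors) with apex $x$, one for each edge.
\item Use the Delaunay triangulation. All edges have length at least $2r_n$, and each Delaunay triangle has circumradius at least that of the equilateral triangle with side $2r_n$. The angle at $x$ of each such triangle is at least $2\pi/n$.
\item Estimate the density in each Delaunay triangle. Disks cover at most the fraction attained by the equilateral triangle of side $2r_n$, by a Rogers/Fejes T\'oth-type monotonicity in hyperbolic trigonometry.
\end{enumerate}
I would use the triangle-based version, since it avoids Voronoi boundary issues. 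Equality should hold only for equilateral triangles of side $2r_n$. At $r_n$, exactly $n$ of these fit around a vertex.

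\textbf{Step 2: pass to invariant measures.} For $\mu\in\M_{r_n}(\H^2)$, use the mass-transport principle (unimodularity of $\Isom(\H^2)$) to write
\[
\density_{r_n}(\mu)=\frac{\text{intensity}\cdot \mes(B(\cO,r_n))}{1}.
\]
Then redistribute the mass of each Delaunay triangle to its vertices, or equivalently integrate the local bound over the Palm measure. This yields $\density_{r_n}(\mu)\le d_n$. The lattice measure $\mu_n$, supported on orbits of the $\{3,n\}$ triangulation vertex set, attains $d_n$, so $\Dopt(\H^2,r_n)=d_n$.

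\textbf{Step 3: rigidity and uniqueness.} If $\density_{r_n}(\mu)=d_n$, then equality holds in the local bound $\mu$-a.s. on Palm-typical points. Hence almost every Delaunay triangle is equilateral with side $2r_n$, so every point has exactly $n$ tangent neighbors arranged symmetrically. Propagating this through the connected triangulation forces $X$ to be an isometric image of the $\{3,n\}$ vertex set $P_n$. Then $\mu$ is supported on the single orbit $\Isom(\H^2)P_n\cong \Isom(\H^2)/\Sym(P_n)$. Invariance forces $\mu$ to be the normalized Haar pushforward, so $\mu=\mu_n$, and $\mu_n$ is periodic because $\Sym(P_n)$ is a cocompact triangle group.

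\textbf{Main obstacle.} I expect the hardest step to be the sharp local inequality with its equality case. Two things must be checked in hyperbolic trigonometry:
\begin{enumerate}[(a)]
\item the covered fraction of a Delaunay triangle with sides at least $2r_n$ is maximized uniquely by the equilateral one;
\item large or degenerate triangles, such as those near ideal points or with obtuse angles whose circumcenter lies outside, do not break the argument.
\end{enumerate}
If (b) causes trouble, I would first try a B\"or\"oczky-style orthoscheme decomposition of Voronoi cells, for which monotonicity is cleaner.
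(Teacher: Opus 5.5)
There is a genuine gap, and it sits in Step 1, the step that carries the whole content of the theorem. Your skeleton is otherwise right. It is a self-contained version of what the paper does by citation: the paper takes B\"or\"oczky's simplex bound as the inequality, observes that the vertex set of the $T_n$-tiling attains it, takes uniqueness of the extremal packing from Kellerhals, and builds $\mu_n$ via Fact~\ref{fact:lift}. Your Steps 2 and 3 (mass transport, equality at a.e.\ point, propagation to $gP_n$, uniqueness of the invariant probability on $\Isom(\H^2)/\Sym(P_n)$) go through once you have a local inequality with a strict equality case.

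The route you commit to (``I would use the triangle-based version'') does not supply that inequality, because two of its claims are false. Under the angle-proportional redistribution of Step 2, a Delaunay triangle $T$ receives disk area $\frac{\pi-\Vol(T)}{2\pi}\Vol(B_{r_n})$. So the per-triangle bound is exactly $\Vol(T)\ge\Vol(T_n)$.

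Take $ABC$ isosceles with $AB=AC=2r_n$ and apex angle $\gamma$.
\begin{itemize}
\item At $\gamma=4\pi/n$, let $O$ be the point for which $OAB$ and $OAC$ are copies of $T_n$. Then $OA=OB=OC=2r_n$ and $OBAC$ is a rhombus, so $ABC\cong OBC$ and $\Vol(ABC)=\Vol(T_n)$. The base angles are $\pi/n<2\pi/n$, which already refutes your angle claim.
\item The area satisfies $\tan(\Vol(ABC)/2)=t\sin\gamma/(1-t\cos\gamma)$ with $t=\tanh^2 r_n$. This is unimodal in $\gamma$ and takes the same value at $2\pi/n$ and $4\pi/n$. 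Hence for $\gamma$ slightly above $4\pi/n$ you get $\Vol(ABC)<\Vol(T_n)$.
\item For such $\gamma$ the circumradius, given by $\tanh R=\tanh r_n/\cos(\gamma/2)$, is finite and slightly larger than $2r_n$.
\end{itemize}
So this is a genuine Delaunay triangle of a non-saturated packing whose share exceeds $d_n$. In addition, without saturation Voronoi cells in $\H^2$ can be unbounded and Delaunay triangles need not tile the plane, so the bookkeeping in Step 2 also breaks down. The lower bound on circumradius you cite is true, but it points the wrong way: what bounds triangle area from below is an upper bound on circumradius, i.e.\ saturation.

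There are two ways to close the gap.
\begin{enumerate}
\item[(i)] Prove the Voronoi inequality from your first sentence by B\"or\"oczky's (in the plane, Fejes T\'oth's) decomposition of each cell into pieces with apex $x$. Use that cell edges lie at distance $\ge r_n$ from $x$ and cell vertices at distance $\ge$ the circumradius of $T_n$, together with strict convexity in the angle variable under $\sum\theta_i=2\pi$. This needs no saturation. It gives $\Vol(V(x))\ge \pi(n-6)/3$, with equality only for the regular $n$-gon with angles $2\pi/3$. It is exactly the simplex bound the paper cites, and its equality case is what the paper imports from Kellerhals.
\item[(ii)] Keep triangles but saturate first. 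An optimal $\mu$ is a.s.\ saturated, and any $\mu$ can be saturated invariantly without lowering density, e.g.\ by random sequential addition driven by an independent Poisson process. Then you must prove $\Vol(T)\ge\Vol(T_n)$ for triangles with sides $\ge 2r_n$ and circumradius $<2r_n$. The example above sits at circumradius exactly $2r_n$ with area $\Vol(T_n)$. So this inequality is tight in the limit at a non-equilateral triangle, and rigidity must come from the strict constraint rather than a compactness argument.
\end{enumerate}
The orthoscheme fallback you mention is the right tool, but as written Step 1 is not established.
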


We give further details below in Section~\ref{secLatticeMeasures}.   We also have an analogue of Theorem~\ref{T:special} for \textbf{horoball packings}, i.e.\ packings with $r = +\infty$; see Figure \ref{fig:horoball} for a figure of the optimal horoball packing.  We define horoballs more precisely in Section \ref{sec:horoballs} and describe the optimum in Theorem \ref{T:horoball1}. %
\begin{figure}[htbp]
  \centering
    \includegraphics[width=0.32\linewidth]{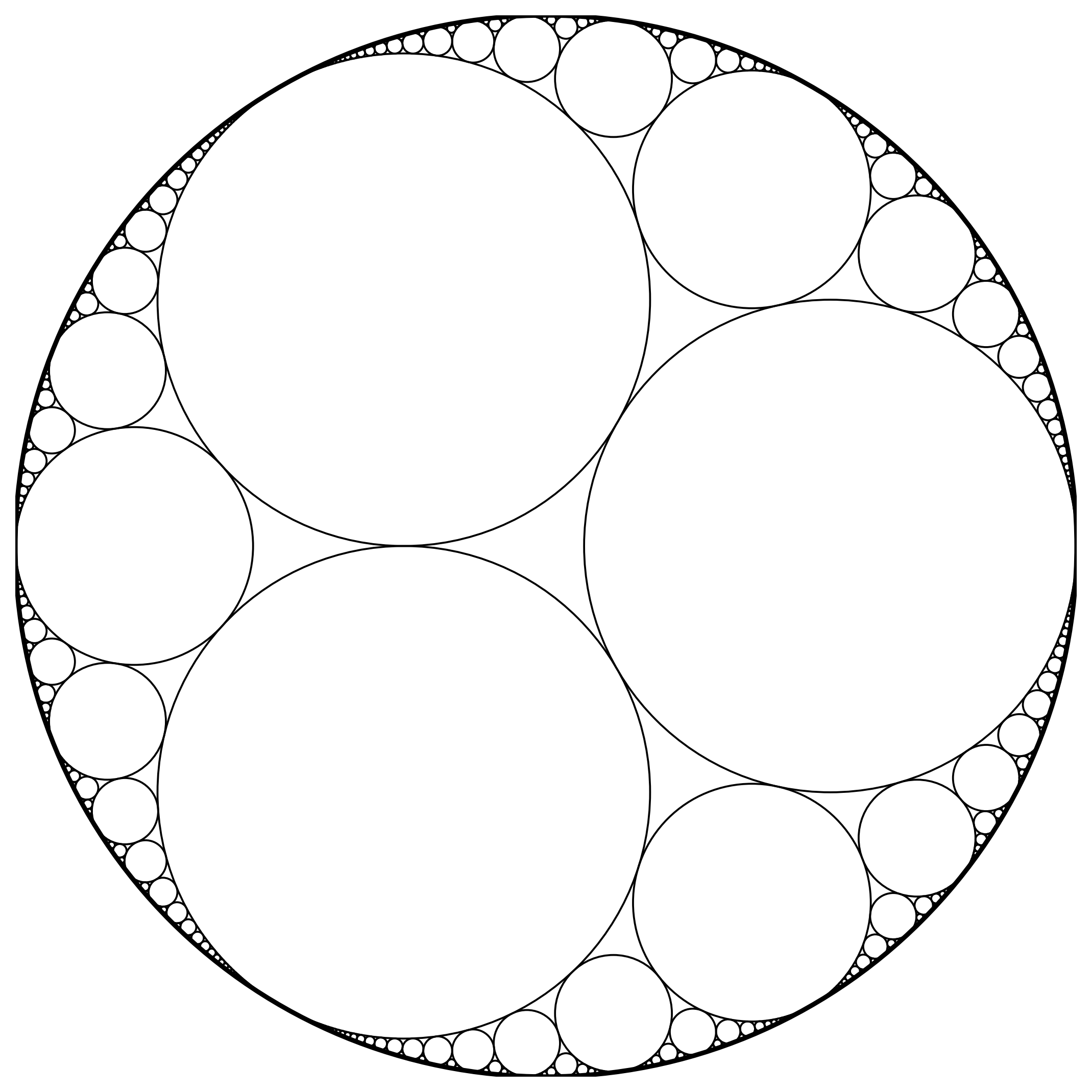}
  \caption{The optimal horoball packing}
  \label{fig:horoball}
\end{figure}

In conclusion, dimension $2$ is special because we know the existence of an unbounded set of radii for which the unique densest packing measure is  a periodic packing measure, the optimal density is continuous as a function of radius, and arbitrary invariant measures on the space of packings can be approximated by periodic measures. We do not know whether any of these results can be extended beyond dimension $2$.  

We will use the existence of optimal lattice packing measures to construct $(r,\lam)$-Gibbs measures with high density in Section~\ref{secDenseGibbs}.  We now introduce the background required for constructing our $(r,\lam)$-Gibbs measures with lower density.

\subsection{Poisson Factors}
\label{secPoissonFactorsIntro}

We will construct $(r,\lam)$-Gibbs measures whose density stays bounded away from the optimal density even as the intensity increases to infinity as weak limits of factors of Poisson processes.

For a parameter $T>0$ let $\Omega_T(\bX)$ denote the set of locally finite point sets in $\bX \times [0,T]$.  A {\bf Poisson process}  of intensity $1$ on $\bX \times [0,T]$ is a random element $\mathbf X$ from $\Omega_T(\bX)$ with the properties that:
\begin{enumerate}
    \item for each Borel set $B \subset \bX \times [0,T]$, the distribution of $|\mathbf{X} \cap B|$ is Poisson of mean $ \mes(B)$ where $\mes$ is the product measure of the volume measure on $\bX$ and the Lebesgue measure on $[0,T]$;
    \item for each $m$ and all disjoint Borel sets $B_1,\ldots,B_m$, the random variables $\{|\mathbf{X} \cap B_j|\}_{j = 1}^m$ are jointly independent.
\end{enumerate}
Let $\Pois_{\bX,T} \in \Prob(\Omega_T(\bX))$ denote the law of a Poisson process $\mathbf X$. This is uniquely characterized by the properties above.

Poisson processes  are prototypical models of non-interacting points in a space and are used to  define models in stochastic geometry and interacting point processes (including the hard sphere model). See, e.g.,~\cite{kingman1992poisson} for background.

A Poisson factor is one way to use a Poisson process as the underlying randomness in constructing another process.  In our setting, the following definition suffices.
\begin{definition}
   A measure on packings $\nu \in \M_r(\bX)$ is a {\bf Poisson factor} if there is some $T>0$ and some measurable function $\phi:\Omega_T(\bX)  \to \cP_r(\bX)$ so that
   \begin{enumerate}
       \item For all $g \in \Isom(\bX)$ we have $\phi(g X) = g \phi(X)$ where $g$ acts on $\Omega_T(\bX)$ and $\cP_r(\bX)$ in the natural way;
       \item $\nu = \phi_\ast \Pois_{\bX,T}$ ($\nu$ is the pushforward of the Poisson process under $\phi$).
   \end{enumerate}
    A measure $\nu \in \M_r(\bX)$ is a {\bf weak Poisson factor} if it is a weak limit of Poisson factors. 
\end{definition}

We define the optimal Poisson packing density $D_{\Pois}$ to be the highest density attainable by weak Poisson factors: 
\begin{equation}
    D_{\Pois}(\bX,r) := \sup\{\density_r(\nu) : \nu \in \M_r(\bX) \text{ is a Poisson factor}\}.
\end{equation}
While $D_{\Pois}(\bX,r)$ may not be achieved by a Poisson factor, by compactness of $\M_r(\bX)$ it is achieved by a weak Poisson factor.

In Euclidean space, using amenability we have $D_{\Pois}(\R^d,r) = \Dopt(\R^d,r)$, but we will see this need not be true in hyperbolic space.

\subsection{Proof strategy and further results}

For a given pair $(r, \lam)$ with $r$ close to some  tight radius $r_n$ defined at \eqref{eqrndef} and $\lam$ large, we will construct two different $(r,\lam)$-Gibbs measures on $\H^2$.
One of these measures will be constructed via a limiting process  beginning with the empty set of centers on $\H^2$ and then running continuous-time Glauber dynamics.  The approach will be to find a subsequential limit that satisfies the Georgii-Nguyen-Zessin (GNZ) equations, which essentially say that a measure is a Gibbs measure if and only if it is reversible with respect to Glauber dynamics.  We thus want to show that the ``GNZ defect''---which is simply the difference of the two sides of the GNZ equations---is zero for each test function.  The dynamics run up to finite time $T$ produces a Poisson factor $\nu_T$ (see Lemma \ref{lem:mu_t-Poisson-factor}). On each compact quotient, entropy dissipation bounds time average of the Fisher information (Lemma \ref{lem:FTC-entropy}), and Lemma \ref{lem:finite-volume-dissipation} shows that the GNZ defect in finite volume is bound in terms of the information per unit volume.   We then show that for a fixed test function on a fixed time scale, the GNZ defect and the finite-volume GNZ defect are close (Lemma \ref{lem:finite-propagation}).  We then may choose a sequence of times $t_j \in [j/2,j]$ along which the GNZ defect tends to zero and pass to a weakly convergence subsequence.  The limit satisfies the GNZ equations and is thus a Gibbs measure. This is performed in Section \ref{sec:lower-density}.  

The other Gibbs measure will be constructed as a limit of finite-volume Gibbs measures constructed from the optimal lattice sphere packings of $\H^2$ with radii $\{ r_n \}_{n \ge 7}$.  Since the optimal packings at radius $r_n$ are lattice packings, we may take subgroups $\{L_j\}$ of its lattice isometry group so that $L_j\backslash \H^2 $ Benjamini-Schramm converges to $\H^2$ (see Fact \ref{fact:residually-finite}); roughly, this means that for each fixed $R > 0$, for a random point in $L_j \backslash \H^2$, the ball of radius $R$ centered at that point looks like a radius $R$ ball in $\H^2$ (see Definition \ref{def:benjamini-schramm} for a precise definition of Benjamini-Schramm convergence).  The finite-volume surfaces $L_j \backslash \H^2$ will have packings of density $\Dopt(\H^2,r_n)$, and so taking the hard-sphere model on $L_j \backslash \H^2 $ at high activity will yield finite-volume Gibbs measures on packings of density approximately $\Dopt(\H^2,r_n)$ (see Lemma \ref{lem:packing-to-gibbs}).  Taking a subsequence of these finite-volume Gibbs measures as $j \to \infty$ will yield an $(r_n,\lambda)$-Gibbs measure (see Lemma \ref{lem:construct-infinite-volume-from-limits}).  Using continuity of $r \mapsto \Dopt(\H^2, r)$ along with the fact that $D_{\per}(\H^2,r) = \Dopt(\H^2,r)$ (see Theorem \ref{thm:dim2-facts}) we can in fact perform this approach for $r$ \emph{near} $r_n$ rather than just \emph{at} $r_n$.  This construction is performed in Section \ref{secDenseGibbs}.

To complete the proof, we show that lattice actions are not weak Poisson factors.  In particular, this will show that the lattice packings $\mu_n$  which are the unique optimal packings at the {tight} radii $\{r_n\}_{n \geq 7} \cup \{\infty\}$ are not weak Poisson factors.  To begin with, we will consider a periodic measure $\mu \in \M_\rho(\H^2)$ and take a fundamental domain of the underlying lattice in order to reduce to the more classical setting of Bernoulli factors in free groups (see Section \ref{sec:reduction-free} and Lemma \ref{lem:mc-to-Bernoulli-shift}).  By taking a free group $F$ in $\Isom(\H^2)$ we obtain an  action $F \cc (\cP_\rho(\H^2),\mu)$, which is ergodic by the Howe-Moore Theorem and has finite entropy by work of Rufus Bowen and Handel and Kitchens (see Proposition \ref{C:weak}).  Since this action has finite entropy, we then show the action $F \cc (\Isom(\H^2),\mu)$ is measurably conjugate to some action $(\mathtt{A}^F,\nu)$ for some finite set $\mathtt{A}$ and probability measure $\nu$, where $\nu$ is a weak Bernoulli factor if $\mu$ is (Proposition \ref{P:weakBernoullifactor1}).   We will review the theory of annealed (sofic) entropy and show that weak Bernoulli factors on $\mathtt{A}^F$ have non-negative annealed entropy (Proposition \ref{P:nonnegative}).  Conversely, we will see that finite entropy actions have annealed entropy equal to $-\infty$ (Proposition \ref{P:smoothactions}).  This will show that our original measure $\mu$ cannot simultaneously be supported on a lattice and a weak Bernoulli factor.  This is carried out in Section \ref{secLattice}.

As a consequence of uniqueness of the optimal measures at the tight radii $\{r_n\} \cup \infty$ along with properties of packings in $\H^2$, we  deduce the following main theorem:
\begin{theorem}
    \label{T:poissongap}
    There exists an unbounded open set $R \subset (0,\infty)$ such that for every $r \in R$, $D_{\Pois}(\H^2,r)<\Dopt(\H^2,r)$. Moreover, we have $\{r_n\}_{n \geq 7} \subset R$ and there is a real number $\rho$ such that $R \supset (\rho,\infty)$.
\end{theorem}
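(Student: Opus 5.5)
The plan is to first prove the gap at the tight radii themselves, and then spread it to nearby radii by an upper semicontinuity argument in $r$. We will use the fact, established later in Section~\ref{secLattice}, that periodic (lattice) measures are not weak Poisson factors. Throughout, \emph{strict} inequalities are obtained by compactness of the space of weak Poisson factors.

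\textbf{Step 1: the tight radii.} Fix $n\ge 7$. We show $D_{\Pois}(\H^2,r_n)<\Dopt(\H^2,r_n)$. By compactness of $\M_{r_n}(\H^2)$, the set of weak Poisson factors is closed. Since $\density_{r_n}$ is continuous, $D_{\Pois}(\H^2,r_n)$ is attained by some weak Poisson factor $\nu$. Suppose equality held. Then $\density_{r_n}(\nu)=\Dopt(\H^2,r_n)$, and Theorem~\ref{T:special} forces $\nu=\mu_n$, which is periodic. The main input is that a periodic measure such as $\mu_n$ is not a weak Poisson factor, which contradicts this. That input is proved via the reduction to free group actions and annealed entropy: weak Bernoulli factors have annealed entropy $\ge 0$ (Proposition~\ref{P:nonnegative}), while the finite-entropy lattice action has annealed entropy $-\infty$ (Proposition~\ref{P:smoothactions}). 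The analogous horoball statement (Theorem~\ref{T:horoball1}) gives the same conclusion at $r=\infty$, with the corresponding horoball packing measure.

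\textbf{Step 2: openness.} We show that $r\mapsto \Dopt(\H^2,r)-D_{\Pois}(\H^2,r)$ is positive on an open set containing each point where it is positive. Since $\Dopt(\H^2,\cdot)$ is continuous by Theorem~\ref{thm:dim2-facts}, it suffices to prove that $D_{\Pois}(\H^2,\cdot)$ is upper semicontinuous. Take $r_k\to r$ and weak Poisson factors $\nu_k\in\M_{r_k}(\H^2)$ attaining $D_{\Pois}(\H^2,r_k)$. Viewed as measures on locally finite point sets, these are tight, so a subsequence converges to some $\nu$. The limit $\nu$ is supported on $2r$-separated sets, is invariant, and is again a weak Poisson factor, by a diagonal argument on the approximating Poisson factors. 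Densities converge because the covering events $\{\dist(\cO,X)\le r_k\}$ converge and the boundary has measure zero. Hence $\limsup_k D_{\Pois}(\H^2,r_k)\le D_{\Pois}(\H^2,r)$. We therefore define $R$ as the open set of radii where the gap is positive; by Step 1 it contains every $r_n$ and is unbounded.

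\textbf{Step 3: a whole ray $(\rho,\infty)$.} We apply the same semicontinuity argument at $r=\infty$. As $r\to\infty$, view ball packings of radius $r$ in horoball-type coordinates, normalized so that a ball touching near $\cO$ looks locally like a horoball. Then limits of measures in $\M_r(\H^2)$ as $r\to\infty$ are horoball packing measures, and densities converge. We also need $\Dopt(\H^2,r)\to \Dopt(\H^2,\infty)$, the optimal horoball density. The limsup direction comes from the compactness argument above; the liminf direction comes from approximating the optimal horoball packing by ball packings of large radius. Combining this with the strict gap at $\infty$ from Step 1 yields $\rho$ such that the gap is positive for all $r>\rho$.

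I expect Step 3 to be the main obstacle. It requires making precise the degeneration of large balls to horoballs, and checking that both the weak Poisson factor property and the density functional pass to this limit. Within Step 1, the deep input is the entropy result, which I take from Section~\ref{secLattice}.
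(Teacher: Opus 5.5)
Your proposal is correct and follows the paper's proof essentially step for step. The gap at each $r_n$ and at $\infty$ is Lemma~\ref{L:PoissonGap1} combined with Theorems~\ref{T:special}, \ref{T:horoball1} and \ref{thm:periodic-not-Poisson-factor}. Openness of the gap set in $(0,\infty]$, which also yields the ray $(\rho,\infty)$, comes from continuity of $\Dopt(\H^2,\cdot)$ on $(0,\infty]$ (Theorem~\ref{T:continuity2}) together with upper semicontinuity of $D_{\Pois}(\H^2,\cdot)$ (Lemma~\ref{L:PoissonGap2}).

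Your Step 3 is lighter than you expect. No renormalization is needed: in the Fell topology on $\Closed(\H^2)$, limits of radius-$r$ packings as $r\to\infty$ are automatically horoball packings, and Lemma~\ref{L:continuity-general} gives convergence of densities. The liminf of $\Dopt(\H^2,r)$ as $r\to\infty$ comes from the tight packings at $r_n$, whose densities tend to $3/\pi$, together with the shrinking bound $D_*(\H^2,t)\ge \frac{\Vol(B_t)}{\Vol(B_r)}D_*(\H^2,r)$ for $t<r$ and the fact that $\Vol(B_{r_n})/\Vol(B_{r_{n-1}})\to 1$.
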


From here, non-uniqueness follows easily: for $r$ in the above set $R$, our nearly optimally dense Gibbs measure has density arbitrarily close to $\Dopt(\H^2,r)$  for sufficiently large activity $\lambda$; conversely, our Gibbs measure that is a weak Poisson factor has density bounded above by $D_{\Pois}(\H^2,r)$ \emph{for all} activities $\lambda$.  As such, taking $\lambda$ large enough to separate these densities yields multiple distinct Gibbs measures and thus our desired phase transition.

A second consequence of Theorem \ref{T:poissongap} is the existence of suboptimal sphere packings in $\H^2$ that cannot be made denser by removing and adding  only finitely many spheres while remaining a packing.  This is the notion of a \textit{completely saturated packing}, introduced in \cite{MR1604938} (see also \cite{bowen2003existence}).  Roughly, a packing by radius-$r$ balls is completely saturated if no finite number of centers can be removed from the packing and replaced with strictly more centers and remain a packing by radius-$r$ balls.

Here we will abuse terminology slightly by referring to elements of $\cP_r(\bX)$ as packings rather than as $2r$-separated point measures. 
\begin{definition}
A packing $P \in \cP_r(\bX)$ is {\bf completely saturated} if there does not exist  packings $P',P_0,P_1 \in \cP_r(\bX)$ satisfying
\begin{enumerate}
    \item $P$ is the disjoint union of $P'$ and $P_0$;
    \item $P_0$ and $P_1$ each contain only finitely many disks and $P_0$ contains fewer disks than $P_1$;
    \item $P'$ is disjoint from $P_1$ and $P' \cup P_1 \in \cP_r(\bX)$ is a packing.
\end{enumerate}
\end{definition}

We will prove that every weak Poisson factor $\mu \in \M_r(\H^d)$ achieving $\density(\mu) = D_{\Pois}(\H^d,r)$ is completely saturated.  In light of Theorem \ref{T:poissongap}, this implies that there are completely saturated $r$-packings of sub-optimal density in $\H^2$:

\begin{theorem}
    \label{T:saturated}
    Let $\mu \in \M_r(\H^d)$ be a measure which is a weak Poisson factor and $\density(\mu)=D_{\Pois}(\H^d,r)$. Then $\mu$-a.e. packing $P$ is completely saturated. In particular, if $d=2$ and $r \in R$ (where $R$ is as in Theorems~\ref{T:main},~\ref{T:poissongap}) then there exists an invariant measure supported on  completely saturated  packings with sub-optimal density.
\end{theorem}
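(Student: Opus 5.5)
We prove the first assertion by contradiction. Suppose the set $E$ of non-completely-saturated packings has $\mu(E)>0$. The plan is to modify a weak Poisson factor approximating $\mu$ by an equivariant local surgery that strictly raises density. The new measure will still be a weak Poisson factor but denser than $D_{\Pois}(\H^d,r)$, which is the contradiction.

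First we localize the defect. For each $P\in E$ there are finite $P_0\subset P$ and $P_1$ with $|P_1|>|P_0|$ and $(P\setminus P_0)\cup P_1$ a packing. By countability we fix $R>0$, $k$, and an open condition with the following property: there is a measurable set $E'\subset E$ with $\mu(E')>0$ such that for $P\in E'$ there is a swap with $P_0,P_1\subset B(x,R)$ for some $x$ within distance $1$ of the origin. We also want the swap to be stable under small perturbations of $P$. To arrange this, we shrink slightly. We use $\rho$-separation with slack, or else we note that the set of packings admitting a swap in a ball is open in the vague topology, since the constraint $\dist\ge 2r$ is closed and we may choose $P_1$ with a strict margin from $P'$ when $P_1$ is not tight. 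If tightness is an issue, we instead perturb $P_1$ within the open region of admissible positions. We then pick a continuous bump $f$ detecting this event, so that $\mu(f)>0$.

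Next we perform the surgery on Poisson factors. Let $\nu_j=\phi_{j*}\Pois$ be Poisson factors converging weakly to $\mu$. Using extra independent Poisson randomness, via a fresh time slab, which is allowed since a factor of $\Pois_{\bX,T}$ can use $[0,T]$ split into two parts, we select an equivariant sparse set of ``sites''. For example, we take Poisson points of small intensity $\epsilon$ and keep those with no other site within $10R$. At each site $x$ where the packing $\phi_j(X)$ admits a swap in $B(x,R)$, we perform the swap, choosing measurably and equivariantly, say by a fixed measurable selection rule based on the local configuration. Sites are $10R$-separated, so the swaps do not interact and the result is a packing. The new map $\psi_j$ is equivariant and measurable, so $\psi_{j*}\Pois$ is a Poisson factor.

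The density gain comes from counting. Each swap adds at least one ball, and by the mass transport principle / Campbell formula, density equals $\vol(B_r)$ times the intensity of centers. The intensity therefore increases by at least $\epsilon'\cdot\P(\text{swap available near a site})$. By the weak convergence and the openness established above, this probability is bounded below by roughly $c\,\mu(E')>0$ uniformly in large $j$. Hence $\limsup_j \density(\psi_{j*}\Pois)\ge D_{\Pois}+\delta$, which contradicts the definition of $D_{\Pois}$. The main obstacle is the openness/stability step: ensuring that swap availability for $\mu$ transfers to $\nu_j$ under weak convergence, which needs a lower semicontinuity (portmanteau) argument with an open set. For the second assertion we take $r\in R$ from Theorem~\ref{T:poissongap}. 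By compactness, a weak Poisson factor $\mu$ attains $D_{\Pois}(\H^2,r)<\Dopt(\H^2,r)$, and the first part shows it is supported on completely saturated packings.
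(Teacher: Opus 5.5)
The gap is exactly the step you flag as the main obstacle, and your proposed resolution does not work. Let $X_R\subset\cP_r(\H^d)$ be the set of packings admitting a swap with $P_0\cup P_1$ in a fixed closed ball. This set is \emph{closed}. If $P^{(m)}\to P$ and each $P^{(m)}$ admits a swap, the cardinalities are bounded and separated points cannot merge. So a subsequence of the swaps converges, and the limit is a swap for $P$ because every constraint $\dist\ge 2r$ is closed.

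Nothing, however, makes $X_R$ open. Given $P'$, the admissible set for $P_1$ is closed and can have empty interior, for instance a single point when the hole left by $P_0$ is exactly large enough. So ``perturbing $P_1$ within the open region of admissible positions'' is not available. For all you have shown, a maximizing $\mu$ may carry only such tight swaps, with $\mu(\mathrm{int}\,X_R)=0$. In that case:
\begin{itemize}
\item the portmanteau theorem gives only $\limsup_j\nu_j(X_R)\le\mu(X_R)$, which is the wrong direction;
\item a continuous $f$ with $0\le f\le\one_{X_R}$ and $\mu(f)>0$ exists only if $\mu(\mathrm{int}\,X_R)>0$;
\item nothing prevents the approximants $\nu_j$ from being supported on completely saturated packings.
\end{itemize}
So the lower bound $c\,\mu(E')$ uniformly in large $j$ is unjustified, and the density gain goes with it.

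The missing idea is to create slack by shrinking the radius. For $s<r$, every $r$-swap of $P$ is an $s$-swap with margin $2(r-s)$. Let $U_s$ be the set of $2s$-separated configurations admitting an $s$-swap with strict margin inside a slightly larger ball. Then $U_s$ is open in $\cP_s(\H^d)$ and contains $X_R$, so $\liminf_j\nu_j(U_s)\ge\mu(X_R)$.

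Now run your sparse-site surgery at radius $s$ on the $\nu_j$. Take the sites with frames, e.g.\ from a Poisson process on $\Isom(\H^d)$, so the swap can be chosen equivariantly even at locally symmetric configurations. This gives Poisson factors in $\M_s(\H^d)$ whose $s$-density is at least $\frac{\vol(B_s)}{\vol(B_r)}\density_r(\nu_j)+c\,\vol(B_s)$ for large $j$. Here $c>0$ depends on the site intensity and $\mu(X_R)$ but not on $s$. Hence $D_{\Pois}(\H^d,s)\ge\frac{\vol(B_s)}{\vol(B_r)}D_{\Pois}(\H^d,r)+c\,\vol(B_s)$ for all $s<r$. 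Letting $s\nearrow r$ contradicts the upper semicontinuity of $D_{\Pois}(\H^d,\cdot)$ in Lemma~\ref{L:PoissonGap2}.

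For comparison, the paper's sketch applies an equivariant Borel swap map $\Phi$ to $\mu$ itself and invokes closure of weak Poisson factors under factor maps. The natural justification of that step is to push the approximating Poisson factors forward by $\Phi$. This requires $\Phi$ to be continuous $\mu$-a.e., which is the same tightness issue. So your choice to work on the approximants is the right one, once slack has been created as above. Your treatment of the second assertion matches the paper's.
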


The conclusions of both Theorem~\ref{T:poissongap} and Theorem~\ref{T:saturated} are false with $\H^2$ replaced by $\R^d$: in Euclidean space the density of a  factor of Poisson can approach the optimal density, and the only completely saturated packings are optimal packings.  In particular, our entire proof strategy for Theorem~\ref{T:main} makes essential use of the non-amenability of $\H^2$ and cannot apply to $\R^d$.

\subsection{Related work and discrete models}

While no phase transition has been proved for the hard sphere model on $\R^d$, the existence of phase transitions has been proved for a related discrete model, \emph{the hard-core model}, on certain infinite graphs.  We briefly describe some of the results here and use the hard-core model to illustrate some of the main ideas of our proof of Theorem~\ref{T:main}.

Suppose $G=(V,E)$ is a locally finite graph and $\lambda>0$. A subset $I \subset V$ is an {\bf independent set} if no edge in $E$ has both endpoints in $I$. Let $\Ind(G)$ denote the collection of independent sets of $G$.

If $G$ has only finitely many vertices then the {\bf hard-core model at activity $\lambda$} is 
a random subset $S \subseteq V$ with each vertex included independently with probability $\lam/(1+\lam)$ conditioned on the event $\{ S \in \Ind(G) \}$. If $G$ has infinitely many vertices then, as with the hard sphere model, we can define a $\lam$-hard-core model via the DLR equations and conditional measures. We define $\lam_u(G)$ as the infimum over activities $\lam$ so that there exist multiple distinct $\lam$-hard-core measures on $G$.

A phase transition for the hard-core model on $\Z^d$ with nearest neighbor edges, $d \ge 2$, (in the sense $\lam_u(\Z^d) < \infty$)  was proved by Dobrushin in 1968~\cite{dobrushin1968problem}. Dobrushin's upper bound on $\lam_u$ is exponentially large in $d$.  Galvin and Kahn later showed $\lambda_u(\Z^d) \leq d^{-1/4 + o(1)}$~\cite{galvin2004phase}, which was further improved to  $\lambda_u(\Z^d) \leq  d^{-1/3 + o(1)}$ by Peled and Samotij~\cite{peled2014odd}.  Very recently, Hadas and Peled \cite{hadas2026critical} showed $\lambda_u(\Z^d) = d^{-1 + o(1)}$ which matches the classical lower bounds of the form $\lam_u(\Z^d) = \Omega(d^{-1})$ up to sub-polynomial factors.

The hard-core model on the infinite, $\Delta$-regular tree $T_\Delta$ plays an important role in probability, algorithms, and computational complexity.  The Gibbs uniqueness threshold is $\lam_u(T_{\Delta}) = \frac{(\Delta-1)^{\Delta-1}}{(\Delta-2)^\Delta}$~\cite{spitzer1975markov,kelly1985stochastic}.  This threshold is a lower bound on $\lam_u(G)$ for all graphs  $G$ of maximum degree $\Delta$.  It also marks a computational threshold between the existence of polynomial-time algorithms and NP-hardness for the problem of sampling from the hard-core model on a graph of maximum degree $\Delta$~\cite{Weitz,sly2010computational}, as well as the threshold for rapid mixing of the Glauber dynamics for the same class of graphs~\cite{mossel2009hardness,anari2021spectral}.

The value of $\lam_u(T_\Delta)$ can be computed exactly by solving a fixed point equation, but we note that one can prove the existence of a phase transition on $T_\Delta$ in a softer fashion, especially for $\Delta$ large.  In order to demonstrate that on $T_\Delta$ there are at least two Gibbs measures for large $\lambda$---much like in the proof of Theorem \ref{T:main}---one can find two isometry-invariant Gibbs measures of different densities.  In particular, on the random $\Delta$-regular bipartite graph on $n$ vertices, the densest independent set has density $1/2 $.  For large $\lambda,$ the hard-core model on these graphs thus has density arbitrarily close to $1/2$; by mapping a random vertex to the root of $T_\Delta$ and taking weak (subsequential) limits,  one can lift this sequence  to a Gibbs measure on $T_\Delta$ of density arbitrarily close to $1/2$.  Conversely, if one considers the random $\Delta$-regular graph on $n$ vertices, the first-moment method shows that {with probability tending to $1$ as $n \to \infty$}, all independent sets have density uniformly bounded away from $1/2$.  We may then lift the hard-core model on these graphs to a Gibbs measure on $T_\Delta$ which necessarily has density bounded uniformly away from $1/2$.  

At a high level, this proof of a phase transition on $T_\Delta$ bears similarity to the proof of Theorem \ref{T:main}.  We indeed will lift a sequence of finite-volume Gibbs measures to obtain an infinite-volume Gibbs measure of nearly optimal density, as in the case of random regular bipartite graphs above.  However, rather than construct finite-volume approximations to $\H^2$ on which there are no packings of density near the optimum on $\H^2$, we show that the unique optimal packing on $\H^2$ cannot be approximated by Poisson factors.  Indeed, one can perform the same argument on $T_\Delta$: there is a unique isometry invariant probability measure with density equal to the optimal density of $1/2$; one can show that this probability measure is not a weak limit of factors of i.i.d.\ either by computing the optimal density of random regular graphs or through the annealed entropy approach that we follow here.

We pause for a moment to point out that the existence of a phase transition for the hard-core model is quite a delicate question. In particular, while Dobrushin proved that on $\Z^d$ with nearest neighbor edges there is a phase transition for $d \geq 2$, a foundational work by Heilmann-Leib proved that if one instead considers any line graph---i.e.\ where one constructs a new graph $H$ from the old graph $G$ by having the vertex set of $H$ equal to the set of edges of $G$ and having two vertices in $H$ be connected by an edge if the corresponding edges in $G$ shared a vertex---then uniqueness holds \emph{for all} $\lambda > 0$, i.e.\ $\lambda_u = +\infty$.  For vertex transitive graphs like $\Z^d$, its line graph is quasi-isometric to the underlying graph.  This shows, in particular, that the existence of a phase transition for the hard-core model depends on more than just large-scale geometric information.

\subsection{Outline of the paper}

\begin{itemize}
   
    \item \S \ref{secPrelim} collects definitions and auxiliary results needed in later sections.

    \item \S \ref{sec:packings} explains what is known about optimally dense sphere and horoball packings in hyperbolic space and the functions $D_{\opt}(\H^n,\cdot), D_{\per}(\H^n,\cdot)$.
    
    \item \S \ref{secDenseGibbs} shows that for large $\lam$, there are $(r,\lam)$-Gibbs measures with near optimal density. 

    \item \S \ref{sec:lower-density} shows that for every $(r,\lam)$, there is an $(r,\lam)$-Gibbs measure that is a weak Poisson factor.

    \item \S \ref{secLattice} shows the special lattice packings with radii $\{r_n\}$ cannot be realized as weak Poisson factors.

\item \S \ref{secPoissonGap} proves the main results: Theorems~\ref{T:poissongap},  \ref{T:main} and \ref{T:saturated} in this order, by combining earlier results.

\item \S \ref{S:open} is a collection of open problems. 
\end{itemize}

\section{Preliminaries}
\label{secPrelim}

Here we give precise definitions and state auxiliary results we will use later.  

\subsection{Spaces of point measures and basic properties}

\begin{definition}
Let $\bX$ be a locally compact second countable (lcsc) space. A measure $\Pi$ on $\bX$ is called a {\bf point measure} if it can be expressed as a sum of Dirac masses $\Pi = \sum_{x\in S} c(x)\delta_x$ for some locally finite discrete subset $S\subset \bX$ and non-negative integers $\{c(x)\}_{x \in S}$. 

The space of all point measures on $\bX$ is denoted $\cP(\bX)$. It is a subset of the space of Radon measures on $\bX$, which may be identified with a subset of the Banach dual $C_0(\bX)^\ast$ via the Riesz Representation Theorem. We will consider $C_0(\bX)^\ast$ with the weak$^*$ topology. Thus a sequence $(\Pi_j)_{j \geq 1}$ of point measures converges to $\Pi$ if for every compactly supported continuous function $f:X \to \R$ we have $\Pi_j(f) \to \Pi(f)$ where $\Pi_j(f) :=\int f~d\Pi_j$. Then $\cP(\bX)$ is a closed subset of $C_0(\bX)^\ast$ and therefore $\cP(\bX)$ is Polishable in the sense that there exists a complete separable metric inducing its topology.

If $G$ is a group with a jointly continuous action $G \times \bX \to \bX$ on $\bX$ then there is an induced (jointly continuous) action on $\cP(\bX)$ by 
$$h\left(\sum_{x\in S} c(x)\delta_x\right) = \sum_{x\in S} c(x)\delta_{hx}.$$ 
\end{definition}

In a packing of radius $r$ balls with disjoint interiors, all centers are mutually of distance at least $2r$ apart.  This motivates the following definition, which indicates when a point measure may be interpreted as an $r$-packing.

\begin{definition}
A point measure of the form $\Pi = \sum_{x\in S} c(x)\delta_x$ is {\bf $2r$-separated} if $S$ can be chosen so that $c(x)=1$ for all $x \in S$ and if $x,y \in S$ are distinct then the distance between them is at least $2r$. Let $\cP_r(\bX) \subset \cP(\bX)$ denote the subspace of $2r$-separated point measures. This space is compact in the weak* topology.
\end{definition}

We will make use of properties of probability measures on compact Hausdorff spaces and so we now introduce the relevant notation and topology.

\begin{definition}
In general, if $X$ is a compact Hausdorff space, then we let $\Prob(X)$ denote the space of Borel probability measures on $X$. By the Riesz-Markov Theorem, we may consider $\Prob(X)$ as embedded in the Banach dual of $C(X)$, which is the space of continuous functions on $X$ which we consider with the weak* topology. It is an exercise to check that $\Prob(X)$ is closed in the unit ball.  So the Banach-Alaoglu Theorem implies $\Prob(X)$ is compact.

Our focus will be on $\Prob(\cP_r(G/K))$ for some lcsc group $G$ and compact subgroup $K\le G$. This space admits a jointly continuous $G$-action. Indeed, the action of $G$ on $G/K$ by left-translations induces an action of $G$ on $\cP_r(G/K)$ (as explained above) which induces an action on $\Prob(\cP_r(G/K))$ by $(g\mu)(E)=\mu(g^{-1}E)$ for $g\in G, \mu \in \Prob(\cP_r(G/K))$ and $E\subset \cP_r(G/K)$. 
\end{definition}

Let $\M_r(\bX)\subset \Prob(\cP_r(\bX))$ be the space of Borel probability measures on $\cP_r(\bX)$ that are invariant under the isometries of $\bX$.  In particular, this is the set of probability measures $\nu$ supported on $\cP_r(\bX)$ so that for all Borel sets $E \subset \cP_r(\bX)$ and all $g \in \Isom(\bX)$ we have $\nu(gE) =\nu(E)$. This space is closed in $\Prob(\cP_r(\bX))$ and therefore is weak* compact.

If $\Pi$ is a point measure on $\H^d$ say then by abuse of notation we identify $\Pi$ with its support, which is a subset of $\H^d$. For example, the notation $p \in \Pi$ means $\Pi(p)>0$. 

\subsection{Lattices in \texorpdfstring{$\H^d$}{H^d} and Benjamini-Schramm convergence}

For a Lie group $G$, we recall that a \textbf{lattice} $L \leq G$ is a discrete subgroup of $G$ so that $G/L$ admits a finite $G$-invariant Borel measure. This measure is unique up to scaling because it is locally equal to left-Haar measure on $G$.

\begin{fact}\label{fact:lift}
    Let $L \leq \Isom(\H^d)$ be a lattice with quotient map $\pi:\H^d \to L \backslash \H^d$. For $\Pi \in \cP_r(L \backslash \H^d)$ let $\widetilde{\Pi} \in \cP_r(\H^d)$ be the lift defined as follows: if $\Pi=\sum_{x \in S} \delta_x$ for a discrete set $S \subset L \backslash \H^d$ then
    $$\widetilde{\Pi} = \sum_{x \in \pi^{-1}(S)} \delta_x.$$
    Then there is a unique isometry-invariant measure $\mu_\Pi \in  \M_r(\H^d)$ concentrated on the $\Isom(\H^d)$-orbit of $\widetilde{\Pi}$. Moreover, the map $\Pi \mapsto \mu_\Pi$ extends to an injective continuous affine map from $\Prob(\cP_r(L \backslash \H^d))$ to  $\M_r(\H^d)$.
\end{fact}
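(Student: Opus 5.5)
The plan is to build $\mu_\Pi$ as a Haar average over a fundamental domain and then read off each claimed property. Let $G=\Isom(\H^d)$ with Haar measure $m_G$, normalized so that the induced $G$-invariant measure on $G/L$ is a probability measure. Fix a Borel fundamental domain $\mathcal{F}\subset G$ for the right action of $L$; then $m_G(\mathcal{F})=1$. For $\Pi\in\cP_r(L\backslash\H^d)$ the lift $\widetilde\Pi$ is $L$-invariant, because $\pi^{-1}(S)$ is a union of $L$-orbits. It is also $2r$-separated: if two lifts $x,y$ had $\dist(x,y)<2r$, then $\dist(\pi x,\pi y)\le \dist(x,y)<2r$ in the quotient metric. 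This forces $\pi x=\pi y$, and so $y=\ell x$ with $\ell\ne 1$. That case must be excluded, and this is where care is needed: if $L$ has torsion or short translation lengths, one should either read $\cP_r(L\backslash\H^d)$ as measures whose lifts are $2r$-separated, or note that this is the convention forced by the paper's use. I would state that convention explicitly.

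With that in place, define
$$\mu_\Pi(E)=\int_{\mathcal{F}} \one_E(g\widetilde\Pi)\,dm_G(g).$$
The integrand depends only on the coset $gL$, since $g\ell\widetilde\Pi=g\widetilde\Pi$. So $\mu_\Pi$ is the pushforward of the invariant probability measure on $G/L$ under the map $gL\mapsto g\widetilde\Pi$. This map is well defined, Borel, and in fact continuous, because the $G$-action on $\cP_r(\H^d)$ is jointly continuous. $G$-equivariance of the map then gives $\mu_\Pi\in\M_r(\H^d)$, concentrated on the orbit $G\widetilde\Pi$.

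Uniqueness follows from a standard homogeneous-space argument. The stabilizer $H=\Sym(\widetilde\Pi)$ contains $L$, and it is closed since the action is continuous. The orbit $G\widetilde\Pi$ is Borel-isomorphic to $G/H$. If $\widetilde\Pi$ has at least two points, then $H$ is discrete and contains the lattice $L$, so $H$ is itself a lattice. A $G$-invariant probability measure on $G/H$ is unique, because it is locally Haar. The degenerate cases $\Pi=0$, and $\widetilde\Pi$ having too few points, give $\delta_{\emptyset}$ or are excluded by $L$-invariance, so they are trivial.

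For the extension, set $\Phi(\nu)=\int \mu_\Pi\,d\nu(\Pi)$. This is affine by construction. Continuity reduces to showing that $\Pi\mapsto\mu_\Pi$ is continuous. Take $f\in C(\cP_r(\H^d))$; then $\mu_\Pi(f)=\int_{G/L} f(g\widetilde\Pi)\,dgL$. The lift map $\Pi\mapsto\widetilde\Pi$ is continuous in the weak* topology, since test functions of compact support meet only finitely many translates of a compact fundamental region. Dominated convergence then gives continuity, and the integrated map is continuous for the weak* topology.

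Injectivity is the main obstacle. I would recover $\nu$ from $\Phi(\nu)$ by a Palm/Campbell-type formula. Fix a compact set $D\subset\H^d$ that contains a fundamental domain for $L$ with $m(\partial D)=0$. For $h\in C_c(\H^d)$ and the unfolded configuration, the quantities $\int \widetilde\Pi(h\circ g)\,\ldots$ must be shown to determine the joint law of $\Pi$ up to the $G$-averaging. A cleaner route is to use the disintegration over $K\backslash G/L$. For a Borel set $A\subset\cP_r(L\backslash\H^d)$, consider $\tilde A=\{g\widetilde\Pi:\Pi\in A,\ g\in\mathcal{F}_0\}$, where $\mathcal{F}_0$ is a small neighbourhood of the identity. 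Then $\Phi(\nu)(\tilde A)\approx m_G(\mathcal{F}_0)\,\nu(A)$ whenever $\mathcal{F}_0$ is smaller than the injectivity scale and $\mathcal{F}_0$-translates of distinct lifts are distinguishable. Concretely, $g\widetilde\Pi=g'\widetilde{\Pi'}$ with $g,g'$ near the identity forces $\Pi=\Pi'$ up to the finite normalizer of $L$. So I expect injectivity to hold after possibly quotienting by the finite group $N(L)/L$ acting on $L\backslash\H^d$. Verifying this, or restricting to $L$ with trivial normalizer quotient, is the delicate point. Taking $\mathcal{F}_0\downarrow\{1\}$ would then recover $\nu(A)$, and hence $\nu$.
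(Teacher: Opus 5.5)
Your construction of $\mu_\Pi$ as the pushforward of the invariant probability on $G/L$ under $gL\mapsto g\widetilde\Pi$ is correct, and so are the uniqueness argument via the lattice $H=\Stab(\widetilde\Pi)\supseteq L$, affineness and continuity. This is essentially the paper's proof, with the continuity step made more explicit. Your caveat that $\widetilde\Pi$ need not be $2r$-separated when $\Pi$ charges a point of injectivity radius $<r$ is a real issue, and the paper's proof passes over it. One small repair: ``at least two points implies $H$ discrete'' is false for $d\ge 3$, because rotations about the geodesic through the two points fix both. Argue instead that the identity component of $H$ fixes the discrete set $\widetilde\Pi$ pointwise. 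A nonempty $L$-invariant set lies in no proper totally geodesic subspace, so that component is trivial.

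The real gap is injectivity, and it cannot be filled, because that part of the statement is false in general. Suppose $n\in N_G(L)\setminus L$ induces a nontrivial isometry $\bar n$ of $L\backslash\H^d$. Take $\Pi=\delta_x$ with $\bar n x\neq x$ and injectivity radius at least $r$ at $x$. Then $\widetilde{\bar n\Pi}=\pi^{-1}(\bar n x)=Ln\tilde x=nL\tilde x=n\widetilde\Pi$, so $\mu_{\bar n\Pi}=\mu_\Pi$ although $\bar n\Pi\neq\Pi$. This already happens in the paper's own setting, since the finite-index $L_j\le L$ from residual finiteness may be taken normal. The paper's proof asserts that $\delta_\Pi\mapsto\mu_\Pi$ is injective on extreme points, which fails for exactly this reason; fortunately nothing later in the paper uses injectivity.

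Your estimate $\Phi(\nu)(\tilde A)\approx m_G(\mathcal{F}_0)\,\nu(A)$ breaks at the same place. $\mu_\Pi(\tilde A)$ also counts every $g\in\mathcal{F}_0k$ for which $k\widetilde\Pi$ is $L$-invariant and descends into $A$. So you can only recover $\nu$ on sets saturated for the relation $\Pi\sim\Pi'\iff\widetilde{\Pi'}\in G\widetilde\Pi$.

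This relation is in general coarser than the $N_G(L)/L$-action, so restricting to trivial $N_G(L)/L$ does not rescue the claim. For example, let $L$ have finite index in a lattice $H$ with $N_G(H)\neq H$, and let $\Stab(\widetilde\Pi)=H$. Then any $g\in N_G(H)\setminus H$ gives an $L$-invariant $g\widetilde\Pi\neq\widetilde\Pi$ in the same orbit, even if $N_G(L)=L$.

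What is true follows cleanly from the paper's extreme-point argument: each $\mu_\Pi$ is ergodic and ergodic decompositions are unique, so $\Phi(\nu)$ determines the pushforward of $\nu$ under $\Pi\mapsto\mu_\Pi$. Hence $\Phi$ is injective exactly when $\Pi\mapsto\mu_\Pi$ is, i.e.\ when $\sim$ is trivial.
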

\begin{proof}
If $\Pi$ is the zero measure, then let $\mu_\Pi$ be the zero measure. Otherwise, the group $L' = \{g \in \Isom(\H^d):~g\widetilde{\Pi}=\widetilde{\Pi}\}$ is a lattice containing the lattice $L$. Typically $L'=L$, but if $\Pi$ is invariant under a nontrivial isometry of $L \backslash \H^d$, then $L'$ will be larger.

Let $O=\{g\tilde{\Pi}:~ g\in \Isom(\H^d)\} \subset \cP_r(\H^d)$ be the orbit of $\tilde{\Pi}$. Then the action of $\Isom(\H^d)$ on $O$ is topologically conjugate to the action of $\Isom(\H^d)$ on $\Isom(\H^d)/L'$. In particular, there is a unique $\Isom(\H^d)$-invariant Borel probability measure $\mu_\Pi$ supported on the orbit $O$. Thus $\mu_\Pi$ is ergodic.

It remains to show that $\Pi \mapsto \mu_\Pi$ extends to an injective continuous affine map.  For this, recall that a weak* compact convex subset $K$ of a dual Banach space is called a {\bf Choquet simplex} if for every point $x\in K$ there is a unique probability measure $\zeta_x$ supported on the set $K^{\operatorname{ex}}\subset K$ of extreme points such that $x$ is the barycenter of $\zeta_x$ in the sense that $x = \int y~d\zeta_x(y)$.  

The space $\cP_r(\H^d)$ of point measures embeds into $\Prob(\cP_r(\H^d))$ via $\Pi \mapsto \delta_\Pi$. Moreover, the image is exactly the space of extreme points of the Choquet simplex $\Prob(\cP_r(\H^d))$.

The map $\delta_\Pi \mapsto \mu_\Pi$ embeds the extreme points of $\Prob(\cP_r(L \backslash \H^d))$ into the extreme points of $\M_r(\H^d)$. This is because each $\mu_\Pi$ is ergodic and also, because the Ergodic Decomposition Theorem \cite{MR1784210} implies that the space $\M_r(\H^d)$ is a Choquet simplex.

Therefore, the map $\delta_\Pi \mapsto \mu_\Pi$ uniquely extends to an affine map from $\Prob(\cP_r(L \backslash \H^d ))$ to $\M_r(\H^d)$. It is injective because it is injective on the set of extreme points. It is continuous because the map $\delta_\Pi \mapsto \mu_\Pi$ is continuous.
\end{proof}

\begin{definition}\label{def:benjamini-schramm}
Recall that the injectivity radius of a manifold $M$ at a point $p\in M$ is half the minimum length of a homotopically nontrivial closed curve through $p$.      A sequence of finite-volume $d$-dimensional hyperbolic manifolds $M_1,M_2,\ldots$ {\bf Benjamini-Schramm converges} to $\H^d$ if for every radius $r>0$, the probability that a uniform random point in $M_n$ has injectivity radius at least $r$ tends to 1 as $n\to\infty$. (This is a special case of the more general concept of Benjamini-Schramm convergence of Riemannian manifolds for which we refer the reader to \cite{MR4520306}). 
\end{definition}

We will need the following basic fact about lattices in semi-simple Lie groups which is proved, for instance, on \cite[pg. 738]{abert2017on}.

\begin{fact}\label{fact:residually-finite}
    Let $L \leq \Isom(\H^d)$ be a lattice.  Then there are lattices $L_j \leq L$ of finite index (for  $j\in \mathbb{N}$) so that $L_j \backslash \H^d$ Benjamini-Schramm converges to $\H^d$ as $j\to\infty$.
\end{fact}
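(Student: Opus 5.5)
The plan is to use residual finiteness of $L$ and to pass to a descending chain of finite-index \emph{normal} subgroups. Normality is what makes the injectivity-radius estimate uniform over all the translates of a fundamental domain.

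\emph{Step 1: a separating chain of normal subgroups.} A lattice $L\le \Isom(\H^d)$ is finitely generated (this is standard for lattices in semisimple Lie groups, e.g.\ via the thick--thin decomposition). It is also linear, since $\Isom(\H^d)$ is a linear group ($\mathrm{PO}(d,1)$). By Malcev's theorem, $L$ is therefore residually finite. Since $L$ is finitely generated, it has only countably many finite-index subgroups. Hence we can choose finite-index normal subgroups $L\trianglerighteq N_1\trianglerighteq N_2\trianglerighteq\cdots$ with $\bigcap_j N_j=\{1\}$: enumerate the nontrivial elements $\gamma_1,\gamma_2,\dots$ of $L$, pick finite-index normal subgroups $K_i\not\ni\gamma_i$, and set $N_j=K_1\cap\dots\cap K_j$. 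Put $L_j:=N_j$.

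\emph{Step 2: reduce to a compact piece of a fundamental domain.} Fix $R>0$ and $\eps>0$. Let $F\subset\H^d$ be a measurable fundamental domain for $L$, which has finite volume. Choose a compact $K\subset F$ with $\Vol(K)\ge(1-\eps)\Vol(F)$. Because $L$ acts properly discontinuously, the set
$$S=\{\gamma\in L\setminus\{1\}:\ \dist(x,\gamma x)\le 2R \text{ for some } x\in K\}$$
is finite. By Step 1 there is $j_0$ such that $N_j\cap S=\emptyset$ for all $j\ge j_0$.

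\emph{Step 3: transfer to the cover.} If $g_1,\dots,g_m$ are coset representatives of $N_j$ in $L$, then $F_j=\bigcup_i g_iF$ is a fundamental domain for $N_j$. The set $\bigcup_i g_iK$ occupies a fraction at least $1-\eps$ of its volume. Take $y=g_ix$ with $x\in K$, and suppose some nontrivial $h\in N_j$ satisfies $\dist(y,hy)\le 2R$. Then $\dist(x,g_i^{-1}hg_i x)\le 2R$, and $g_i^{-1}hg_i\in N_j\setminus\{1\}$ by normality. This contradicts $N_j\cap S=\emptyset$. So every point of $N_j\backslash\H^d$ lying over $\bigcup_i g_iK$ has no nontrivial deck element moving it by at most $2R$. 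In particular, its injectivity radius is at least $R$; the convention of halving the length of loops matches the factor $2R$. It follows that a uniform random point of $L_j\backslash\H^d$ has injectivity radius at least $R$ with probability at least $1-\eps$ once $j\ge j_0$, which is Benjamini--Schramm convergence.

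\emph{Main obstacle.} The only nontrivial inputs are finite generation and residual finiteness (Malcev), which I would cite. In the non-cocompact case, the cusps are handled automatically by discarding the small-volume set $F\setminus K$. If $L$ has torsion, the quotient is an orbifold, but the argument still controls every nontrivial element moving a point by at most $2R$, which is what the definition requires.
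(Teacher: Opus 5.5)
Your proposal is correct and follows the route the paper indicates. Finite generation plus Malcev's theorem give residual finiteness, and a descending chain of finite-index normal subgroups with trivial intersection then yields Benjamini--Schramm convergence. The paper only sketches this and defers to Ab\'ert et al.; you have supplied the standard injectivity-radius argument, correctly using normality to make the estimate uniform over the cosets.

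One small polish: the definition of Benjamini--Schramm convergence is stated for manifolds, and lattices such as the tiling symmetry groups $L_n$ have torsion. You can intersect each $N_j$ with a torsion-free finite-index normal subgroup from Selberg's lemma, so that the quotients are genuine manifolds rather than orbifolds.
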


This follows from the fact that lattices in semi-simple Lie groups are finitely generated together with Malcev's theorem assuring that such finitely generated groups are residually finite.

\subsection{The hard sphere model: Gibbs measures in finite and infinite volume} \label{sec:Gibbs-definitions}

We recall that $\cP(\bX)$ is the space of all point measures on $\bX$.  Define $\Mf(\bX)$ to be the set of \emph{finite} point measures on $\bX$, i.e.\ measures $\gamma \in \cP(\bX)$ satisfying $\gamma(\bX) < \infty\,.$

For a choice of radius $r > 0$, the \textbf{hard sphere pair potential} at radius $r > 0$ is the function $\phi:\bX^2 \to \{0,+\infty\}$ given by 
\begin{equation*}
\phi(x,y) = \begin{cases}0 & \text{ if } \dist(x,y) \geq 2r \\
+\infty &\text{ if } \dist(x,y) < 2r\end{cases}\,.
\end{equation*}
For each $n$ define $H_n : \bX^n \to \{0,+\infty\}$ via \begin{equation*}
    H_n(x_1,\ldots,x_n) = \sum_{1 \leq i < j \leq n} \phi(x_i,x_j)
\end{equation*}
and setting $H_0 = 0, H_1 \equiv 0\,.$  

Given an $n$-tuple $\mathbf{x}=(x_1,\ldots, x_n) \in \bX^n$, let $\Phi(x)=\sum_{i=1}^n \delta_{x_i} \in \Mf(\bX)$. We view $\Phi$ as a continuous map from the disjoint union $\sqcup_{n \geq 0} \bX^n$ onto $\Mf(\bX)$. Define the {\bf energy} (or Hamiltonian) $H: \Mf(\bX) \to \{0,+\infty\}$ by setting $H(\Phi(\mathbf{x}))=H_n(\mathbf{x})$ for all $\mathbf{x} \in \bX^n$.

For a bounded measurable set $\Lambda \subset \bX$ and \textbf{activity} $\lambda \geq 0$ define the \textbf{grand canonical hard sphere partition function} $Z_\Lambda(\lambda)$ via \begin{equation}\label{eq:partition-function-def}
    Z_\Lambda(\lambda) = 1 + \sum_{n \geq 1} \frac{\lambda^n}{n!} \int_{\Lambda^n} \exp\left(-H\left( \sum_{j = 1}^n \delta_{x_j}\right) \right)\,d\mes^n(\mathbf{x})\,.
\end{equation}

The \textbf{finite volume hard sphere Gibbs measure} $\mu_{\Lambda}^\lambda \in \Prob(\cP(\Lambda))$ is then defined via 
\begin{equation}\label{eq:finite-vol-gibbs-def}
    \int_{\cP(\Lam)} f\,d\mu_{\Lam}^\lambda = \frac{1}{Z_\Lambda(\lambda)}\left(1 + \sum_{n \geq 1}\frac{\lambda^n}{n!}\int_{\Lambda^n} f\left(\sum_{j = 1}^n \delta_{x_j} \right) \exp\left(-H\left(\sum_{j = 1}^n \delta_{x_j} \right) \right)\,d\mes^n(\mathbf{x}) \right)
\end{equation}
for all measurable $f:\cP(\Lambda) \to [0,\infty).$  

We now define the hard sphere energy with boundary conditions.  For $\gamma \in \cP(\bX)$ and  measurable $S \subset \bX$ write $\gamma_S := \gamma|_S$ for the restriction of $\gamma$ to $S$. If  $H(\gamma) = 0$ and $\Lambda \subset \bX$ is bounded then define the conditional energy via $H_{\Lambda}(\eta \,|\, \gamma) := H(\eta + \gamma_{\Lambda^c})$. Define the partition function $Z_{\Lambda\,|\,\gamma}(\lambda)$ and the finite-volume Gibbs measure $\mu_{\Lambda\,|\,\gamma}^\lambda$ by replacing $H$ with $H_\Lambda(\bullet \,|\,\gamma)$ in lines \eqref{eq:partition-function-def} and \eqref{eq:finite-vol-gibbs-def} respectively.

An \textbf{infinite volume Gibbs measure} is a measure $\mu \in \Prob(\cP(\bX))$ supported on $r$-packings so that the \textbf{Dobrushin-Lanford-Ruelle} (DLR) equations: 
\begin{equation}\label{eq:DLR}
    \int_{\cP(\bX)} f d\mu = \int_{\cP(\bX)}  \int_{\cP(\Lambda)}f(\eta + \gamma_{\Lambda^c}) d\mu_{\Lambda\,|\,\gamma}^\lambda(\eta) \,d\mu(\gamma)
\end{equation}
hold for all bounded measurable $\Lambda \subset \bX$ and measurable $f:\cP(\bX) \to [0,\infty).$  We note that by a monotone class argument, to show \eqref{eq:DLR} holds for all measurable $f$ it is sufficient to prove it for bounded \textbf{local} functions, meaning functions $f$ for which there is some compact set $V$ so that $f(\eta)$ depends only on the restriction of $\eta$ to $V$. In fact, it is sufficient for \eqref{eq:DLR} to hold for all continuous local functions because an arbitrary bounded measurable local function can be approximated by continuous local functions in measure. 

We define $\Gibbs(\bX;r,\lambda) \subset \Prob(\cP_r(\bX))$ to be the set of Gibbs measures on $\bX$ for the hard sphere model with activity $\lambda$ and radius $r$. It is closed and therefore compact.  Classical results use the compactness of the space $\Prob(\cP_r(\bX))$  to show  $\Gibbs(\bX;r,\lambda)$ is non-empty (see e.g.\ \cite[Theorem 5.6]{jansen2018gibbsian} for a modern treatment applicable in our setting). 

We will also make use of an equivalent characterization of Gibbs measures, namely the Georgii-Nguyen-Zessin (GNZ) equations.  We say $\mu$ satisfies the \textbf{GNZ equations} if for all measurable $F:\bX \times \cP(\bX) \to [0,\infty]$ we have 
\begin{equation} \label{eq:GNZ-equations}
    \int_{\cP(\bX)} \sum_{x \in \eta}F(x,\eta - \delta_x) \,d\mu(\eta) = \lambda \int_{\bX} \int_{\cP(\bX)} F(x,\eta) e^{-H(x\,|\,\eta)} \,d\mu(\eta)\,d\Vol_\bX(x)\,.
\end{equation}
where we recall $H(x\,|\,\eta) = H(\eta + \delta_x)$ when $H(\eta) = 0$.
Georgii \cite{georgii1976canonical} and Nguyen-Zessin \cite{nguyen1979integral} showed that the GNZ equations are equivalent to the DLR equations (see \cite[Theorem 5.10]{jansen2018gibbsian} for a version in the necessary level of generality). 

We also will require use of the correlation functions of a point process.  
For a probability measure $\mu \in \Prob(\cP(\bX))$ and $k \in \mathbb{N}$, define its \textbf{factorial moment measure} $\mu^{(k)}$ as the  measure given by pushing forward $\mu$ by the map $\cP(\bX) \to \cP(\bX^k)$ given by $\sum_{j} \delta_{x_j} \mapsto \sum_{i_1,\ldots,i_k} \delta_{(x_{i_1},\ldots,x_{i_k})}$ where the sum is over distinct $k$-tuples of points. We say $\mu$ has \textbf{$k$-point correlation functions} $\rho_k:\bX^k \to \R$ defined $\Vol_\bX^{\otimes k}$-almost everywhere if $\rho_k$ is the density of the expected value of a random sample $\Pi \sim \mu^{(k)}$ with respect to $\Vol_\bX^{\otimes k}$.

\subsection{Poisson factors} \label{sec:Poisson-factors}

\begin{definition}
Let $\bX$ be an lcsc space with a Radon measure $\mu$. A {\bf Poisson point process on $\bX$ with intensity measure $\mu$} is a random variable $\bfPi$ taking values in $\cP(\bX)$ satisfying
\begin{enumerate}
    \item for any measurable $E \subset \bX$ with $\mu(E)<\infty$,  $\bfPi(E)$ is a Poisson random variable with mean $\mu(E)$;
    \item if $E_1,E_2,\ldots$ are pairwise disjoint measurable subsets of $\bX$ then the restrictions of $\bfPi$ to $E_1,E_2,\ldots$ are jointly independent random variables.
\end{enumerate}
All such processes have the same law, which we denote by $\Pois(\mu) \in \Prob(\cP(\bX))$.
\end{definition}

\begin{definition}
Fix an lcsc group $G$ and a jointly continuous action $G \cc X$ where $X$ is a compact metrizable space. A measure $\nu \in \Prob(X)$ is a \textbf{Poisson factor} if there exists a $G$-equivariant measurable map $\Phi:\cP(G) \to X$ such that $\nu = \Phi_*\Pois(\lambda)$ where $\lambda$ is a left-Haar measure. More precisely, we only require that $\Phi$ be defined on a subset of full measure with respect to $\Pois(\lambda)$. The equivariance of $\Phi$ means that $\Phi(g \Pi) = g \Phi(\Pi)$ for every $g\in G$ and $\Pois(\lambda)$-a.e. $\Pi \in \cP(G)$.

We say $\nu$ is a \textbf{weak Poisson factor} if there exists a sequence $\nu_1,\nu_2,\ldots \subset \Prob(X)$ such that each $\nu_i$ is a Poisson factor and $\lim_{i\to\infty} \nu_i = \nu$ in the weak$^*$ topology. 
\end{definition}

\begin{remark}
It might seem more natural to consider Poisson point processes on the hyperbolic plane $\H^2$, rather than Poisson point processes on $G=\Isom(\H^2)$. However, $\H^2$ can be identified with $G/K$ where $K$ is the stabilizer of a point in $\H^2$. Using this, we can obtain a $G$-factor map from Poisson point processes on $G$ to Poisson point processes on $G/K$.

To be precise, define $\Phi:\cP(G) \to \cP(G/K)$ by $$\Phi\left(\sum_{x\in S} \delta_x\right) = \sum_{x\in S} \delta_{xK}$$
where $S$ is a multi-set in $G$. This map is $G$-equivariant. Moreover, if $\bfPi$ is a Poisson point process on $G$ with intensity measure $\lambda_G$ (a left Haar measure on $G$) then $\Phi(\bfPi)$ is a Poisson point process on $G/K$ with intensity measure $\lambda_{G/K}$, which is the unique $G$-invariant measure on $G/K$ normalized so that if $O \subset G$ is right-$K$-invariant (meaning $OK=O$) then $\lambda_G(O)=\lambda_{G/K}(OK)$. 
\end{remark}

\section{Optimal Packings in Hyperbolic Space} \label{sec:packings}

\subsection{Optimal packings in \texorpdfstring{$\H^2$}{H^2}: constructing the periodic measures \texorpdfstring{$\mu_n$}{mu^n}}
\label{secLatticeMeasures}

Rogers~\cite{rogers1958packing} proved an upper bound on $\Dopt(\R^d)$ which was extended to spaces of constant curvature by B\"or\"oczky~\cite{boroczky1978packing}. To state it properly, let $S_r \subset \H^d$ be a regular simplex whose side-lengths all equal $2r$. In this context, regular means that for every permutation $\pi$ of the vertices of $S_r$ there is an isometry $\phi$ of $\H^d$ such that $\phi$ fixes $S_r$ and $\pi(v)=\phi(v)$ for every vertex $v$. Now let $S'_r \subset S_r$ be the subset consisting of all points $x\in S_r$ which are distance $\le r$ from the vertex set. B\"or\"oczky proved\footnote{Rogers' proof in Euclidean space \cite{rogers1958packing} provides a sharp upper bound for $\Dopt(\R^2) \leq \pi/\sqrt{12}$ and an exponential upper bound $\Dopt(\R^n) \leq (\sqrt{2} + o(1))^{-n}$ in high-dimensional Euclidean space.}: 
$$\Dopt(\H^d,r) \le \frac{\Vol(S'_r)}{\Vol(S_r)}.$$
This is also called the \textbf{simplex bound}.

In the hyperbolic plane $\H^2$, for any angle $\alpha \in [0,\pi/3)$, there are equilateral triangles with all interior angles $\alpha$. If $\alpha=0$ then such a triangle is ideal, meaning the vertices lie on the boundary $\partial \H^2$. For integers $n>6$, let $T_n \subset \H^2$ be an equilateral triangle whose interior angles all equal $2\pi/n$ and let $r_n$ be half the common side length of $T_n$. Via hyperbolic trigonometric formulas,  
\begin{equation} \label{eq:r_n-def}
\cosh(2r_n) = \cot(2\pi/n)\cot(\pi/n).
\end{equation}

There is a unique tiling of the plane by isometric copies of $T_n$ (up to shifting by isometries). This tiling is invariant under a symmetry group $L_n \le \Isom(\H^2)$ which acts transitively on the tiles. In particular, $L_n$ is a lattice subgroup.

The vertices of this tiling form a point set $P_n$ which is the center set for a radius $r_n$-packing. By construction, the density of this packing meets the simplex bound and is therefore optimal. In fact one can compute  \cite[p. 239]{MR0165423}
$$\Dopt(\H^2,r_n) = \frac{3 \csc(\pi/n)-6}{n-6}.$$
We may then define the isometry invariant periodic measure $\mu_n$ by applying Fact \ref{fact:lift}; by uniqueness of the packing (see e.g., the remark after Theorem 2.1 \cite{kellerhals1998ball}), $\mu_n$ is the unique element of $\M_r(\H^2)$ with $\density(\mu_n) = \Dopt(\H^2,r_n).$

\subsection{The space of closed subsets}

In the next sub-section, we will consider packings by horoballs which are balls of infinite radius. The formalism we have developed to this point does not apply because these are not obtained from point measures on $\H^d$. In order to handle these packings and finite radius packings on the same footing, we will use the space of closed subsets. 

Let $\Closed(\H^d)$ denote the collection of all closed subsets of $\H^d$.  A sequence $(F_n)_{n=1}^\infty \subset \Closed(\H^d)$ converges in the Fell topology to $F_\infty$ if for all $x\in F_\infty$ there exist $x_n\in F_n$ with $\lim x_n = x$ and for all sequences $(x_n)_{n=1}^\infty$ with $x_n \in F_n$ for all $n$ such that $\lim_n x_n = x_\infty$ exists, we must have $x_\infty \in F_\infty$. With this topology, $\Closed(\H^d)$ is compact, metrizable and $\Isom(\H^d)$ acts jointly continuously on $\Closed(\H^d)$ via $(g,C) \mapsto gC=\{gp:~p\in C\}$. Because $\Closed(\H^d)$ is compact, $\Prob(\Closed(\H^d))$ is compact in the weak* topology. Let 
 $\Prob_{\isom} (\Closed(\H^d)) \subset \Prob(\Closed(\H^d))$ be the sub-space of isometry-invariant measures. This space is closed and thus compact.

\begin{definition}
   Given an isometry-invariant Borel probability measure $\nu$ on $\Closed(\H^d)$, define 
   $$\density(\nu) = \nu(\Closed_\cO(\H^d))$$
   where $\Closed_\cO(\H^d)=\{C\in \Closed(\H^d):~\cO \in C\}$ (recall that $\cO\in \H^d$ is a fixed choice of origin). 
\end{definition}
For $C \subset \H^d$, let $\partial C = \bar{C} \cap \overline{\H^d \setminus C}$ be its topological boundary.

\begin{lemma}\label{L:continuity-general}
    Let $\nu$ be an isometry-invariant Borel probability measure on $\Closed(\H^d)$ and let $\mathtt{supp}(\nu) \subset \Closed(\H^d)$ be its support. Suppose for every $C \in \mathtt{supp}(\nu)$, $\vol_{\H^d}(\partial C)=0$. Then $\nu$ is a continuity point for the density functional on $\Prob_{\isom}(\Closed(\H^d))$. In other words, if $(\nu_i)_i$ is a sequence in $\Prob_{\isom}(\Closed(\H^d))$ which converges to $\nu$ in the weak* topology, then $\lim_i \density(\nu_i)=\density(\nu)$. 
\end{lemma}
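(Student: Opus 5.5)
The plan is to express the density as $\nu$ of the set $\Closed_\cO(\H^d)$ and apply the portmanteau theorem. The set $\Closed_\cO(\H^d)$ is closed in the Fell topology: if $C_n \to C$ and $\cO \in C_n$ for all $n$, then the constant sequence $x_n=\cO$ converges to $\cO$, so $\cO\in C$. Portmanteau therefore gives $\limsup_i \density(\nu_i) \le \density(\nu)$ immediately. For the matching lower bound, it suffices to show that the topological boundary of $\Closed_\cO(\H^d)$ in $\Closed(\H^d)$ is $\nu$-null. Then $\Closed_\cO(\H^d)$ is a $\nu$-continuity set, and portmanteau gives convergence.

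The key step is to identify the boundary. Its interior contains every $C$ with $\cO$ in the topological interior of $C$. Indeed, if $B(\cO,\eps)\subset C$ and $C_n\to C$, choose points $x_n\in C_n$ converging to $\cO$. That alone is not enough, so we argue more carefully: we want $\cO\in C_n$ for all large $n$. Hmm, the Fell topology does not obviously force this, since $C_n$ could be $C$ minus a tiny open ball around $\cO$ shrinking to a point, and then $C_n\to C$. So the interior of $\Closed_\cO$ is possibly empty, and the naive boundary argument fails.

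I would therefore use invariance instead. Let $U_\eps = \{C : C\cap B(\cO,\eps)\neq\emptyset\}$, which is open in the Fell topology (the condition is a "hitting an open set" condition). Also let $K_\eps = \{C: \overline{B(\cO,\eps)}\subset C\}$; this set is closed, because containing a compact set passes to Fell limits, by the same argument as for $\Closed_\cO$. Portmanteau then gives
\[
\liminf_i \nu_i(\Closed_\cO)\ \ge\ \liminf_i \nu_i(K_\eps)\ \text{?}
\]
This inequality is the wrong direction for a closed set, so I would instead average. By isometry invariance and Fubini,
\[
\density(\nu_i)=\frac{1}{\vol B(\cO,\eps)}\int \vol(C\cap B(\cO,\eps))\,d\nu_i(C).
\]
So the density equals $\int f\,d\nu_i$ with $f(C)=\vol(C\cap B(\cO,\eps))/\vol B(\cO,\eps)$. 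The function $f$ is upper semicontinuous in the Fell topology, and it is continuous at every $C$ with $\vol(\partial C)=0$. The reason is that along $C_n\to C$, points of $\mathrm{int}\, C$ eventually lie in $C_n$ up to null sets. To see this, take a compact $K\subset \mathrm{int}\, C$ of nearly full measure in $C\cap B$ (using $\vol(\partial C)=0$); then $\liminf \vol(C_n\cap B)\ge \vol(C\cap B)$.

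The main obstacle is exactly that last claim, because Fell convergence does not force $K\subset C_n$. I would handle it with a measure-theoretic Fell lower bound via the upper semicontinuity of the complement: $\vol(B\setminus C_n)$ is controlled by points of $B$ far from $C_n$. Points of $K$ lie near $C_n$, but nearness is not membership. If this fails in general, I would replace $C$ by its $\delta$-neighborhood and use $\vol(\partial C)=0$ with $\vol(C^{\delta}\setminus C)\to 0$, giving $\limsup$ bounds on both sides.

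Once continuity of $f$ on $\mathtt{supp}(\nu)$ is established, $f$ is bounded and continuous $\nu$-a.e. The portmanteau theorem for a.e.-continuous bounded functions then gives $\int f\,d\nu_i\to\int f\,d\nu$, which yields $\density(\nu_i)\to\density(\nu)$.
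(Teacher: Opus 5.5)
There is a genuine gap, and it sits exactly where you suspected: the claim that $f(C)=\vol(C\cap B(\cO,\eps))/\vol(B(\cO,\eps))$ is continuous at every $C$ with $\vol(\partial C)=0$. Fell convergence $C_n\to C$ gives only $\limsup_n\vol(C_n\cap B)\le\vol(C\cap B)$. Nothing pushes interior points of $C$ into $C_n$, and your fallback via $\delta$-neighbourhoods again yields only upper bounds. No argument can close this, because the lemma is false as stated. Take $\nu=\delta_{\H^d}$, which is invariant with $\partial\H^d=\emptyset$ and $\density(\nu)=1$. For $\eps_i\to0$ let $\Pi_i$ be a Poisson process on $\H^d$ of intensity $1/\vol(B(\cO,\eps_i))$. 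Let $\nu_i$ be the law of the closed set $C_i=\H^d\setminus\bigcup_{p\in\Pi_i}B^\circ(p,\eps_i)$, where $B^\circ$ denotes an open ball. Each $\nu_i$ is isometry-invariant with $\density(\nu_i)=\P(\Pi_i(B^\circ(\cO,\eps_i))=0)=e^{-1}$. Yet $\nu_i\to\delta_{\H^d}$. Since the ``miss'' conditions are vacuous at $\H^d$, a basic Fell neighbourhood of $\H^d$ has the form $\{C: C\cap U_j\neq\emptyset,\ 1\le j\le m\}$ with each $U_j$ nonempty and open. The event $C_i\cap U=\emptyset$ forces $\Pi_i$ to charge each of $N_i\to\infty$ pairwise disjoint balls $B^\circ(y_k,\eps_i)$ with centres $y_k\in U$. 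That event has probability at most $(1-e^{-1})^{N_i}\to0$. This is your shrinking-hole phenomenon, with the holes spread out at a fixed volume fraction.

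The paper's proof breaks at the point you identified in your second paragraph. It asserts $\partial\Closed_\cO(\H^d)=\{C:\cO\in\partial C\}$ and then uses Tonelli and invariance to show the right-hand side is $\nu$-null. But for any $C\ni\cO$, the closed sets $(C\setminus B^\circ(\cO,1/n))\cup\{p_n\}$ with $0<\dist(p_n,\cO)<1/n$ avoid $\cO$ and Fell-converge to $C$. So $\Closed_\cO(\H^d)$ has empty interior and is its own boundary. The Tonelli argument proves $\nu(\{C:\cO\in\partial C\})=0$, but the portmanteau hypothesis would require $\nu(\Closed_\cO(\H^d))=0$. What survives is your first paragraph: closedness of $\Closed_\cO(\H^d)$ gives $\limsup_i\density(\nu_i)\le\density(\nu)$ for every $\nu$. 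As far as I can see, this upper semicontinuity is the only direction that Lemma~\ref{L:PoissonGap2} and Corollaries~\ref{C:near-optimal1} and~\ref{C:near-optimal2} actually use. Full continuity at a fixed radius (Corollary~\ref{cor:continuityofdensity1}) is also true, but it should be proved on centres in $\cP_r(\H^d)$ rather than on the sets $N_r(\Pi)$. There the event $\{\Pi:\Pi(\bar B(\cO,r))\ge1\}$ is closed, and its boundary is contained in $\{\Pi:\Pi(\partial B(\cO,r))\ge1\}$. That set is null for every invariant $\mu$, since the centres have intensity at most $1/\vol(B(\cO,r))$ and spheres have volume zero.
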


\begin{proof}
    Observe that $\partial \Closed_\cO(\H^d)=\{C\in \Closed(\H^d):~\cO \in \partial C\}$. By the Portmanteau Theorem, it suffices to prove $\nu(\partial \Closed_\cO(\H^d))=0$. 

    Let $G=\Isom(\H^d)$ and $\vol_G(\cdot)$ be a Haar measure on $G$. Define $F:\Closed(\H^d) \times G \to \R$ by $F(C,g)=1$ if $g\cO \in \partial C$ and $F(C,g)=0$ otherwise. 

For a fixed closed set $C$
    $$\int F(C,g)~d\vol_G(g) = \vol_G(\{g\in G:~g\cO\in \partial C)= \vol_{\H^d}(\partial C).$$
So by hypothesis,
$$\iint F(C,g)~d\vol_G(g)d\nu(C)=\int \vol_{\H^d}(\partial C)~d\nu(C)=0.$$
On the other hand, Tonelli's Theorem implies we can switch the order of the integrals:
\begin{align*}
 0=   \iint F(C,g)~d\vol_G(g)d\nu(C)&=\iint F(C,g)~d\nu(C)d\vol_G(g) \\
    &=\iint F(gC,g)~d\nu(C)d\vol_G(g)
\end{align*}
where the last line holds because $\nu$ is $G$-invariant.  However, $F(gC,g)=1$ if and only if $\cO \in \partial C$. So for fixed $g\in G$, $\int F(gC,g)~d\nu(C) = \nu(\partial \Closed_\cO(\H^d))$. 
Thus $\nu(\partial \Closed_\cO(\H^d))=0$. 
\end{proof}

\begin{corollary}\label{cor:continuityofdensity1}
The functions $\density_r:\M_r(\H^d) \to [0,1]$ are weak* continuous.
\end{corollary}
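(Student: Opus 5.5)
We will deduce Corollary~\ref{cor:continuityofdensity1} from Lemma~\ref{L:continuity-general} by transporting $\M_r(\H^d)$ into $\Prob_{\isom}(\Closed(\H^d))$. For $\Pi \in \cP_r(\H^d)$ set
$$U(\Pi) := \{ y \in \H^d : \dist(y,\Pi) \le r\},$$
the closed $r$-neighborhood of the support of $\Pi$, i.e.\ the union of the closed balls $\bar B(x,r)$ for $x \in \Pi$. This is closed because $\Pi$ is locally finite. The map $U$ is $\Isom(\H^d)$-equivariant, and by construction
$$\density_r(\mu) = \mu(\dist(\cO,X)\le r) = (U_*\mu)(\Closed_\cO(\H^d)) = \density(U_*\mu).$$
So it suffices to show that $U:\cP_r(\H^d)\to\Closed(\H^d)$ is continuous, and that every closed set in the image has null boundary.

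For continuity, suppose $\Pi_j \to \Pi$ in $\cP_r(\H^d)$. Because points are $2r$-separated, weak$^*$ convergence means the following: on each compact set, for large $j$ the points of $\Pi_j$ near points of $\Pi$ converge to them one-to-one, and no other points of $\Pi_j$ stay in a compact region. Points may only escape to infinity, and points cannot merge, since the limit has multiplicity one.

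Both Fell conditions then follow.
\begin{itemize}
\item If $y \in U(\Pi)$ with $\dist(y,x)\le r$ for some $x\in\Pi$, pick $x_j\in\Pi_j$ with $x_j\to x$. The point $y_j$ obtained by moving $y$ along the geodesic toward $x_j$ so that $\dist(y_j,x_j)\le r$ lies in $U(\Pi_j)$ and satisfies $y_j\to y$.
\item If $y_j\in U(\Pi_j)$ and $y_j\to y$, pick $x_j\in\Pi_j$ with $\dist(x_j,y_j)\le r$. These $x_j$ stay in a compact set, so along a subsequence they converge to some $x$, and $x \in \Pi$ by the local description above. Hence $\dist(x,y)\le r$, so $y\in U(\Pi)$.
\end{itemize}
Thus $U$ is continuous, and consequently $U_*:\M_r(\H^d)\to\Prob_{\isom}(\Closed(\H^d))$ is weak$^*$ continuous: for a continuous test function $f$ on $\Closed(\H^d)$, the composition $f\circ U$ is continuous on $\cP_r(\H^d)$.

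For the boundary condition, the boundary of $U(\Pi)$ is contained in the union of the spheres $\partial B(x,r)$ for $x\in\Pi$. This is a countable union of sets of volume zero, so $\vol_{\H^d}(\partial U(\Pi))=0$ for every $\Pi$. In particular this holds for every $C$ in the support of $U_*\mu$, since that support lies in the closed set $U(\cP_r(\H^d))$; this set is closed because $\cP_r(\H^d)$ is compact and $U$ is continuous.

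We now conclude. Let $\mu_i\to\mu$ in $\M_r(\H^d)$. Then $U_*\mu_i\to U_*\mu$, and $U_*\mu$ satisfies the hypothesis of Lemma~\ref{L:continuity-general}. Therefore
$$\density_r(\mu_i)=\density(U_*\mu_i)\to\density(U_*\mu)=\density_r(\mu).$$
The only step requiring care is the continuity of $U$, specifically the second Fell condition. It relies on separatedness, which ensures that limits of centers are genuine centers of $\Pi$ rather than artifacts of cancellation in the weak$^*$ topology.
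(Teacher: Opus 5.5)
You follow exactly the route the paper intends. The corollary is stated without proof right after Lemma~\ref{L:continuity-general}, and the paper then introduces your map $U$ (calling it $N_r$), asserts that it is continuous and equivariant, and notes $\density(N_{r*}\mu)=\density_r(\mu)$. Your checks of continuity of $U$, of $\vol_{\H^d}(\partial U(\Pi))=0$, and of $\mathtt{supp}(U_*\mu)\subset U(\cP_r(\H^d))$ are correct.

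The gap is in the black box itself: Lemma~\ref{L:continuity-general} is false as stated on all of $\Prob_{\isom}(\Closed(\H^d))$. Its proof uses the identity $\partial \Closed_\cO(\H^d)=\{C:\cO\in\partial C\}$, which fails in the Fell topology. In fact $\Closed_\cO(\H^d)$ has empty interior: any $C\ni\cO$ is the Fell limit of $(C\setminus B^\circ(\cO,1/n))\cup\{p_n\}$ with $p_n\to\cO$, $p_n\neq\cO$. Here is a concrete counterexample. The measure $\nu=\delta_{\H^d}$ satisfies the hypothesis, since its only support point has empty boundary, and it has density $1$. Let $\nu_i$ be the law of a Poisson process of intensity $i$, viewed as a random closed set. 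Each $\nu_i$ is invariant and has density $0$, yet $\nu_i\to\nu$, because basic Fell neighborhoods of $\H^d$ only require hitting finitely many nonempty open sets. The same failure occurs at the measures you care about. Take $\Pi\sim\mu$ and $Q_i$ an independent Poisson process of intensity $i$; the law of $Q_i\cap U(\Pi)$ is invariant, converges to $U_*\mu$, and has density $0$. So verifying the lemma's hypotheses does not prove the corollary. You must also use the fact that the approximating measures live on $U(\cP_r(\H^d))$.

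The repair needs only what you have already shown, and it makes the detour through $\Closed(\H^d)$ unnecessary. On the compact space $\cP_r(\H^d)$ put $h(\Pi)=\one\{\dist(\cO,\Pi)\le r\}$, so that $\density_r(\mu)=\int h\,d\mu$.

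Your local description of convergence in $\cP_r(\H^d)$ shows that $h$ is continuous at every $\Pi$ with $\dist(\cO,\Pi)\neq r$:
\begin{itemize}
\item if $\dist(\cO,\Pi)<r$, the nearby center persists in $\Pi_j$;
\item if $\dist(\cO,\Pi)>r$, any centers of $\Pi_j$ in $\bar B(\cO,r)$ along a subsequence would subconverge to a center of $\Pi$ in $\bar B(\cO,r)$, and no such center exists.
\end{itemize}

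Next, run the Tonelli/invariance computation from the proof of Lemma~\ref{L:continuity-general} with the event $\{\dist(\cO,\Pi)=r\}$ in place of $\partial\Closed_\cO(\H^d)$. It shows this event is $\mu$-null, since $\{y:\dist(y,\Pi)=r\}\subset\bigcup_{x\in\Pi}\partial B(x,r)$ has volume zero. The Portmanteau theorem for bounded functions whose discontinuity set is null then gives $\density_r(\mu_i)\to\density_r(\mu)$.

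Equivalently, you may stay in $\Closed(\H^d)$ and apply Portmanteau inside the closed subspace $U(\cP_r(\H^d))$, which carries all the measures $U_*\mu_i$. The lemma's error is to compute the boundary of $\Closed_\cO(\H^d)$ in the whole space.
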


Given $\Pi \in \cP_r(\H^d)$, let $N_r(\Pi) \in \Closed(\H^d)$ be the union of the radius-$r$ balls centered at the points of $\Pi$. The map $\Pi \mapsto N_r(\Pi)$ is continuous, injective and isometry-equivariant. Let $\Closed_r(\H^d) =\{N_r(\Pi):~\Pi \in \cP_r(\H^d)\}$ be the subspace of radius-$r$ packings. Note that the pushforward $N_{r*}$ embeds $\M_r(\H^d)$ into  $\Prob_{\isom}(\Closed(\H^d))$ and this map preserves density in the sense that
$$\density(N_{r*}(\mu)) = \density_r(\mu).$$
In future applications, we will consider $\M_r(\H^d)$ to be a subset of $\Prob_{\isom}(\Closed(\H^d))$ by identifying it with its image $N_{r*}(\M_r(\H^d))$. The reason for doing this is to make sense of weak* limits of the form $\mu = \lim_{n\to\infty} \mu_n$ where $\mu_n \in \M_{r_n}(\H^d)$ and the radii $r_n$ have a limit in $(0,\infty]$ as $n\to\infty$. The case $\lim_n r_n = \infty$ is discussed in the next subsection.

\subsection{Horoballs and their packings} \label{sec:horoballs}

Here we define horoballs, which may be thought of as balls in $\H^d$ of radius $r = \infty$.  
A proper subset $H \subset \H^d$ is a \textbf{horoball} if there exists a sequence of balls $B_n \subset \H^d$ such that the radius of $B_n$ tends to infinity as $n\to\infty$ and $B_n$ limits on $H$ in the Fell topology.

Here is a precise example of a horoball. Let
$$\mathbb{U}^d=\{x=(x_1,\ldots, x_d) \in \R^d:~ x_d>0\}$$
be the upper half-space with the Riemannian metric at $x\in \mathbb{U}^d$ given by
$$g_x(v,w) = \frac{\langle v,w \rangle}{x_d^2}$$
where we have identified the tangent space at $x$ with $\R^d$ in the usual manner. This space is isometric to hyperbolic space $\H^d$. 

Then for each $t>0$, the set 
\begin{align*}
    H_t&=\{x\in \mathbb{U}^d:~x_d \ge t\}, \quad
    \partial  H_t=\{x\in \mathbb{U}^d:~x_d = t\}
\end{align*}
is a horoball.

A \textbf{horoball packing} of $\H^d$ is a collection $P$ of horoballs with pairwise disjoint interiors. Let $\Closed_\infty(\H^d)$ denote the set of all horoball packings. Because we can recover a horoball packing from the union of its members, we may regard $\Closed_\infty(\H^d)$ as a subset of $\Closed(\H^d)$. It is not a closed subset of $\Closed(\H^d)$ because its closure contains the whole space $\H^d \in \Closed(\H^d)$ and the empty set, neither of which are  horoballs. However, the disjoint union of $\Closed_\infty(\H^d)$ with $\{\emptyset, \H^d\}$ is compact. Moreover, it is invariant under the isometry group $\Isom(\H^d)$. 

The next result is a Corollary of Lemma \ref{L:continuity-general}.
\begin{corollary}\label{cor:continuityofdensity2}
The function $\density$ from the space of invariant Borel probability measures on $\Closed_\infty(\H^d)$ to $[0,1]$ is continuous in the weak* topology.
\end{corollary}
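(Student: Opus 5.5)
We will obtain the statement by applying Lemma \ref{L:continuity-general}. For that lemma to apply, the hypothesis we must verify is this: if $\nu$ is an invariant Borel probability measure on the compact space $\Closed_\infty(\H^d)\sqcup\{\emptyset,\H^d\}$, then every $C$ in the support of $\nu$ satisfies $\vol_{\H^d}(\partial C)=0$. The density functional is defined through the ambient space $\Prob_{\isom}(\Closed(\H^d))$. The space of invariant measures on the horoball-packing space (together with the two degenerate points $\emptyset,\H^d$) embeds into $\Prob_{\isom}(\Closed(\H^d))$ as a closed subset. Continuity at every point of this subset will therefore follow from the lemma.

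First we check the degenerate elements $\emptyset$ and $\H^d$. Both have empty topological boundary, so $\vol(\partial C)=0$ trivially.

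Next let $C\in\Closed_\infty(\H^d)$ be the union of a horoball packing $P$. We will show that $\partial C\subset \bigcup_{H\in P}\partial H$. This is the step needing care, since the family $P$ is countable but possibly infinite, and limit points of the horoballs could a priori create extra boundary. The packing is locally finite: any compact ball $B(x,1)$ meets only finitely many horoballs of $P$. This holds because the interiors are disjoint, and each horoball meeting $B(x,1)$ contains a ball of radius $1$ lying within distance $2$ of $x$ (horoballs contain arbitrarily large balls tangent to any boundary point). A volume count in $B(x,3)$ then bounds the number of such horoballs. Local finiteness gives $C$ closed and $\partial C\subset\bigcup_H \partial H$. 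Each horosphere $\partial H$ is a smooth hypersurface; in the upper half-space model it is $\{x_d=t\}$ up to isometry. It therefore has $d$-dimensional volume zero. Since $P$ is countable (disjoint open interiors in a separable space), it follows that $\vol(\partial C)=0$.

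Finally, the support of $\nu$ lies in the compact space $\Closed_\infty(\H^d)\sqcup\{\emptyset,\H^d\}$. Hence every $C$ in the support is either degenerate or a horoball packing, and Lemma \ref{L:continuity-general} applies at every such $\nu$. This gives weak$^*$ continuity of $\density$. The only potential obstacle is the local finiteness argument; everything else is immediate from the lemma.
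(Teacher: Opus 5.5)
Your proposal is correct and takes the paper's route: the paper records this simply as a corollary of Lemma~\ref{L:continuity-general}, and the content left to supply is exactly what you check, namely that every set in the support (the union of a horoball packing, or one of the degenerate sets $\emptyset,\H^d$) has boundary of volume zero. Your local-finiteness argument (each horoball meeting $B(x,1)$ contains a radius-$1$ ball inside $B(x,3)$, and these balls are disjoint) is a sound way to get that the union is closed with $\partial C\subset\bigcup_{H\in P}\partial H$, a point the paper leaves implicit.
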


Let $\M_\infty(\H^d) \subset \Prob_{\isom}(\Closed(\H^d))$ denote the space of all invariant Borel probability measures on $\Closed(\H^d)$ with $\mu(\Closed_\infty(\H^d))=1$. As usual, for $\mu \in \M_\infty(\H^d)$ we define its density by $\density(\mu)=\mu(\{P:~ \cO \in \cup P\})$ where $\cup P$ denotes the union of all horoballs in $P$.

Define the optimal, optimal periodic and optimal Poisson densities by: 
\begin{align*}
D_{\opt}(\H^d,\infty) &= \sup\{\density(\mu):~\mu \in \M_\infty(\H^d)\}\\
D_{\per}(\H^d,\infty) &= \sup\{\density(\mu):~\mu \in \M_\infty(\H^d), \mu \textrm{ is periodic} \}\\
D_{\Pois}(\H^d,\infty) &= \sup\{\density(\mu):~\mu \in \M_\infty(\H^d), \mu \textrm{ is a weak Poisson factor}\}.
\end{align*}
In Section \ref{S:open} we highlight a few open problems about these densities, in particular: are these functions continuous at $r=\infty$? Is $D_{\Pois}(\H^d,\infty)=0$? 

The simplex bound (discussed in \S \ref{secLatticeMeasures}) generalizes in a straightforward way to horoball packings (see \cite[Theorem 4]{boroczky1978packing}). The next result is well known; it follows from the last section of \cite{boroczky1978packing}.
\begin{theorem}\label{T:horoball1}
    There exists a unique measure $\mu \in \Prob_{\isom}(\Closed_\infty(\H^2))$ with density equal to $D_{\opt}(\H^2,\infty)$. Moreover, it is periodic and realizes the simplex bound. In particular, $D_{\opt}(\H^2,\infty)=3/\pi$.
\end{theorem}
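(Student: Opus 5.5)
The plan is to follow B\"or\"oczky's treatment of horoball packings in the last section of \cite{boroczky1978packing}, and to argue in the same way as for the tight radii $r_n$ in Section~\ref{secLatticeMeasures}. The case $r=\infty$ is the limit $n\to\infty$ of that construction.

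\emph{Construction.} Let $T_\infty$ be an ideal triangle, that is, the case of angle $\alpha=0$. The Farey tessellation of $\H^2$ by copies of $T_\infty$ is invariant under the lattice $PSL_2(\Z)$, extended by reflections, and this group acts transitively on the tiles. At each ideal vertex (cusp) of the tessellation, place the horoball whose boundary horocycle meets each incident triangle in the arc through the midpoints of the two sides at that cusp. Here the midpoint of a side of an ideal triangle is the foot of the perpendicular from the opposite vertex. These are the Ford circles, and they have disjoint interiors, mutually tangent along shared edges. Each triangle then contains three horoball sectors. Each sector bounded by the horocyclic arc of length $1$ has area $1$, so the triangle contains covered area $3$ out of total area $\pi$. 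Applying the lift of Fact~\ref{fact:lift}, adapted to $\Closed(\H^2)$, produces a periodic invariant measure of density $3/\pi$.

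\emph{Upper bound.} I would invoke the horoball version of the simplex bound \cite[Theorem 4]{boroczky1978packing}: every horoball packing has local density at most $3/\pi$ in each Delaunay-type cell. To pass from packings to invariant measures, I would take $\mu\in\M_\infty(\H^2)$ and use the Tonelli/mass-transport argument from the proof of Lemma~\ref{L:continuity-general}. Since the Delaunay decomposition of a horoball packing is an equivariantly defined partition into ideal polygons, $\density(\mu)$ equals the expected fraction of a random cell that is covered. Each cell has covered fraction at most $3/\pi$, so $D_{\opt}(\H^2,\infty)=3/\pi$ and the Ford packing is optimal.

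\emph{Uniqueness, the main obstacle.} Equality forces almost every packing in the support of $\mu$ to have every cell an ideal triangle with equality in B\"or\"oczky's local bound. This equality case means the three horoballs at the vertices are pairwise tangent at the side midpoints. Triangles sharing an edge are then determined by one another, so a single tile propagates to determine the whole configuration, which is the Farey/Ford packing up to isometry. Hence $\mu$ is concentrated on one $\Isom(\H^2)$-orbit. By the uniqueness of the invariant probability measure on $\Isom(\H^2)/L'$ from Fact~\ref{fact:lift}, $\mu$ is that periodic measure. The delicate points are:
\begin{itemize}
\item justifying that the Delaunay cells are well defined and almost surely of finite area for a general invariant measure;
\item carrying out the equality analysis, in particular ruling out degenerate cells such as ideal polygons with more than three sides or unbounded complementary regions.
\end{itemize}
I would handle these by citing the equality discussion in \cite{boroczky1978packing} and checking that its hypotheses hold $\mu$-almost surely.
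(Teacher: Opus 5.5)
Your proposal follows the same route as the paper: the ideal-triangle (Farey/Ford) packing with horoballs tangent at the side midpoints, the count of three unit-area sectors in a triangle of area $\pi$, B\"or\"oczky's horoball simplex bound for optimality, and its equality case for uniqueness; the paper simply cites the remark after Proposition~2.2 of \cite{kellerhals1998ball} for uniqueness instead of sketching the mass-transport and edge-propagation steps you outline. One small caution: the bound in \cite[Theorem 4]{boroczky1978packing} is formulated for Dirichlet--Voronoi cells of horoballs rather than Delaunay cells, so the Delaunay-cell bound you use (with equality only for three pairwise tangent horoballs) is not literally the cited result and would need its own direct check in dimension two.
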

\begin{proof}
    Let $T$ be an ideal triangle of $\H^2$. Then $T$ tiles $\H^2$ by reflection. That is, if $\G\le \Isom(\H^2)$ is the subgroup generated by reflections in the sides of $T$ then $\G$ is a non-uniform lattice and the translates $gT$ for $g\in \G$ tile the plane. 

    Let $P$ be the unique horoball packing with the property that the intersection of $P$ with any tile $gT$ (for $g\in \G$) consists of the disjoint union of three sectors which touch at the three midpoints of the sides of $gT$. This packing uniquely realizes the simplex bound (see, e.g., the remark after Proposition 2.2 in \cite{kellerhals1998ball}).  The exact value $D_{\opt}(\H^2,\infty)=3/\pi$ follows from the fact that the ideal triangle has area $\pi$ while each sector in the intersection $P \cap T$ has area $1$.
\end{proof}

\subsection{Partial continuity of optimal densities}

It will be useful for us to prove various forms of continuity for the three densities $\Dopt,D_{\Pois}$ and $D_{\per}$ {as functions of the radius $r$}.  We start with the following basic fact.
\begin{lemma}\label{L:PoissonGap2}
 Fix a dimension $d\ge 2$. Then the functions $D_{\Pois}(\H^d,\cdot), D_{\per}(\H^d,\cdot), D_{\opt}(\H^d,\cdot)$ are left-continuous on $(0,\infty)$ and upper semi-continuous on $(0,\infty]$. 
\end{lemma}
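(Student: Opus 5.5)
Upper semi-continuity comes from compactness plus continuity of density. Left-continuity comes from shrinking balls, which turns an $r$-packing into an $s$-packing for $s<r$ while keeping the structure (periodic, Poisson factor) intact. Throughout, I work inside $\Prob_{\isom}(\Closed(\H^d))$ via the embeddings $N_{r*}$, so that measures at different radii, including $r=\infty$, live in one compact space.

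\emph{Upper semi-continuity on $(0,\infty]$.} Fix $r\in(0,\infty]$ and a sequence $s_k\to r$. For $D\in\{\Dopt,\Dper,\DPois\}$, choose $\nu_k$ in the relevant class at radius $s_k$ with $\density(\nu_k)\ge D(\H^d,s_k)-1/k$. Pass to a weak* convergent subsequence $\nu_k\to\nu$.
\begin{itemize}
\item For $r<\infty$, I check that $\nu$ is supported on $\Closed_r(\H^d)$. Fell limits of unions of $s_k$-balls with $2s_k$-separated centers are unions of $r$-balls with $2r$-separated centers, and local finiteness follows from the uniform separation.
\item For $r=\infty$, limits of unions of disjoint balls with radii tending to infinity are horoball packings, or $\emptyset$, or $\H^d$. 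I must rule out mass on $\H^d$. Alternatively, I use that $\density(\nu)\le\limsup$ holds anyway, since $\Closed_\cO$ is a closed subset of $\Closed(\H^d)$, so Portmanteau gives $\nu(\Closed_\cO)\ge\limsup\nu_k(\Closed_\cO)$.
\end{itemize}
This closedness observation in fact gives upper semicontinuity of density for free, with no continuity lemma needed: $\density(\nu)\ge\limsup_k D(\H^d,s_k)$.

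It remains to check that $\nu$ lies in the right class at radius $r$. For weak Poisson factors this holds because the class is defined as a weak* closure: a diagonal argument gives a sequence of Poisson factors converging to $\nu$. For $\Dopt$ it is automatic. For $\Dper$ it fails, since a limit of periodic measures need not be periodic. So in the periodic case I instead bound $\density(\nu)\le\Dopt(\H^d,r)$, and I need a separate argument to get $\le\Dper(\H^d,r)$.

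I expect the periodic case to be the main obstacle. One option is to interpret $\Dper$ via its closure and assume the paper accepts this. Otherwise, I would try to approximate $\nu$ by periodic measures at radius $r$ using the $\nu_k$ directly. For $s_k\ge r$, shrink each ball: replacing $s_k$-balls by $r$-balls about the same centers preserves periodicity and changes density by $\density(\nu_k)\cdot(1-\Vol B_r/\Vol B_{s_k})\to0$. This gives upper semicontinuity from the right by monotonicity. For $s_k<r$, I would rely on left-continuity (proved below), since $D$ is then approached from the left anyway. I expect the overall argument to combine these two pieces.

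\emph{Left-continuity on $(0,\infty)$.} Monotonicity first: if $s<r$ then $\cP_r\subset\cP_s$. Mapping $\mu\in\M_r$ to the same centers viewed as an $s$-packing is equivariant, so it preserves periodicity and the Poisson factor property, and it is weak* continuous, so it preserves weak limits. The density scales by exactly $\Vol B_s/\Vol B_r$, because density equals intensity times ball volume. Hence
\[
D(\H^d,s)\ge\frac{\Vol B_s}{\Vol B_r}\,D(\H^d,r).
\]
As $s\uparrow r$, the ratio tends to $1$, so $\liminf_{s\uparrow r}D(s)\ge D(r)$. Combined with the $\limsup$ bound from the limit argument (for $s_k\uparrow r$ the limit is a genuine $r$-packing measure, and for $\Dper$ the same shrinking trick applied in reverse), this yields left-continuity.
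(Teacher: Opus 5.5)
For $\Dopt$ and $\DPois$ your argument is the paper's: weak$^*$ compactness of $\Prob_{\isom}(\Closed(\H^d))$, the fact that weak Poisson factors are stable under weak$^*$ limits, and the scaling inequality \eqref{E:continuity1} for the $\liminf$ half of left-continuity. Using Portmanteau on the closed set $\Closed_\cO(\H^d)$ instead of Lemma~\ref{L:continuity-general} is a valid shortcut, since only $\density(\nu)\ge\limsup_k\density(\nu_k)$ is needed. (Your first sentence in the $r=\infty$ bullet states this inequality with the direction reversed.)

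The genuine gap is the periodic case. You are right that a weak$^*$ limit of periodic measures need not be periodic; the paper's proof just calls this case ``similar''. Your argument from the right is correct: for $r<s<\infty$, \eqref{E:continuity1} with the radii exchanged gives $\Dper(\H^d,r)\ge\frac{\Vol(B_r)}{\Vol(B_s)}\Dper(\H^d,s)$, so $\limsup_{s\downarrow r}\Dper(\H^d,s)\le\Dper(\H^d,r)$. But the left side is circular. You defer $\limsup_{s\uparrow r}\Dper(\H^d,s)\le\Dper(\H^d,r)$ to left-continuity. In the left-continuity paragraph you then defer exactly this inequality back to ``the same shrinking trick applied in reverse''. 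No such trick exists. For $s<r$ the centers of an $s$-packing are only $2s$-separated, so the balls cannot be enlarged to radius $r$. Producing a periodic $2r$-separated set requires deleting or moving centers. In near-tight configurations, where each center has many neighbors at distance just above $2s$, deletion can cost a fraction of the density that does not tend to $0$ as $s\uparrow r$. At $r=\infty$ the problem is worse, because there is no right side at all.

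What is missing is the statement that a weak$^*$ limit of periodic $s_k$-packing measures with $s_k\uparrow r$ (or $s_k\to\infty$) has density at most $\Dper(\H^d,r)$. Equivalently, such limits must be approximable in density by periodic radius-$r$ packings. For $d=2$ this is available: by Theorems~\ref{thm:dim2-facts} and \ref{T:horoball1}, $\Dper(\H^2,\cdot)=\Dopt(\H^2,\cdot)$ on $(0,\infty]$, so the $\Dopt$ case applies. For $d\ge 3$ neither your proposal nor the paper supplies it. Only the $\DPois$ part of the lemma is used later, in the proof of Theorem~\ref{T:poissongap}.

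A smaller point: at $r=\infty$ you say you must rule out $\nu(\{\H^d\})>0$, but the Portmanteau remark you offer as an alternative does not do this. It bounds $\density(\nu)$ from below and says nothing about whether $\nu\in\M_\infty(\H^d)$, which is what the comparison with $D(\H^d,\infty)$ requires. The paper also asserts without proof that the limit is supported on horoball packings, so this step needs an argument in either version.
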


\begin{proof}

 To prove upper semi-continuity, fix a radius $\rho \in (0,\infty]$ and let $(\rho_i)_{i=1}^\infty$ be a sequence with $\lim_i \rho_i = \rho$. Let $\nu_i \in \Prob_{\isom}(\Closed_{\rho_i}(\H^d))$ be an $\Isom(\H^d)$-invariant measure on radius-$\rho_i$ packings. Because $\Prob_{\isom}(\Closed(\H^d))$ is weak* compact, there exists a sub-sequential limit $\nu_\infty\in \Prob_{\isom}(\Closed(\H^d))$ which is supported on radius-$\rho$ packings.  Because $\density$ is continuous (Lemma \ref{L:continuity-general}), the density of $\nu_\infty$ is the limit of the densities of $\nu_i$.

    Note that if each $\nu_i$ is a weak Poisson factor, then $\nu_\infty$ is also a weak Poisson factor.  Thus
    $$D_{\Pois}(\H^d,\rho) \ge \density(\nu_\infty) = \lim_i \density(\nu_i).$$
Take the sup over all measures $\nu_i$ which are weak Poisson factors, to obtain upper semi-continuity of $D_{\Pois}(\H^d,\cdot)$. The proofs for $D_{\per}(\H^d,\cdot)$ and $D_{\opt}(\H^d,\cdot)$ are similar.

To prove left-continuity, let $\mu \in \M_r(\H^d)$ be an invariant measure on $\cP_r(\H^d)$. If $t<r<\infty$ then $\cP_t(\H^d) \supset \cP_r(\H^d)$ and so we can regard $\mu$ as an element of $\M_t(\H^d)$. Moreover,  
\begin{align*}
    \density_t(\mu) &= \frac{\vol(B_t)}{\vol(B_r)}\density_r(\mu).
\end{align*}
Thus for any $t<r<\infty$
\begin{align}\label{E:continuity1}
    D_*(\H^d,t) &\ge \frac{\vol(B_t)}{\vol(B_r)}D_*(\H^d,r)
\end{align}
for all $* \in \{\Pois, \per,\opt\}$. In particular, this shows $\liminf_{t\nearrow r} D_{*}(\H^d,t) \ge D_{*}(\H^d,r)$. Because these functions are upper semi-continuous, we also have the opposite inequality. This proves left-continuity. 
\end{proof}

We will also require that $\Dopt(\H^2,r)$ is continuous.

\begin{theorem}\label{T:continuity2}
    $D_{\opt}(\H^2,r)=D_{\per}(\H^2,r)$ is continuous over $r\in (0,\infty]$. 
\end{theorem}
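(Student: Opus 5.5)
The plan is to combine the earlier results so that only right-continuity at finite radii needs new input. Theorem~\ref{thm:dim2-facts} (from \cite{bowen2003periodic}) already gives $\Dopt(\H^2,r)=\Dper(\H^2,r)$ and continuity of $\Dopt(\H^2,\cdot)$ on $(0,\infty)$. What is left is the case $r=\infty$, together with the identification $\Dopt(\H^2,\infty)=\Dper(\H^2,\infty)$. The identification is immediate from Theorem~\ref{T:horoball1}, since the optimal horoball packing measure is periodic.

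If one prefers not to quote the continuity part of \cite{bowen2003periodic} on $(0,\infty)$, I would argue as follows. Lemma~\ref{L:PoissonGap2} gives left-continuity and upper semi-continuity, so only lower semicontinuity from the right is needed: $\liminf_{t\searrow r}\Dopt(\H^2,t)\ge \Dopt(\H^2,r)$. To get this, I would take a periodic measure $\mu_\Pi$ of density close to $\Dopt(\H^2,r)$, which exists by density of periodic measures. Its packing lives on a compact (or finite-area) hyperbolic surface $L\backslash\H^2$. I would then deform the surface within Teichm\"uller space, or rescale it, so that the centers become slightly more than $2r$ apart. This is a perturbation argument, and I expect it to be the delicate point; citing Theorem~\ref{thm:dim2-facts} sidesteps it.

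At $r=\infty$, upper semicontinuity from Lemma~\ref{L:PoissonGap2} gives $\limsup_{r\to\infty}\Dopt(\H^2,r)\le\Dopt(\H^2,\infty)=3/\pi$. For the reverse inequality, I would use the optimal horoball packing $P$ from Theorem~\ref{T:horoball1}, built on the ideal triangle tiling. In each tile it consists of three horoball sectors meeting at the side midpoints.

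For a large finite $r$, I would approximate $P$ by ball packings. Replace each horoball $H$ in $P$ by a radius-$r$ ball $B\subset H$ internally tangent to $\partial H$ at the point farthest from the cusp direction, or by any radius-$r$ ball contained in $H$. These balls are pairwise disjoint because the horoballs have disjoint interiors. The resulting packing is invariant under the lattice $\G$ once the choice is made equivariantly; for instance, take $B$ tangent to $\partial H$ at the unique tangency point with a neighbouring horoball, which requires choosing a $\G$-equivariant neighbour. Its orbit gives a periodic measure $\nu_r\in\M_r(\H^2)$.

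As $r\to\infty$, the balls converge to the horoballs in the Fell topology, so $\nu_r\to\mu$ weak$^*$ in $\Prob_{\isom}(\Closed(\H^2))$. By Lemma~\ref{L:continuity-general}, whose hypothesis holds because boundaries of horoball packings have measure zero, we get $\density(\nu_r)\to 3/\pi$. Hence $\liminf_{r\to\infty}\Dopt(\H^2,r)\ge 3/\pi$.

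The main obstacle is verifying Fell convergence of the union of balls to the union of horoballs, uniformly over the infinitely many horoballs. Since only finitely many horoball orbits under $\G$ occur and $\G$ is a lattice, checking a single horoball within a fundamental domain suffices.
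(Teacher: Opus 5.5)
\paragraph{Verdict.} There is a genuine gap in the lower bound $\liminf_{r\to\infty}\Dopt(\H^2,r)\ge 3/\pi$. The rest agrees with the paper: the case $(0,\infty)$ by citing Theorem~\ref{thm:dim2-facts}, the identity $\Dopt(\H^2,\infty)=D_{\per}(\H^2,\infty)$ from Theorem~\ref{T:horoball1}, and $\limsup_{r\to\infty}\Dopt(\H^2,r)\le 3/\pi$ from upper semicontinuity (Lemma~\ref{L:PoissonGap2}).

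\paragraph{Why one ball per horoball fails.} Putting a single radius-$r$ ball inside each horoball cannot give the lower bound, for three reasons.
\begin{enumerate}
\item There is no $\G$-equivariant choice. The stabilizer of each horoball $H$ in $\G$, and in every finite-index subgroup, contains a parabolic isometry. That parabolic fixes no point of $\H^2$, so it fixes no ball inside $H$. It also permutes the infinitely many tangent neighbours of $H$ without fixed points, so you cannot pick an equivariant neighbour either. Hence the ball packing is not periodic, and Fact~\ref{fact:lift} gives you no $\nu_r$.
\item More fundamentally, no isometry-invariant random configuration of this shape can have positive ball density. Here the configuration is a horoball packing together with at most one radius-$r$ ball $B_H$ in each horoball. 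Let each point of $H$ spread unit mass uniformly over $B_H$. Every point sends at most $1$, while every point of $B_H$ receives $\Vol(H)/\Vol(B_r)=\infty$. The mass transport principle then forces $\P(\cO\in\bigcup_H B_H)=0$.
\item Consequently such measures cannot converge to $\mu$. Density is continuous at $\mu$ (Lemma~\ref{L:continuity-general}) and $\density(\mu)=3/\pi$.
\end{enumerate}
The Fell convergence $B_H\to H$ you invoke does hold, but only on compact sets near the chosen tangency point. $B_H$ still covers a finite-area part of the infinite-area region $H$, and an invariant measure sees the rest. Reducing to finitely many $\G$-orbits of horoballs does not help: the non-uniformity is inside a single horoball, along its cusp.

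\paragraph{The fix.} To approximate horoballs by balls, the ball \emph{centers} must run off toward the cusps, with infinitely many balls per horoball. The paper does this with the tight radii.
\begin{itemize}
\item The vertex packings of the $\{3,n\}$ triangle tilings give $\Dopt(\H^2,r_n)=\frac{3\csc(\pi/n)-6}{n-6}\to 3/\pi$.
\item Equation~\eqref{eqrndef} simplifies to $\cosh r_n=\tfrac12\csc(\pi/n)$, so $\Vol(B_{r_{n-1}})/\Vol(B_{r_n})=1-O(1/n)$.
\item Inequality~\eqref{E:continuity1} (a radius-$r_n$ packing is also a radius-$r$ packing) then gives $\Dopt(\H^2,r)\ge(1-O(1/n))\Dopt(\H^2,r_n)$ for all $r\in[r_{n-1},r_n]$.
\end{itemize}
Combined with upper semicontinuity, this yields continuity at $\infty$.

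\paragraph{A side remark.} Your optional sketch for right-continuity on $(0,\infty)$ would not work as written. Hyperbolic surfaces cannot be rescaled, and there is no reason a Teichm\"uller deformation pushes all centers apart at once. Since you rely on the citation there, this does not affect the main argument.
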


\begin{proof}
    The paper \cite{bowen2003periodic} proves this statement for $r \in (0,\infty)$. Theorem  \ref{T:horoball1} proves $D_{\opt}(\H^2,\infty)=D_{\per}(\H^2,\infty)$. So we only need to prove continuity at infinity. 

       Recall from Section \ref{secLatticeMeasures}, that if, for $n\ge 7$, $r_n>0$ is defined by $\cosh(2r_n) = \cot(2\pi/n)\cot(\pi/n)$ then 
    $$D_{\opt}(\H^2,r_n) = \frac{3\csc(\pi/n)-6}{n-6}$$
for any $n \in \{7,8,9,\ldots\}$. Thus 
    $$\limsup_{r\to\infty} D_{\opt}(\H^2,r)\ge \lim_{n \to \infty} \Dopt(\H^2,r_n) = 3/\pi =  D_{\opt}(\H^2,\infty) $$
where the last equality holds by Theorem \ref{T:horoball1}.  By \eqref{eqrndef} we see that \begin{equation*}
    r_n = \log n + O(1)
\end{equation*}
and so $\Vol_{\H^2}(B(\mathcal{O},r_n)) = (1 + O(1/n)) \Vol_{\H^2}(B(\mathcal{O},r_{n-1}))$ as $n \to \infty$.  In particular, for each $r \in [r_{n-1},r_n]$ we may use the centers of the optimal packing at radius $r_{n}$ to see that \begin{equation*}
    \Dopt(\H^2,r) \geq \left(1 - O(n^{-1}) \right) \Dopt(\H^2,r_n) \quad \text{ for all }r \in [r_{n-1},r_n]
\end{equation*}
 thus showing the analogous lower bound to deduce continuity at infinity.
\end{proof}

\section{A Gibbs measure at near optimal density}
\label{secDenseGibbs}

The main goal of this section is to show that we may construct Gibbs measures of near optimal density for each $r > 0$.  

\begin{theorem}\label{T:near-optimal}
For every radius $0<r<\infty$ and $\epsilon>0$, there exists $\lam_0>0$ such that if $\lam> \lam_0$ then there exists an isometry-invariant Gibbs measure $\mu \in \Gibbs(\H^2;r,\lambda)$ such that $\density(\mu)>\Dopt(\H^2,r)-\epsilon$.
\end{theorem}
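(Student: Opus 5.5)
We first find a periodic packing that is nearly optimal and then run the hard sphere model on the compact (or finite-volume) quotients it defines. By Theorem~\ref{thm:dim2-facts}, $\Dper(\H^2,r)=\Dopt(\H^2,r)$, so there is a periodic packing $P$ with $\density_r(\mu_P)>\Dopt(\H^2,r)-\epsilon/3$. Since we may take the extreme point of the convex combination with the largest density, we can assume $\mu_P$ comes from a single packing. Let $L=\Sym(P)$, which is a lattice. Passing to a torsion-free finite-index subgroup (Selberg's lemma), we may assume $M=L\backslash\H^2$ is a finite-volume hyperbolic surface. Fact~\ref{fact:residually-finite} then gives finite-index $L_j\le L$ with $M_j=L_j\backslash\H^2$ Benjamini--Schramm converging to $\H^2$. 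The projection $\bar P_j$ of $P$ to $M_j$ is a packing with $|\bar P_j|=[L:L_j]\,|\bar P|$. Hence $|\bar P_j|\,\Vol(B_r)/\Vol(M_j)=\density_r(\mu_P)$ does not depend on $j$.

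\textbf{Finite volume hard sphere model.} Let $\nu_j$ be the hard sphere model on $M_j$ at activity $\lambda$. The key lemma, the analogue of Lemma~\ref{lem:packing-to-gibbs}, says that for $\lambda$ large (uniformly in $j$), $\EE_{\nu_j}|X|\ge (1-\delta)|\bar P_j|$. We get this by comparing partition functions. Small perturbations of $\bar P_j$ give $Z_{M_j}(\lambda)\ge \lambda^{N}c^{N}/N!\cdot(\text{volume of moves})$, where $N=|\bar P_j|$. Here we shrink to radius $r'<r$ slightly, or move each point in a ball of radius $\eta$ at a cost of a factor $\Vol(B_\eta)^N$. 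The natural way to do this is to work at a slightly smaller radius $r'<r$, so that the configuration $\bar P_j$ has free room $\eta=r-r'$ around each point. Meanwhile, configurations with at most $(1-\delta)N$ points contribute at most $\sum_{k\le(1-\delta)N}(\lambda\Vol(M_j))^k/k!$. Taking $\lambda$ large, by a factor $e^{C/\delta}$ comparable to $(\Vol(M_j)/(N\Vol(B_\eta)))$, makes the second quantity exponentially negligible relative to the first, uniformly in $j$, because $\Vol(M_j)/N$ is fixed. Using the $r'$-packing and left-continuity (Lemma~\ref{L:PoissonGap2}, via the inequality $D(t)\ge \frac{\Vol B_t}{\Vol B_r}D(r)$), the density loss is controlled by choosing $\eta$ small. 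Concretely, we first pick $r''>r$ close to $r$ with $\Dper(\H^2,r'')>\Dopt(\H^2,r)-\epsilon/3$, which is possible by continuity (Theorem~\ref{T:continuity2}), and take $P$ to be an $r''$-packing. Then $\eta=r''-r$ gives the required room at radius $r$. I expect this counting estimate to be the main technical step, since it has to be uniform in $j$ and handle the factorials carefully; the bound $\log(k!)$ versus $N\log N$ with $k\le(1-\delta)N$ yields a gain of $e^{\delta N\log(\lambda\cdot\mathrm{const})}$.

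\textbf{Lifting to $\H^2$.} Apply Fact~\ref{fact:lift} to get $\tilde\nu_j\in\M_r(\H^2)$. By the proof of Lemma~\ref{L:continuity-general}, its density equals the expected covered fraction of $M_j$, which is at least $(1-\delta)\density_r(\mu_{P})$ at radius $r$. By compactness, take a subsequential limit $\mu$; density passes to the limit by Corollary~\ref{cor:continuityofdensity1}, and $\mu$ is isometry invariant. Finally, $\mu$ is Gibbs (Lemma~\ref{lem:construct-infinite-volume-from-limits}). To see this, fix a continuous local test function supported in a ball of radius $R$. For a uniformly random basepoint in $M_j$, with probability tending to $1$ the injectivity radius exceeds $R+4r$. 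In that case the DLR/GNZ identity for $\nu_j$ on that embedded ball transfers verbatim to $\H^2$, while the remaining event has vanishing probability and bounded contribution. Hence the GNZ equations \eqref{eq:GNZ-equations} hold for $\mu$ against local test functions, and therefore $\mu\in\Gibbs(\H^2;r,\lambda)$ with $\density(\mu)>\Dopt(\H^2,r)-\epsilon$ once $\delta$ is small.
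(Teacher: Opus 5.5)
Your proposal is correct and follows the paper's route essentially step for step. The paper proves this statement via Corollary~\ref{C:near-optimal2}: it picks $r'>r$ using continuity of $\Dper(\H^2,\cdot)=\Dopt(\H^2,\cdot)$, takes a nearly optimal periodic $r'$-packing, passes to Benjamini--Schramm convergent finite-index sublattices, applies the volume-independent estimate of Lemma~\ref{lem:packing-to-gibbs} at the smaller radius $r$, and lifts via Lemma~\ref{lem:construct-infinite-volume-from-limits}. One point to state cleanly is that the $N!$ labelings of the perturbed configuration cancel the $1/N!$, giving $Z_{M_j}(\lambda)\ge(\lambda\Vol(B_\eta))^N$. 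An uncancelled $1/N!$ would cost a superexponential factor $N!$ and destroy uniformity in $j$.
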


We will begin with finite-volume Gibbs measures and then lift them to measures on $\H^d$.   
With this in mind, let $L\le \Isom(\H^d)$ be a lattice subgroup and $\lambda>0$. Let $\vol_M$ be the measure on $M = \H^d/L$ obtained from the volume form on $\H^d$ and $d_M$ the metric on $M$ obtained from $\H^d$.

Recall from \eqref{eq:partition-function-def}  the partition function for the finite-volume hard sphere Gibbs measure on $M$ is $$Z(L;\lambda,r) = \sum_{m \geq 0} \frac{\lambda^m}{m!} \int_{M^m} \prod_{i < j} \one\{d_M(x_i,x_j) \geq 2r\} \,d\Vol_M^{\otimes m}(\mathbf{x})$$ and from  \eqref{eq:finite-vol-gibbs-def} the finite-volume hard sphere Gibbs measure $\mu_{\lambda,r}^L \in \Prob(\cP_r(M))$ is defined by $$\mu_{\lambda,r}^L(f) = \frac{1}{Z(L;\lambda,r)}\sum_{m \geq 0} \frac{\lambda^m}{m!} \int_{M^m} \prod_{i < j} \one\{d_M(x_i,x_j) \geq 2r\} f\left(\sum_{j} \delta_{x_j}\right)\,d\Vol_M^{\otimes m}(\mathbf{x})\,.$$

Let $\tilde{\mu}^L_{\lambda,r} \in \Prob(\cP_r(\H^d))$ be the isometry-invariant lift of $\mu^L_{\lambda,r}$ guaranteed by Fact \ref{fact:lift}.  We first construct a Gibbs measure, with no claim yet on its density.

\begin{lemma}\label{lem:construct-infinite-volume-from-limits}
    Let $(L_i)_{i=1}^\infty$ be a sequence of lattice subgroups in $\Isom(\H^d)$ such that the sequence $(\H^d/L_i)_{i \geq 1}$ Benjamini-Schramm converges to $\H^d$. Then a subsequence of the measures $\tilde{\mu}^{L_i}_{\lambda,r}$ converges to an isometry-invariant Gibbs measure $\mu \in \Gibbs(\H^d;r,\lambda)$.
\end{lemma}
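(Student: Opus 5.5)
Since $\M_r(\H^d)$ is weak* compact and each $\tilde{\mu}^{L_i}_{\lambda,r}$ lies in it by Fact~\ref{fact:lift}, a subsequence converges to some isometry-invariant $\mu \in \M_r(\H^d)$. The whole task is to show $\mu$ is Gibbs. I will verify the GNZ equations \eqref{eq:GNZ-equations} rather than the DLR equations. By a monotone class argument it suffices to consider test functions $F(x,\eta)$ that are bounded, continuous, local in $\eta$ (depending only on $\eta$ restricted to a compact set $V$), and supported in $x$ on a compact set $K$. For the hard sphere model we have $e^{-H(x\,|\,\eta)} = \one\{\dist(x,\eta) \ge 2r\}$. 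This indicator is discontinuous only on the set where $\dist(x,\eta)=2r$ exactly, and I will handle that below.

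First I show that each lift $\tilde\mu^{L}:=\tilde{\mu}^{L}_{\lambda,r}$ satisfies the GNZ identity for test functions $F$ of the above type whose support lies within a region injectively embedded in $M=\H^d/L$. On the compact quotient $M$ (or finite-volume $M$; here the finite-volume hard sphere measure is defined by the explicit density formula), the GNZ identity holds exactly for $\mu^L_{\lambda,r}$. This is a direct computation from \eqref{eq:finite-vol-gibbs-def}: one symmetrizes over which point is removed, with $\lambda^m/m!$ becoming $\lambda\cdot\lambda^{m-1}/(m-1)!$. Now take $F$ on $\H^d$ supported in $K$ and local in $V$, and let $B\supset K\cup V$ be a ball of radius $R$ enlarged by $2r$. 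Write $\tilde\mu^L$ as the average over a uniform point $p$ of $M$ (and a uniform frame) of the lift placing $p$ at the origin. I will condition on the event that the injectivity radius at $p$ exceeds $R+2r$. On that event the covering map is an isometry from $B$ onto its image, so the lifted configuration restricted to $B$ is the image of $\mu^L$ restricted to that ball. Both sides of GNZ for $F$ then transfer exactly to the quotient identity applied to the pushed-down test function. For the averaging to be legitimate, I will use the equivalent ``rooted'' formulation, integrating $F(gx,g\eta)$ over $g$ against Haar measure and using invariance. On the complementary event the error is bounded by $\|F\|_\infty$ times an expected point count in $B$, which is at most a packing bound $C(R,r)$, times the probability that the injectivity radius is at most $R+2r$. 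Benjamini–Schramm convergence sends this probability to $0$. Hence the GNZ defect $\mathrm{Def}_{L_i}(F)$ of $\tilde\mu^{L_i}$ tends to $0$.

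Next I pass to the limit. The left side of \eqref{eq:GNZ-equations}, namely $\eta\mapsto\sum_{x\in\eta}F(x,\eta-\delta_x)$, is continuous on $\cP_r(\H^d)$ when $F$ is continuous and compactly supported, because points in $K$ cannot accumulate on $2r$-separated configurations. So the left side converges under weak* convergence. The right side is $\lambda\int \mu(F(x,\cdot)\one\{\dist(x,\cdot)\ge 2r\})\,dx$, and the indicator is the main obstacle. I will sandwich it between continuous functions $\chi_\delta^-\le \one\{\dist\ge2r\}\le\chi_\delta^+$, cutoffs at distances $2r\pm\delta$. Only $\chi^+_\delta$ is needed with $F\ge 0$, together with the hard-core constraint $\chi^-$. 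This gives GNZ for $\mu$ up to an error of
\[\lambda\|F\|_\infty \int_K \limsup_i \tilde\mu^{L_i}\bigl(2r-\delta<\dist(x,\eta)<2r+\delta\bigr)\,dx.\]
By invariance this probability is independent of $x$. It is at most the expected number of points in the annulus $A_\delta$ of radii $2r\pm\delta$ about the origin, which is $\lesssim \Vol(A_\delta)/\Vol(B_r)\to0$ as $\delta\to 0$, uniformly in $i$, since the intensity of any invariant $r$-packing is at most $1/\Vol(B_r)$. Letting $\delta\to0$ then gives the GNZ equations for $\mu$. By Georgii–Nguyen–Zessin, $\mu\in\Gibbs(\H^d;r,\lambda)$.
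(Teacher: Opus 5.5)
Your argument is correct, but it checks the GNZ equations \eqref{eq:GNZ-equations}, whereas the paper checks the DLR equations \eqref{eq:DLR}. The paper fixes a bounded $\Lambda$ and a continuous local $f$. It uses the exact finite-volume DLR identity on $\H^d/L_i$ at roots of large injectivity radius, pays $\eps\|f\|_\infty$ on the bad event, and then lets $i\to\infty$ ``recalling that $f$ is continuous.''

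That route makes the bad event trivial: since $f$ is bounded, no point count enters. It also avoids pushing test functions through the covering map. However, its limit step implicitly needs continuity of $\gamma\mapsto\int f(\eta+\gamma_{\Lambda^c})\,d\mu^\lambda_{\Lambda\,|\,\gamma}(\eta)$. This fails when a point of $\gamma$ sits on $\partial\Lambda$, because that point can jump into or out of $\gamma_{\Lambda^c}$. So one really needs an a.e.-continuity argument for nice $\Lambda$, which the paper leaves unstated.

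Your route is essentially the machinery of Section~\ref{sec:lower-density} (the lifted test functions $F^M_g$ and Lemma~\ref{lem:continuity-BD}) transplanted here. It costs a packing bound on the death term over the bad event, and a check that the hard-core constraint transfers, which is what your $2r$-enlargement provides. In exchange, it isolates the single discontinuity $\one\{\dist(x,\eta)\ge 2r\}$ and controls it quantitatively through the intensity bound $1/\Vol(B_r)$ and null spheres, so the limit passage is fully explicit.

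Two small precision points. First, the sandwich error must also include the annulus probability under the limit $\mu$; it is bounded the same way, since $\mu\in\M_r(\H^d)$. Second, on the good event the covering map is not an isometry onto its image for the quotient metric. What you actually use is that it is injective on $B$ and preserves the relation $\dist<2r$, and that is all the argument needs.
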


\begin{proof}
    Since $\Prob(\cP_r(\H^d))$ is weak* compact, $\tilde{\mu}^{L_i}_{\lambda,r}$ has a subsequence that converges to a measure $\mu \in \Prob(\cP_r(\H^d))$.  Since each $\tilde{\mu}^{L_i}_{\lambda,r}$ is isometry invariant we have that $\mu$ is as well.  We will show that $\mu$ satisfies the DLR equations \eqref{eq:DLR}.  Recall that it is sufficient to prove \eqref{eq:DLR} for each bounded measurable set $\Lambda$ and each continuous \emph{local} function $f: \cP_r(\bX) \to [0,\infty)$ where we recall that the definition of a local function $f$ is that there is an $R > 0$ sufficiently large so that  $f(\Pi)$ depends only on the restriction of $\Pi$ to the ball $B_R(\cO)$. Let $R' \geq R$ be large enough so that $\Lambda \subset B_{R'}(\cO)$.  Since $\H^d / L_i$ Benjamini-Schramm converges to $\H^d$, for each $\eps > 0$ there is an $i_0$ large enough so that for all $i \geq i_0$, the probability the injectivity radius of a random point $x \in \H^d / L_i$ is at least $R'$ is at least $1 - \eps$.  We recall that the DLR equations \eqref{eq:DLR} hold in finite volume (see, e.g., \cite[eq. (5.4)]{jansen2018gibbsian}) and so the hard sphere measures  $\mu_{\lambda,r}^{L_i}$ satisfy the DLR equations on $\H^d/ L_i$.  This shows that \begin{equation*}
        \left|\int_{\cP_r(\H^d)} f \,d\tilde{\mu}_{\lambda,r}^{L_i} - \int_{\cP_r(\H^d)} \int_{\cP_r(\Lambda)} f(\eta + \gamma_{\Lambda^c}) d\mu_{\Lambda\,|\,\gamma}^\lambda(\eta) d\tilde{\mu}_{\lambda,r}^{L_i}(\gamma) \right| \leq \eps \|f\|_\infty\,.
    \end{equation*}
    Taking $i \to \infty$ and recalling that $f$ is continuous this shows \begin{equation*}
        \left|\int_{\cP_r(\H^d)} f \,d\mu - \int_{\cP_r(\H^d)} \int_{\cP_r(\Lambda)} f(\eta + \gamma_{\Lambda^c}) d\mu_{\Lambda\,|\,\gamma}^\lambda(\eta) d\mu(\gamma) \right| \leq \eps \|f\|_\infty\,.
    \end{equation*}
    Since $\eps$ is arbitrary, this establishes \eqref{eq:DLR} for all $\Lambda$ and local continuous $f$, thus completing the proof.
\end{proof}

A simple calculation will show that in finite volume one can increase $\lambda$ to approximate the densest packing.  Notably, the required lower bound on $\lambda$ is \emph{independent of the volume} of $L \backslash \H^d.$

\begin{lemma}\label{lem:packing-to-gibbs}
    Fix $r > 0$ and $\rho > 0$.  For each $\eps > 0$ and $s < r$ there is $\lambda_0$  so that if $L$ is a lattice subgroup in $\Isom(\H^d)$ so that $L \backslash \H^d$ admits an $r$-packing of density at least $\rho$ then the hard-sphere measure (on $L \backslash \H^d$) at activity $\lambda \geq \lambda_0$ of spheres of radius $s$ has density at least $\rho (1-\eps)\Vol(B_s)/\Vol(B_r)$.
\end{lemma}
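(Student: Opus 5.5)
Let $M = L\backslash\H^d$, $V = \Vol(M)$, and fix an $r$-packing $P_0$ of $M$ with $|P_0|\Vol(B_r)/V \ge \rho$; write $N = |P_0| \ge \rho V/\Vol(B_r)$. The density of the hard sphere measure at radius $s$ is $\EE|X|\Vol(B_s)/V$, where $X$ is the random configuration. So it suffices to show that, with $\lambda\ge\lambda_0$ independent of $L$, one has $\EE|X| \ge (1-\eps)N$. Let $Z_m = \frac{1}{m!}\int_{M^m}\prod_{i<j}\one\{d_M(x_i,x_j)\ge 2s\}\,d\Vol_M^{\otimes m}$ denote the configuration volume of $m$ points. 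Then $\P(|X|=m) = \lambda^m Z_m/Z$, and I will compare the weight of configurations of size $< (1-\eps)N$ with that of size about $N$.

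For the lower bound I use the slack $s<r$. Set $\delta = r-s>0$. If each point $p\in P_0$ is perturbed to any point of $B(p,\delta)$, the result is still $2s$-separated, since distances shrink by at most $2\delta$. Because $2r$-separation makes the balls $B(p,\delta)\subset B(p,r)$ disjoint, this gives $Z_N \ge \Vol(B_\delta)^N$ (ordered $N$-tuples divided by $N!$, counted once). We can take $\delta$ small so that the lift to $\H^d$ is injective on these balls; note that $B(p,r)$ is embedded since the packing is in $M$. For the upper bound, any $m$ points give $Z_m \le V^m/m!$. Hence, for $m\le(1-\eps)N$,
\[
\frac{\lambda^m Z_m}{\lambda^N Z_N}\le \frac{V^m}{m!\,\lambda^{N-m}\Vol(B_\delta)^N}.
\]
Using $V\le N\Vol(B_r)/\rho$ and $m!\ge (m/e)^m$, the ratio is bounded by $C^N\lambda^{-(N-m)}$ with $C$ depending only on $r,s,\rho,d$, after handling small $m$ separately; for small $m$ the bound $V^m/m!\le e^V\le e^{cN}$ works directly. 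Since $N-m\ge\eps N$, taking $\lambda_0$ with $\lambda_0^{\eps}> 2C$ gives a total over $m<(1-\eps)N$ of at most $N\cdot 2^{-N}$ relative to $Z$.

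Then $\EE|X|\ge (1-\eps)N(1-N2^{-N})$. This handles large $N$; for bounded $N$, adjust $\eps$, or use the crude bound $\P(|X|\ge N)\to 1$ uniformly, since there are only finitely many $N$ below any threshold. This yields a density of at least $\rho(1-2\eps)\Vol(B_s)/\Vol(B_r)$, and replacing $\eps$ by $\eps/2$ finishes the proof.

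The main obstacle is making the constant $C$ uniform in $L$. This holds because every quantity is expressed per packing point: $V/N\le \Vol(B_r)/\rho$, and $\Vol(B_\delta)$ is the volume of an embedded ball. The delicate part is the combinatorial bookkeeping with $m!$. I would write $V^m/m! \le (eV/m)^m$ and bound $\max_m (eV/m)^m\le e^{V}$, so that everything becomes $\exp(O(N))\lambda^{-\eps N}$.
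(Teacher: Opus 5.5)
Your proposal is correct and follows essentially the same route as the paper. Both arguments lower-bound the partition function by $(\lambda\Vol(B_{r-s}))^N$ using $(r-s)$-perturbations of the given packing, bound the weight of $m$-point configurations by $\lambda^m V^m/m!$, and use $V\le N\Vol(B_r)/\rho$ to make the probability of fewer than $(1-\eps)N$ points of order $\exp(O(N))\lambda^{-\eps N}$ uniformly in $L$. The differences are cosmetic: you bound $V^m/m!\le e^V$ where the paper uses $(eV/m)^m$ with $\alpha^{-\alpha}\le 2$, and you treat bounded $N$ separately where the paper builds a factor $\eps/2$ into $\lambda_0$ so the tail is at most $(\eps/2)N2^{-N}\le \eps/2$ for every $N$.
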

\begin{proof}
    Fix $L$ and a tuple $P=(p_1,\ldots, p_N) \in (L \backslash \H^d)^N$ which we will think of as the centers of a radius-$r$ sphere packing with density at least $\rho$. This means $d(p_i,p_j) \ge 2r$ if $i<j$ and 
    $$N \Vol(B_r)/V\ge \rho$$
    where $V$ is the volume of $L \backslash \H^d=M$. 
    
    Set $s < r$, write $\beta = r - s$ and set $c = \Vol(B_\beta)$.  Let $\Omega \subset (L \backslash \H^d)^N$ be the collection of $N$-tuples $P'=(p'_1,\ldots, p'_N)$ such that there exists a permutation $\s$ of $[N]$ with $d(p'_{\s(i)},p_i) \le \beta$ for all $i$.  Then
    \begin{align*}
        Z(L;\lambda,s) 
        &\ge \frac{\lambda^N}{N!} \int_{\Omega} \prod_{i < j} \one\{d_M(x_i,x_j) \geq 2s\} \,d\Vol_M^{\otimes N}(\mathbf{x}) = (\lambda c)^N. 
    \end{align*}

    Further, for each $m\in \mathbb{N}$, note the probability there are exactly $m$ points under the radius-$s$ hard-sphere measure at activity $\lambda$ is at most
    $$Z(L;\lambda,s)^{-1}\frac{1}{m!}\lambda^m V^m \leq Z(L;\lambda,s)^{-1}\left(\frac{e \lambda V}{m} \right)^m \leq Z(L;\lambda,s)^{-1}\left(\frac{e \lambda \Vol(B_r)  N}{\rho m} \right)^m\,. $$

    Parametrize $m = \alpha N$ and use the lower bound $Z(L;\lambda,s) 
        \ge (\lambda c)^N$ to bound the probability there are exactly $m$ points by
    $$ \left(\left(\frac{e \Vol(B_r)}{\rho} \right)^\alpha \frac{1}{c\alpha^\alpha \lambda^{1 - \alpha}} \right)^N\,.  $$
    We require now that $\alpha \leq 1-\eps/2$ and $\eps>0$ is sufficiently small so that $\frac{1}{\alpha^\alpha} \le 2$. Also we require $\lambda \geq 1$. Then this quantity is bounded above by \begin{equation}
        \left(\frac{2 e \Vol(B_r)}{c\rho}  \cdot \frac{1}{ \lambda^{\eps/2}} \right)^N \leq (\eps/2) 2^{-N}
    \end{equation} 
    where the last inequality holds for all $\lambda > \lambda_0$ where $\lambda_0$ is a function of $r,\rho,\eps$ and $c$.  This implies that the probability there are fewer than $N(1-\eps)$ points is at most $(\eps/2) N 2^{-N} \leq \eps/2$.  This shows that the density of the Gibbs measure with intensity $\lambda$ is at least \begin{equation*}(1- \eps/2)\cdot N(1 - \eps/2) \cdot \frac{\Vol(B_s)}{V} \geq (1 - \eps) \rho \frac{\Vol(B_s)}{\Vol(B_r)}\,. \qedhere \end{equation*}
\end{proof}

We introduce a few more notions of density, this time specifically for Gibbs measures:

\begin{definition}
\begin{align*}
  D_{\Gibbs,\lambda}(\H^d,r)&=\sup \{\density(\mu):\mu \in \Gibbs(\H^d;r,\lambda) \cap \M_r(\H^d)\}\\
  D_{\Gibbs,\infty}(\H^d,r) &= \liminf_{\lambda \to \infty} D_{\Gibbs,\lambda}(\H^d,r)\\
    D^+_{\Gibbs,\infty}(\H^d,r) &= \liminf_{s \nearrow r} D_{\Gibbs,\infty}(\H^d,s). 
\end{align*}    
\end{definition}

We now see that {in the limit of  large activities} Gibbs measures  can achieve the optimal periodic packing density.

\begin{corollary}\label{C:near-optimal1}
     For any dimension $d\ge 2$ and radius $0<r<\infty$,
    $D^+_{\Gibbs,\infty}(\H^d,r) \ge D_{\per}(\H^d,r)$. 
\end{corollary}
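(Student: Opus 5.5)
Fix $r$ and $\delta>0$. The plan is to take a periodic measure of density close to $D_{\per}(\H^d,r)$, use its lattice to produce finite quotients that Benjamini--Schramm converge to $\H^d$, and then apply Lemma~\ref{lem:packing-to-gibbs} and Lemma~\ref{lem:construct-infinite-volume-from-limits} at a slightly smaller radius $s<r$.

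First, choose a periodic measure $\nu$ with $\density_r(\nu)>D_{\per}(\H^d,r)-\delta$. Since $\nu$ lies in the convex hull of periodic orbit measures and density is affine, some extreme component $\mu_P$ also has density greater than $D_{\per}(\H^d,r)-\delta$. Here $P$ is a periodic packing and $L=\Sym(P)$ is a lattice. Because $P$ is $L$-invariant, the ergodic theorem (or a direct fundamental-domain count) shows that the image packing on $L\backslash\H^d$ has density (number of points)$\cdot\Vol(B_r)/\Vol(L\backslash \H^d)$ equal to $\density_r(\mu_P)$. By Fact~\ref{fact:residually-finite} there are finite-index $L_j\le L$ with $L_j\backslash\H^d$ Benjamini--Schramm converging to $\H^d$. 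The projection of $P$ to each $L_j\backslash\H^d$ is still a $2r$-separated packing with the same density $\rho:=\density_r(\mu_P)$, since both the point count and the volume scale by $[L:L_j]$. One small point to check is that separation in the quotient holds; it does, because $d_M(\pi x,\pi y)=\min_{g}d(x,gy)\ge 2r$ for points of $P$.

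Now fix $s<r$ and $\eps>0$. Lemma~\ref{lem:packing-to-gibbs} gives a $\lambda_0$, independent of $j$, such that for $\lambda\ge\lambda_0$ the radius-$s$ hard sphere measure on $L_j\backslash\H^d$ has density at least $\rho(1-\eps)\Vol(B_s)/\Vol(B_r)$. The lifts $\tilde\mu^{L_j}_{\lambda,s}$ from Fact~\ref{fact:lift} preserve density; this uses the same density identity for lifts of measures on the quotient, via affinity and continuity. By Lemma~\ref{lem:construct-infinite-volume-from-limits}, a subsequence converges to an invariant $\mu\in\Gibbs(\H^d;s,\lambda)$. Corollary~\ref{cor:continuityofdensity1} then gives
\[
\density_s(\mu)\ge \rho(1-\eps)\Vol(B_s)/\Vol(B_r).
\]
Hence $D_{\Gibbs,\lambda}(\H^d,s)$ is at least this bound for every $\lambda\ge\lambda_0$. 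Taking $\liminf_{\lambda\to\infty}$ gives the same lower bound for $D_{\Gibbs,\infty}(\H^d,s)$, for every $s<r$.

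Finally, let $s\nearrow r$. Then $\Vol(B_s)/\Vol(B_r)\to1$, so $D^+_{\Gibbs,\infty}(\H^d,r)\ge \rho(1-\eps)$. Since $\eps$ and $\delta$ are arbitrary, the corollary follows.

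The main care point is the density bookkeeping: checking that the density of the projected packing on the finite quotient equals $\density_r(\mu_P)$, and that the lift in Fact~\ref{fact:lift} takes a measure of finite-volume density $\rho'$ to an invariant measure of density $\rho'$. The second fact follows from unimodularity and a mass-transport/Fubini argument similar to the proof of Lemma~\ref{L:continuity-general}. The probability that the origin is covered equals the fraction of the volume of $M$ covered by the radius-$s$ balls, which is (expected number of points)$\cdot\Vol(B_s)/\Vol(M)$ once the injectivity radius exceeds $s$ at the points of the packing. For large $j$ this holds at all but a vanishing fraction of points, and in any case the covered fraction is at least what the lemma's bound measures, so the estimate survives.
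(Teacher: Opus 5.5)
Your proposal is correct and follows the paper's proof essentially step for step: you pick a lattice packing of density close to $D_{\per}(\H^d,r)$, pass to finite-index subgroups $L_j$ with $L_j\backslash\H^d$ Benjamini--Schramm converging to $\H^d$, apply Lemma~\ref{lem:packing-to-gibbs} at radius $s<r$ with $\lambda_0$ independent of $j$, lift and take a limit via Lemma~\ref{lem:construct-infinite-volume-from-limits} together with continuity of density, and then let $\eps\to0$ and $s\nearrow r$. One small slip in your density bookkeeping: if $\Sym(P)$ has torsion fixing points of $P$ (as for the tight packings, where each center has a stabilizer of order $2n$), then $\#(\text{points})\cdot\Vol(B_r)/\Vol(L\backslash\H^d)$ overcounts $\density_r(\mu_P)$ and the point count does not scale by $[L:L_j]$; this is harmless if you take the $L_j$ torsion-free (Selberg's lemma, noting that passing to finite-index subgroups preserves Benjamini--Schramm convergence) and do the fundamental-domain count directly on $L_j\backslash\H^d$.
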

  
\begin{proof}
Let $\eps>0$. By definition of $D_{\per}(\H^d,r)$,  there exists a 
lattice $L \le \Isom(\H^d)$ and an $r$-packing of $\H^d/L$ with density at least $(1-\eps)D_{\per}(\H^d,r)$. By Fact \ref{fact:residually-finite} we may find a sequence $(L_n)_{n = 1}^\infty$ of lattices so that $\H^d / L_n$ Benjamini-Schramm converges to $\H^d$ and the optimal packing of $\H^d/L_n$ of $r$-spheres has density at least $(1-\eps)D_{\per}(\H^d,r).$  

By Lemma \ref{lem:packing-to-gibbs}, given $s<r$, there is a $\lambda_0$ (depending on $s,\eps$) so that the hard-sphere measure on $\H^d/L_n$ at activity $\lambda\ge \lambda_0$ has density at least $(1-\eps)^2D_{\per}(\H^d,r)\vol(B_s)/\vol(B_r)$. 

By Lemma \ref{lem:construct-infinite-volume-from-limits}, we may take a subsequence of the measures $\tilde{\mu}^{L_n}_{\lambda,s}$ (for $\lambda \ge \lambda_0$) that converges to an isometry invariant Gibbs measure $\mu$.  By continuity of density, we have that $\density(\mu) \geq (1 - \eps)^2D_{\per}(\H^d,r)\vol(B_s)/\vol(B_r).$ 

Thus $D_{\Gibbs,\lambda}(\H^d,s) \geq (1 - \eps)^2D_{\per}(\H^d,r)\vol(B_s)/\vol(B_r)$ for all $\lambda \ge \lambda_0$ and therefore, $$D_{\Gibbs,\infty}(\H^d,s) \ge (1 - \eps)^2D_{\per}(\H^d,r)\vol(B_s)/\vol(B_r).$$ Because $\eps>0$ and $s<r$ are arbitrary, this implies the corollary.
\end{proof}

We observe that thanks to Theorem \ref{thm:dim2-facts}, in the case of $d=2$, Corollary \ref{C:near-optimal1} shows $D^+_{\Gibbs,\infty}(\H^2,r) = D_{\per}(\H^2,r)$.  We now show that assuming continuity---as we know holds in dimension $2$ by Theorem \ref{thm:dim2-facts}---the same holds for $D_{\Gibbs,\infty}$ rather than just $D_{\Gibbs,\infty}^+$.

\begin{corollary}\label{C:near-optimal2}
     For any dimension $d\ge 2$ and radius $0<r<\infty$, if the function $D_{\per}(\H^d,\cdot)$ is continuous at $r$ then  $D_{\Gibbs,\infty}(\H^d,r) \ge D_{\per}(\H^d,r)$.
\end{corollary}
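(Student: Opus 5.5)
The plan is to apply Corollary~\ref{C:near-optimal1} at radii slightly \emph{larger} than $r$ and use continuity of $D_{\per}(\H^d,\cdot)$ at $r$ to pass back. Recall that Corollary~\ref{C:near-optimal1} gives $D^+_{\Gibbs,\infty}(\H^d,t) \ge D_{\per}(\H^d,t)$ for every $t$. Unpacking the definition, this says $\liminf_{s\nearrow t} D_{\Gibbs,\infty}(\H^d,s) \ge D_{\per}(\H^d,t)$. So for any $t>r$ and $\eps>0$ there is some $\delta>0$ with $D_{\Gibbs,\infty}(\H^d,s)\ge D_{\per}(\H^d,t)-\eps$ for all $s\in(t-\delta,t)$. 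The point is that $r$ itself lies in such an interval $(t-\delta,t)$ provided $\delta > t-r$, which need not hold. The plan is therefore to use the proof of Corollary~\ref{C:near-optimal1} rather than its statement.

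Concretely, fix $\eps>0$. By continuity of $D_{\per}(\H^d,\cdot)$ at $r$, choose $t>r$ close enough that $D_{\per}(\H^d,t)\ge D_{\per}(\H^d,r)-\eps$ and $\vol(B_r)/\vol(B_t)\ge 1-\eps$. Next choose a lattice $L$ with a $t$-packing of $\H^d/L$ of density at least $(1-\eps)D_{\per}(\H^d,t)$. Then take finite-index sublattices $L_n$ via Fact~\ref{fact:residually-finite}, with $\H^d/L_n$ Benjamini--Schramm converging to $\H^d$; lifting the packing, each $\H^d/L_n$ still carries a $t$-packing of that density. Apply Lemma~\ref{lem:packing-to-gibbs} with outer radius $t$ and inner radius $s=r<t$. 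This gives a $\lambda_0$, depending only on $t,r,\eps$ and the density, such that for all $\lambda\ge\lambda_0$ the radius-$r$ hard sphere measure on every $\H^d/L_n$ has density at least $(1-\eps)^2 D_{\per}(\H^d,t)\vol(B_r)/\vol(B_t)$.

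Then Lemma~\ref{lem:construct-infinite-volume-from-limits} produces an isometry-invariant $\mu\in\Gibbs(\H^d;r,\lambda)$ as a subsequential limit of the lifts. By Corollary~\ref{cor:continuityofdensity1}, together with Fact~\ref{fact:lift} (the lift preserves density), $\density(\mu)$ inherits the same lower bound. Hence $D_{\Gibbs,\lambda}(\H^d,r)\ge (1-\eps)^3(D_{\per}(\H^d,r)-\eps)$ for all $\lambda\ge\lambda_0$. Taking $\liminf_{\lambda\to\infty}$ and then $\eps\to0$ gives the claim.

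The main point to check is that Lemma~\ref{lem:packing-to-gibbs} genuinely gives a $\lambda_0$ uniform over all the covers $L_n$. This holds because $\lambda_0$ there depends only on $r,\rho,\eps$ and $c=\vol(B_{t-r})$, not on the volume. The other thing to check is that only right-continuity at $r$ is really used; left-continuity is automatic anyway by Lemma~\ref{L:PoissonGap2}.
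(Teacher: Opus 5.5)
Your proposal is correct and matches the paper's proof essentially step for step. The paper likewise uses continuity of $D_{\per}(\H^d,\cdot)$ at $r$ to choose $r'>r$ with $D_{\per}(\H^d,r')$ and $\Vol(B_{r'})/\Vol(B_r)$ close to their values at $r$, and takes a periodic $r'$-packing. It then passes to Benjamini--Schramm converging finite covers and applies Lemma~\ref{lem:packing-to-gibbs} with inner radius $r$, relying on the fact that $\lambda_0$ is independent of the cover. It concludes with Lemma~\ref{lem:construct-infinite-volume-from-limits} and continuity of density. The only differences are cosmetic: you use an additive error for $D_{\per}$ where the paper uses a multiplicative one. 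Your observation that one needs the proof of Corollary~\ref{C:near-optimal1} rather than its statement is exactly why the paper redoes the argument.
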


\begin{proof}
   By continuity, there exists $r' > r$ so that $D_{\per}(\H^d,r') \geq (1 - \eps)D_{\per}(\H^d,r)$ and $\Vol(B_{r'})/\Vol(B_r) \leq 1 + \eps$.  By definition of $D_{\per}$, there is a periodic packing $\nu \in \M_{r'}(\H^d)$ so that $$\density_{r'}(\nu) \geq (1 - \eps)D_{\per}(\H^d,r') \geq (1 - \eps)^2 D_{\per}(\H^d,r).$$
      In particular, there is a lattice $L$ so that $L \backslash \H^d$ has an $r'$-packing of density at least $(1 - \eps)^2\Dopt(\H^2,r)$.   
   By Fact \ref{fact:residually-finite} we may find a sequence $(L_n)_{n = 1}^\infty$ of lattices so that $L_n \backslash \H^d$ Benjamini-Schramm converges to $\H^d$ and the optimal packing of $L_n \backslash \H^d$ of $r'$-spheres has density at least $(1 - \eps)^2D_{\opt}(\H^d,r).$  By Lemma \ref{lem:packing-to-gibbs} this implies that we may take $\lambda$ large enough so that the measures $\tilde{\mu}^{L_n}_{\lambda,r}$ for $n$ large enough have density at least 
   $$(1 - \eps)^3 D_{\opt}(\H^d,r)\frac{\Vol(B_r)}{\Vol(B_{r'})} \geq  \frac{(1 - \eps)^3}{1+\eps} D_{\opt}(\H^d,r) \geq (1 - \eps)^4D_{\opt}(\H^d,r).$$ 
   By Lemma \ref{lem:construct-infinite-volume-from-limits}, we may take a subsequence of the measures $\tilde{\mu}^{L_n}_{\lambda,r}$ that converges to an isometry invariant $(\lambda,r)$-Gibbs measure $\mu$.
   
   By continuity of density (Lemma \ref{L:continuity-general}), we have that $\density(\mu) \geq (1 - \eps)^4D_{\per}(\H^d,r).$ Thus $D_{\Gibbs,\lambda}(\H^d,r) \ge (1 - \eps)^4D_{\per}(\H^d,r)$. Since $\lambda \ge \lambda_0$ is arbitrary, this implies $D_{\Gibbs,\infty}(\H^d,r) \ge (1 - \eps)^4D_{\per}(\H^d,r)$. Since $\eps>0$ is arbitrary, this implies the lemma.
\end{proof}

\begin{proof}[Proof of Theorem \ref{T:near-optimal}]
This follows immediately from Corollary \ref{C:near-optimal2}, Theorem \ref{T:continuity2} (which shows $D_{\opt}(\H^2,r)$ is continuous) and Theorem \ref{thm:dim2-facts} (which shows that $D_{\opt}(\H^2,r)=D_{\per}(\H^2,r)$).
\end{proof}

\section{A Gibbs measure of lower density} \label{sec:lower-density}

The main goal of this section is to show the existence of a Gibbs measure that is a weak Poisson factor.  Our proof is quite general, and so we will work in more generality than $\H^d$.

\begin{assumption}\label{assumption-on-X}
	Let $G$ be a locally compact second countable group, $K \leq G$ a compact subgroup and set $\bX = G / K$.  Let $\Vol_\bX$ be a $G$-invariant Radon measure on $\bX$ and let $d$ be a left-invariant proper metric on $\bX$.  We assume that there is a sequence of uniform lattices $L_n \leq G$ so that the sequence $M_n = L_n \backslash \bX$ Benjamini-Schramm converges to $\bX$.  We also assume that spheres in $\bX$ have measure zero. 
\end{assumption}

\begin{theorem}\label{th:sparser-measure}
	Let $\bX$ satisfy Assumption \ref{assumption-on-X}.  For each $r > 0$ and $\lambda \geq 0$ there is an $G$-invariant Gibbs measure $\mu \in \Gibbs(\bX;r,\lambda)$ that is a weak Poisson factor.  In particular, $\density_r(\mu) \leq D_{\Pois}(\bX,r)$.
\end{theorem}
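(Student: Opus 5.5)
We follow the strategy outlined in the introduction: run spatial birth--death Glauber dynamics from the empty configuration, show that the law at each finite time is a Poisson factor, and extract a subsequential limit along well-chosen times that satisfies the GNZ equations \eqref{eq:GNZ-equations}.

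\textbf{Step 1: the dynamics as a Poisson factor.} Build the dynamics on $\bX$ from a Poisson process on $\bX\times[0,T]$ with i.i.d.\ marks (exponential lifetimes, obtainable equivariantly from extra coordinates of the Poisson process on $G$). A birth at $(x,t)$ with lifetime $\tau$ is accepted if no currently alive accepted particle lies within distance $2r$ of $x$; it then dies at time $t+\tau$. Because births occur at rate $\lambda\Vol_\bX$, acceptance of $(x,t)$ depends only on earlier births within a backward ``influence cone''. On a bounded time horizon $[0,T]$ this cone is a.s.\ finite by a subcritical branching comparison: each particle has Poisson many earlier neighbours within $2r$, with mean $\lambda T \Vol(B_{2r})$. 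If that mean is not small, chop $[0,T]$ into short intervals and iterate. This makes the time-$T$ configuration a measurable equivariant function $\phi_T$ of the Poisson input, so its law $\nu_T$ is a Poisson factor.

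\textbf{Step 2: finite-volume entropy dissipation.} Run the same dynamics on the compact quotients $M_n=L_n\backslash\bX$, whose stationary law is the hard sphere measure $\mu^{L_n}_{\lambda,r}$. Let $\nu^{n}_t$ be the time-$t$ law started from empty, and let $H_n(t)=H(\nu^n_t\,|\,\mu^{L_n})$ be the relative entropy, so $H_n(0)=\log Z(L_n;\lambda,r)\le \lambda\Vol(M_n)$. By the standard identity, $-\frac{d}{dt}H_n(t)$ equals a Dirichlet form (Fisher information) $I_n(t)\ge0$. Integrating over $[j/2,j]$ gives
\[
\frac{2}{j}\int_{j/2}^j \frac{I_n(t)}{\Vol(M_n)}\,dt\le \frac{2\lambda}{j},
\]
uniformly in $n$. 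Next bound the GNZ defect: for a bounded local test function $F$, the difference of the two sides of \eqref{eq:GNZ-equations} under $\nu^n_t$, per unit volume and averaged over $M_n$, is controlled via Cauchy--Schwarz by $\|F\|\sqrt{\lambda\, I_n(t)/\Vol(M_n)}$. Writing the defect as $\int F\,(\text{birth flux}-\text{death flux})$ and comparing with $\sqrt{\text{rate}}$ gives exactly this shape.

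\textbf{Step 3: transfer to $\bX$ and take the limit.} By finite propagation from Step 1, the restriction of the dynamics to a ball up to time $j$ depends with high probability only on input in a bounded region. Benjamini--Schramm convergence then lets us compare the GNZ defect of $\nu_t$ on $\bX$ for a fixed local $F$ with the $M_n$-averaged defect of $\nu^n_t$, up to an error tending to $0$ as $n\to\infty$ for fixed $t\le j$. Hence the time average of $|\mathrm{GNZ}_F(\nu_t)|$ over $[j/2,j]$ is $O(\|F\|\sqrt{\lambda/j})$, by Jensen. Using a countable dense family of local continuous test functions and a diagonal choice, pick $t_j\in[j/2,j]$ with $\mathrm{GNZ}_{F_k}(\nu_{t_j})\to0$ for every $k$. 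Take a weak$^*$ subsequential limit $\mu$, which is $G$-invariant and a weak Poisson factor.

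Continuity of both GNZ sides in the measure requires care: $e^{-H(x|\eta)}$ is discontinuous on the set where $\dist(x,\eta)=2r$. This set is handled by the assumption that spheres have measure zero, together with an invariance/Tonelli argument as in Lemma \ref{L:continuity-general}. It follows that $\mu$ satisfies GNZ, hence DLR, so $\mu\in\Gibbs(\bX;r,\lambda)$. The density bound follows from the definition of $D_{\Pois}$ and compactness.

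I expect the main obstacle to be Step 2's defect-versus-Fisher-information inequality, together with the uniform-in-$n$ locality comparison in Step 3.
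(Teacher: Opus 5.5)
Your proposal is correct and follows essentially the same route as the paper. The shared steps are: the graphical construction making each $\nu_t$ a Poisson factor; entropy dissipation on Benjamini--Schramm converging quotients with $\log Z\le\lambda\Vol(M)$; a Cauchy--Schwarz bound of the lifted GNZ defect by the Fisher information per unit volume; finite-propagation transfer to $\bX$; times $t_j\in[j/2,j]$; and a weak$^*$ limit whose GNZ defect vanishes by continuity of the birth and death functionals.

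The differences are minor. The paper gets finite influence clusters on any time horizon directly from the time-ordering factor $1/k!$, so no chopping is needed. It also picks $t_j$ as a single time where the finite-volume information satisfies $I_{M_j}(\nu^{M_j}_{t_j})\le 2\lambda\Vol(M_j)/j$, which controls all test functions at once, instead of time-averaging the infinite-volume defect and making a diagonal choice.
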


We will construct a sequence of probability measures $(\nu_t)_{t \geq 0} \subset \M_{r}(\bX)$ via \emph{spatial birth-death Glauber dynamics} starting from the empty point set $\nu_0 = \delta_\emptyset$.    
The dynamics governing the random process $(\eta_t)_{t \geq 0} \subset \cP_{r}(\bX)$ can be informally described as follows.
Points arrive on $\bX$ at rate $\lambda$ tagged with an independent standard exponential random variable $\tau$.  Each time a point arrives at $x$ at some time $t$ tagged with $\tau$, we check if $\eta_{t^-}(B_{2r}(x)) := \lim_{s \to t^-} \eta_s(B_{2r}(x)) = 0$; if so, then we add the point $x$ to $\eta_t$, and later remove it from $\eta_{t + \tau}$.  The measures $(\nu_t)_{t \geq 0} \subset \M_{r}(\bX)$ are the laws of $(\eta_t)_{t \geq 0}.$  See Figure \ref{fig:glauber} for a simulation at large activity.

\begin{figure}[htbp]
	\centering
	\includegraphics[width=0.32\linewidth]{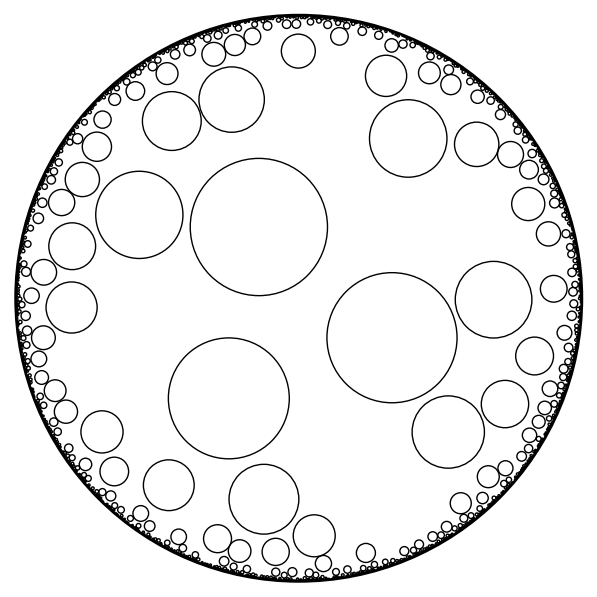}
	\caption{A sample from spatial birth-death Glauber dynamics with radius $r_7$, activity $\lambda = 10000$ and $T = 40$.}
	\label{fig:glauber}
\end{figure}

Spatial birth-death dynamics for the hard sphere model---and Gibbs point processes more generally---are well-studied from various different perspectives, beginning with the foundational works of Holley and Stroock \cite{holley1978nearest} which constructed nearest-neighbor birth and death processes on the real line, taking inspiration from Spitzer's work \cite{spitzer1977stochastic} on similar processes on the integers.   Since then, there have been various works constructing and analyzing spatial birth-death dynamics more generally \cite{finkelshtein2012correlation,finkelshtein2015statistical,kondratiev2005glauber,finkelshtein2012semigroup,garcia1995birth,garcia2006spatial}.  This list is by no means exhaustive, and we refer the reader to the references therein for more background and context.  While many works take a semi-group approach to the construction, we explicitly write our dynamics as a thinned Poisson process, similar to Garcia's work on birth and death processes \cite{garcia1995birth}.   We will describe the dynamics more precisely and prove relevant properties in Section \ref{sec:GD-construction}.

A related argument appears in the 2025 preprint of Jahnel-K\"oppl-Steenbeck-Zass \cite{jahnel2025reversible} which constructs Glauber dynamics on $\R^d$ via a Poisson thinning.  Further, the work \cite{jahnel2025reversible} shows that if one starts Glauber dynamics from a sufficiently regular measure, then all subsequential time limits are Gibbs measures.  Their entropy dissipation approach follows works of Holley and Stroock \cite{holley1978nearest} and Holley \cite{holley1971free} from the discrete setting.  A main challenge for us is to apply these ideas in the non-amenable setting.  In this direction, Shriver \cite{shriver2023free} recently showed an analogous non-amenable analogue of Holley's work \cite{holley1971free} in the discrete setting by working on a sequence of quotients that Benjamini-Schramm converge to the underlying space.  Our approach is most directly inspired by Shriver's work paired with the graphical construction of \cite{garcia1995birth}.

To simplify our notation, we will write $\Omega = \cP_r(\bX)$.   A \textbf{good test function} $F: \bX \times \Omega \to \R$ is a bounded continuous function so that there is a compact set $K \subset \bX$ and a finite $R > 0$ so that $F(x,\eta) = 0$ for $x \notin K$ and so that $F(x,\eta)$ depends only on $(x,\eta|_{B(x,R)})$.  In particular, $F$ has compact support in the first coordinate and is local in the second coordinate.  Since $\Omega$ is compact and metrizable, there is a countable dense collection of good test functions.  As such, in order to prove that a measure $\nu$ is a Gibbs measure, it will be sufficient to prove that it satisfies the GNZ equations for some such collection of good test functions.

For $\nu \in \Prob(\Omega)$ and a good test function $F$ define \begin{align*}
    B_\nu(F) &= \lambda \int_{\bX} \int_{\Omega} F(x,\eta) e^{-H(x\,|\,\eta)} \,d\nu(\eta)\,d\Vol_{\bX}(x) \\
    D_\nu(F) &=  \int_{\Omega} \sum_{x \in \eta} F(x,\eta - \delta_x) \,d\nu(\eta)\,.
\end{align*}

Recall that by the GNZ equations \eqref{eq:GNZ-equations}, if $B_\nu(F) = D_\nu(F)$ for all bounded $F$ with compact support in the $\bX$ coordinate, then $\nu$ is a Gibbs measure.  

With this in mind, define the \textbf{GNZ defect} \begin{equation*}
    \Delta_\nu(F) = D_\nu(F) - B_\nu(F)\,.
\end{equation*}

Our goal is to show that $\Delta_\nu(F) = 0$ for all good test functions.  We first describe how to define Glauber dynamics in both finite and infinite volume.  

\subsection{Glauber dynamics from an empty start}\label{sec:GD-construction}

We provide a graphical construction of Glauber dynamics from an empty start.  We will work in the infinite-volume setting but will also use this construction for the finite-volume setting.

Consider the space $\mathcal{X} = \bX \times \R_+ \times \R_+$ and the Poisson process $\mathcal{N}$ on $\mathcal{X}$ with intensity measure $\lambda e^{-\ell} d\Vol_{\bX} \otimes dt \otimes d\ell$, where $dt$ and $d\ell$ are standard Lebesgue measure.  We define a relation $\prec$ on $\mathcal{X}$ by $$(x',t',\ell') \prec (x,t,\ell) \iff t' < t \text{ and } d(x',x) < 2r\,.$$

For each $T < \infty$, we describe the rule for constructing the measures $(\nu_t)_{t \in [0,T]} \subset \M_{r}(\bX)$.  We work on the space $\mathcal{X}_T = \bX \times [0,T] \times \R_+$.  Given an instance of the process $\mathcal{N}$  and a point $(x,t,\ell) \in \mathcal{N}$ define the \emph{influence cluster} $I_{\mathcal{N}}(x,t,\ell)$  via: 
\begin{align*}
	I_{\mathcal{N}}(x,t,\ell) =  \{(x_k,t_k,\ell_k)& \in \mathcal{N} : \exists~k \in \bN,\exists~\{(x_j,t_j,\ell_j)\}_{j = 0}^k \in \mathcal{N} \\
	&\text{ with } (x_0,t_0,\ell_0) = (x,t,\ell),  (x_j,t_j,\ell_j) \prec (x_{j-1},t_{j-1},\ell_{j-1}) \text{ for all }j \in [k]\}\,. 
\end{align*}

We will show that almost-surely every $I_{\mathcal{N}}(x,t,\ell)$ is finite. 
\begin{lemma}\label{lem:finite-influence-cluster}
	Almost-surely, for all $(x,t,\ell) \in \mathcal{N}$ we have $|I_{\mathcal{N}}(x,t,\ell)| < \infty$.  
\end{lemma}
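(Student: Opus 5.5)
The proof will be a first-moment (branching) bound on the expected number of $\prec$-descending chains. The key observation is that every chain $(x_0,t_0,\ell_0)\succ(x_1,t_1,\ell_1)\succ\cdots\succ(x_k,t_k,\ell_k)$ in $\mathcal{N}$ has strictly decreasing times $t_0>t_1>\cdots>t_k\ge 0$, with consecutive spatial positions within distance $2r$. The labels $\ell$ play no role in $\prec$, so after integrating out $\ell$ (total mass $\int_0^\infty e^{-\ell}\,d\ell=1$) the process projected to $\bX\times\R_+$ is Poisson with intensity $\lambda\,d\Vol_\bX\otimes dt$. Since $\mathcal{N}$ is countable, it suffices to fix a compact set $K\subset\bX$ and a time $T<\infty$ and show that, almost surely, every point of $\mathcal{N}$ in $K\times[0,T]\times\R_+$ has a finite influence cluster. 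Exhausting $\bX\times\R_+$ by countably many such sets then gives the statement.

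Let $N_k(K,T)$ be the number of chains of length $k$ starting at a point of $\mathcal{N}\cap(K\times[0,T]\times\R_+)$. By the multivariate Mecke formula for distinct points of a Poisson process,
\begin{equation*}
\EE N_k(K,T)\le \lambda^{k+1}\int_{K}\int_0^T\int\cdots\int \one\{t_0>t_1>\cdots>t_k>0\}\prod_{j=1}^k \one\{d(x_j,x_{j-1})<2r\}\,d\mathbf{x}\,d\mathbf{t}\le \lambda\Vol_\bX(K)\,\frac{(\lambda v T)^k}{k!},
\end{equation*}
where $v=\Vol_\bX(B(x,2r))$. This is independent of $x$ by $G$-invariance of $\Vol_\bX$, and it is finite since $d$ is proper and $\Vol_\bX$ is Radon. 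The factor $T^k/k!$ is the volume of the time simplex.

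Summing over $k$ gives $\sum_k \EE N_k(K,T)\le \lambda\Vol_\bX(K)e^{\lambda vT}<\infty$. Hence almost surely the total number of chains starting in $K\times[0,T]$ is finite. Every element of $I_{\mathcal{N}}(x,t,\ell)$ is the endpoint of some chain from $(x,t,\ell)$, so each such cluster is finite almost surely. A countable union over $K=\overline{B(o,n)}$ and $T=n$ then gives the lemma.

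The main obstacle is justifying the chain count via the Mecke formula. Chains must consist of distinct points, which holds automatically since times strictly decrease. The only other subtlety is that infinitely many points of $\mathcal{N}$ lie in $K\times[0,T]\times\R_+$ (small $\ell$ do not make them sparse). This is handled because $\ell$ integrates out with total mass $1$, so the count is unaffected. Non-amenability plays no role here, since only the uniform ball volume $v$ enters.
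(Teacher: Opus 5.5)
Your argument is correct. It rests on the same engine as the paper's proof: a multivariate Mecke bound on $\prec$-descending chains, with the $1/k!$ coming from ordering the birth times. Where you differ is in how you conclude.

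The paper proceeds in three steps:
\begin{itemize}
\item it first shows that every point has only finitely many immediate $\prec$-predecessors;
\item it then bounds the Palm probability of a descending chain of length $k$ from $(\cO,t,\ell)$ by $(\lambda\Vol_\bX(B_{2r}(\cO))t)^k/k!\to 0$, which rules out infinite descending chains;
\item it combines the two, implicitly via K\"onig's lemma (an infinite but locally finite cluster must contain an infinite descending chain).
\end{itemize}
You instead sum $\EE N_k(K,T)$ over all $k$. This shows directly that the total number of descending chains started in $K\times[0,T]\times\R_+$ is a.s.\ finite, which bounds $|I_{\mathcal{N}}(x,t,\ell)|$ at once. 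Both the local-finiteness step and the K\"onig step become unnecessary. The paper's route only needs the chain probabilities to tend to zero rather than be summable, but with $1/k!$ decay this makes no difference. Your chain-count estimate is exactly what the paper later records as Corollary~\ref{cor:finite-propagation-backwards} for the finite-propagation argument.

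Two small corrections.
\begin{itemize}
\item The time region $\{T\ge t_0>t_1>\cdots>t_k>0\}$ has volume $T^{k+1}/(k+1)!$, not $T^k/k!$: you dropped the outer $dt_0$ integral. For $k=0$ your bound reads $\lambda\Vol_\bX(K)$, whereas $\EE N_0(K,T)=\lambda\Vol_\bX(K)T$. The correct bound $\lambda^{k+1}\Vol_\bX(K)v^kT^{k+1}/(k+1)!$ still sums over $k$ to at most $\lambda\Vol_\bX(K)T\,e^{\lambda vT}<\infty$, so the conclusion is unchanged.
\item Your remark that infinitely many points of $\mathcal{N}$ lie in $K\times[0,T]\times\R_+$ is false. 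That set has intensity mass $\lambda\Vol_\bX(K)T<\infty$, so it a.s.\ contains only finitely many points, as your own $k=0$ count shows. The remark plays no role in the proof and should be deleted.
\end{itemize}
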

\begin{proof}
	First note that almost surely for all $(x,t,\ell) \in \mathcal{N}$ we have $\{(x',t',\ell') \in \mathcal{N} : (x',t',\ell') \prec (x,t,\ell) \}$ is finite.  By the Mecke formula, note that the expected number of points $(x',t',\ell') \in \mathcal{N} \cap B_1(\cO) \times [0,T] \times \R_+$ with $(x',t',\ell') \prec (x,t,\ell)$ is bounded above by $T \lambda \Vol_\bX(B_1)$.  Thus, for each $T$, we almost-surely have that $|\{(x',t',\ell') \in \mathcal{N} : (x',t',\ell') \prec (x,t,\ell) \}| < \infty$ for all points $(x,t,\ell) \in \mathcal{N} \cap B_1(\cO) \times [0,T] \times \R_+$.  Union bounding over countably many $T$ and countably many centers (using that $\bX$ is separable since $G$ is second countable) shows that almost surely for all $(x,t,\ell) \in \mathcal{N}$ we have $\{(x',t',\ell') \in \mathcal{N} : (x',t',\ell') \prec (x,t,\ell) \}$ is finite.
	
	Thus, if we have that $|I_{\mathcal{N}}(x,t,\ell)| = \infty$ we must have an infinite sequence $ \{(x_j,t_j,\ell_j)\}_{j \geq 0} \in \mathcal{N}$ with $(x,t,\ell) = (x_0,t_0,\ell_0)$  so that $(x_j,t_j,\ell_j) \succ (x_{j+1},t_{j+1},\ell_{j+1})$ for all $j$.  Let $\mathcal{E}_k(x,t,\ell)$ denote the event that there is such a sequence $\{(x_j,t_j,\ell_j)\}_{j  = 0}^k\,.$  It is enough to show that there is almost-surely no point in $B_1(\cO)$ for which $\mathcal{E}_k(x,t,\ell)$ holds for all $k \geq 1$. By the Mecke formula \cite[Theorem 4.4]{last2017lectures} we have  $$\EE\left[  \sum_{(x,t,\ell) \in \mathcal{N} \cap B_1(\cO) \times [0,T] \times \R_+} \one\{\mathcal{E}_k(x,t,\ell)\}  \right] = \lambda \Vol(B_1(\cO)) \int_0^\infty \int_0^T \mathbb{P}_{\cO,t,\ell}(\mathcal{E}_k(\cO,t,\ell)) \,dt \, e^{-\ell} d\ell$$
	where $\mathbb{P}_{\cO,t,\ell}$ denotes the probability over the Poisson process conditioned on having a point at $(\cO,t,\ell)$.  
	To upper bound the probability of $\mathcal{E}_k(\cO,t,\ell)$, note that on the event $\mathcal{E}_k(\cO,t,\ell)$ there is a tuple of points $(x_1,\ldots,x_k)$ so that $x_j \in B_{2r}(x_{j-1})$ (where we interpret $x_0 = \cO$) and $(x_j,t_j,\ell_j) \in \mathcal{N}$ for some $t_j \in [0,t]$, $\ell_j\in [0,\infty)$.  By the Mecke formula, the expected number of such tuples is bounded by $(\lambda \Vol_{\bX}(B_{2r}(\cO)t)^k$.  For each such tuple, note that for uniform and independent random variables $t_1,\ldots,t_k \in [0,t]$ we have that the probability $t_1 > t_2 > \cdots  > t_k$ is $1/k!.$  Thus
	\begin{equation*}
		\mathbb{P}_{\cO,t,\ell}(\mathcal{E}_k(\cO,t,\ell)) \leq \frac{1}{k!}(\lambda \Vol_\bX(B_{2r}(\cO))t)^k \xrightarrow{k \to \infty} 0 \,.
	\end{equation*}

	In particular, this implies $\bigcap_{k \geq 1} \mathcal{E}_k(x,t,\ell)$ does not hold for any $(x,t,\ell) \in \mathcal{N}$ (with probability 1), completing the proof.
\end{proof}

As an immediate consequence of the proof of Lemma \ref{lem:finite-influence-cluster} we can control the speed of propagation of information.  In particular, for a compact set $K \subset \bX$ and a time $T$ let $N_k(K,T)$ denote the number of chains $\{(x_j,t_j,\ell_j)\}_{j = 0}^k \in \mathcal{N}$ with $x_0 \in K$ and $t_0 \leq T$ so that $(x_j,t_j,\ell_j) \succ (x_{j+1},t_{j+1},\ell_{j+1})$.  We record the following corollary of the proof of Lemma \ref{lem:finite-influence-cluster}: \begin{corollary}\label{cor:finite-propagation-backwards}
	We have \begin{equation*}
		\EE N_k(K,T) \leq \lambda^{k+1} \Vol_\bX(K) \Vol_{\bX}(B_{2r}(\mathcal{O}))^k \frac{T^{k+1}}{(k+1)!}\,.
	\end{equation*}
\end{corollary}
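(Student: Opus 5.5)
We bound $\EE N_k(K,T)$ by a first-moment count, the same way as in the proof of Lemma~\ref{lem:finite-influence-cluster}. The quantity $N_k(K,T)$ is a sum over ordered $(k+1)$-tuples of distinct points of $\mathcal{N}$ of the indicator that the tuple forms a decreasing chain: $x_0 \in K$, $t_0 \le T$, and for every $j$ we have $t_{j+1} < t_j$ and $d(x_j,x_{j+1}) < 2r$. Any two consecutive points of a chain have distinct times, and since the times strictly decrease along the chain, all points in it are distinct. So the multivariate Mecke formula (\cite[Theorem 4.4]{last2017lectures}) applies. It shows that the expectation equals the integral of this indicator against the $(k+1)$-fold product of the intensity measure $\lambda e^{-\ell}\,d\Vol_\bX\,dt\,d\ell$.

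We then evaluate the integral coordinate by coordinate.
\begin{itemize}
\item Each $\ell_j$ integrates against $e^{-\ell_j}\,d\ell_j$ to give $1$.
\item Integrate the spatial variables from $x_k$ back to $x_0$. With the earlier points fixed, $x_{j+1}$ ranges over $B_{2r}(x_j)$, which has volume $\Vol_\bX(B_{2r}(\cO))$ because $\Vol_\bX$ is $G$-invariant and $d$ is left-invariant. The last variable $x_0$ ranges over $K$ and contributes $\Vol_\bX(K)$.
\item Together with the factor $\lambda$ from each of the $k+1$ points, this gives $\lambda^{k+1}\Vol_\bX(K)\Vol_\bX(B_{2r}(\cO))^k$, multiplied by the time integral.
\item The time integral is the volume of the simplex $\{T \ge t_0 > t_1 > \cdots > t_k \ge 0\}$, which is $T^{k+1}/(k+1)!$.
\end{itemize}

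Multiplying these factors gives the claimed bound. In fact it holds with equality, though only the upper bound is needed. The only step needing care is the justification of the Mecke formula for ordered distinct tuples, which is standard; the rest is direct computation.
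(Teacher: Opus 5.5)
Your proposal is correct and follows essentially the paper's route. The paper reads this corollary off from the Mecke-formula count in the proof of Lemma~\ref{lem:finite-influence-cluster}: the spatial constraints give $\Vol_\bX(B_{2r}(\cO))^k$, and the time ordering gives a $1/k!$ factor which, integrated over $t_0\in[0,T]$, produces $T^{k+1}/(k+1)!$. Your single $(k+1)$-fold Mecke computation with the simplex volume packages the same count more cleanly, and your remark that the bound holds with equality is right.
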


We now describe the acceptance rule $A_{\mathcal{N}}:\mathcal{N} \to \{0,1\}$ recursively: \begin{equation}\label{eq:acceptance-rule}
	A_{\mathcal{N}}(x,t,\ell) = \prod_{\substack{(x',t',\ell') \in \mathcal{N}}} (1 - A_{\mathcal{N}}(x',t',\ell')\one\{(x',t',\ell') \prec (x,t,\ell) \wedge t\in[t',t'+\ell']\} )\,.
\end{equation}

The next lemma is a straightforward exercise giving an equivalent characterization.
\begin{lemma}
	The acceptance rule is uniquely determined by the following properties.
	\begin{enumerate}
		\item If $(x,t,\ell)\in \mathcal{N}$ is minimal in the sense that there does not exist $(x',t',\ell')\in \mathcal{N}$ with $(x',t',\ell') \prec (x,t,\ell)$, then $A_{\mathcal{N}}(x,t,\ell) =1$.
		\item If $A_{\mathcal{N}}(x_1,t_1,\ell_1)=1$, $(x_1,t_1,\ell_1) \prec (x_2,t_2,\ell_2)$ and $t_2 \in [t_1,t_1+\ell_1]$, then $A_{\mathcal{N}}(x_2,t_2,\ell_2) =0$.
		\item If $(x,t,\ell)\in \mathcal{N}$ and there does not exist $(x',t',\ell')$ satisfying $A_{\mathcal{N}}(x',t',\ell')=1$, $(x',t',\ell') \prec (x,t,\ell)$ and $t \in [t',t'+\ell']$ then $A_{\mathcal{N}}(x,t,\ell) =1$.
	\end{enumerate}
\end{lemma}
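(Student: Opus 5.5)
The plan is to prove two things: that some function satisfies properties (1)--(3), and that any function satisfying them is unique. Both parts rest on the well-foundedness supplied by Lemma \ref{lem:finite-influence-cluster}. Throughout we work on the almost-sure event where every influence cluster $I_{\mathcal{N}}(x,t,\ell)$ is finite.

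First, the properties are equivalent to the recursion \eqref{eq:acceptance-rule}. If $A$ satisfies \eqref{eq:acceptance-rule}, then:
\begin{itemize}
\item property (1) holds because the product is empty for a minimal point;
\item property (2) holds because the factor indexed by $(x_1,t_1,\ell_1)$ vanishes;
\item property (3) holds because every factor equals $1$.
\end{itemize}
Conversely, suppose $A$ takes values in $\{0,1\}$ and satisfies (1)--(3). The right-hand side of \eqref{eq:acceptance-rule} is $0$ exactly when some $\prec$-predecessor $(x',t',\ell')$ is accepted with $t\in[t',t'+\ell']$. In that case (2) forces $A(x,t,\ell)=0$; otherwise (3) forces $A(x,t,\ell)=1$. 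So (1)--(3) amount to the recursion, and (1) is in fact a special case of (3).

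Uniqueness. Suppose $A$ and $A'$ both satisfy the properties but differ at some point $p_0\in\mathcal{N}$. By the recursion, $A(p_0)$ depends only on the values of $A$ at the predecessors $p\prec p_0$. Hence $A$ and $A'$ must also differ at some predecessor $p_1\prec p_0$. Iterating gives an infinite $\prec$-descending chain $p_0\succ p_1\succ p_2\succ\cdots$ in which all the $p_j$ are distinct, since their time coordinates strictly decrease. This chain lies inside the finite set $I_{\mathcal{N}}(p_0)$, which is a contradiction. Equivalently, one can argue by strong induction on the depth, defined as the maximal length of a descending chain starting from a point; depth is finite because the cluster is finite.

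Existence. Define $A$ by recursion on depth. Depth-$0$ points are the minimal points and receive $A=1$. A point of depth $k$ has all its predecessors of depth strictly less than $k$, so the right-hand side of \eqref{eq:acceptance-rule} is already determined and we use it to define $A$. The product has finitely many nontrivial factors, since the predecessor set is finite by the first step of the proof of Lemma \ref{lem:finite-influence-cluster}. Hence $A$ is well defined and takes values in $\{0,1\}$.

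The only real obstacle is well-foundedness, which requires ruling out infinite descending chains. This is exactly the content of Lemma \ref{lem:finite-influence-cluster}, so that lemma is what is being invoked; note that the statement holds almost surely rather than deterministically. Measurability of $A_{\mathcal{N}}$ as a function of $\mathcal{N}$ follows from the same depth recursion, since each level is a measurable function of finitely many points.
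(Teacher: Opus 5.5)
Your proposal is correct, and it is essentially the argument the paper leaves as a ``straightforward exercise'': well-founded recursion along $\prec$, justified almost surely by the finiteness of influence clusters from Lemma~\ref{lem:finite-influence-cluster}, which is how the paper itself uses that lemma in the proof of Lemma~\ref{lem:mu_t-Poisson-factor} to make sense of \eqref{eq:acceptance-rule}. Your observation that the characterization can only hold almost surely is a correct precision that the paper's statement leaves implicit: along an infinite descending $\prec$-chain, both alternating assignments satisfy the recursion.
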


The random counting measures $\eta_s, \eta_{s^-} \in \cP_{r}(\bX)$ are defined via \begin{eqnarray}
	\eta_s &= \sum_{(x,t,\ell) \in \mathcal{N}} \delta_x \cdot A_{\mathcal{N}}(x,t,\ell)  \cdot \one\{ s \in [t,t+\ell] \}\label{eq:nu-s-formal-def}\\
	\eta_{s^-} &= \sum_{(x,t,\ell) \in \mathcal{N}} \delta_x \cdot A_{\mathcal{N}}(x,t,\ell)  \cdot \one\{ s \in (t,t+\ell] \} \nonumber
\end{eqnarray}
and $\nu_s \in \M_r(\bX)$ is the law of $\eta_s$. Note $\eta_{s^-}$ is the limit of $\eta_t$ as $t$ approaches $s$ from below.

\begin{lemma}\label{lem:mu_t-Poisson-factor}
	For each $t  > 0$, the measure $\nu_t \in \M_r(\bX)$ is a Poisson factor.   Its factorial moment densities exist and obey the Ruelle bound $\rho_{\nu_t}^{(k)} \leq \lambda^k$.
\end{lemma}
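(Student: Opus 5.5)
The plan is to read off the factor map from the graphical construction and then bound moments by comparison with the unthinned Poisson process.

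\emph{Factor map.} Fix $t>0$. By Lemma~\ref{lem:finite-influence-cluster}, almost surely every influence cluster is finite. Hence the recursion \eqref{eq:acceptance-rule} can be evaluated for each point by induction on the finite, well-founded order $\prec$ restricted to its cluster, and the first lemma on the acceptance rule shows the result is unique. Only points with time coordinate at most $t$ matter for $\eta_t$. So $\eta_t$ is a measurable function $\Phi$ of $\mathcal{N}$ restricted to $\bX\times[0,t]\times\R_+$, defined on a full-measure set.

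\emph{Equivariance and the right input process.} The relation $\prec$ is defined through $d$, which is $G$-invariant, so $A_{g\mathcal{N}}(gx,t,\ell)=A_{\mathcal{N}}(x,t,\ell)$ and therefore $\Phi(g\mathcal{N})=g\Phi(\mathcal{N})$. To match the definition of Poisson factor, which uses a rate-one Poisson process on $G$, I would realize the marked process $\mathcal{N}$ from $\Pois(\Vol_G)$ by a $G$-equivariant measurable map. First project to $G/K$ as in the Remark following that definition. Then obtain the intensity $\lambda$ and the marks $(t,\ell)$ with density $e^{-\ell}\,dt\,d\ell$ equivariantly. One way is to extract i.i.d.\ uniform labels from the local configuration around each point, for instance via relative positions of nearby points. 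A cleaner way is to use the standard fact that an equivariant factor map from a Poisson process to a Poisson process with i.i.d.\ marks and any finite intensity exists, via splitting and thinning. I expect this bookkeeping to be the main technical point. Invariance of $\nu_t$ then follows from equivariance, and $2r$-separation follows from rule (2) of the acceptance lemma.

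\emph{Ruelle bound.} The points of $\eta_t$ are a subset of the projection $\{x : (x,s,\ell)\in\mathcal{N},\ s\le t\le s+\ell\}$. This projection is a Poisson process on $\bX$ with intensity
\[
\lambda\int_0^t\int_{t-s}^\infty e^{-\ell}\,d\ell\,ds=\lambda(1-e^{-t})\le\lambda .
\]
Distinct points of $\mathcal{N}$ project to distinct locations almost surely. Therefore the $k$-th factorial moment measure of $\eta_t$ is dominated by that of this Poisson process, which has density $(\lambda(1-e^{-t}))^k$. Domination implies that $\nu_t^{(k)}$ is absolutely continuous with respect to $\Vol_\bX^{\otimes k}$, so the correlation functions $\rho^{(k)}_{\nu_t}$ exist and satisfy $\rho^{(k)}_{\nu_t}\le\lambda^k$. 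Equivalently, this follows by applying the multivariate Mecke formula to $\EE\sum_{\neq}f(x_1,\dots,x_k)\prod_i A(\cdot)\one\{\cdot\}$ and bounding each acceptance indicator by $1$.
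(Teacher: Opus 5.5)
Your argument matches the paper's on both substantive points. First, $\mathcal{N}\mapsto\eta_t$ is measurable (and, as you note, equivariant) because finite influence clusters (Lemma~\ref{lem:finite-influence-cluster}) make \eqref{eq:acceptance-rule} a finite recursion. Second, the Ruelle bound follows by dominating $\eta_t$ by the projected Poisson process of intensity $\lambda(1-e^{-t})$, which also gives existence of the densities.

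The only divergence is the input-matching step you call the main technical point. The paper skips it, saying only that measurability suffices. With the introduction's definition of Poisson factor it is trivial, because $G$ acts only on the $\bX$-coordinate of $\bX\times[0,T]$. Take $T=\lambda t$ and a measure isomorphism $\psi$ from $([0,\lambda t],\mathrm{Leb})$ onto $([0,t]\times\R_+,\lambda e^{-\ell}\,ds\,d\ell)$; both are non-atomic of mass $\lambda t$. Then $(x,u)\mapsto(x,\psi(u))$ is equivariant and carries $\Pois_{\bX,\lambda t}$ to $\mathcal{N}$ restricted to times at most $t$, which is all $\eta_t$ depends on. The fundamental-domain argument of Lemma~\ref{lem:mc-to-Bernoulli-shift} also applies verbatim to such a marked process, so nothing more is needed downstream.

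The mechanisms you sketch for this step would not work as written. Once you project to $G/K$, labels computed from relative positions of nearby points are not independent: mutual nearest neighbours see the same nearest-neighbour distance. Splitting and thinning only produce sub-processes, so by themselves they create neither i.i.d.\ marks nor higher intensity.

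If you insist on the Poisson process on $G=\Isom(\H^2)$, do not project first. Given the positions, the $K$-components of the points are i.i.d.\ Haar. So the direction to the nearest neighbour, measured in each point's own frame, is an equivariant i.i.d.\ uniform mark. The intensity is then matched by choosing the normalization of Haar measure.
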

\begin{proof}
	To show that $\nu_t$ is a Poisson factor we need only show that the map $\mathcal{N} \mapsto \eta_t$ is measurable.  By Lemma \ref{lem:finite-influence-cluster}, we have that almost surely the influence cluster $I_{\mathcal{N}}(x,s,\ell)$ is finite for each $(x,s,\ell) \in \mathcal{N}$.  This assures that the acceptance rule \eqref{eq:acceptance-rule} is in fact a finite product, and so the definition of $\eta_s$ at \eqref{eq:nu-s-formal-def} is a Borel function of $\mathcal{N}$.  
	
	To see the Ruelle bound, we note that the pointset at time $t$ is a subset of $$\{x : (x,s,\ell)  \in \mathcal{N}, 0 \leq s \leq t < s + \ell\}$$
	which is precisely a Poisson process of intensity $$\lambda \int_0^t \int_{t-s}^\infty e^{-\ell} \,d\ell\,ds = \lambda(1 - e^{-t}) \leq \lambda\,.$$
    The factorial moment densities exist because the process of accepted points is a thinning of the Poisson process of intensity $\lambda$.
\end{proof}

\subsection{Finite volume}

Throughout this section, let $M = L \backslash \mathbb{X}$ be a compact quotient.   We will always have $\lambda \geq 0$ and $r > 0$ fixed. We write $\Omega_M= \cP_r(M)$ and define $\mu_M^\lambda$ to be the hard sphere measure on $M$ defined in \eqref{eq:finite-vol-gibbs-def}.

We first recall that $\mu_M^\lambda$ satisfies the GNZ equations \eqref{eq:GNZ-equations}:
For every non-negative measurable $F: M \times \Omega_M \to [0,\infty]$ we have \begin{equation}\label{eq:finite-vol-GNZ}
    \int_{\Omega_M} \sum_{x \in \eta} F(x,\eta - \delta_x) \,d\mu_M^{\lambda}(\eta) = \lambda \int_M \int_{\Omega_M} F(x,\eta) e^{-H(x\,|\,\eta)} \,d\mu_M^{\lambda}(\eta)\,d\Vol_M(x)\,.
\end{equation}

We define $\nu_0^M \in \Prob(\Omega_M)$ to be the probability measure on the empty configuration and $\nu_t^M \in \Prob(\Omega_M)$ to be the probability measure obtained by running Glauber dynamics up to time $t$.  We note that the generator of the Glauber dynamics on $M$ may be defined via \begin{equation*}
	(L_M f)(\eta) = \sum_{x \in \eta}(f(\eta - \delta_x) - f(\eta)) + \lambda \int_M e^{-H(x\,|\,\eta)}(f(\eta+\delta_x) - f(\eta))\,d\Vol_M(x)
\end{equation*}
and so we may write $\nu_t^M = \nu_0^M e^{t L_M} $, which will be useful for taking time derivatives.

It will be convenient to define the measure $Q_M$ on $M \times \Omega_M$ by \begin{equation}
    dQ_M(x,\eta) = \lambda e^{-H(x\,|\,\eta)} d\Vol_M(x) d\mu_M^{\lambda}(\eta)
\end{equation}
which measures average birth.

For a non-negative measurable $F: M \times \cP_r(M) \to [0,\infty]$ define the birth and death functionals 
\begin{align*}
    B_\nu^M(F) &= \int_M \int_{\Omega_M} \lambda e^{-H(x\,|\,\eta)} F(x,\eta) \,d\nu(\eta)\,d\Vol_M(x)\\
    D_\nu^M(F) &= \int_{\Omega_M} \sum_{x \in \eta} F(x,\eta - \delta_x) \,d\nu(\eta)\,.
\end{align*}

We will ultimately compare these functions to their infinite-volume analogues, but first note useful integral representations of them.
\begin{lemma}\label{lem:birth-death-identities}
    Suppose that $\nu \in \Prob(\Omega_M)$ is absolutely continuous with respect to $\mu_M^{\lambda}$ with $\frac{d\nu}{d\mu_M^{\lambda}} = \phi$.  Then \begin{equation*}
        B_\nu^M(F) = \iint_{M\times \Omega_M}  F(x,\eta) \phi(\eta) \,dQ_M(x,\eta)\quad \text{ and }\quad D_\nu^M(F) = \iint_{M\times \Omega_M} F(x,\eta) \phi(\eta + \delta_x) \,dQ_M(x,\eta)\,.
    \end{equation*}
\end{lemma}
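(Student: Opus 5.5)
The birth identity follows directly from the definitions. For the death identity, we plan to apply the finite-volume GNZ equations \eqref{eq:finite-vol-GNZ} for $\mu_M^\lambda$ to a suitably modified test function.

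For the birth functional: since $d\nu = \phi\, d\mu_M^\lambda$, we write
\begin{equation*}
B_\nu^M(F) = \int_M \int_{\Omega_M} \lambda e^{-H(x\,|\,\eta)} F(x,\eta)\phi(\eta)\,d\mu_M^\lambda(\eta)\,d\Vol_M(x).
\end{equation*}
By Tonelli, the right-hand side is exactly $\iint F(x,\eta)\phi(\eta)\,dQ_M(x,\eta)$, by the definition of $dQ_M = \lambda e^{-H(x|\eta)}d\Vol_M\,d\mu_M^\lambda$. Nonnegativity of $F$ and $\phi$ justifies exchanging the integrals.

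For the death functional, we again first replace $d\nu$ by $\phi\,d\mu_M^\lambda$ to obtain
\begin{equation*}
D_\nu^M(F) = \int_{\Omega_M}\sum_{x\in\eta} F(x,\eta-\delta_x)\phi(\eta)\,d\mu_M^\lambda(\eta).
\end{equation*}
We then define $\tilde F(x,\xi) := F(x,\xi)\phi(\xi+\delta_x)$. This is nonnegative and measurable, because $(x,\xi)\mapsto \xi+\delta_x$ is measurable (indeed continuous) into $\cP(M)$; we extend $\phi$ by $0$ off $\Omega_M$, or more simply note that it is only ever evaluated where $\xi+\delta_x\in\Omega_M$. For $x\in\eta$ we have $\tilde F(x,\eta-\delta_x) = F(x,\eta-\delta_x)\phi(\eta)$, so the integral above equals the left side of \eqref{eq:finite-vol-GNZ} applied to $\tilde F$. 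The GNZ equation then converts it to
\begin{equation*}
\lambda\int_M\int_{\Omega_M} F(x,\eta)\phi(\eta+\delta_x)e^{-H(x|\eta)}\,d\mu_M^\lambda(\eta)\,d\Vol_M(x),
\end{equation*}
which is $\iint F(x,\eta)\phi(\eta+\delta_x)\,dQ_M(x,\eta)$.

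The only subtle point is well-definedness. Since $\phi$ is defined only $\mu_M^\lambda$-a.e., we need $\phi(\eta+\delta_x)$ to be independent of the version of $\phi$ for $Q_M$-a.e.\ $(x,\eta)$. We get this from the GNZ identity itself: applying it to $\tilde F = \one\{\xi+\delta_x\in N\}$ for a $\mu_M^\lambda$-null set $N$ shows that the left side vanishes. Hence $Q_M(\{(x,\eta): \eta+\delta_x\in N\})=0$, so changing $\phi$ on a null set does not affect the integral. On the support of $Q_M$ we also have $e^{-H(x|\eta)}=1$, so $\eta+\delta_x\in\Omega_M$, which makes the extension of $\phi$ off $\Omega_M$ irrelevant.
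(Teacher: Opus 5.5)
Your proposal is correct and follows the paper's proof. The birth identity is immediate from $d\nu=\phi\,d\mu_M^\lambda$ and the definition of $Q_M$, and the death identity comes from applying the finite-volume GNZ equation to $(x,\xi)\mapsto F(x,\xi)\phi(\xi+\delta_x)$. Your extra check that $\phi(\eta+\delta_x)$ does not depend on the version of $\phi$ for $Q_M$-a.e.\ $(x,\eta)$, again via GNZ, is a valid detail that the paper leaves implicit.
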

\begin{proof}
    The representation for $B_\nu^M(F)$ is immediate. For the representation of $D_\nu$, note that by \eqref{eq:finite-vol-GNZ} we have \begin{align*}
        D_\nu^M(F) &= \int_{\Omega_M} \sum_{x \in \eta} F(x,\eta - \delta_x)  \phi(\eta) \,d\mu_M^\lambda(\eta) \\
        &= \iint_{M\times \Omega_M}\lambda  e^{-H(x\,|\,\eta)} F(x,\eta) \phi(\eta + \delta_x)  \,d\mu_M^\lambda(\eta) \,d\Vol_M(x) \\
        &= \iint_{M\times \Omega_M} F(x,\eta) \phi(\eta + \delta_x) \,dQ_M(x,\eta)\,. \qedhere
    \end{align*}
\end{proof}

We prove a standard integration by parts identity for the generator $L$.
\begin{fact}\label{fact:IBP-generator}
  	Let $f,g: \Omega_M\to \R$ be bounded and measurable.  Then $$\int_{\Omega_M} (L_M f)(\eta) g(\eta) \,d\mu_M^\lambda(\eta) = -\iint_{M\times \Omega_M } (f(\eta + \delta_x) - f(\eta))(g(\eta + \delta_x) - g(\eta)) \,dQ_M(x,\eta)\,.$$
\end{fact}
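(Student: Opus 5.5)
The plan is to expand the generator, integrate the birth and death parts separately against $\mu_M^\lambda$, and use the finite-volume GNZ equations \eqref{eq:finite-vol-GNZ} to rewrite the death part as an integral against $Q_M$. Concretely, write
\[
\int_{\Omega_M} (L_M f) g \,d\mu_M^\lambda = \int_{\Omega_M} \sum_{x\in\eta}\bigl(f(\eta-\delta_x)-f(\eta)\bigr) g(\eta)\,d\mu_M^\lambda(\eta) + \iint_{M\times\Omega_M} \bigl(f(\eta+\delta_x)-f(\eta)\bigr) g(\eta)\,dQ_M(x,\eta).
\]
The second term is already in the desired form by the definition of $dQ_M = \lambda e^{-H(x\,|\,\eta)}\,d\Vol_M\,d\mu_M^\lambda$.

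For the first term, apply \eqref{eq:finite-vol-GNZ} with
\[
F(x,\xi) = \bigl(f(\xi)-f(\xi+\delta_x)\bigr) g(\xi+\delta_x),
\]
so that $F(x,\eta-\delta_x) = (f(\eta-\delta_x)-f(\eta))g(\eta)$ for $x\in\eta$. GNZ then turns the death sum into
\[
\iint_{M\times\Omega_M} \bigl(f(\eta)-f(\eta+\delta_x)\bigr) g(\eta+\delta_x)\,dQ_M(x,\eta).
\]
Adding the two contributions gives
\[
\iint \bigl(f(\eta+\delta_x)-f(\eta)\bigr)\bigl(g(\eta)-g(\eta+\delta_x)\bigr)\,dQ_M,
\]
which is the claimed identity with the minus sign.

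The only step needing care is justifying GNZ for a signed $F$, since \eqref{eq:finite-vol-GNZ} is stated for non-negative functions. Split $F=F^+-F^-$ and check that each side is finite. This holds because $f,g$ are bounded, $M$ is compact, $\lambda\Vol_M(M)<\infty$, and the number of points in any $\eta\in\Omega_M$ is bounded by a packing constant depending on $M$ and $r$, so $\int |\eta|\,d\mu_M^\lambda<\infty$.

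We also need $e^{-H(x\,|\,\xi)}$ to vanish whenever $\xi+\delta_x$ is not a packing. Then the values of $f,g$ at $\xi+\delta_x$ only matter on $\Omega_M$, where they are defined; outside $\Omega_M$ we may extend $f,g$ arbitrarily, say by $0$. With these points checked, the argument is a direct computation and I expect no real obstacle.
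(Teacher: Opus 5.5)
Your proposal is correct and matches the paper's proof: expand $L_M f$, keep the birth term as a $Q_M$-integral, and apply the finite-volume GNZ equation \eqref{eq:finite-vol-GNZ} to the death term with $F(x,\xi)=(f(\xi)-f(\xi+\delta_x))\,g(\xi+\delta_x)$, then combine the two terms. Your extra remarks fill in details the paper uses without comment. These are the splitting $F=F^+-F^-$, justified by boundedness of $f,g$, compactness of $M$ and the uniform bound on $|\eta|$, and the vanishing of $e^{-H(x\,|\,\xi)}$ whenever $\xi+\delta_x$ is not a packing.
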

\begin{proof} Expand
	\begin{align*}
		\int_{\Omega_M} (L_M f)(\eta) g(\eta) \,d\mu_M^\lambda(\eta) &= \lambda \int_{\Omega_M} \int_{M} e^{-H(x\,|\,\eta)} (f(\eta + \delta_x) - f(\eta)) g(\eta)\,d\Vol_M(x) \,d\mu_M^\lambda(\eta) \\
		&\qquad + \int_{\Omega_M}\sum_{x \in \eta} (f(\eta - \delta_x) - f(\eta))  g(\eta)  \,d\mu_M^\lambda(\eta)\,.
	\end{align*}
	Applying the GNZ equation \eqref{eq:finite-vol-GNZ} to the second term we see that \begin{align*}
		 \int_{\Omega_M}\sum_{x \in \eta}& (f(\eta - \delta_x) - f(\eta))  g(\eta)  \,d\mu_M^\lambda(\eta) \\
		 &= \lambda \int_{\Omega_M} \int_{M} e^{-H(x\,|\,\eta)} (f(\eta) - f(\eta+\delta_x)) g(\eta + \delta_x)\,d\Vol_M(x) \,d\mu_M^\lambda(\eta)\,.
	\end{align*}
	Combining terms completes the proof.
\end{proof}

We note that $\nu_t^M$ is in fact absolutely continuous with respect to $\mu_M^\lambda$:

\begin{fact}\label{fact:nu_t-density}
	For every $t \geq 0$ we have $\nu_t^M \ll \mu_M^\lambda$.
\end{fact}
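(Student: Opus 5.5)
The plan is to compare both $\nu_t^M$ and $\mu_M^\lambda$ with a single reference measure and show that $\nu_t^M$ is absolutely continuous with respect to it, while the reference measure is equivalent to $\mu_M^\lambda$. Since $M$ is compact and $2r$-separated configurations on $M$ have bounded cardinality (say at most $N_M$ points), $\Omega_M$ decomposes as the finite disjoint union $\bigsqcup_{m=0}^{N_M}\Omega_M^{(m)}$, where $\Omega_M^{(m)}$ consists of configurations with exactly $m$ points. By \eqref{eq:finite-vol-gibbs-def}, on each $\Omega_M^{(m)}$ the measure $\mu_M^\lambda$ is a positive constant ($\lambda^m/(m!\,Z)$) times the pushforward of $\Vol_M^{\otimes m}$ restricted to the set $\{\mathbf{x}: d_M(x_i,x_j)\ge 2r \text{ for } i<j\}$ under $\mathbf{x}\mapsto\sum_j\delta_{x_j}$. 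So it suffices to show that for each $m$ the restriction of $\nu_t^M$ to $\Omega_M^{(m)}$ is absolutely continuous with respect to $\Vol_M^{\otimes m}$ pushed forward, i.e.\ that $\nu_t^M$ charges no set $E\subset\Omega_M^{(m)}$ whose preimage in $M^m$ is Lebesgue-null.

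For this I use the graphical construction of Section~\ref{sec:GD-construction} on $M$. There $\eta_t=\sum \delta_x A_{\mathcal N}(x,s,\ell)\one\{t\in[s,s+\ell]\}$ is a sub-configuration of the point set $\{x:(x,s,\ell)\in\mathcal N,\ s\le t\le s+\ell\}$. The latter is the projection of a Poisson process on the compact space $M\times[0,t]\times\R_+$, which has finite total intensity $\lambda\Vol_M(M)(1-e^{-t})$. Condition on the number $n$ of points of $\mathcal N$ in $M\times[0,t]\times\R_+$. Then the points are i.i.d.\ with law proportional to $\Vol_M\otimes dt\otimes e^{-\ell}d\ell$, and $\eta_t=\sum_{i\in S}\delta_{x_i}$ for a random index set $S=S(\mathbf{t},\boldsymbol\ell,\mathbf{x})\subseteq[n]$, which is a measurable function of the data. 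Hence
$$\nu_t^M(E)\le \sum_n \P(n)\sum_{S\subseteq[n],\,|S|=m}\P\Big(\sum_{i\in S}\delta_{x_i}\in E\Big).$$
Each summand vanishes when $E$ is null, because $(x_i)_{i\in S}$ has law $\Vol_M^{\otimes m}/\Vol_M(M)^m$ independently of the times and lifetimes. We also use that $\eta_t$ is supported on $2r$-separated configurations, so it lies in the region where $\mu_M^\lambda$ has positive density. Measurability of $S$ follows because, on $M$, influence clusters are a.s.\ finite (the argument of Lemma~\ref{lem:finite-influence-cluster} applies verbatim, and is in fact trivial here since only finitely many points exist). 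Summing over $m$ gives $\nu_t^M\ll\mu_M^\lambda$.

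The only step needing care is the bookkeeping that the accepted subset is selected by a rule not depending on the exact positions in a way that could concentrate mass. The union bound over all subsets $S$ sidesteps this entirely, since it requires no independence between $S$ and the positions, only that each fixed $m$-subset of the Poisson points is Lebesgue-distributed. An alternative route via $\nu_t^M=\nu_0^Me^{tL_M}$ and the observation that $\nu_0^M=\delta_\emptyset\ll\mu_M^\lambda$ (the atom $\{\emptyset\}$ has mass $1/Z>0$), with the generator preserving absolute continuity, would also work, but the graphical argument is more direct.
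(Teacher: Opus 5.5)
Your proof is correct, and it takes a genuinely different route from the paper's.

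The paper argues through the semigroup. First, $\nu_0^M=\delta_\emptyset\ll\mu_M^\lambda$, because the atom $\{\emptyset\}$ has $\mu_M^\lambda$-mass $1/Z_M(\lambda)>0$. Next, if $\mu_M^\lambda(A)=0$, invariance of $\mu_M^\lambda$ under $P_t^M=e^{tL_M}$ gives $\int P_t^M\one_A\,d\mu_M^\lambda=0$. So the nonnegative function $P_t^M\one_A$ vanishes $\mu_M^\lambda$-a.e., hence $\nu_0^M$-a.e., and $\nu_t^M(A)=\int P_t^M\one_A\,d\nu_0^M=0$. In the alternative you sketch at the end, the operative facts are therefore stationarity of $\mu_M^\lambda$ and positivity of $P_t^M$, not the generator "preserving absolute continuity."

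Your argument instead works pathwise in the graphical construction, stratifying $\Omega_M$ by cardinality. The union bound over index sets is the right device: it makes the dependence of the accepted set on the positions irrelevant. Your route needs neither invariance of $\mu_M^\lambda$ nor the identification of the graphical process with $\nu_0^M e^{tL_M}$. It can also be made quantitative. Condition on $\mathcal N$ off the set $\{(x,s,\ell):s\le t\le s+\ell\}$ of potentially alive points; given their number $N$, their positions are still i.i.d.\ uniform on $M$. For $E\subseteq\Omega_M^{(m)}$ with preimage $\widetilde E\subseteq M^m$, the union bound then gives $\nu_t^M(E)\le \EE\binom{N}{m}\,\Vol_M^{\otimes m}(\widetilde E)/\Vol_M(M)^m = Z_M(\lambda)(1-e^{-t})^m\,\mu_M^\lambda(E)$. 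That is, $d\nu_t^M/d\mu_M^\lambda\le Z_M(\lambda)$, the same sup-norm bound the paper obtains by contraction in the proof of Lemma~\ref{lem:FTC-entropy}. The paper's route is shorter and applies to any Markov semigroup with an invariant measure.

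One small slip: $M\times[0,t]\times\R_+$ is not compact, and $\mathcal N$ has total intensity $\lambda t\Vol_M(M)$ there. The quantity $\lambda\Vol_M(M)(1-e^{-t})$ is the intensity of the points alive at time $t$. Only finiteness is used, so nothing breaks.
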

\begin{proof}
	We note that $\nu_0^M \ll \mu_M^\lambda$.  Since the dynamics $P_t^M = e^{t L_M}$ giving Glauber dynamics on $M$ up to time $t$ preserves $\mu_M^\lambda$, we see that for any set $A$ with $\mu_M^\lambda(A) = 0$ we have \begin{equation*}0 = \mu_M^\lambda(A) = \int_{\Omega_M} P_t^M \one_A(\eta) \,d\mu_M^\lambda(\eta) =  \int_{\Omega_M} P_t^M \one_A(\eta) \,d\nu_M^0(\eta) = \nu_M^t(A) \end{equation*}
    where the penultimate equality follows from $\nu_M^0 \ll \mu_M^\lambda.$
\end{proof}

We now define the entropy and information with respect to the measure $\mu_M^\lambda$.

\begin{definition}[Entropy and information]\label{def:entropy-and-information}
    Suppose that $\nu \in \Prob(\Omega_M)$ is absolutely continuous with respect to $\mu_M^{\lambda}$ with $\frac{d\nu}{d\mu_M^{\lambda}} = \phi$.  Define \begin{align*}
        \mathrm{Ent}_{\mu_M^{\lambda}}(\nu) &= \int_{\Omega_M} \phi(\eta) \log \phi(\eta) \, d\mu_M^{\lambda}(\eta) \\
        I_M(\nu) &=  \int_M \int_{\Omega_M} \left(\phi(\eta + \delta_x) - \phi(\eta) \right)\left[\log \phi(\eta + \delta_x) - \log \phi(\eta) \right] \,dQ_M(x,\eta)
    \end{align*}
    where we use the convention that $(x - y)(\log x - \log y) = 0$ if $x = y = 0$ and $(x - y)(\log x - \log y) = +\infty$ if only one is $0$.
\end{definition}

Our main use of $I_M$ is to bound the difference between $B_\nu^M(F)$ and $D_\nu^M(F)$.  We will later control $I_M$ when averaged over time. \begin{lemma}\label{lem:finite-volume-difference}
    Let $\nu \in \Prob(\Omega_M)$ be absolutely continuous with respect to $\mu_M^\lambda$.  For every bounded and measurable $F: M \times \Omega_M \to \R$ we have \begin{align*}
        |B_\nu^M(F) - D_\nu^M(F)|^2 &\leq \left(B_\nu^M(F^2) + D_\nu^M(F^2) \right) \iint_{M \times \Omega_M} \frac{(\phi(\eta) - \phi(\eta + \delta_x))^2}{\phi(\eta) + \phi(\eta + \delta_x)} \,dQ_M(x,\eta) \\
        &\leq  \left(B_\nu^M(F^2) + D_\nu^M(F^2) \right) I_M(\nu)\,.
    \end{align*}
\end{lemma}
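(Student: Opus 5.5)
The plan is to use the integral representations from Lemma \ref{lem:birth-death-identities} and then apply Cauchy--Schwarz with a suitable weight. By that lemma,
\begin{equation*}
B_\nu^M(F) - D_\nu^M(F) = \iint_{M\times\Omega_M} F(x,\eta)\bigl(\phi(\eta) - \phi(\eta+\delta_x)\bigr)\,dQ_M(x,\eta).
\end{equation*}

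For the first inequality, write the integrand as
\begin{equation*}
F\sqrt{\phi(\eta)+\phi(\eta+\delta_x)}\cdot \frac{\phi(\eta)-\phi(\eta+\delta_x)}{\sqrt{\phi(\eta)+\phi(\eta+\delta_x)}},
\end{equation*}
with the convention that the integrand is $0$ wherever both values of $\phi$ vanish. Cauchy--Schwarz with respect to $Q_M$ then bounds the square of the difference by
\begin{equation*}
\iint F^2\bigl(\phi(\eta)+\phi(\eta+\delta_x)\bigr)\,dQ_M \cdot \iint \frac{(\phi(\eta)-\phi(\eta+\delta_x))^2}{\phi(\eta)+\phi(\eta+\delta_x)}\,dQ_M .
\end{equation*}
Applying Lemma \ref{lem:birth-death-identities} again, now to the nonnegative function $F^2$, the first factor equals $B_\nu^M(F^2)+D_\nu^M(F^2)$. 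This gives the first inequality.

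For the second inequality, I will use the elementary pointwise bound
\begin{equation*}
\frac{(a-b)^2}{a+b} \le (a-b)(\log a - \log b) \qquad \text{for } a,b\ge 0 \text{ with } a+b>0.
\end{equation*}
The edge cases are covered by the conventions in Definition \ref{def:entropy-and-information}. If exactly one of $a,b$ is $0$, the left side is finite and the right side is $+\infty$. If $a=b$, both sides are $0$.

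For $a\neq b$, both sides are homogeneous of degree $1$, so it suffices to set $b=1$ and $a=s$. The bound then reduces to
\begin{equation*}
\frac{s-1}{s+1}\le \tfrac12\log s \quad (s>1),
\end{equation*}
with the reversed inequality for $s<1$ (the two are related by $s\mapsto 1/s$). This is the logarithmic--arithmetic mean inequality
\begin{equation*}
\frac{a-b}{\log a-\log b}\le \frac{a+b}{2},
\end{equation*}
which actually gives the stronger bound with an extra factor of $1/2$. To prove it, set $g(s)=\log s - 2(s-1)/(s+1)$. Then $g(1)=0$ and
\begin{equation*}
g'(s)=\frac{(s-1)^2}{s(s+1)^2}\ge 0,
\end{equation*}
so $g\ge 0$ for $s>1$ and $g\le 0$ for $s<1$. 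Integrating the pointwise bound against $Q_M$ gives the second inequality.

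The only mildly delicate step is integrability and the handling of null sets. I expect no real obstacle here. $F$ is bounded and $B_\nu^M(1)\le \lambda\Vol(M)<\infty$. Also $D_\nu^M(1)$ is the expected number of points, which is finite because $M$ is compact and the configurations are $2r$-separated, so they have a bounded number of points. Hence all the integrals are finite, and Cauchy--Schwarz is legitimate even if $I_M(\nu)=+\infty$, in which case the inequality is trivial.
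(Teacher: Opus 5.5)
Your proposal is correct and is essentially the paper's proof: you use the representation of $B_\nu^M(F)-D_\nu^M(F)$ from Lemma~\ref{lem:birth-death-identities}, Cauchy--Schwarz against $Q_M$ with weight $\phi(\eta)+\phi(\eta+\delta_x)$, and the pointwise bound $\frac{(a-b)^2}{a+b}\le (a-b)(\log a-\log b)$, which the paper only asserts and you actually verify (note that at $b=1$ what is needed is $\frac{s-1}{s+1}\le \log s$; the inequality you prove, with the extra $\tfrac12$, is stronger than necessary rather than what the bound ``reduces to''). Your integrability and edge-case remarks are a harmless addition that the paper omits.
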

\begin{proof}
    By Lemma \ref{lem:birth-death-identities} along with Cauchy-Schwarz we bound \begin{align*}
        |B_\nu^M(F) - D_\nu^M(F)|^2 &= \left(\iint_{M \times \Omega_M} F(x,\eta)[\phi(\eta) - \phi(\eta + \delta_x)] \,dQ_M(x,\eta)\right)^2 \\
        &\leq \left(\iint_{M \times \Omega_M} F(x,\eta)^2[\phi(\eta) + \phi(\eta + \delta_x)] \,dQ_M(x,\eta)\right) \\
        &\qquad \times\left(\iint_{M \times \Omega_M} \frac{(\phi(\eta) - \phi(\eta + \delta_x))^2}{\phi(\eta) + \phi(\eta + \delta_x)} \,dQ_M(x,\eta)\right)\,. 
    \end{align*}
    The first factor is precisely $B_\nu^M(F^2) + D_\nu^M(F^2)$ by Lemma \ref{lem:birth-death-identities}.  For the second factor, note that $\frac{(x - y)^2}{x + y} \leq (x - y)(\log x - \log y)$ and use Definition \ref{def:entropy-and-information}.
\end{proof}

The fundamental theorem of calculus will show us that $I_M$ can be bound on average by the volume of $M$.

\begin{lemma}\label{lem:FTC-entropy}
    Suppose that $\nu_0^M \in \Prob(\Omega_M)$ is the point mass on the empty configuration and set $\nu_t^M$ to be the dynamics run up to time $t$ on $M$.  Then $$\int_0^T I_M(\nu_t) \,dt \leq \mathrm{Ent}_{\mu_M^\lambda}(\nu_0) = \log Z_M(\lambda) \leq \lambda \Vol_M(M)\,.$$
\end{lemma}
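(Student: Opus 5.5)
The plan is the standard entropy-dissipation argument: differentiate the relative entropy of $\nu_t^M$ along the semigroup, identify the derivative as $-I_M(\nu_t^M)$, integrate in time, and then compute the starting entropy explicitly.

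First, write $\phi_t = d\nu_t^M/d\mu_M^\lambda$, which exists by Fact~\ref{fact:nu_t-density}. Since $\mu_M^\lambda$ is reversible for $L_M$ (this is the content of Fact~\ref{fact:IBP-generator}, which is symmetric in $f,g$), the density evolves by the forward equation $\partial_t \phi_t = L_M \phi_t$. Because $M$ is compact, configurations in $\Omega_M$ have a uniformly bounded number of points, so $L_M$ is a bounded operator on $L^\infty(\Omega_M)$. Hence $\phi_t = e^{tL_M}\phi_0$ in the norm sense and $t\mapsto \phi_t$ is differentiable.

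Next, compute
\[
\frac{d}{dt}\mathrm{Ent}_{\mu_M^\lambda}(\nu_t^M)=\int (L_M\phi_t)(\log\phi_t+1)\,d\mu_M^\lambda=\int (L_M\phi_t)\log\phi_t\,d\mu_M^\lambda,
\]
where the constant term vanishes because $\int L_M\phi\,d\mu_M^\lambda=0$ by invariance. Fact~\ref{fact:IBP-generator} with $f=\phi_t$ and $g=\log\phi_t$ then identifies this as exactly $-I_M(\nu_t^M)$.

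The main technical obstacle is justifying these manipulations, since $\log\phi_t$ may be unbounded or infinite where $\phi_t$ vanishes. To handle this, I would replace $\log\phi$ by the truncation $\log(\phi+\delta)$, or equivalently run the computation for $\phi_t^\delta=(1-\delta)\phi_t+\delta$. That is the density of $(1-\delta)\nu_t+\delta\mu$, which evolves by the same semigroup and is bounded below. This gives $\mathrm{Ent}(\nu^\delta_T)+\int_0^T I_M(\nu^\delta_t)\,dt=\mathrm{Ent}(\nu^\delta_0)$. I would then let $\delta\to0$, using convexity and lower semicontinuity of $(x-y)(\log x-\log y)$ and Fatou's lemma for the $I_M$ term, and monotone or dominated convergence for the entropies. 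Since entropy is nonnegative, this yields $\int_0^T I_M(\nu_t^M)\,dt\le \mathrm{Ent}(\nu_0^M)$.

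Finally, evaluate the starting entropy. The empty configuration has $\mu_M^\lambda$-mass $1/Z_M(\lambda)$ by \eqref{eq:finite-vol-gibbs-def}, so $\phi_0 = Z_M(\lambda)\one_{\{\emptyset\}}$ and $\mathrm{Ent}(\nu_0)=\log Z_M(\lambda)$. Dropping the hard-core constraint in \eqref{eq:partition-function-def} gives $Z_M(\lambda)\le\sum_m (\lambda\Vol_M(M))^m/m!=e^{\lambda\Vol_M(M)}$, hence $\log Z_M(\lambda)\le\lambda\Vol_M(M)$.
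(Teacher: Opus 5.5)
Your proposal is correct and follows the paper's proof essentially step for step. It uses the same regularization $\phi_t^\delta=(1-\delta)\phi_t+\delta$, identifies $\frac{d}{dt}\mathrm{Ent}_{\mu_M^\lambda}$ with $-I_M$ via Fact~\ref{fact:IBP-generator}, integrates in time, and passes to $\delta\to0$ by Fatou for the information term and dominated convergence for the initial entropy; you also spell out two points the paper leaves implicit (boundedness of $L_M$ on compact $M$, and the evaluation $\mathrm{Ent}_{\mu_M^\lambda}(\nu_0^M)=\log Z_M(\lambda)\le\lambda\Vol_M(M)$), both of which you handle correctly.
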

\begin{proof}
	Let $\phi_t = \frac{d \nu_t^M}{d\mu_M^\lambda}$ which is guaranteed to exist by Fact \ref{fact:nu_t-density}.  For $\delta > 0$ define $\phi_t^\delta = (1 - \delta)\phi_t + \delta$ which is precisely the density of the measure obtained when starting at $(1 - \delta)\nu_0^M + \delta \mu_M^\lambda$ and evolving for time $t$.  Note that for each fixed $\delta$ we have $\phi_t^\delta \geq \delta$ and that $\phi_t^\delta$ is uniformly upper bounded in $t$ since $\phi_t = e^{tL_M} \phi_0$ and $\|e^{tL_M} \phi_0\|_\infty \leq \|\phi_0\|_\infty$.  We may differentiate under the integral to see that
    \begin{align*}
		\frac{d}{dt} \mathrm{Ent}_{\mu_M^\lambda}(\phi_t^\delta \mu_M^\lambda) &= \int_{\Omega_M} \frac{d}{dt}\left(\phi_t^\delta(\eta) \log \phi_t^\delta(\eta) \right) \,d\mu_M^\lambda(\eta) = \int_{\Omega_M} \left(\frac{d}{dt}\phi_t^\delta(\eta)\right) \left(\log \phi_t^\delta(\eta) + 1 \right) \,d\mu_M^\lambda(\eta) \\
		 &= \int_{\Omega_M} \left(\frac{d}{dt}\phi_t^\delta(\eta)\right) \log \phi_t^\delta(\eta) \,d\mu_M^\lambda(\eta) \\
		 &= \int_{\Omega_M} \left(L\phi_t^\delta(\eta)\right) \log \phi_t^\delta(\eta) \,d\mu_M^\lambda(\eta) \\
		 &= - I_M(\phi_t^\delta)
	\end{align*}
	where the last equation is by Fact \ref{fact:IBP-generator}.  By the fundamental theorem of calculus we have \begin{equation*}
		\int_0^T I_M(\phi_t^\delta \mu_M^\lambda)\,dt \leq \mathrm{Ent}_{\mu_M^\lambda}(\phi_0^\delta)\,.
	\end{equation*}
	Taking $\delta \to 0$ we see that the right-hand side converges to 	$\mathrm{Ent}_{\mu_M^\lambda}(\nu_0^M)$ by the dominated convergence theorem.  The integrand on the left-hand side converges pointwise and so Fatou's lemma completes the proof.
\end{proof}

Our use of Lemma \ref{lem:FTC-entropy} will be the following bound on $I_M(\nu_t)$ at a certain time in terms of $\Vol_M(M)$.  \begin{corollary}\label{cor:bound-on-information}
    For each $T > 0$ and compact $M = L \backslash \bX$ there is some time $\tau \in [T/2,T]$ so that $$I_M(\nu_\tau^M) \leq \frac{2\lambda \Vol_M(M)}{T}\,.$$
\end{corollary}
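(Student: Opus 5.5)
The plan is an averaging argument applied to Lemma~\ref{lem:FTC-entropy}. Since $I_M(\nu_t^M)\ge 0$ for every $t$, restricting the time integral to $[T/2,T]$ only decreases it, so
\begin{equation*}
\int_{T/2}^{T} I_M(\nu_t^M)\,dt \;\le\; \int_0^T I_M(\nu_t^M)\,dt \;\le\; \lambda \Vol_M(M).
\end{equation*}
Nonnegativity of the integrand follows from Definition~\ref{def:entropy-and-information}: the expression $(a-b)(\log a-\log b)$ is nonnegative for all $a,b\ge 0$, with the stated conventions giving $0$ or $+\infty$ at the boundary cases, and $Q_M$ is a positive measure.

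Next, suppose for contradiction that $I_M(\nu_t^M) > 2\lambda\Vol_M(M)/T$ for every $t\in[T/2,T]$. Integrating this strict lower bound over an interval of length $T/2$ gives
\begin{equation*}
\int_{T/2}^T I_M(\nu_t^M)\,dt \;>\; \frac{T}{2}\cdot\frac{2\lambda\Vol_M(M)}{T} \;=\; \lambda\Vol_M(M),
\end{equation*}
which contradicts the previous display. Hence some $\tau\in[T/2,T]$ satisfies $I_M(\nu_\tau^M)\le 2\lambda\Vol_M(M)/T$, which is the claim.

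The one technical point I expect to need care is measurability of $t\mapsto I_M(\nu_t^M)$, so that the integral in Lemma~\ref{lem:FTC-entropy} makes sense. I would handle it as follows. The densities $\phi_t = e^{tL_M}\phi_0$ depend continuously on $t$, and $I_M$ is obtained from them by a Fatou-type limit of continuous quantities, as in the proof of Lemma~\ref{lem:FTC-entropy}. This makes $t\mapsto I_M(\nu_t^M)$ a measurable function with values in $[0,\infty]$.

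If one would rather avoid this issue altogether, the same conclusion follows with the set $S=\{t\in[T/2,T]: I_M(\nu_t^M) > 2\lambda\Vol_M(M)/T\}$. The inequality $\int_0^T I_M\,dt\le\lambda\Vol_M(M)$ is understood in the sense already used in Lemma~\ref{lem:FTC-entropy}, where the integral was formed. That inequality forces $S$ to have measure strictly less than $T/2$, so its complement in $[T/2,T]$ is nonempty, and any $\tau$ in that complement works.
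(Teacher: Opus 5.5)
Your averaging argument is correct and is the intended one. The paper states this corollary without a separate proof, as an immediate consequence of Lemma~\ref{lem:FTC-entropy}: one uses $\int_{T/2}^{T} I_M(\nu_t^M)\,dt \le \lambda\Vol_M(M)$ on an interval of length $T/2$. Your remarks on nonnegativity and measurability are fine; they only make explicit what the paper already assumes when it writes the time integral in Lemma~\ref{lem:FTC-entropy}.
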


\subsection{Lifting to infinite-volume}

We now want to show that if we lift the measures $\nu_t^M$ to a measure on $\Omega(\bX)$ then we are close to a Gibbs measure.  While one could work with the lifted versions guaranteed by Fact \ref{fact:lift}, we instead directly lift the birth and death functionals $B$ and $D$. We pause for a moment to properly define our lifting operations.

For a compact $M = L \backslash \bX$ let $\pi : \bX \to M$ be the standard quotient.  Given $g \in L \backslash G$ define the projection $\pi_g$ via $\pi_g(x) = \pi(gx)$.  For a pointset $\eta \in \Omega$, we may define the periodic lift via \begin{equation*}
    \theta_g \eta = \sum_{x \in \eta}\sum_{y \in \pi_g^{-1}(x)}\delta_y\,.
\end{equation*}
We note that in general we may have $\theta_g \eta \notin \cP_r(\bX)$ for points where the injectivity radius is small.  Since we will have that most points of $M$ will have large injectivity radius by Assumption \ref{assumption-on-X}, we will be able to discard such points.  In this direction, for a good test function $F: \bX \times \Omega \to \R$ let $K \subset \bX$ be the compact support of the first coordinate and set $R \geq 2r$ so that $F(x,\eta)$ depends only on $\eta|_{B_{R}(x)}$.  For $g \in L \backslash G$ define $$K_g^\circ = \{x \in  K : \pi_g \text{ is injective on }B_R(x)\}$$
and for $x\in M$ define
$$Z_g(x) = \pi_g^{-1}(x) \cap K_g^\circ\,.$$
The motivation for this definition is that when $x \in K_g^\circ$, the restriction of $\theta_g \eta$ to $B_R(x)$ in fact lies in $\cP_r(\bX)$.  In particular, this implies that \begin{equation}\label{eq:Z-bound}
	|Z_{g}(x)| \leq O_{K,R}(1)
\end{equation}
for a constant depending only on $K,R$.  For $(x,\eta)\in M\times \Omega_M$, define \begin{equation*}
	F_g^M(x,\eta) := \sum_{z \in Z_g(x)} F(z,\theta_g \eta)\,.
\end{equation*}

Define $\P_{L \backslash G}$ to be the Haar measure on $L \backslash G$ normalized to be a probability measure.  We then define the lifted versions of $B$ and $D$ via \begin{align*}
	\widetilde{D}_\nu^M(F) &= \int_{L \backslash G} D_\nu^M(F_g^M) \, d\P_{L\backslash G}(g) \\
	\widetilde{B}_\nu^M(F) &= \int_{L \backslash G} B_\nu^M(F_g^M) \, d\P_{L\backslash G}(g)
\end{align*}
and define the finite-volume defect $$\Delta_\nu^M(F) = \widetilde{D}_\nu^M(F) - \widetilde{B}_\nu^M(F)\,.$$

We will later see that the finite propagation bound in Corollary \ref{cor:finite-propagation-backwards} will allow us to compare $\Delta_{\nu_t^M}^M(F)$ to $\Delta_{\nu_t}(F)$.  We first bound $\Delta_\nu^M(F)$.

\begin{lemma}\label{lem:finite-volume-dissipation}
    Let $F: \bX \times \Omega \to \R$ be a good test function.  Then there is a constant $C(F,r) > 0$ so that \begin{equation*}
        |\Delta_{\nu_t^M}^M(F)|^2 \leq  \lambda C(F,r) \frac{I_M(\nu_t^M)}{\Vol_M(M)} \,.
    \end{equation*}
\end{lemma}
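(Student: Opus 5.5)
By linearity of $B^M_\nu$ and $D^M_\nu$ in the test function, the lifted defect is an average of finite-volume defects:
\[
\Delta^M_{\nu}(F)=\int_{L\backslash G}\bigl(D^M_\nu(F^M_g)-B^M_\nu(F^M_g)\bigr)\,d\P_{L\backslash G}(g).
\]
I plan to apply Cauchy--Schwarz (Jensen) in $g$ and then Lemma~\ref{lem:finite-volume-difference} for each fixed $g$ with $\nu=\nu^M_t$, which is legitimate by Fact~\ref{fact:nu_t-density}. This gives
\[
|\Delta^M_{\nu^M_t}(F)|^2\le I_M(\nu^M_t)\int_{L\backslash G}\bigl(B^M_{\nu^M_t}((F^M_g)^2)+D^M_{\nu^M_t}((F^M_g)^2)\bigr)\,d\P_{L\backslash G}(g).
\]
It then remains to show that the averaged quantity is at most $\lambda C(F,r)/\Vol_M(M)$.

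For the birth term, I will use $e^{-H}\le 1$ together with $(F^M_g)^2\le |Z_g(x)|\sum_{z\in Z_g(x)}F(z,\theta_g\eta)^2$ and \eqref{eq:Z-bound}. This gives
\[
B^M_\nu\bigl((F^M_g)^2\bigr)\le \lambda\|F\|_\infty^2\, O_{K,R}(1)\int_M |Z_g(x)|\,d\Vol_M(x).
\]
The key computation is the average over $g$ of $\int_M|Z_g(x)|\,d\Vol_M(x)$. Since $Z_g(x)\subset \pi_g^{-1}(x)\cap K$, unfolding gives
\[
\int_M |Z_g(x)|\,d\Vol_M(x)\le \Vol_\bX(K)\quad\text{for each } g,
\]
because $\pi_g$ is a local isometry that is measure preserving on fibers. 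I would also double-check the normalization: averaging over $g\in L\backslash G$ with the probability Haar measure should produce an extra factor $1/\Vol_M(M)$. Indeed, for fixed $x\in M$ the number of preimages of $x$ under $\pi_g$ lying in $K$, averaged over $g$, equals $\Vol_\bX(K)/\Vol_M(M)$, since a random $g$ gives a uniformly placed lattice orbit. So I should instead write $B^M_\nu$ as an integral over $x$ against $\nu$, exchange it with the $g$-integral, and use, for each fixed $\eta$ and $x$,
\[
\int |Z_g(x)|\,d\P(g)\le \frac{\Vol_\bX(K)}{\Vol_M(M)}.
\]
Integrating over $x\in M$ then seems to cancel the $\Vol_M(M)$. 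On reflection this cancellation is wrong: the quantity that actually appears is an integral in $x$ of a function averaged in $g$, and the $x$-integral over $M$ restores the volume. The resolution is the convention for $\Vol_M$ versus a probability normalization. I expect the intended bound to use the normalized measure, and I will track this carefully. The honest target is an estimate of the form $\Vol_\bX(K)\,\|F\|_\infty^2\,\lambda\,O(1)$ times the appropriate normalization, matched to the normalization of $I_M$; presumably $I_M$ is extensive, of order $\Vol_M(M)$, while $\Delta^M$ is intensive.

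For the death term, I will use the Ruelle-type bound. The death functional counts points $x\in\eta$ with weight $(F^M_g)^2(x,\eta-\delta_x)$, so averaging over $g$ and using that $\nu^M_t$ is dominated by a Poisson process of intensity $\lambda$ (the finite-volume analogue of Lemma~\ref{lem:mu_t-Poisson-factor}) bounds the expected count density by $\lambda$. The same unfolding computation then gives the same order bound, with a constant depending on $\|F\|_\infty$, $K$ and $R$. Alternatively, I could use the $2r$-separation to bound the number of points in $K$ by $O(\Vol(K_{r})/\Vol(B_r))$, but that route loses the factor $\lambda$, so I prefer the Poisson domination.

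The main obstacle is the bookkeeping in the unfolding identity: averaging over the random shift $g\in L\backslash G$ turns sums over $M$ into integrals over $K\subset\bX$ with the factor $1/\Vol_M(M)$. Once that identity, namely
\[
\int \sum_{x\in M\text{-points}}\#(\pi_g^{-1}(x)\cap K)\,d\P(g)=\frac{\Vol_\bX(K)}{\Vol_M(M)}\times(\text{total mass}),
\]
is established via Fubini and the invariance of Haar measure, both terms follow, with $C(F,r)$ depending on $\|F\|_\infty$, $\Vol_\bX(K)$ and the multiplicity bound in \eqref{eq:Z-bound}.
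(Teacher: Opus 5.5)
There is a genuine gap, and it is in your first step, not in the bookkeeping you flag at the end.

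\textbf{What goes wrong.} You apply Lemma~\ref{lem:finite-volume-difference} in its global form for each fixed $g$ and pull the full information $I_M(\nu^M_t)$ out of the $g$-average. This commits you to proving $\EE_g\bigl[B^M_\nu((F^M_g)^2)+D^M_\nu((F^M_g)^2)\bigr]\le \lambda C/\Vol_M(M)$, and that is false. Your own unfolding computation shows this average is bounded by a constant independent of $M$, since $\int_M \EE_g|Z_g(x)|\,d\Vol_M(x)\le \Vol_\bX(K)$. It also does not shrink as $M$ grows: the death part is essentially the expected number of points of $\nu^M_t$, weighted by $F^2$, in a region of volume about $\Vol_\bX(K)$.

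So your route yields only $|\Delta^M_{\nu^M_t}(F)|^2\le \lambda C\, I_M(\nu^M_t)$, which is off by exactly a factor $\Vol_M(M)$. This is not a normalization convention you can adjust. $I_M$ is defined through $dQ_M$, which uses the unnormalized $\Vol_M$, and the factor $1/\Vol_M(M)$ is the whole point of the lemma. In the proof of Theorem~\ref{th:sparser-measure} it is combined with Corollary~\ref{cor:bound-on-information}, $I_M\le 2\lambda\Vol_M(M)/T$, to give $|\Delta|^2\lesssim \lambda^2/T$. Your bound would give $\lambda^2\Vol_M(M)/T$, which diverges along a Benjamini--Schramm sequence.

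\textbf{The fix.} Localize the information term before averaging in $g$. Since $F^M_g$ vanishes off $E_g=\{(x,\eta): Z_g(x)\neq\emptyset\}$, the Cauchy--Schwarz step in the proof of Lemma~\ref{lem:finite-volume-difference} gives, for each $g$, the same bound with the second factor replaced by the integral of $\frac{(\phi(\eta)-\phi(\eta+\delta_x))^2}{\phi(\eta)+\phi(\eta+\delta_x)}$ over $E_g$ only.

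Then apply Cauchy--Schwarz in $g$ to split $|\Delta^M_\nu(F)|^2$ into two factors. The first is $\EE_g\bigl[B^M_\nu((F^M_g)^2)+D^M_\nu((F^M_g)^2)\bigr]$, which your birth estimate (unfolding) and death estimate (Ruelle bound) correctly control by $\lambda C(F,r)$. The second is the $g$-average of the localized information, which by Fubini equals
\[
\iint_{M\times\Omega_M}\P_{L\backslash G}\bigl(Z_g(x)\neq\emptyset\bigr)\,\frac{(\phi(\eta)-\phi(\eta+\delta_x))^2}{\phi(\eta)+\phi(\eta+\delta_x)}\,dQ_M(x,\eta)\le \frac{C(F)}{\Vol_M(M)}\,I_M(\nu),
\]
using your estimate $\P_{L\backslash G}(Z_g(x)\neq\emptyset)\le \EE_g|Z_g(x)|\le C(F)/\Vol_M(M)$. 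So you had the right ingredient, but it must be spent on the information factor, not on the birth/death factor.
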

\begin{proof}
	Write $\nu = \nu_t^M$ and $\phi = \frac{d\nu}{d \mu_M^\lambda}$ for simplicity.  Define $E_g = \{(x,\eta) : Z_g(x) \neq \emptyset\} \subset M\times \Omega_M$.  By definition of $F_g^M(x,\eta)$ we have that $F_{g}^M$ has support contained in $E_g$.  By Lemma \ref{lem:finite-volume-difference} we have \begin{align*}
		\left|B_{\nu}^M(F_g^M) - D_\nu^M(F_g^M) \right|^2 \leq  \left(B_\nu^M((F_g^M)^2) + D_\nu^M((F_g^M)^2) \right) \iint_{E_g} \frac{(\phi(\eta) - \phi(\eta + \delta_x))^2}{\phi(\eta) + \phi(\eta + \delta_x)} \,dQ_M(x,\eta)\,.
	\end{align*}

	By Cauchy-Schwarz we bound \begin{align}\label{eq:Delta-M-CS}
		|\Delta_{\nu}^M(F)|^2 \leq \EE_{g}\left[B_\nu^M((F_g^M)^2) + D_\nu^M(((F_g^M)^2)\right] \cdot \EE_g\left[ \iint_{E_g} \frac{(\phi(\eta) - \phi(\eta + \delta_x))^2}{\phi(\eta) + \phi(\eta + \delta_x)} \,dQ_M(x,\eta) \right]
	\end{align}
where we write $\EE_g$ for the expectation where $g \in L \backslash G$ is sampled according to $\PP_{L \backslash G}$.
	By Fubini's theorem we have \begin{align}
	\EE_{g}\left[ \iint_{E_g} \frac{(\phi(\eta) - \phi(\eta + \delta_x))^2}{\phi(\eta) + \phi(\eta + \delta_x)} \,dQ_M(x,\eta) \right] &= \iint_{M \times \Omega_M} \P_{L \backslash G}(Z_g(x) \neq \emptyset)  \frac{(\phi(\eta) - \phi(\eta + \delta_x))^2}{\phi(\eta) + \phi(\eta + \delta_x)} \,dQ_M(x,\eta) \nonumber \\
	&\leq \frac{C(F)}{\Vol_M(M)}I_M(\nu) \label{eq:ratio-I-bound}
	\end{align}
where the last inequality is by bounding $\P_{L \backslash G}(Z_g(x) \neq \emptyset) \leq \frac{C(F)}{\Vol_M(M)}$.    To bound the first term in \eqref{eq:Delta-M-CS}, note that \begin{equation*}
	F_g^M(x,\eta)^2 = \left(\sum_{z \in Z_g(x)} F(z,\theta_g \eta)\right)^2 \leq |Z_g(x)|^2 \|F\|_\infty^2 \leq C(F,r) |Z_g(x)| 
\end{equation*}
where we use \eqref{eq:Z-bound} for the last inequality.  This provides the bound \begin{align}\label{eq:death-M-bound}
	\EE_g  D_\nu^M((F_g^M)^2) \leq C(F,r) \int_{\Omega_M}\sum_{x \in \eta} \EE_g |Z_g(x)| \,d\nu(\eta) \leq   \frac{ C(F,r)'}{\Vol_M(M)}\int_{\Omega_M} |\eta| \cdot d\nu(\eta) \leq \lambda C(F,r)'
\end{align}
where the last bound is via the Ruelle bound in Lemma \ref{lem:mu_t-Poisson-factor}.  Similarly, we bound \begin{align}\label{eq:birth-M-bound}
	\EE_g B_\nu^M((F_g^M)^2) \leq \lambda C(F,r) \int_M \EE_g |Z_g(x)| \,d\Vol_M(x) \leq \lambda C(F,r)''
\end{align}
by the same argument.  Combining \eqref{eq:Delta-M-CS} with \eqref{eq:ratio-I-bound}, \eqref{eq:death-M-bound} and \eqref{eq:birth-M-bound} completes the proof.	
\end{proof}

To simplify notation, let $\Delta(M,\nu;F) = \Delta^M_\nu(F)$. We now compare $\Delta(M,\nu_t^M;F)$ to $\Delta_{\nu_t}(F)$; the engine for this is our bound on the propagation of information in Corollary \ref{cor:finite-propagation-backwards}.  

\begin{lemma}\label{lem:finite-propagation}
    Let $M_n = L_n \backslash \bX$ be a sequence of compact quotients so that $M_n$ Benjamini-Schramm converges to $\bX$.  For every $T < \infty$ and good test function $F$ we have $$\lim_{n\to\infty} \sup_{0 \leq t \leq T} \left|\Delta\left(M_n, \nu_t^{M_n};F\right) - \Delta_{\nu_t}(F)\right| = 0 \,.$$    
\end{lemma}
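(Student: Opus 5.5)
Both defects will be realized on a single probability space, and the comparison reduces to a local coupling argument driven by Corollary \ref{cor:finite-propagation-backwards}. First I would rewrite $\Delta(M_n,\nu_t^{M_n};F)$ by unfolding the definitions of $F_g^M$ and of the lift $\theta_g$. Averaging over $g \in L_n\backslash G$ with $\P_{L_n\backslash G}$ turns $\widetilde D^{M_n}_{\nu}(F)$ into $\EE_g \int \sum_{z\in \theta_g\eta\cap K_g^\circ} F(z,\theta_g\eta-\delta_z)\,d\nu_t^{M_n}(\eta)$, and similarly for $\widetilde B$. In other words, it is the GNZ death and birth functionals of the randomly rotated periodic lift, restricted to points of $K$ where $\pi_g$ is injective on $B_R$.

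Next I would build the graphical construction of Section \ref{sec:GD-construction} on $M_n$ from the Poisson process $\mathcal N$ on $\bX\times\R_+\times\R_+$. Rather than sampling a separate process on $M_n$, I would push $\mathcal N$ forward through $\pi_g$: restrict $\mathcal N$ to a fundamental domain translated by $g$ and lift periodically. The resulting process, viewed on $\bX$, is the $g$-shifted lift of the finite-volume dynamics. The infinite-volume dynamics uses $\mathcal N$ itself. The key coupling claim is this: if all backward $\prec$-chains starting from $(K\cup B_R(K))\times[0,T]$ stay inside a region $W$ on which $\pi_g$ is injective, and on which the periodic lift agrees with $\mathcal N$, then $\eta_t$ and $\theta_g\eta_t^{M_n}$ agree on $B_R(x)$ for every $x\in K$ and every $t\le T$. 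This follows by induction over the finite influence clusters (Lemma \ref{lem:finite-influence-cluster}), using the acceptance rule \eqref{eq:acceptance-rule}, since that rule depends only on points in the influence cluster.

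On this good event, the integrands of $D$ and $B$ for the two dynamics coincide simultaneously for all $t\in[0,T]$. One subtlety is the birth term, which is an integral over $x\in K$ of $e^{-H(x\mid\eta_t)}$. This only requires agreement on $B_{2r}(x)\subset B_R(x)$, so it is covered.

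I then bound the bad event, which splits into two parts.
\begin{enumerate}
\item The first part is that some chain of length $k$ escapes $B_{\rho}(K)$. A chain of length $k$ travels distance at most $2rk$, so I choose $k_0$ with $\sum_{k\ge k_0}\EE N_k(B_R(K),T)\le\eps$ by Corollary \ref{cor:finite-propagation-backwards} (the bound decays factorially in $k$), and set $\rho=R+2rk_0$. Because $T$ is fixed, this estimate is uniform in $t\le T$.
\item The second part is that $\pi_g$ fails to be injective on $B_{\rho}(K)$. Its $\P_{L_n\backslash G}$-probability tends to $0$ by Benjamini--Schramm convergence (Assumption \ref{assumption-on-X}) with radius $\rho+\operatorname{diam}K$.
\end{enumerate}

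On the bad event, the contribution of both defects is controlled by Cauchy--Schwarz against second moments. The relevant quantities are $\|F\|_\infty$, the bound \eqref{eq:Z-bound}, and the Ruelle bound $\rho^{(k)}\le\lambda^k$ from Lemma \ref{lem:mu_t-Poisson-factor}, which holds for both dynamics since both are thinnings of the same Poisson process. This makes the bad-event contribution $O(\sqrt{\eps+o_n(1)})$.

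There is one more discrepancy: the truncation to $K_g^\circ$ removes points of $K$ that the infinite-volume $\Delta_{\nu_t}(F)$ does count. This is handled by the same injectivity-probability bound, because on the good event $K_g^\circ$ is all of $K$. Letting $n\to\infty$ and then $\eps\to0$ gives the lemma.

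I expect the main obstacle to be setting up the coupling rigorously. Identifying the finite-volume Glauber dynamics with the graphical construction driven by a projected Poisson process requires checking that the generator $L_M$ corresponds to that construction. Making the Haar averaging over $g$ compatible with Palm/Mecke computations also takes care. Once the coupling is in place, the estimates are routine.
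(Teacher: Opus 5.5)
Your proposal is correct and takes the same route as the paper. You couple the finite- and infinite-volume graphical constructions through $\pi_g$ and use Corollary~\ref{cor:finite-propagation-backwards} to confine the backward ancestor clusters of $B_R(K)\times[0,T]$ to a bounded neighborhood, uniformly in $t\le T$. Benjamini--Schramm convergence then makes $\pi_g$ injective there, which also gives $K_g^\circ=K$, and you let $n\to\infty$ before sending the confinement radius to infinity. The one difference is the bad-event bound: the paper uses a uniform constant, since the relevant configurations near $K$ are $2r$-separated and so the observables are bounded. Your Cauchy--Schwarz/Ruelle second-moment step therefore works but is not needed.
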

\begin{proof}
	Define $K^+ = \{x \in \bX: \dist(x,K) \leq R\}$.  Note that the functions \begin{equation*}
        \eta \mapsto \sum_{x \in \eta} F(x,\eta - \delta_x) \qquad \text{ and } \qquad  \eta \mapsto \int_{\bX} F(x,\eta) e^{-H(x\,|\,\eta)} \,d\Vol_{\bX}(x)
    \end{equation*}
    are uniformly bounded in $\eta$ since $F$ is a good test function and only depend on the configuration in $K^+$.  Let $C(F)$ denote a uniform bound on these two functions.  For a parameter $\rho > 0$ to be chosen later, define $K^{++} = \{x \in \bX: \dist(x,K^+) \leq \rho\}$.  As in Corollary \ref{cor:finite-propagation-backwards}, consider the backwards ancestor cluster of all points that lie in $K^+$ and whose birth time is at most $T$.  By Corollary \ref{cor:finite-propagation-backwards}, the probability this cluster leaves $K^{++}$ is at most \begin{equation}
    	\eps_\rho := \sum_{k \geq \rho / (2r)} \lambda^{k+1} \Vol_\bX(K^+)  (\Vol_{\bX}(B_{2r}(\mathcal{O})))^k \frac{T^{k+1}}{(k+1)!} 
    \end{equation}
which we note tends to zero as $\rho \to \infty$.  Further, this estimate holds uniformly on all quotients since the projected volumes of $K^+$ and $B_{2r}(\mathcal{O})$ are at most their unprojected volumes.  Define $s$ so that $K^{++} \subset B_s(\mathcal{O})$.    For a fixed $\eps > 0$, we may take $\rho$ large enough so that $\eps_\rho \leq \eps$.  For $g \in L_n \backslash G$,  if $\pi_g$ is injective on $B_s(\mathcal{O})$ then the marked point processes on $\bX$ and $M_n$ may be coupled so that they agree inside $K^{++}$ after identifying points via $\pi_g$.  If the backwards ancestor clusters that can affect $K^+ \times [0,T]$ remain inside $K^{++}$ then all acceptance decisions in $K^+$ agree for all $t \leq T$.  When both of these good events occur, the finite-volume and infinite-volume birth and death observables agree. 
Using a uniform bound of $C(F,\lambda)$ away from this event, we obtain the bound \begin{equation*}
	\sup_{0 \leq t \leq T} \left|\Delta\left(M_n, \nu_t^{M_n};F\right) - \Delta_{\nu_t}(F)\right|\leq 4 C(F,\lambda)\left( \P_{L_n \backslash G}(g : \pi_g \text{ is not injective on } B_s(\mathcal{O})) + 2\eps_\rho \right).
\end{equation*}
    Since the probability on the right-hand-side tends to $0$ as $n \to \infty$ by the assumption of Benjamini-Schramm convergence, we have the bound $$\limsup_{n \to \infty} \sup_{0 \leq t \leq T} \left|\Delta\left(M_n, \nu_t^{M_n};F\right) - \Delta_{\nu_t}(F)\right| \leq 8 C(F,\lambda) \eps_\rho\,.$$ 
	Taking $\rho \to \infty$ completes the proof.
\end{proof}

\subsection{Proof of Theorem \texorpdfstring{\ref{th:sparser-measure}}{32}}

We first show that the birth and death functionals are continuous as functions of the underlying measures.

\begin{lemma}\label{lem:continuity-BD}
    Let $\nu_j \in \M_r(\bX)$ be a sequence satisfying a uniform Ruelle bound and so that $\nu_j$ converges weakly to $\nu$ as $j\to\infty$.  Then for every good test function $F: \bX \times \Omega \to \R$  we have $$B_{\nu_j}(F) \to B_\nu(F) \text{ and } D_{\nu_j}(F) \to D_\nu(F)\,.$$
\end{lemma}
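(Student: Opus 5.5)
The plan is to write both functionals as integrals against $\nu_j$ of a single function of $\eta$, show that this function is bounded and local and continuous except on a $\nu$-null set, and then apply the Portmanteau theorem. The uniform Ruelle bound supplies the null-set statements.

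For the birth term, Fubini gives
\[
B_\nu(F)=\int_\Omega b(\eta)\,d\nu(\eta), \qquad b(\eta)=\lambda\int_K F(x,\eta)\one\{\dist(x,\eta)\ge 2r\}\,d\Vol_\bX(x).
\]
For the death term, $D_\nu(F)=\int_\Omega d(\eta)\,d\nu(\eta)$ with $d(\eta)=\sum_{x\in\eta}F(x,\eta-\delta_x)$. Since $\Omega=\cP_r(\bX)$ is $2r$-separated, the number of points of $\eta$ in the compact set $K$ is at most $C(K,r)$. Hence $|d|\le C(K,r)\|F\|_\infty$ and $|b|\le\lambda\Vol_\bX(K)\|F\|_\infty$. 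Both functions depend only on $\eta$ restricted to the compact set $K^+=\{x:\dist(x,K)\le R\}$. By Portmanteau, it then suffices to show that the discontinuity sets of $b$ and $d$ are $\nu$-null.

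\emph{Continuity of $d$.} Suppose $\eta_m\to\eta$ in the weak* (vague) topology on $\cP_r(\bX)$. Because points are $2r$-separated, for $m$ large the points of $\eta_m$ in a neighborhood of $K^+$ can be matched bijectively to those of $\eta$, with matched points converging. One caveat: points near the boundary of $K^+$ may enter or leave, which is harmless only if no point of $\eta$ lies on $\partial K^+$ or on the sphere where $F$'s dependence cuts off.

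I would handle this by replacing the cutoff ball $B_R$ with $B_{R'}$ for a generic $R'$. Alternatively, I would note that for $\nu$ with a Ruelle bound, $\nu$-a.s.\ no point of $\eta$ lies on any of the relevant spheres or boundaries, since those sets have measure zero by Assumption~\ref{assumption-on-X} and the $1$-point density is bounded by $\lambda$. Given such a matching, joint continuity of $F$ in $(x,\eta)$, which holds since $F$ is a good test function, gives $d(\eta_m)\to d(\eta)$.

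\emph{Continuity of $b$.} The integrand $\one\{\dist(x,\eta)\ge 2r\}$ is discontinuous in $\eta$ only when $x$ lies on the sphere of radius $2r$ around some point of $\eta$. For a fixed $\eta$ this $x$-set has $\Vol_\bX$-measure zero, because spheres are null by Assumption~\ref{assumption-on-X}. So for every $\eta$, the integrand converges for $\Vol$-a.e.\ $x\in K$ along $\eta_m\to\eta$, and dominated convergence on $K$ gives $b(\eta_m)\to b(\eta)$. Thus $b$ is continuous everywhere.

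The main obstacle is the boundary-point issue for $d$ and the locality cutoff of $F$. This is where the uniform Ruelle bound enters. The bound on the $1$-point density implies that $\nu$, which inherits the same bound as a weak limit, gives zero mass to configurations with a point in any fixed $\Vol$-null set. If the cutoff in the definition of a good test function is problematic, I would work with a good test function $F$ whose dependence on $\eta|_{B(x,R)}$ is continuous in the vague topology; that continuity is already part of $F$ being continuous. In that case the only genuine discontinuity comes from points of $\eta$ on $\partial K$ entering or leaving the sum. Even that contributes nothing, since $F(x,\cdot)$ vanishes continuously as $x$ exits $\operatorname{supp}$. So I expect the uniform Ruelle bound to be needed only as a safety margin, or for extending the argument to test functions that are not continuous.
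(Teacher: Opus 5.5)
Your proof is correct. For the death term it is the paper's argument with the details supplied; for the birth term it takes a genuinely different route.

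\textbf{Death term.} The paper simply asserts that $\eta\mapsto\sum_{x\in\eta}F(x,\eta-\delta_x)$ is bounded and continuous. Your matching argument justifies exactly this: take a compact neighborhood of $K$ whose boundary carries no point of $\eta$, match the points of $\eta_m$ with those of $\eta$, and use joint continuity of $F$ together with $F(x,\cdot)=0$ off $K$. As you eventually conclude yourself, the detours through a generic $R'$ and through the Ruelle bound are not needed.

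\textbf{Birth term.} The paper sandwiches $\one\{\dist(x,\eta)\ge 2r\}$ between continuous functions lying between $\one\{\dist(x,\eta)\ge 2r\pm\eps\}$. It passes to the weak limit in these approximants. It then uses the Ruelle bound and $\Vol(B_{2r+\eps})-\Vol(B_{2r-\eps})\to 0$ to make the sandwich error small uniformly in $j$.

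You instead show that $b$ itself is continuous on all of $\Omega$. For fixed $\eta$, the indicator converges along any $\eta_m\to\eta$ at every $x$ with $\dist(x,\eta)\neq 2r$. This exceptional set lies in finitely many spheres of radius $2r$ around points of $\eta$ near $K$, so it is $\Vol$-null, and dominated convergence in $x$ finishes.

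\textbf{Comparison.} Your route is shorter and shows that the uniform Ruelle hypothesis is superfluous here. The hard-core constraint already bounds the number of points of $\eta$ in any compact set, and that is all that is needed, even for the paper's annulus estimate. The paper's uniform-approximation route avoids checking pointwise continuity of $b$. It is also the version that would carry over to settings without a deterministic local bound on point counts, where an intensity bound is what controls the error.
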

\begin{proof}
    The death functional is the expectation of the function $\eta \mapsto \sum_{x \in \eta} F(x,\eta - \delta_x)$.  This functional is bounded and continuous with compact support and so we immediately obtain $D_{\nu_j}(F) \to D_\nu(F)$.  For the birth functional, we may approximate the indicator $\one\{\dist(x,\eta) \geq 2r\}$ above and below by continuous functions that both lie between $\one\{\dist(x,\eta) \geq 2r \pm \eps\}$ for each fixed $\eps > 0$.  Using the Ruelle bound in Lemma \ref{lem:mu_t-Poisson-factor} along with noting that under Assumption \ref{assumption-on-X} we have $\lim_{\eps \to 0}\Vol(B_{2r+\eps}(\mathcal{O})) - \Vol(B_{2r-\eps}(\mathcal{O})) = 0$ shows the approximation error can be made arbitrarily small.  Applying weak convergence to these continuous approximations completes the proof. 
\end{proof}

We are now ready to prove Theorem \ref{th:sparser-measure}.

\begin{proof}[Proof of Theorem~\ref{th:sparser-measure}]
    Let $F_1,F_2,\ldots$ be a countable dense family of good test functions.  Let $n_j$ be large enough so that for all $i \leq j$ Lemma \ref{lem:finite-propagation} assures that for $M_j = M_{n_j}$ we have \begin{equation}\label{eq:proof-finite-propagation}
        \sup_{t \leq j} \left|\Delta\left(M_j, \nu_t^{M_j};F_i\right) - \Delta_{\nu_t}(F_i) \right| \leq j^{-1} \,.
    \end{equation}
    By Corollary \ref{cor:bound-on-information} there is  $t_j \in [j/2,j]$ so that \begin{equation*}
        I_{M_j}\left(\nu_{t_j}^{M_j}\right) \leq \frac{2 \lambda \Vol_{M_j}(M_j)}{j}\,.
    \end{equation*}
    Combining this estimate with Lemma \ref{lem:finite-volume-dissipation}  and \eqref{eq:proof-finite-propagation} shows \begin{equation} \label{eq:bound-GNZ-diff-finite-vol}
        \left|\Delta_{\nu_{t_j}}(F_i)\right| \leq j^{-1} +  \lambda C(F_i,r) j^{-1/2}
    \end{equation}
    for all $i \leq j$.  For each fixed $i$ we thus have  $$\lim_{j \to \infty} \Delta_{\nu_{t_j}}(F_i) = 0\,.$$
    By compactness, there is some subsequence of $(\nu_{t_j})_{j}$ along which we have a limiting measure $\nu_\infty.$  By Lemma \ref{lem:mu_t-Poisson-factor}, we have that $\nu_\infty$ is a weak Poisson factor; since each $\nu_{t_j}$ is $G$-invariant, so is $\nu_\infty$. By Lemma \ref{lem:continuity-BD} we have $\Delta_{\nu_\infty}(F_i) = 0$ for all $i$, which shows that $\nu_\infty \in \Gibbs(\bX;r,\lambda)$ by the GNZ equations \eqref{eq:GNZ-equations}.  
\end{proof}

\section{Lattice actions are not weak Poisson factors}
\label{secLattice}

Recall by Theorem \ref{T:special} that at the tight radii $\{r_n\}_{n \geq 7} \cup \{\infty\}$ from Section \ref{secLatticeMeasures} we have unique optimally dense invariant measures $\mu_n \in \M_{r_n}(\H^2)$ coming from lattice subgroups $L_n \leq \Isom(\H^2)$.  The main goal of this section is to prove that the measures $\mu_n$ are not weak Poisson factors.  This is immediate from the following.

\begin{theorem}\label{thm:periodic-not-Poisson-factor}
    Suppose that $\mu \in \M_\rho(\H^d)$ (for some $\rho$) or $\mu \in \Prob_{\isom} (\Closed(\H^d))$ and there is a packing $P$ such that
 \begin{itemize}
     \item $\mu(\Isom(\H^d)\cdot P)=1$ ($\mu$ is concentrated on the $\Isom(\H^d)$-orbit of $P$);
     \item $\Stab(P)=\{g\in \Isom(\H^d):~gP=P\}$ contains a lattice subgroup. 
 \end{itemize}   
    Then $\mu$ is not a weak Poisson factor.  
\end{theorem}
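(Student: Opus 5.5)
The strategy is the annealed-entropy argument sketched in the introduction. We restrict from $G=\Isom(\H^d)$ to a free subgroup and discretize with a fundamental domain. On the Poisson side this produces a Bernoulli-type factor, which has non-negative annealed entropy. On the lattice side it produces a finite-entropy action, which has annealed entropy $-\infty$. These two facts are incompatible.

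\textbf{Step 1: discretize to a free group.} Let $\Gamma\le \Stab(P)$ be a lattice. Choose a non-abelian free subgroup $F\le \Gamma$; for $d=2$ we may take $F$ of finite index in a torsion-free finite-index subgroup of $\Gamma$. Using a fundamental domain $D$ for $F$ acting on $G$ (or on $\H^d$), a Poisson process on $G$ restricted to the translates $fD$ is an i.i.d.\ family indexed by $F$. Hence any $G$-equivariant Poisson factor, restricted to the action of $F$, becomes a factor of a Bernoulli shift $(\mathtt{B}^F,\beta^F)$, where $\mathtt{B}$ is a standard probability space. A weak Poisson factor accordingly restricts to a weak limit of $F$-factors of Bernoulli shifts.

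On the other side, $\mu$ is concentrated on the orbit $G\cdot P\cong G/\Stab(P)$. As an $F$-system it is therefore a factor of, or isomorphic to, $F\cc G/\Stab(P)$ with Haar probability measure. Since $\Stab(P)$ contains the lattice $\Gamma$, this system has finite covolume.

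\textbf{Step 2: pass to a finite alphabet.} Replace $(\cP_\rho(\H^d),\mu)$ by an $F$-equivariant observable valued in a finite set $\mathtt{A}$. For the periodic measure, pick a finite generating partition of $G/\Stab(P)$ for the $F$-action. Its existence comes from the finite-entropy/expansiveness results for lattice actions (Rufus Bowen, Handel--Kitchens), which give a finite generating partition. Ergodicity of $F\cc G/\Stab(P)$ follows from Howe--Moore. This yields a measure $\nu$ on $\mathtt{A}^F$ that is measurably conjugate to the original action.

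The main obstacle is that weak convergence must survive this coding. We must show that if $\mu$ is a weak Poisson factor then $\nu$ is a weak Bernoulli factor on $\mathtt{A}^F$, i.e.\ a weak$^*$ limit of factor-of-i.i.d.\ measures on $\mathtt{A}^F$. I would first choose the partition with $\mu$-null boundaries. The coding map is then continuous $\mu$-a.e., so pushing forward the approximants (after a small equivariant modification near the boundaries) converges to $\nu$ by the Portmanteau theorem.

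\textbf{Step 3: compare annealed entropies.} Show $h^{\mathrm{ann}}\ge 0$ for every factor of i.i.d.\ on $\mathtt{A}^F$: a finite-radius block factor of i.i.d.\ labels on a random $2r$-regular graph produces, with probability bounded below, a good model. This property is stable under weak limits because the annealed entropy is upper semicontinuous as an infimum over neighborhoods, so weak Bernoulli factors also satisfy $h^{\mathrm{ann}}\ge0$ (Proposition~\ref{P:nonnegative}). Then show $h^{\mathrm{ann}}(\nu)=-\infty$ for the lattice system (Proposition~\ref{P:smoothactions}). The reason is that a good model on a random regular graph would induce an approximately equivariant map from most of the graph into the compact, locally rigid space $G/\Gamma$. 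The graph locally looks like a tree while $G/\Gamma$ has finite volume, so the expected number of such maps decays superexponentially by a first-moment count. Since $h^{\mathrm{ann}}(\nu)=-\infty<0$, $\nu$ is not a weak Bernoulli factor, and so $\mu$ is not a weak Poisson factor. The horoball case $\mu\in\Prob_{\isom}(\Closed(\H^d))$ is handled identically, using the same orbit identification.
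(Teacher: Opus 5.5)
The overall architecture is the paper's: a discrete free subgroup $F$, a fundamental domain that turns Poisson factors into Bernoulli factors, a finite coding that preserves weak$^*$ limits, and then $h^{\ann}\ge 0$ for weak Bernoulli factors against $h^{\ann}=-\infty$ for the lattice system. Your nonnegativity sketch (block factors of finitely-valued i.i.d.\ labels give good models for most $\sigma$) is the easy route the paper's footnote alludes to.

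The genuine gap is the other half, which is the heart of the proof. Your argument for $h^{\ann}(\nu)=-\infty$ is a heuristic resting on compactness and finite volume of $G/\Gamma$, and that is the wrong mechanism:
\begin{itemize}
\item Compactness is false for non-uniform lattices. This includes the horoball case your statement covers, where the ideal-triangle reflection group is non-uniform.
\item Compactness plus non-atomicity is not sufficient anyway. The Bernoulli shift $\{0,1\}^F$ is compact, non-atomic and finitely generated, yet $h^{\ann}=\log 2$.
\end{itemize}
What separates $G/\Gamma$ from $\{0,1\}^F$ is that each generator $\alpha_{s_i}$ has \emph{finite} Kolmogorov--Sinai entropy, whereas a single generator of $\{0,1\}^F$ has infinite entropy. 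The finiteness comes from Rufus Bowen's bound $h_{\mathrm{top}}(\alpha_{s_i})\le\max(0,m\log\sup\|d\alpha_{s_i}\|)$, passed to measure entropy on a possibly non-compact manifold by Handel--Kitchens. Your Step 3 never uses this. The paper concludes with
\[
h^{\ann}(\mu)=\lim_{n\to\infty}\Big[(1-r)H_\mu(\phi^{B_n})+\sum_{i=1}^r h(\alpha_{s_i},\phi^{B_n},\mu)\Big].
\]
Here $r\ge 2$, each KS term is at most $h(\alpha_{s_i},\mu)<\infty$, and $H_\mu(\phi^{B_n})\to\infty$ because $\phi$ generates and $\mu$ is non-atomic. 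A first-moment count could only be made rigorous by quantifying exactly this trade-off: $(1-r)$ times a diverging Shannon entropy, plus $r$ conditional entropies bounded via the Lipschitz constant. It would also need extra care in the cusps.

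Step 2 also needs repair.
\begin{itemize}
\item The finite generating partition comes from Krieger's generator theorem, applied to a single element $\alpha_{f_0}$ that is ergodic and has finite entropy. Howe--Moore gives ergodicity of every nontrivial element of the discrete group $F$, which is more than the ergodicity of the $F$-action you state.
\item Rufus Bowen and Handel--Kitchens supply only the finite entropy, not a generator. Expansiveness is unavailable: for $d=2$ a hyperbolic $f$ acts on $G/\Gamma$ as a time-$t$ map of the geodesic flow, which is not expansive.
\item You give no argument that a generator with $\mu$-null boundaries exists. The paper does not need one. It codes $\nu$ with the generating $\phi$; then, for each finite $H\subset F$ and $\epsilon>0$, it replaces $\phi$ by a non-generating $\psi$ whose fibers are $\mu$-continuity sets and whose $H$-cylinder probabilities are $\epsilon$-close to those of $\phi$. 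The pushforwards $\Psi_*\mu_i$ are then Bernoulli factors converging into any prescribed neighborhood of $\nu$.
\end{itemize}
Finally, your remark that for $d=2$ one can take $F$ of finite index in a torsion-free finite-index subgroup of $\Gamma$ is false for cocompact lattices such as the tight-radius triangle groups, since closed surface groups are not virtually free. It is harmless, because any non-abelian free subgroup, for instance from the Tits alternative, suffices.
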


We will in fact prove the following generalization.

\begin{theorem}\label{T:weak}
Let $G$ be an lcsc group with a discrete non-abelian free group $F<G$ and a lattice subgroup $L<G$. For $f\in F$, let $\phi_f:G/L \to G/L$ be given by left-translation: $\phi_f(gL)=fgL$. Assume that every $\phi_f$ has finite entropy and there exists an $f_0\in F$ such that $\phi_{f_0}$ is ergodic with respect to Haar measure. 

Suppose $G \cc X$ is a jointly continuous action on a Polishable space (that is, $X$ admits a complete separable metric inducing its topology), $\nu \in \Prob(X)$ is an invariant probability measure and $\nu(Gx)=1$ for some $x\in X$ with stabilizer $L$ (this means $L=\{g\in G:~ gx=x\}$). Then $\nu$ is not a weak Poisson factor. 
\end{theorem}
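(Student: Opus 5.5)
We argue by contradiction, following the annealed-entropy strategy sketched in the introduction. Suppose $\nu$ is a weak Poisson factor, say $\nu=\lim_i \nu_i$ with $\nu_i=\Phi_{i*}\Pois(\lambda_G)$.

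\textbf{Step 1: identify $(X,\nu)$ with $G/L$.} Since $\nu(Gx)=1$ and $\Stab(x)=L$, the orbit map $gL\mapsto gx$ is a $G$-equivariant Borel bijection from $G/L$ onto $Gx$. By uniqueness of invariant probability measures on $G/L$, $\nu$ is the pushforward of normalized Haar measure. Consequently the restricted action $F\cc(X,\nu)$ is measurably conjugate to $F\cc (G/L,\text{Haar})$. In this action each $\phi_f$ has finite entropy, and $\phi_{f_0}$ is ergodic, so the whole $F$-action is ergodic.

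\textbf{Step 2: reduce to Bernoulli factors of $F$.} Restricting a Poisson process on $G$ to the discrete subgroup $F$, choose a Borel fundamental domain $D$ for the right action of $F$ on $G$ (equivalently the left action; $F$ is discrete). Then $\Pois(\lambda_G)$ becomes an i.i.d.\ process indexed by $F$, namely the restrictions to the translates $fD$, with some standard base space. Each $\nu_i$ is therefore an $F$-factor of a Bernoulli shift, and $\nu$ is a weak limit of such factors.

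Next we need a finite alphabet. Because the $F$-action on $(G/L,\text{Haar})$ has finite entropy in the generator sense (each $\phi_f$ has finite entropy, hence in particular the generators do), a Krieger/Seward-type generator argument should give a finite partition $\mathcal{Q}$ of $X$ that is generating for $F\cc(X,\nu)$. We want $\mathcal{Q}$ to have $\nu$-null boundary, so that the coding map $X\to \mathtt{A}^F$ is continuous $\nu$-a.e. With such a $\mathcal{Q}$, we push everything to $\mathtt{A}^F$: the image $\bar\nu$ of $\nu$ is a weak limit of the images $\bar\nu_i$, and each $\bar\nu_i$ is a factor of i.i.d. This is the content of the proposition the introduction calls Proposition~\ref{P:weakBernoullifactor1}.

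\textbf{Step 3: compare annealed entropies.} Show that every weak Bernoulli factor on $\mathtt{A}^F$ has $h^{\mathrm{ann}}\ge 0$. Factors of i.i.d.\ can be approximately realized on random $2r$-regular graphs with high probability: run the block factor map on i.i.d.\ labels, which works because the graphs are locally tree-like. The expected number of good labelings is therefore not exponentially small. Upper semicontinuity of annealed entropy under weak limits preserves nonnegativity, since the defining microstate sets are open conditions.

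Conversely, show that $h^{\mathrm{ann}}(\bar\nu)=-\infty$ for the conjugate of the ergodic, finite-entropy action $F\cc G/L$. Heuristically, a good model on a random graph must make each generator act by a map whose entropy is bounded by $h(\phi_f)$. In a random regular graph, however, each generator's permutation is uniformly random, and the probability that $r$ independent uniform permutations approximately intertwine with such a "smooth" action decays superexponentially. We would try to bound the annealed count by a first-moment computation over matchings, using the fact that the model is nearly determined by a low-entropy pattern. This gives a contradiction with Step 3's nonnegativity, and the theorem follows.

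\textbf{Main obstacle.} I expect the hardest step to be proving $h^{\mathrm{ann}}=-\infty$ (Proposition~\ref{P:smoothactions}), and possibly also producing a finite generator with null boundary. For the former, the plan is to use the finite entropy of $\phi_{f_0}$ to cover the model space by $e^{o(n\log n)}$ configurations. Against these we compare the $n!$ choices of permutation for each generator, and ergodicity is used to rule out the degenerate splittings that would otherwise allow many models.
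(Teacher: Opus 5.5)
Your architecture is the paper's: you conjugate $(X,\nu)$ to $G/L$, turn a weak Poisson factor into a weak Bernoulli factor for $F$ via a fundamental domain, obtain a finite generator from Krieger's theorem for the ergodic finite-entropy map $\phi_{f_0}$, and then separate $h^{\mathrm{ann}}\ge 0$ from $h^{\mathrm{ann}}=-\infty$. Your nonnegativity argument is a sound and more elementary substitute for the paper's appeal to Kerr's positivity theorem. (You run block factors on i.i.d.\ labels over a random $\sigma$, use local tree-likeness and concentration to get $\#\Omega(\sigma,\mathcal U)\ge 1$ with high probability, and then pass to weak limits because $\mathcal U$ is open.)

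\textbf{The gap.} The step $h^{\mathrm{ann}}(\nu)=-\infty$ does not work by the mechanism you propose. For a \emph{fixed} neighborhood $\mathcal U$ of $\nu$, the expected microstate count is not superexponentially small. Indeed, $\mathcal U$ contains measures of finite annealed entropy, for example the free-group Markov approximations of $\nu$ built from the $\phi^{B_k}$-statistics for large $k$, whose annealed entropy is $F(\nu,\phi^{B_k})>-\infty$. Hence $\limsup_n n^{-1}\log\EE_n\#\Omega(\sigma,\mathcal U)>-\infty$, and no comparison of $e^{o(n\log n)}$ configurations against $n!$ permutations can succeed. What is true is only that the exponential rate tends to $-\infty$ as $\mathcal U$ shrinks. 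The missing idea is why it does. The paper uses Bowen's formula, valid for a finite generating observable $\phi$ and free generators $s_1,\dots,s_r$ with $r\ge 2$:
\[
h^{\mathrm{ann}}(\mu)=\lim_{k\to\infty}\Big[(1-r)\,H_\mu(\phi^{B_k})+\sum_{i=1}^r h(\alpha_{s_i},\phi^{B_k},\mu)\Big].
\]
Each KS term is at most $h(\alpha_{s_i},\mu)<\infty$, so you need finite entropy of every free generator, not just of $\phi_{f_0}$. Meanwhile $H_\mu(\phi^{B_k})\to\infty$, because $\phi$ generates and Haar measure on $G/L$ is non-atomic (automatic for non-discrete $G$). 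The negative coefficient $1-r$ then forces $-\infty$.

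Non-atomicity is the real engine here, and it never appears in your proposal. Ergodicity plays no role in this step beyond giving finite generation. Note also that a naive first-moment count yields
\[
(1-r)H_\mu(\phi^{B_k})+\sum_i H_\mu(\phi^{B_k}\circ\alpha_{s_i}\mid\phi^{B_k}),
\]
and these one-step conditional entropies are bounded below, not above, by $h(\alpha_{s_i},\phi^{B_k},\mu)$. Passing to KS entropies in the limit is exactly what the cited formula supplies, so a direct count would have to reprove it.

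\textbf{A secondary issue.} In Step 2 you ask for a generating partition whose cells are $\nu$-continuity sets, give no way to produce one, and in fact do not need one. The paper (Proposition~\ref{P:weakBernoullifactor1}) keeps an arbitrary finite generator $\phi$ to define $\Phi_*\nu$. For each cylinder neighborhood of $\Phi_*\nu$, it replaces $\phi$ by a possibly non-generating partition $\psi$ into continuity sets with nearly the same cylinder statistics. Then $\Psi_*\nu_i\to\Psi_*\nu$, which lies in that neighborhood, so Bernoulli factors enter every neighborhood of $\Phi_*\nu$.
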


To deduce Theorem~\ref{thm:periodic-not-Poisson-factor} from Theorem~\ref{T:weak}, we will need to identify the free group $F$, show it contains an element that acts ergodically, and show that each action has finite entropy.  We use the next result for this.

\begin{prop}\label{C:weak}
  If $G$ is a semi-simple Lie group  and $L \le G$ is an irreducible  lattice, then there is a free group $F \le G$ satisfying the hypotheses of Theorem \ref{T:weak}. Therefore, the action $G \cc G/L$ is not measurably conjugate to a weak Poisson factor (with respect to Haar measure on $G/L$). 
\end{prop}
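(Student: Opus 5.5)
We need to produce a free group $F\le G$ that is discrete and non-abelian, such that every left translation $\phi_f$ on $G/L$ has finite entropy and some $\phi_{f_0}$ is ergodic. The second sentence of Proposition~\ref{C:weak} then follows by applying Theorem~\ref{T:weak} with $X=G/L$, $x=L$ and $\nu$ the normalized Haar measure. Indeed $G/L$ is Polishable, the action is jointly continuous, $\nu(Gx)=1$, and the stabilizer of the coset $L$ is exactly $L$. Any measure conjugate to this action is then not a weak Poisson factor.

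\emph{Step 1: the free group.} Since $G$ is semi-simple, I would take $F$ to be a Schottky-type free subgroup generated by two elements. Concretely, pick a Zariski dense discrete subgroup; $L$ itself is Zariski dense by Borel density, ignoring compact factors. By the Tits alternative (or a ping-pong argument on the flag variety / boundary), $L$ contains a non-abelian free subgroup $F=\langle a,b\rangle$ whose nontrivial elements are all loxodromic, i.e.\ $\mathbb{R}$-regular semisimple with non-compact projection to every simple factor. Since $F\le L$ and $L$ is discrete, $F$ is discrete. For $\Isom(\H^2)$ this is just two hyperbolic isometries with far-apart axes playing ping-pong.

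\emph{Step 2: ergodicity.} Fix $f_0=a$, or any nontrivial element. Its projection to each simple factor is not relatively compact, since loxodromic elements generate unbounded cyclic groups in every factor. Because $L$ is irreducible, the Howe--Moore theorem gives that every closed subgroup of $G$ with non-compact projections to all simple factors acts mixingly, hence ergodically, on $L^2_0(G/L)$. Applying this to $\langle f_0\rangle$, the translation $\phi_{f_0}$ is ergodic, and in fact mixing. I expect this step to be the one needing the most care. The subtle points are ensuring that the projection of $f_0$ to \emph{each} simple factor is non-compact, so that the irreducibility of $L$ is used correctly, and handling compact factors or a finite center. I would first reduce to $G$ with no compact factors and trivial center by passing to quotients, which does not change the translation system up to finite extensions and compact group extensions.

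\emph{Step 3: finite entropy.} Each $\phi_f$ is a smooth (affine) diffeomorphism of the finite-volume manifold $G/L$, namely left translation. By the results of Rufus Bowen on entropy of affine maps of homogeneous spaces, extended to the non-compact finite-volume case by Handel and Kitchens, its measure-theoretic entropy equals $\sum \log|\lambda_i|$ over eigenvalues of $\mathrm{Ad}(f)$ with modulus greater than $1$, counted with multiplicity. This quantity is finite. Alternatively, Ruelle's inequality bounds the entropy by the sum of positive Lyapunov exponents, which are at most $\log\|\mathrm{Ad}(f)\|\cdot\dim G<\infty$. Non-compactness in the lattice case requires the Handel--Kitchens version, since the usual compact-manifold results do not apply directly.

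Combining Steps 1--3 verifies the hypotheses of Theorem~\ref{T:weak}, which yields the conclusion.
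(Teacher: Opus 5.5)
Your proposal is correct and follows the paper's proof step for step. You obtain a non-abelian free subgroup of the non-amenable discrete lattice $L$ from the Tits alternative, ergodicity of a nontrivial translation from Howe--Moore, and finite entropy from Rufus Bowen's derivative bound on topological entropy combined with the Handel--Kitchens inequality (needed because $G/L$ may be non-compact), after which Theorem~\ref{T:weak} applies. Your refinements go beyond what is needed: for an irreducible lattice, Howe--Moore already gives mixing of $\phi_f$ for every $f$ generating an unbounded cyclic subgroup, which covers any nontrivial element of the discrete free group $F$. So requiring loxodromic projections to every simple factor is unnecessary, and only a finite upper bound on the entropy is used, not an exact formula.
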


\begin{proof}
    Because $G$ is non-amenable, the lattice $L$ is also non-amenable. It then follows from the Tits alternative for Lie groups \cite{MR286898} that $L$ contains a non-abelian finite rank free group $F$.     Because $L$ is discrete, $F$ is also discrete. The Howe-Moore Theorem\footnote{For an exposition of the Howe-Moore Theorem, see \cite[Theorem 2.2.20]{zimmer} or \cite[Theorem 1.1]{MR1781937}.} implies that every nontrivial element $f\in F$ acts ergodically on  $G/L$.   
    
Rufus Bowen proved that the topological entropy $h_d(T)$ of a diffeomorphism $T$ of a locally compact $m$-dimensional Riemannian manifold $(M,d)$ is bounded by $\max(0, m \log \sup_{x \in M} \|dT|T_xM\|)$ \cite[Proposition 12]{MR274707}. We apply this to the case $M=G/L$ and $T=\phi_f$ (so $T(gL)=fgL$ for some fixed $f \in F \setminus \{e\}$) to obtain that the topological entropy of $\phi$ is finite.  Handel and Kitchens proved\footnote{Note that we are not assuming $G/L$ is compact and so the well-known Variational Principle does not apply.}  that the topological entropy is an upper bound for the measure-theoretic entropy \cite[Proposition 1.4]{MR1348316} which completes the proof. 
\end{proof}

\begin{proof}[Proof of Theorem \ref{thm:periodic-not-Poisson-factor} from Theorem \ref{T:weak} and Proposition \ref{C:weak}]
    By \cite[Theorem 5.1.5]{beer1993topologies}, $\Closed(\H^d)$ is a Polishable  space with respect to the Fell topology.  Also $\cP_\rho(\H^d)$ is a subset of the space of Radon measures on $\H^d$, which may be identified with a subset of the Banach dual $C_0(\H^d)^*$ via the Riesz Representation Theorem.  Under the weak* topology, $\cP_\rho(\H^d)$ is a closed subset of $C_0(\H^d)^*$ and thus $\cP_\rho(\H^d)$ is a Polishable space since there exists a complete separable metric inducing its topology.  

  Note that for $d\geq 2$ we have that $G = \Isom(\H^d)$ is semi-simple. Since it has rank 1, all lattices in $G$ are irreducible.  
    Applying Proposition \ref{C:weak} and Theorem \ref{T:weak} shows that $\mu$ is not a weak Poisson factor.
\end{proof}

For the rest of this section, we fix a group $G$ as in the statement of Theorem \ref{T:weak}. The first part of the proof reduces the problem to showing that a smooth action of a free group $F$ cannot be a weak Bernoulli factor.

\subsection{Reduction to free groups} \label{sec:reduction-free}

We assume the hypotheses of Theorem \ref{T:weak}.  By taking a subgroup, we can  assume that $F$ is finitely generated. We will use the Poisson factor to identify a \emph{Bernoulli factor}, a more classical and well-studied notion in the dynamical systems literature.  To begin with, we recall the definition of Bernoulli shifts.

\begin{definition}
    Let $\G$ be a countable group and $(\mathtt{K},\kappa)$ be a standard Borel probability space. Let $(\mathtt{K},\kappa)^\G = (\mathtt{K}^\G,\kappa^\G)$ denote the product probability space. The action $\G \cc (\mathtt{K},\kappa)^\G$ defined by $(gx)(f)=x(g^{-1}f)$ is the \textbf{Bernoulli shift over $\G$ with base space $(\mathtt{K},\kappa)$.} 
\end{definition}

We now recall the definition of a Bernoulli factor.

\begin{definition}
Suppose $\G$ acts on a Polishable space $X$ and $\nu$ is a $\G$-invariant Borel probability measure on $X$. Then we say $\nu$ is a \textbf{Bernoulli factor} if there are a Bernoulli shift $\G \cc (\mathtt{K},\kappa)^\G$ and a factor map $\Phi:\mathtt{K}^\G \to X$ such that $\Phi_*\kappa^\G = \nu$. We say $\nu$ is a \textbf{weak Bernoulli factor in $X$} if it is a weak$^*$ limit of Bernoulli factors. This means there exists a sequence $\nu_1,\nu_2,\ldots \subset \Prob(X)$ of Bernoulli factors which limit on $\nu$ in the weak$^*$ topology.

\end{definition}

We recall the notion of measurably conjugate actions, which is a notion of isomorphism of dynamical systems. 

\begin{definition}
    Two probability measure preserving (pmp) actions $\G \cc (X,\mu)$ and $\G \cc (Y,\nu)$ are \textbf{measurably conjugate} if there exist conull sets $X' \subset X, Y' \subset Y$ and an invertible measurable map $\Phi:X' \to Y'$ such that $\Phi(gx)=g\Phi(x)$ for all $g\in \G$, $x\in X'$ such that $\Phi_*\mu=\nu$.
\end{definition}

We first identify the $F$ action on $(\cP(G),\Pois(\lambda_G))$ as a Bernoulli shift.

\begin{lemma}\label{lem:mc-to-Bernoulli-shift}
Let $\lambda_G$ be a left-Haar measure on $G$ and let $F < G$ be a discrete non-abelian free subgroup. Then the action of $F$ on $(\cP(G), \Pois(\lambda_G))$ is measurably conjugate to a Bernoulli shift.   
\end{lemma}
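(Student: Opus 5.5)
The plan is to build a measurable fundamental domain for the left action of $F$ on $G$, and then cut the Poisson process into independent pieces indexed by $F$.

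Since $F<G$ is discrete and countable, it acts properly on $G$ by left multiplication. Because $G$ is lcsc (hence Polish), a standard Borel selection argument gives a Borel set $D\subset G$ such that the translates $\{fD\}_{f\in F}$ are pairwise disjoint and cover $G$. Concretely, take a neighborhood $U$ of the identity with $U^{-1}U\cap F=\{e\}$ and a countable cover of $G$ by translates $g_iU$. Then define $D$ inductively: $D=\bigcup_i \bigl(g_iU\setminus F\cdot(\bigcup_{j<i} g_jU)\bigr)$, intersected with a choice that makes each $F$-orbit meet $D$ exactly once. For each $x$, the set $Fx\cap g_iU$ has at most one element, because $fx,f'x\in g_iU$ forces $f'f^{-1}\in U U^{-1}$. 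So we may take $D=\{x:\ x\in g_iU \text{ where } i \text{ is the least index with } Fx\cap g_iU\neq\emptyset\}$. This set is Borel, since $Fx\cap g_iU\neq\emptyset$ is a countable union of Borel conditions.

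Next set $\mathtt{K}=\cP(D)$, the point measures on $D$ (a standard Borel space), and let $\kappa=\Pois(\lambda_G|_D)$. Define $\Psi:\cP(G)\to \mathtt{K}^F$ by
\[
\Psi(\Pi)(f)=(f^{-1}\cdot\Pi)|_D ,
\]
so that $\Psi(\Pi)(f)$ records the restriction of $\Pi$ to $fD$, pulled back to $D$ via $f^{-1}$. The map $\Psi$ is Borel and invertible, with inverse $(\omega_f)_f\mapsto\sum_f f\cdot\omega_f$; this inverse is well defined on the conull set where the resulting measure is locally finite, and that includes the image of $\Psi$.

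Equivariance is a direct check: $\Psi(h\Pi)(f)=(f^{-1}h\Pi)|_D=\Psi(\Pi)(h^{-1}f)=(h\Psi(\Pi))(f)$, matching the shift convention $(hx)(f)=x(h^{-1}f)$.

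For the measure, the sets $fD$ are disjoint, so the independence property of the Poisson process makes the restrictions $\Pi|_{fD}$ jointly independent. Left invariance of $\lambda_G$ means $f^{-1}\cdot(\Pi|_{fD})$ is Poisson with intensity $\lambda_G|_D$. Hence $\Psi_*\Pois(\lambda_G)=\kappa^F$, which is the Bernoulli shift with base $(\mathtt{K},\kappa)$.

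I expect the main obstacle to be the measurability of the fundamental domain and of $\Psi^{-1}$; the probabilistic part is immediate from the definition of the Poisson process. Non-abelianness of $F$ plays no role here; only discreteness of $F$ is used.
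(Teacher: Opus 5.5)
Your overall route is the paper's: pick a Borel fundamental domain $D$ for the left action of $F$ on $G$, take base space $\cP(D)$ with $\kappa=\Pois(\lambda_G|_D)$, and code $\Pi$ by $f\mapsto (f^{-1}\Pi)|_D$. Your checks of equivariance, of the inverse on a conull set, and of $\Psi_*\Pois(\lambda_G)=\kappa^F$ (independence over the disjoint sets $fD$ plus left-invariance of $\lambda_G$) are correct. They are more detailed than the paper, which simply cites the existence of a measurable fundamental domain.

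The step that fails is your explicit construction of $D$. Your claim that $Fx\cap g_iU$ has at most one point is false. If $fx=g_iu$ and $f'x=g_iu'$, then
$f'f^{-1}=(f'x)(fx)^{-1}=g_iu'u^{-1}g_i^{-1}$.
This is a \emph{conjugate} of an element of $UU^{-1}$, not an element of $UU^{-1}$; note also that you assumed $U^{-1}U\cap F=\{e\}$, not $UU^{-1}\cap F=\{e\}$. Conjugation by elements far from the identity can stretch $U$ until it contains nontrivial elements of $F$.

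Here is a concrete failure. In $\mathrm{SL}_2(\R)$, let $F$ be the discrete free group generated by $\bigl(\begin{smallmatrix}1&2\\0&1\end{smallmatrix}\bigr)$ and $\bigl(\begin{smallmatrix}1&0\\2&1\end{smallmatrix}\bigr)$. Let $f$ be the first generator and $a_t=\mathrm{diag}(e^t,e^{-t})$. Then $a_t^{-1}fa_t=\bigl(\begin{smallmatrix}1&2e^{-2t}\\0&1\end{smallmatrix}\bigr)\in U$ for $t$ large. If $g_1=a_t$ is the first translate in your cover, both $a_t$ and $fa_t$ lie in $g_1U$, so both are placed in $D$, and $D\cap fD\neq\emptyset$. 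The lemma is stated for an arbitrary discrete free subgroup of an arbitrary lcsc group, so this case must be handled.

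The repair is to translate on the side opposite to the action. Cover $G$ (which is second countable) by countably many right translates $Ug_i$, where $UU^{-1}\cap F=\{e\}$. Then $fx=ug_i$ and $f'x=u'g_i$ give $f'f^{-1}=u'u^{-1}\in UU^{-1}\cap F$, so $f=f'$.

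Now set $D=\bigcup_i\bigl(Ug_i\setminus\bigcup_{j<i}FUg_j\bigr)$. This is Borel, since each $FUg_j$ is open. Moreover, $D\cap Fx=Fx\cap Ug_{i(x)}$, where $i(x)$ is the least index with $Fx\cap Ug_i\neq\emptyset$, and this intersection is exactly one point. Hence $G=\bigsqcup_{f\in F}fD$. With this $D$, the rest of your argument goes through unchanged and coincides with the paper's proof.
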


\begin{proof}
  Because $F$ is discrete, it admits a measurable fundamental domain $\Delta \subset G$, meaning that $G$ is the disjoint union of the translates $f\Delta$ over $f\in F$ \cite[Lemma 4.1.1]{MR3307755}. 

    Let $\mathtt{A}=\cP(\Delta)$ be the space of point measures on $\Delta$, $\lambda_\Delta$ be the restriction of $\lambda_G$ to $\Delta$ and $\kappa=\Pois(\lambda_\Delta)$ be the law of the Poisson point process on $\Delta$. Define $\Phi:\cP(G)\to \mathtt{A}^F$ by requiring that $\Phi$ is the unique $F$-equivariant map which satisfies $\Phi(\Pi)(1_F)$ is the restriction of $\Pi$ to $\Delta$. Then $\Phi$ is a measure-conjugacy from $F \cc (\cP(G),\Pois(\lambda))$ to $F \cc (\mathtt{A},\kappa)^F$.
        \end{proof}

With Lemma \ref{lem:mc-to-Bernoulli-shift} in hand, we now show that our weak Poisson factor can be realized as a weak Bernoulli factor.

\begin{corollary}\label{C:reductiontofree}
Let $X$ be a Polishable space and $G \cc X$ a jointly continuous action.    If $\mu \in \Prob(X)$ is a weak Poisson factor for the $G$-action then $\mu$ is also a weak Bernoulli factor for the induced $F$-action. 
\end{corollary}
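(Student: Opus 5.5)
The plan is to compose the Poisson factor map with the measure conjugacy from Lemma \ref{lem:mc-to-Bernoulli-shift}. Then each Poisson factor becomes a Bernoulli factor for the restricted $F$-action, and weak$^*$ limits carry over unchanged.

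\emph{Step 1: a single Poisson factor.} Let $\nu = \Psi_*\Pois(\lambda_G)$, where $\Psi:\cP(G)\to X$ is measurable and $G$-equivariant on a conull set. Restricting the equivariance to $F<G$ makes $\Psi$ an $F$-equivariant map $F \cc (\cP(G),\Pois(\lambda_G)) \to F\cc X$. Lemma \ref{lem:mc-to-Bernoulli-shift} supplies a measure conjugacy $\Phi$ between conull $F$-invariant sets of $\cP(G)$ and of $\mathtt{A}^F$, with $\Phi_*\Pois(\lambda_G)=\kappa^F$ and $\kappa=\Pois(\lambda_\Delta)$.

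\emph{Step 2: the Bernoulli factor map.} Set $\Psi\circ\Phi^{-1}:\mathtt{A}^F\to X$. It is defined $\kappa^F$-a.e., is $F$-equivariant, and pushes $\kappa^F$ forward to $\nu$. This map is exactly a Bernoulli factor map with base space $(\mathtt{A},\kappa)$, which is a standard Borel probability space because $\cP(\Delta)$ is Polishable.

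\emph{Step 3: measurability and conull sets.} This is the only place needing care. Since $\Psi$ is equivariant only almost everywhere, I would shrink to the conull set $\bigcap_{f\in F} f E$, where $E$ is the good set. Because $F$ is countable, this intersection is conull and $F$-invariant, and $\Psi$ is equivariant on it. On the complement, redefine the map to be any fixed measurable equivariant assignment, or simply allow it to be defined only a.e., which the definition of factor permits.

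\emph{Step 4: passing to limits.} If $\mu$ is a weak Poisson factor, then $\mu=\lim \nu_i$ with each $\nu_i$ a Poisson factor. By Steps 1--3, each $\nu_i$ is a Bernoulli factor for the $F$-action, and $\mu$ is still their weak$^*$ limit in $\Prob(X)$. Hence $\mu$ is a weak Bernoulli factor for the induced $F$-action.

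The main obstacle is only the measure-theoretic bookkeeping in Step 3. The argument itself is a composition of maps.
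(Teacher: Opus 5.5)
Your proposal is correct and follows the paper's argument. Each Poisson factor map, composed with the measure conjugacy of Lemma~\ref{lem:mc-to-Bernoulli-shift}, shows that the approximating measures are Bernoulli factors for the $F$-action, and the weak$^*$ limit carries over unchanged. Your Step~3, intersecting over the countable group $F$ to get an $F$-invariant conull set, just makes explicit the bookkeeping the paper's one-line proof leaves implicit.
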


\begin{proof}
Suppose $\mu_1,\mu_2,\ldots \in \Prob(X)$ are Poisson factors 
which weak$^*$ converge to $\mu$. So for each $i$ there exist a compact subgroup $K_i \le G$, a $G$-invariant measure $\lambda_i$ on $G/K_i$ and a $G$-equivariant factor map $\Phi_i:\cP(G/K_i) \to X$ such that $\Phi_{i*}\Pois(\lambda_i) = \mu_i$. By Lemma \ref{lem:mc-to-Bernoulli-shift}, $\Phi_i$ demonstrates that $\mu_i$ is a Bernoulli factor with respect to the $F$-action. Therefore $\mu$ is a weak Bernoulli factor.
\end{proof}

\subsection{Annealed entropy and free groups} \label{sec:annealed-entropy}

In light of Corollary \ref{C:reductiontofree}, to finish the proof of Theorem \ref{T:weak} it suffices to show that no smooth action of $F$ is a weak Bernoulli factor. By a smooth action, we mean an action by diffeomorphisms on a differentiable manifold; in our case, the differentiable manifold will be $M = G/L$.

We will use the entropy theory of free group actions as developed by the first-named author in \cite{MR2630067, MR2653969}. We will define the annealed entropy (also known as the $f$-invariant) of a free group action, prove that it is negative on smooth actions 
but non-negative on weak Bernoulli factors. This will imply Theorem \ref{T:weak}.

To begin, let $\mathtt{A}$ denote a finite set which we call the alphabet. Then $\mathtt{A}^F$ is the space of all functions $x:F \to \mathtt{A}$ with the topology of uniform convergence on finite sets (which is the same as the product topology). In this topology $\mathtt{A}^F$ is compact. Let $F$ act on $\mathtt{A}^F$ by
$$(fx)(g)=x(f^{-1}g)$$
for $f,g \in F$ and $x\in \mathtt{A}^F$.

We will define the annealed entropy of $F$-invariant measures on $\mathtt{A}^F$. To do this, let $\Hom(F,\Sym(n))$ be the set of homomorphisms from $F$ to the symmetric group $\Sym(n)$ on $[n]$. Given $\s \in \Hom(F,\Sym(n))$, $x:[n] \to \mathtt{A}$ and $v\in [n]$ define the \textbf{pull-back name of $x$ based at $v$} by
$$\Pi^\s_v(x) \in \mathtt{A}^F, \quad \Pi^\s_v(x)(g) = x(\s(g^{-1})v).$$
This is defined so that the map from $[n]$ to $\mathtt{A}^F$ defined by $v \mapsto \Pi^\s_v(x)$ is $F$-equivariant in the sense that $\Pi^\s_{\s(g)v}(x) = g\cdot \Pi^\s_v(x)$ for $g\in F$.

The \textbf{empirical measure of $x$ with respect to $\s$} is the law of $\Pi^\s_v(x)$ where $v$ is chosen uniformly at random in $[n]$. In symbols,
$$P^\s_x = n^{-1} \sum_{v \in [n]} \delta_{\Pi^\s_v(x)} \in \Prob(\mathtt{A}^F)$$
where $\delta_y$ denotes the Dirac measure at $y$.

Intuitively, the annealed entropy of an $F$-invariant measure $\mu \in \Prob(\mathtt{A}^F)$ is the  exponential rate of growth of the expected number of microstates $x:[n]\to \mathtt{A}$ such that the empirical measure of $x$ with respect to $\s$ is a good approximation to $\mu$, where $\s$ is chosen uniformly at random. To make this precise, let $\cU \subset \Prob(\mathtt{A}^F)$ be an open set. Let $\Omega(\sigma,\cU)$ be the set of all $x\in \mathtt{A}^n$ whose empirical measure $P^\s_x \in \cU$. Let $\EE_n$ be the expected value with respect to the uniform probability measure on  $\Hom(F,\sym(n))$. Then the \textbf{annealed entropy} of an $F$-invariant measure $\mu \in \Prob(\mathtt{A}^F)$ is defined by
$$h^{\ann}(\mu) = \inf_{\cU} \limsup_{n\to\infty} n^{-1}\log \EE_n[\#\Omega(\sigma,\cU)]$$
where the first infimum is over all open neighborhoods of $\mu$ in $\Prob(\mathtt{A}^F)$.

In previous papers, this entropy is called the \textbf{$f$-invariant} and denoted $f(\mu)$ instead. This invariant was introduced in \cite{MR2630067} via an alternative (but equivalent) formula. In that paper, $h^{\ann}$ is proven to be invariant under measure-conjugacy. The paper \cite{MR2653969} proves that the original definition is equivalent to the definition given above.  The annealed entropy is also occasionally referred to as annealed sofic entropy.

To help understand this definition, note that if we pick an element $\Hom(F,\Sym(n))$ uniformly at random this is equivalent to choosing random permutations $s_1,\ldots,s_r \in \Sym(n)$ independently and uniformly at random for the $r$ generators in $F$.  If one considers the graph with vertices $[n]$ and edges of the form $(v,s_i\cdot v)$, then as $n$ tends to infinity this sequence of random graphs Benjamini-Schramm converges to the Cayley graph of $F$ with the usual generators and their inverses.  A sequence of graphs that Benjamini-Schramm converges to $F$ is called a sofic approximation. The quantity above is the annealed sofic entropy since we are averaging over the sofic approximation; if one were to instead fix a sofic approximation, and look at the limit along such a sequence, the obtained quantity is called the quenched sofic entropy, or occasionally simply the sofic entropy.

At the moment, we have only defined annealed entropy for probability measures on $\mathtt{A}^F$.  We now introduce the notion of a finitely generated action, which will allow us to extend the definition of annealed entropy to a wider class of actions.  

\begin{definition}
Let $\G \cc (X,\mu)$ be a pmp action of a countable group $\G$. The action is said to be \textbf{finitely generated} if there is a measurable map $\phi:X \to \mathtt{A}$ (where $\mathtt{A}$ is a finite set) such that the smallest complete $\G$-invariant sigma-algebra containing $\phi^{-1}(a)$ for all $a\in \mathtt{A}$ is the sigma-algebra of all measurable sets.

Note that if such a map $\phi$ exists then we may define $\Phi:X \to \mathtt{A}^\G$ by $\Phi(x)(g) = \phi(g^{-1}x)$. This map is $\G$-equivariant and, because $\phi$ is generating, it is 1-1 almost everywhere and therefore defines a measure-conjugacy between $\G \cc (X,\mu)$ and its image $\G \cc (\mathtt{A}^\G, \Phi_*\mu)$. Because annealed entropy is invariant under measure-conjugacy, this allows us to define the annealed entropy of any finitely generated pmp action of the free group $F$.
\end{definition}

\begin{proposition}\label{P:weakBernoullifactor1}
Suppose $X$ is a Polishable space, $F \cc X$ is a jointly continuous action, $\mu$ is an $F$-invariant measure on $X$ and $F \cc (X,\mu)$ is finitely generated. If $\mu$ is a weak Bernoulli factor in $X$ then there exist a finite alphabet $\mathtt{A}$ and an $F$-invariant measure $\nu \in \Prob(\mathtt{A}^F)$ such that 
\begin{enumerate}
    \item $F \cc (X,\mu)$ is measurably conjugate to $F \cc (\mathtt{A}^F, \nu)$;
    \item $\nu$ is a weak Bernoulli factor in $\mathtt{A}^F$. 
\end{enumerate}
\end{proposition}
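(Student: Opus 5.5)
The proof is a coding argument. Let $\phi:X\to\mathtt{A}$ be a finite generating map for $F\cc(X,\mu)$. We cannot simply push the approximating Bernoulli factors forward by $\Phi(x)(g)=\phi(g^{-1}x)$: $\phi$ is only measurable, so $\Phi$ is not continuous, and weak$^*$ convergence $\mu_i\to\mu$ need not give $\Phi_*\mu_i\to\Phi_*\mu$. The plan is to replace $\phi$ by a generating partition whose pieces have $\mu$-null boundary. Then $\Phi$ is continuous $\mu$-almost everywhere, and the Portmanteau theorem will carry the convergence through.

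\emph{Step 1: a generating partition with null boundaries.} The partition $\{\phi^{-1}(a)\}_{a\in\mathtt{A}}$ is generating, together with its $F$-translates. Since $X$ is Polishable, we can choose a countable base of open balls $B(x_k,s)$ with $\mu(\partial B(x_k,s))=0$. This is possible because for each center only countably many radii $s$ give a sphere of positive $\mu$-measure. Each $\phi^{-1}(a)$ is approximated in $\mu$-measure by finite Boolean combinations of such balls, and these combinations still have null boundaries. To keep the alphabet finite, I would build the partition $\mathcal{Q}$ so that its join with the countable family of refinements still generates. The first thing to try is the standard trick of recovering a countable generating family $\{Q_k\}$ from one finite partition via $F$-translates, as in Krieger/Rokhlin-type arguments (Seward's finite generator theorem). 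Those arguments perturb sets only by arbitrarily small measure, and at each stage we can take the perturbation to be a Boolean combination of null-boundary balls. I expect this step to be the main obstacle: we need a finite generator whose pieces are all $\mu$-continuity sets. The fallback is to run the argument with a countable alphabet $\mathtt{A}=\mathbb{N}$. In that case $\mathtt{A}^F$ is still Polish, and one can apply the finite-generator theorem only at the end to the conjugate system.

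\emph{Step 2: a.e. continuity of the coding map.} With $\mathcal{Q}=\{Q_a\}$ a finite generating partition with $\mu(\partial Q_a)=0$, set $\phi(x)=a$ for $x\in Q_a$ and $\Phi(x)(g)=\phi(g^{-1}x)$. The action is jointly continuous and $\mu$ is invariant. So for each $g$, the map $x\mapsto\phi(g^{-1}x)$ is continuous off $g\,\partial\mathcal{Q}$, which is $\mu$-null. Since $F$ is countable, $\Phi$ is continuous (in the product topology) off a $\mu$-null set. Moreover $\Phi$ is $F$-equivariant and, because $\mathcal{Q}$ generates, it is a measure-conjugacy onto $\nu:=\Phi_*\mu$. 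This gives item (1).

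\emph{Step 3: passing the weak limit.} Let $\mu_i\to\mu$ be Bernoulli factors in $X$, say $\mu_i=(\Psi_i)_*\kappa_i^F$. Then $\Phi\circ\Psi_i$ is an equivariant measurable map from a Bernoulli shift to $\mathtt{A}^F$, so $\Phi_*\mu_i$ is a Bernoulli factor. Note that $\Phi$ need not be defined nicely off a $\mu$-null set, but it is defined everywhere as a Borel map. By the mapping theorem for weak convergence (Billingsley), a Borel map continuous $\mu$-a.e. satisfies $\Phi_*\mu_i\to\Phi_*\mu=\nu$ weak$^*$. Hence $\nu$ is a weak Bernoulli factor in $\mathtt{A}^F$, which is item (2).
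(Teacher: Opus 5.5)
Your Steps 2 and 3 are fine, but they rest entirely on Step 1, and that step is a genuine gap. You need a \emph{finite generating} partition whose pieces are $\mu$-continuity sets, and nothing in the proposal produces one.

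Krieger/Seward-type constructions build the generator as a limit, in the measure algebra, of partitions modified at infinitely many stages. Even if every modification is a finite Boolean combination of null-boundary balls, the limiting pieces are only measure-limits of continuity sets, and such limits need not be continuity sets. For instance, for non-atomic $\mu$, a union of null-boundary balls of summably small measure centred on a dense set is a dense open set of small measure, so its boundary has nearly full measure. Controlling each stage therefore tells you nothing about $\partial\mathcal{Q}$ at the end, and it is not clear such a generator exists at this level of generality.

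The countable-alphabet fallback only relocates the problem. To reach a finite alphabet you must push through a merely measurable finite generator of the conjugate system. That is exactly the kind of map through which you could not guarantee weak$^*$ convergence passes.

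The missing idea is that the continuity-set coding never needs to be generating. Use your original measurable generator $\phi$ only for item (1), setting $\nu=\Phi_*\mu$.

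For item (2), a basic weak$^*$ neighbourhood $\cU$ of $\nu$ is specified, up to some $\epsilon>0$, by the probabilities of the cylinders $C_\tau=\{y:~y(h)=\tau(h)~\forall h\in H\}$ over a finite window $H\subset F$. By density of continuity sets in the measure algebra, pick $\psi:X\to\mathtt{A}$ whose fibres are $\mu$-continuity sets and with $\mu(\psi\neq\phi)<\epsilon/|H|$. Each relevant symmetric difference lies in a union of $|H|$ translates of $\{\psi\neq\phi\}$. By $F$-invariance of $\mu$, every cylinder probability therefore moves by less than $\epsilon$, so $\Psi_*\mu\in\cU$, where $\Psi(x)(g)=\psi(g^{-1}x)$.

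Your Steps 2 and 3 now apply verbatim to $\Psi$ in place of $\Phi$. $\Psi$ is continuous off a $\mu$-null set, hence $\Psi_*\mu_i\to\Psi_*\mu$. Each $\Psi_*\mu_i$ is a Bernoulli factor in $\mathtt{A}^F$ and lies in $\cU$ for $i$ large. Since $\Prob(\mathtt{A}^F)$ is metrizable, a diagonal choice gives Bernoulli factors converging to $\nu$.

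This is the paper's argument. The approximating coding $\psi$ depends on $(H,\epsilon)$ and is discarded afterwards, so no generator with null boundaries is ever required.
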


\begin{proof}
    For $i=1,2,\ldots$, let $\mu_i$ be $F$-invariant Borel probability measures on $X$ which converge to $\mu$ such that each $\mu_i$ is a Bernoulli factor. Because $\mu$ is finitely generated, there is a measurable function $\phi:X \to \mathtt{A}$ where $\mathtt{A}$ is a finite set such that if $\Phi:X \to \mathtt{A}^F$ is defined by $\Phi(x)(g)=\phi(g^{-1}x)$, then $\Phi$ is a measure-conjugacy from $F \cc (X,\mu)$ to $F \cc (\mathtt{A}^F, \Phi_*\mu)$. Set $\nu = \Phi_*\mu$.

So it suffices to prove that for every open neighborhood $\cU$ of $\nu$ in $\Prob(\mathtt{A}^F)$, there exists a Bernoulli factor $\nu' \in \cU$. Because the topology of $\mathtt{A}^F$ is generated by cylinder sets, it suffices to show that for every finite subset $H \subset F$ and $\epsilon>0$ there exists a Bernoulli factor $\nu'$ such that for every function $\tau:H \to \mathtt{A}$, if 
$$C_\tau = \{x\in \mathtt{A}^F:~ x(h)=\tau(h)~\forall h \in H\}$$
then 
$$|\nu( C_\tau) - \nu'(C_\tau)| < \epsilon.$$

Recall that if $Y \subset X$ then its topological boundary is defined by
$\partial Y = \bar{Y} \cap \overline{X\setminus Y}.$
    Moreover, $Y$ is a \textbf{continuity set for $\mu$} if $\mu(\partial Y)=0$.  It is well-known that continuity sets are dense in the measure-algebra\footnote{To see this, note first that for a fixed $x\in X$ and Lebesgue almost all $r$ we have that balls of radius $r$ centered at $x$ are continuity sets since $r \mapsto \mu(B(x,r))$ is monotone.  We may then construct a countable base with null boundaries using these balls and define the algebra of Borel sets that are approximable by these balls. A monotone class argument then approximates all Borel sets in the measure-algebra. }. 
    This means: for every measurable set $Y \subset X$ and $\epsilon>0$ there exists a $\mu$-continuity set $Y'$ such that $\mu(Y \vartriangle Y')<\epsilon$ where $\vartriangle$ denotes symmetric difference. 

    Now fix $H$ and $\epsilon$. Because continuity sets are dense in the measure-algebra there exists a measurable function  $\psi:X \to \mathtt{A}$ such that
    \begin{enumerate}
        \item for every $a\in \mathtt{A}$, $\psi^{-1}(a) \subset X$ is a $\mu$-continuity set;
        \item $|\mu(C(\phi,\tau)) - \mu(C(\psi,\tau))|<\epsilon$ for every function $\tau: H \to \mathtt{A}$, where
        $$C(\psi,\tau) =\{x\in X:~ \psi(hx)=\tau(h) ~\forall h \in H\}$$
        $$C(\phi,\tau) =\{x\in X:~ \phi(hx)=\tau(h) ~\forall h \in H\}.$$
    \end{enumerate}
Define $\Psi: X \to \mathtt{A}^F$ by $\Psi(x)(g)=\psi(g^{-1}x)$. Then $\Psi_*\mu_i$ is a Bernoulli factor  and $\Psi_*\mu_i$ converges in the weak* topology to $\Psi_*\mu$ which is in $\cU$ by construction. It follows that if $i$ is sufficiently large then $\Psi_*\mu_i$ is a weak Bernoulli factor in $\cU$, as required.
\end{proof}

We now are in place to begin to separate the probability measure $\nu$ from Theorem \ref{T:weak} from a weak Poisson factor.  We first show that weak Bernoulli factors have non-negative annealed entropy.

\begin{proposition}\label{P:nonnegative}
Suppose $X$ is a Polishable space, $F \cc X$ is a jointly continuous action, $\mu$ is an $F$-invariant measure on $X$ and $F \cc (X,\mu)$ is finitely generated. If $\mu$ is a weak Bernoulli factor, then $h^{\ann}(\mu) \ge 0$.
\end{proposition}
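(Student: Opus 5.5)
First I will use Proposition~\ref{P:weakBernoullifactor1} to replace $F \cc (X,\mu)$ by a measurably conjugate action $F \cc (\mathtt{A}^F,\nu)$ with $\mathtt{A}$ finite and $\nu$ a weak Bernoulli factor in $\mathtt{A}^F$. Since $h^{\ann}$ is a conjugacy invariant, it suffices to show $h^{\ann}(\nu)\ge 0$. Unwinding the definition, I need the following: for every open neighborhood $\cU$ of $\nu$ and every $\delta>0$,
$$\EE_n[\#\Omega(\sigma,\cU)] \ge e^{-\delta n}$$
for all large $n$. It is enough to find, with probability not decaying exponentially in $n$ over $\sigma$, at least one $x\in\mathtt{A}^n$ with $P^\sigma_x\in\cU$. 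Since $\nu$ is a weak limit of Bernoulli factors, I first fix a Bernoulli factor $\nu'$ whose closure lies inside a smaller neighborhood $\cU'\subset\cU$. The task then reduces to showing that every Bernoulli factor $\nu'=\Phi_*\kappa^F$ with $\Phi:\mathtt{K}^F\to\mathtt{A}^F$ is "microstate-realizable" with high probability.

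To construct the microstates, I will approximate $\Phi$ by a block code. The coordinate map $y\mapsto\Phi(y)(1_F)$ is measurable into a finite set. So for any $\epsilon>0$ there are a finite $W\subset F$, a finite partition of $\mathtt{K}$, and a function $\phi_0$ of the partition labels on $W$ such that $\phi_0$ agrees with $\Phi(\cdot)(1_F)$ except on a set of $\kappa^F$-measure less than $\epsilon$. Let $\Phi_0$ be the associated equivariant block factor. Then $\Phi_{0*}\kappa^F$ lies within $O(|H|\epsilon)$ of $\nu'$ on all cylinder sets over a fixed finite $H$, so it lies in $\cU'$ for small $\epsilon$.

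Given $\sigma$, I draw i.i.d.\ labels $\omega:[n]\to\mathtt{K}$ with law $\kappa$ and set
$$x(v)=\phi_0\big((\omega(\sigma(w)^{-1}v))_{w\in W}\big),$$
the natural finite-volume version of the block code.

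It remains to show that $P^\sigma_x\in\cU$ with probability tending to $1$, jointly over $\sigma$ and $\omega$. Once that holds, $\Omega(\sigma,\cU)$ is nonempty with probability $\to 1$, so $\EE_n\#\Omega\ge 1-o(1)$ and $h^{\ann}\ge 0$. This is the main step. For a uniformly random $\sigma\in\Hom(F,\Sym(n))$, with high probability all but $o(n)$ vertices have an injective radius-$R$ neighborhood in the Schreier graph; this is the standard Benjamini--Schramm convergence of random permutation graphs. At such a vertex $v$, the pattern $(\Pi^\sigma_v(x)(h))_{h\in H}$ depends only on the i.i.d.\ labels on the injective ball $\sigma(HW)^{-1}v$, so its law is exactly the $\Phi_{0*}\kappa^F$ cylinder distribution. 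Hence $\EE[P^\sigma_x(C_\tau)] \to \Phi_{0*}\kappa^F(C_\tau)$. For concentration, I will bound the variance of $P^\sigma_x(C_\tau)$: pattern indicators at two good vertices whose balls are disjoint are independent, and each vertex's ball meets only boundedly many others, so the variance is $O(1/n)$. Chebyshev's inequality then gives convergence in probability. Running this over the finitely many $\tau$ and $H$ determining $\cU'$ yields $P^\sigma_x\in\cU'\subset\cU$ with probability $1-o(1)$. The only delicate points are bookkeeping: matching the cylinder-set topology to $\cU$, and checking that the fraction of non-injective neighborhoods is $o(1)$ in probability for random permutations, which I expect to be routine via a first-moment count of short cycles.
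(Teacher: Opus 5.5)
Your proof is correct, but it takes a different route from the paper. Both arguments begin with the same reduction to $\mathtt{A}^F$ via Proposition~\ref{P:weakBernoullifactor1}. From there the paper's proof is two lines: it cites Kerr's theorem that each approximating Bernoulli factor $\mu_i$ has positive annealed entropy, and then passes to the limit $\mu_i\to\mu$ using upper semicontinuity of $h^{\ann}$.

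You instead prove $h^{\ann}\ge 0$ for Bernoulli factors directly, in three steps:
\begin{enumerate}
\item You approximate the factor map by a block code $\Phi_0$.
\item You run that block code on i.i.d.\ labels over the random Schreier graph of $\sigma$. Your formula gives exactly $\Pi^\sigma_v(x)=\Phi_0(\Pi^\sigma_v(\omega))$, so at a locally injective vertex the $H$-pattern has precisely the law of $\Phi_{0*}\kappa^F$.
\item You show the empirical measure concentrates by a second-moment bound.
\end{enumerate}
The upper semicontinuity step is absorbed into your choice of a Bernoulli factor $\nu'$ inside a basic neighborhood $\cU'\subset\cU$. That step is immediate from the definition of $h^{\ann}$ as an infimum over open neighborhoods.

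What the paper's route buys is brevity and the stronger conclusion of positivity, at the price of invoking a substantial theorem. What yours buys is a self-contained, elementary argument. It is exactly the ``much easier'' non-negativity proof that the paper's footnote says is missing from the literature. It also gives a quenched statement: good microstates exist for a $1-o(1)$ fraction of $\sigma$.

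One small tidy-up: the variance bound need not be restricted to good vertices. Conditionally on $\sigma$, the indicator at $v$ depends only on $\omega$ on the set $\sigma((HW)^{-1})v$. Two such sets for $u$ and $v$ can meet only if $u\in\sigma(D)v$ with $D=(HW)(HW)^{-1}$. Hence $\operatorname{Var}(P^\sigma_x(C_\tau)\mid\sigma)\le |D|/n$ for every $\sigma$. The non-injective vertices only shift the conditional mean, by at most their fraction, which is $o(1)$ in probability by your first-moment count.
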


\begin{proof}
By Proposition \ref{P:weakBernoullifactor1}, we may assume without loss of generality that $X=\mathtt{A}^F$ for some finite alphabet $\mathtt{A}$.  Thus there exist Bernoulli factors $\mu_i \in \Prob(\mathtt{A}^F)$ which converge to $\mu$ as $i\to\infty$. By \cite[Theorem 3.2]{kerr-cpe}, each $\mu_i$ has positive annealed entropy.\footnote{It is much easier to prove that $\mu_i$ has non-negative annealed entropy which is all that we need. However, a short proof of that does not appear to be in the literature.}

Annealed entropy is upper semi-continuous by \cite{MR2630067}. This implies the proposition.
\end{proof}

We now show that finite entropy actions have annealed entropy equal to $- \infty$.

\begin{proposition}\label{P:smoothactions}
Let $F \cc (M,\mu)$ be a pmp action where $M = G/L$. For $f \in F$, let $\alpha_f:M \to M$ be the map $\alpha_f(x)=fx$. Suppose $\alpha_f$ has finite entropy for every $f$.
    \begin{enumerate}
        \item If there exists $f \in F$ such that $\alpha_f$ is ergodic with respect to $\mu$ then the action is finitely generated.
        \item   If $\mu$ is non-atomic and the action is finitely generated then $h^{\ann}(\mu)=-\infty$. 
    \end{enumerate}

\end{proposition}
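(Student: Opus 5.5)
The plan is to treat the two parts separately: (1) reduces to a classical generator theorem for a single transformation, and (2) uses Bowen's explicit formula for the $f$-invariant together with the smooth structure of $M=G/L$.

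For (1), fix $f\in F$ with $\alpha_f$ ergodic. If $\mu$ has an atom, ergodicity of $\alpha_f$ forces $\mu$ to be uniform on a finite orbit, and then any partition separating the points of that orbit is a finite generator. Otherwise $\alpha_f$ is an aperiodic ergodic $\Z$-action with finite entropy. Krieger's generator theorem then gives a finite partition $P$ of $M$ that generates under the powers of $\alpha_f$ alone. A fortiori, the smallest complete $F$-invariant sigma-algebra containing $P$ is everything. So the action is finitely generated with $\phi$ the labelling map of $P$.

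For (2), fix free generators $s_1,\ldots,s_k$ of $F$; since $F$ is non-abelian, $k\ge 2$. I will use the formula from \cite{MR2630067}: for any finite generating partition $Q$,
\begin{equation*}
h^{\ann}(\mu)=F(Q):=(1-2k)H(Q)+\sum_{i=1}^k H(Q\vee s_iQ).
\end{equation*}
Since $H(Q\vee s_iQ)=H(Q)+H(s_iQ\,|\,Q)$, this gives
\begin{equation*}
F(Q)=(1-k)H(Q)+\sum_{i=1}^k H(s_iQ\,|\,Q).
\end{equation*}
Any finite refinement of a generating partition is still generating, so the value is the same for every such $Q$. It therefore suffices to produce generating partitions $Q_\eps$ with $H(Q_\eps)\to\infty$ and $\sup_\eps H(s_iQ_\eps\,|\,Q_\eps)<\infty$ for each $i$. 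Then $F(Q_\eps)\to-\infty$ because $1-k<0$.

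To build $Q_\eps$ I will use the smoothness of the action. Let $m=\dim M$ and fix a compact $C\subset M$ with $\mu(M\setminus C)$ small. Partition $C$ into cells of diameter $\le\eps$ and inradius $\gtrsim\eps$, for instance a cubical grid in finitely many charts; take $M\setminus C$ as one extra cell; then join with the generating partition $P$ from (1). Each $\alpha_{s_i}$ is an isometry of $G/L$ for the metric induced by a right-invariant metric on $G$, or at least uniformly Lipschitz on the relevant compact sets. Hence each cell of $s_iQ_\eps$ lying in $C\cap s_iC$ meets at most $N$ cells of $Q_\eps$, with $N$ depending only on $m$ and the Lipschitz constants, not on $\eps$. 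This gives $H(s_iQ_\eps\,|\,Q_\eps)\le \log N + H(s_iP\,|\,P)+(\text{error from }M\setminus C)$. On the other side, $H(Q_\eps)\ge \mu(C)\, m\log(1/\eps)-O(1)\to\infty$, using that $\mu$ is non-atomic and comparable to the Riemannian volume, as Haar measure is.

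The step I expect to be the main obstacle is the non-compactness of $M$ (cusps). The conditional-entropy bound has to survive the boundary region near $M\setminus C$ and $s_i^{-1}(M\setminus C)$. There a cell of $s_iQ_\eps$ can meet the big cell $M\setminus C$ together with many small cells, which would spoil the uniform constant $N$. I will first try to choose $C$ as a union of sublevel sets of a proper function, so that the sets $s_i^{\pm1}C$ differ from $C$ by a set of measure $\delta$. The bad cells then contribute at most $\delta\cdot O(m\log(1/\eps))+H(\delta)$. Letting $\delta\to0$ much faster than $\eps$, while $H(Q_\eps)\ge c\, m\log(1/\eps)$, still forces $F(Q_\eps)\to-\infty$.
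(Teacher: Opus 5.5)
Part (1) is the paper's argument: Krieger's generator theorem applied to the ergodic, finite-entropy map $\alpha_f$. Your treatment of the atomic case is fine.

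\textbf{A false step in part (2).} As written, part (2) relies on a misquoted formula. It is not true that $h^{\ann}(\mu)=F(Q)$ for every finite generating partition $Q$, nor that $F(Q)$ is independent of $Q$. What \cite{MR2630067} proves is that $\lim_n F(Q^{B_n})$, with $Q^{B_n}=\bigvee_{g\in B_n}gQ$, is independent of the finite generating partition; $F(Q)$ itself is not. For example, take the Bernoulli shift over the rank-$2$ free group with uniform base $\{0,1\}$ and time-zero partition $\alpha$. The generating partition $Q=\alpha\vee s_1s_2\alpha$ has $F(Q)=2\log 2$, while $h^{\ann}=\log 2$. Your own argument also shows the claim must fail here: if every $F(Q_\eps)$ equalled $h^{\ann}(\mu)$, they would all be the same real number and could not tend to $-\infty$.

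The repair is that $F(Q^{B_n})$ is non-increasing in $n$. Passing from $Q^{B_n}$ to $Q^{B_{n+1}}$ is a sequence of splittings, and splittings do not increase $F$. Hence $h^{\ann}(\mu)\le F(Q)$ for every finite generating $Q$. Since you only need an upper bound, $F(Q_\eps)\to-\infty$ then does give $h^{\ann}(\mu)=-\infty$.

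\textbf{Scope of your argument.} With this fix your construction works, but it proves less than the statement.
\begin{itemize}
\item You use a Riemannian structure on $M$, bi-Lipschitz bounds for $\alpha_{s_i}$, and $\mu\le C\cdot\mathrm{Vol}$. The maps $\alpha_{s_i}$ are not isometries for the quotient of a right-invariant metric (their derivative is $\mathrm{Ad}(s_i)$ in the right trivialization), but Lipschitz is all you need.
\item The hypothesis that each $\alpha_f$ has finite entropy is never used in (2).
\item The proposition is stated for $G$ as in Theorem~\ref{T:weak}, an arbitrary lcsc group, and for an arbitrary invariant non-atomic $\mu$. Your argument covers only the Lie group, Haar measure case, which is the case the paper actually applies.
\end{itemize}

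\textbf{The cusp bookkeeping is easier than you fear.} Fix $C$ with $\mu(M\setminus C)=\delta<(k-1)/(2k-1)$. The big cell adds at most $\delta\,(m\log(1/\eps)+O_C(1))$ to each $H(s_iQ_\eps\mid Q_\eps)$. Meanwhile $H(Q_\eps)\ge(1-\delta)m\log(1/\eps)-O_C(1)$. So $F(Q_\eps)\le[(1-k)+(2k-1)\delta]\,m\log(1/\eps)+O_C(1)\to-\infty$ as $\eps\to0$ with $C$ fixed; there is no need to send $\delta\to0$.

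\textbf{Comparison with the paper's route.} The paper quotes the Abramov--Rohlin-type formula of \cite{MR2736889}: for any finite generating $\phi$, $h^{\ann}(\mu)=\lim_n\big[(1-k)H_\mu(\phi^{B_n})+\sum_i h(\alpha_{s_i},\phi^{B_n},\mu)\big]$. The finite-entropy hypothesis bounds each $h(\alpha_{s_i},\phi^{B_n},\mu)$ by $h(\alpha_{s_i},\mu)<\infty$, uniformly in $n$. Non-atomicity gives $H_\mu(\phi^{B_n})\to\infty$. This needs no geometry and no choice of special partitions. Your bounded-overlap estimate for $H(s_iQ_\eps\mid Q_\eps)$ is essentially a hands-on version of the Lipschitz entropy bound the paper imports (via Rufus Bowen and Handel--Kitchens) to verify that hypothesis.
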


\begin{proof}

It follows from Krieger's Theorem \cite{MR259068} that finite entropy and ergodicity implies finite generation. The second statement uses the following alternative formula for annealed entropy. Let $\phi:X \to \mathtt{A}$ be an observable with finite range. For $W \subset F$, define $\phi^W:X \to \mathtt{A}^W$ by
$$\phi^W(x)(w)=\phi(wx).$$

Given an element $f\in F$, let $h(\alpha_f,\phi,\mu)$ be the classical Kolmogorov-Sinai (KS) entropy with respect to the observable $\phi$:
$$h(\alpha_f,\phi,\mu)=\lim_{n\to\infty} \frac{1}{n} H_\mu\left(\phi^{\{e,f,\ldots, f^{n-1}\}}\right).$$
Also let $h(\alpha_f,\mu) = \sup_\phi h(\alpha_f,\phi,\mu) $ be the KS-entropy. 

We may assume $F=\langle s_1,\ldots, s_r\rangle$ is a free group of rank $r$ for some $2\le r < \infty$. Then \cite[Theorem 9.1]{MR2736889}
 shows that
 \begin{align}\label{E:f*}
     h^{\ann}(\mu)=\lim_{n\to\infty} (1-r) H_\mu(\phi^{B_n})+\sum_{i=1}^r h(\alpha_{s_i},\phi^{B_n},\mu)
 \end{align}
where $B_n$ is the ball of radius $n$ in $F$ with respect to the word metric. To be precise, \cite{MR2736889} shows that the right-hand side is the $f$-invariant, which is proven to be equal to the annealed entropy in \cite{MR2653969}.

By hypothesis, 
$$h(\alpha_{s_i},\phi^{B_n},\mu) \le h(\alpha_{s_i},\mu)<\infty$$
for each $i$. Because $\phi$ is generating, $\phi^{B_n}$ increases to the full sigma-algebra. So $H_\mu(\phi^{B_n})$ limits on 
the Shannon entropy of $(X,\mu)$. Because $\mu$ is non-atomic, this Shannon entropy is $+\infty$. So \eqref{E:f*} shows $h^{\ann}(X,\mu)=-\infty$ as claimed.
\end{proof}

\begin{proof}[Proof of Theorem \ref{T:weak}]
By assumption $\nu$ is supported on the $G$-orbit of an element $x$ of $X$ whose $G$-stabilizer is a lattice $L$. Therefore, the action $G \cc (X,\nu)$ is measurably conjugate to $G \cc (G/L, \lambda_{G/L})$ where $\lambda_{G/L}$ is the unique $G$-invariant measure on $G/L$.

By assumption, $G$ contains a discrete non-abelian free subgroup $F$ which satisfies the hypotheses of Proposition \ref{P:smoothactions} with $M=G/L$. Therefore $h^{\ann}(\nu)=-\infty$.  By Proposition \ref{P:nonnegative}, $\nu$ cannot be a weak Bernoulli factor with respect to the action of $F$. By Corollary \ref{C:reductiontofree}, $\nu$ cannot be a weak Poisson factor.

\end{proof}

\section{Proofs of the main theorems}
\label{secPoissonGap}

 Recall that an invariant measure $\mu \in \M_r(\H^d)$ is an {\bf optimally dense measure on radius $r$ packings} if $\density_r(\mu)=D_{\opt}(\H^d,r)$. Similarly, if $\mu$ is an invariant measure on the space of horoballs with $\density(\mu)=D_{\opt}(\H^d,\infty)$, then we say $\mu$ is {\bf optimally dense at radius $\infty$}.

\begin{lemma}\label{L:PoissonGap1}
    Fix $\rho \in (0,\infty]$. Suppose there exists a unique optimally dense measure on radius $\rho$ packings and it is periodic. Then
        $D_{\Pois}(\H^d,\rho)<\Dopt(\H^d,\rho)$.
\end{lemma}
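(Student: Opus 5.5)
We argue by contradiction, using compactness together with the fact that the optimal periodic measure is not a weak Poisson factor. Suppose $D_{\Pois}(\H^d,\rho)=\Dopt(\H^d,\rho)$. By the definition of $D_{\Pois}$, there are Poisson factors $\nu_i$ (in $\M_\rho(\H^d)$ if $\rho<\infty$, or in $\M_\infty(\H^d)\subset\Prob_{\isom}(\Closed(\H^d))$ if $\rho=\infty$) with $\density(\nu_i)\to \Dopt(\H^d,\rho)$. By weak* compactness (of $\M_\rho(\H^d)$, or of the isometry-invariant measures on the compact set $\Closed_\infty(\H^d)\sqcup\{\emptyset,\H^d\}$), we pass to a subsequence converging to some $\nu$. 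By definition $\nu$ is a weak Poisson factor.

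Next we check that $\nu$ is optimally dense. For $\rho<\infty$ this follows from Corollary~\ref{cor:continuityofdensity1}, which gives $\density_\rho(\nu)=\lim_i\density_\rho(\nu_i)=\Dopt(\H^d,\rho)$. For $\rho=\infty$, the limit $\nu$ may a priori charge $\emptyset$ or $\H^d$. The atom at $\emptyset$ has density $0$. The atom at $\H^d$ is impossible, because each $\nu_i$ is supported on genuine horoball packings, whose density is bounded by the simplex bound, which is strictly less than $1$. To make this rigorous, we write $\nu=a\delta_{\emptyset}+b\delta_{\H^d}+(1-a-b)\nu'$ and use the continuity of density (Corollary~\ref{cor:continuityofdensity2}, or Lemma~\ref{L:continuity-general} applied on this closed invariant space). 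This gives $b+(1-a-b)\density(\nu')=\Dopt$, and we must rule out $b>0$. This is the delicate step. We would first try a local upper bound: the density of any horoball packing inside a fixed ball is at most roughly the simplex bound, uniformly, so mass near $\H^d$ cannot arise as a limit. Alternatively, one can argue that the event that a small ball around $\cO$ is entirely covered has probability bounded away from $1$ uniformly over horoball packings, and this open-set condition passes to the limit via Portmanteau. Once $b=0$, we get $\density(\nu')\ge \Dopt$ with $a=0$, so $\nu=\nu'$ is supported on horoball packings and is optimally dense.

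By the uniqueness hypothesis, $\nu$ equals the unique optimal measure $\mu$, which is periodic. Periodic here means it is concentrated on the orbit of a packing $P$ whose symmetry group is a lattice. Strictly, "periodic" was defined as lying in a convex hull of such measures, but the unique optimizer is extreme. Indeed, if $\mu=\sum c_j\mu_j$ with each $\mu_j$ an orbit measure, then every $\mu_j$ must be optimal, so by uniqueness $\mu=\mu_j$ for each $j$. Theorem~\ref{thm:periodic-not-Poisson-factor} then says $\mu$ is not a weak Poisson factor, which contradicts the fact that $\nu$ is one. Hence $D_{\Pois}(\H^d,\rho)<\Dopt(\H^d,\rho)$. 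The main obstacle is the $\rho=\infty$ compactness issue described above; the finite-radius case is immediate.
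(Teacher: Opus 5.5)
Your argument is the paper's own: assume $D_{\Pois}=\Dopt$, take a weak$^*$ limit $\nu$ of (weak) Poisson factors whose densities tend to $\Dopt$, use continuity of density to see that $\nu$ is optimal, use uniqueness to identify $\nu$ with the periodic optimizer, and contradict Theorem~\ref{thm:periodic-not-Poisson-factor}. For $\rho<\infty$ this is complete. Your remark that the unique optimizer is an orbit measure is also correct: every term of a convex combination of orbit measures must itself be optimal. The paper leaves that detail implicit.

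For $\rho=\infty$ there is a gap, at the step you flagged yourself. You correctly note that $\M_\infty(\H^d)$ is not closed, so the limit has the form $a\delta_\emptyset+b\delta_{\H^d}+(1-a-b)\nu'$; the paper's ``by passing to a subsequence'' passes over this. Your handling of $a$ is fine, but $b=0$ is never proved, and neither of your sketches proves it.

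\begin{itemize}
\item The first sketch rests on a false claim. A horoball contains balls of every radius, so a horoball packing can cover all of a fixed ball.
\item The second sketch fails for two reasons. The event $\{C:\overline{B(\cO,\eps)}\subset C\}$ is Fell-closed, not open, so Portmanteau only gives $\nu(\cdot)\ge\limsup_i\nu_i(\cdot)$, which is the wrong direction. And even a bound $b\le c<1$ would not suffice, since $b+(1-a-b)\density(\nu')=\Dopt$ is consistent with a small $b>0$ and $\density(\nu')<\Dopt$.
\end{itemize}

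What you need is, for every $\eps>0$, a Fell-open set $U\ni\H^d$ with $\sup_i\nu_i(U)\le\eps$. One way to get it has three ingredients.

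\begin{itemize}
\item Depth estimate. In horospherical coordinates the volume element of a horoball at depth $t$ is $e^{-(d-1)t}\,dt\,dA$. The mass-transport principle for the unimodular group $\Isom(\H^d)$ then gives
$\nu_i(\cO\text{ lies at depth}\ge R\text{ in some horoball})=e^{-(d-1)R}\density(\nu_i)$.
\item Geometric lemma. An open ball covered by horoballs with disjoint interiors lies in one of them. A compactness argument on local horoball configurations then gives a $c>0$ such that every uncovered point has, within distance $1$, a point at distance at least $c$ from the packing.
\item Conclusion. Consequently every horoball packing in the open set
$U_R=\{C\neq\emptyset:\max_{x\in\overline{B(\cO,R)}}\dist(x,C)<c\}$ has $B(\cO,R-1)$ inside a single horoball. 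Hence $b\le\liminf_i\nu_i(U_R)\le e^{-(d-1)(R-1)}$ for every $R$, so $b=0$.
\end{itemize}
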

\begin{proof}
 Seeking a contradiction, assume that $D_{\Pois}(\H^d,\rho)  = \Dopt(\H^d,\rho)$. Then there is a sequence of Poisson factors $\mu_n$ so that $\density_\rho(\mu_n) \to \Dopt(\H^d,\rho)$.  By passing to a subsequence, this implies that there is a weak Poisson factor $\mu$ with $\density_\rho(\mu) = \Dopt(\H^d,\rho)$.  By the assumption of uniqueness of the optimally dense packing, we have that $\mu$ must be a periodic weak Poisson factor.  This contradicts Theorem \ref{thm:periodic-not-Poisson-factor}. 
\end{proof}

We now show that there is a gap between optimal densities achieved by Poisson factors and optimal densities in dimension $2$.

\begin{proof}[Proof of Theorem \ref{T:poissongap}]
Given a dimension $d\ge 2$, let $R_d \subset (0,\infty]$ be the set of all radii $\rho \in (0,\infty]$ such that $D_{\Pois}(\H^d,\rho)<\Dopt(\H^d,\rho)$. By Lemma \ref{L:PoissonGap1} and Theorems \ref{T:special}, \ref{T:horoball1}, $R_2 \supset \{r_n\}_{n\ge 7} \cup \{\infty\}$. Because $D_{\opt}(\H^2,\cdot)$ is continuous (Theorem \ref{T:continuity2}) and $D_{\Pois}(\H^2,\cdot)$ is left-continuous and upper semi-continuous (Lemma \ref{L:PoissonGap2}), it follows that $R_2$ is open in $(0,\infty]$. This implies the theorem.
\end{proof}

Due to our construction of a Gibbs measure with density bounded by $D_{\Pois}$, we now deduce that non-uniqueness occurs in $\H^2$:

\begin{proof}[Proof of Theorem \ref{T:main}]
    As in the proof of Theorem \ref{T:poissongap}, let $R_2 \subset (0,\infty]$ be the set of all radii $\rho$ such that $D_{\Pois}(\H^2,\rho)<\Dopt(\H^2,\rho)$. It follows from Theorems \ref{T:near-optimal} and \ref{th:sparser-measure} that if 
$r \in R_2$ then $\lambda_u(\H^2,r)<\infty$. This implies the theorem.
\end{proof}

The existence of the gap between Poisson factors and optimal packings will be the main engine to prove that there exist completely saturated packings with suboptimal density.  Our proof will in fact show that any weak Poisson factor $\mu$ with $\density_r(\mu) = D_{\Pois}(\H^2,r)$ is supported on completely saturated packings.  The main idea of the proof is to show that if it is not completely saturated, one may alter $\mu$ to bump up its density, thereby contradicting optimality among weak Poisson factors.

\begin{proof}[Proof of Theorem \ref{T:saturated}]
This proof is essentially the same as the proof from \cite[Theorem 3.1]{bowen2003existence} that if $\mu \in \M_r(\H^d)$ is optimally dense then $\mu$-a.e. packing is completely saturated. So we will provide a sketch only. For this sketch, we will abuse terminology slightly by referring to elements of $\cP_r(\H^d)$ as packings rather than as $2r$-separated point measures.

Let $\cP^{\mathrm{cs}}_r(\H^d) \subset \cP_r(\H^d)$ be the subset of completely saturated packings. Let $\mu \in \M_r(\H^d)$ be a probability measure which is a weak Poisson factor and $\density_r(\mu)=D_{\Pois}(\H^d,r)$. To obtain a contradiction, suppose that $\mu(\cP^{\mathrm{cs}}_r(\H^d))<1$.

For $R>0$, let $X_R \subset \cP_r(\H^d)$ be the set of packings $P$ such that there exist  packings $P',P_0,P_1 \in \cP_r(\bX)$ satisfying
\begin{enumerate}
    \item $P$ is the disjoint union of $P'$ and $P_0$;
    \item $P_0$ and $P_1$ each contain only finitely many disks and $P_0$ contains fewer disks than $P_1$; 
    \item $E(P)=P' \cup P_1 \in \cP_r(\bX)$ is a packing;
    \item $P_0\cup P_1$ is contained in the ball of radius $R$ centered at the origin. 
\end{enumerate}
Because the complement of $\cP^{\mathrm{cs}}_r(\H^d)$ is equal to $\cup_{R=1}^\infty X_R$, there exists a radius $R>0$ such that $\mu(X_R)>0$. 

By definition, there is a function $E:X_R \to \cP_r(\H^d)$ satisfying the conditions mentioned above. This function can be chosen to be Borel (see, e.g., \cite[Lemma 4.1]{bowen2003existence}). Then we can choose a Borel $\Isom(\H^d)$-equivariant map $\Phi:\cP_r(\H^d) \to \cP_r(\H^d)$ so that the restriction of $\Phi(P)$ to the ball of radius $R$ centered at the origin is either equal to $P$ or $E(P)$ restricted to the same ball.

By isometry invariance and linearity of expectation, the density of an element of $\M_r(\H^d)$ may be realized as the expected proportion of the ball of radius $R + r$ that is covered. It follows that $\Phi_*\mu$ has greater density than $\mu$. However, $\Phi_*\mu$ is also a weak Poisson factor because the property of being a weak Poisson factor is closed under factors. This contradicts the optimality of $\mu$.
\end{proof}

\section{Open questions}\label{S:open}

We end with some open problems, many of which are geared towards understanding how one may prove phase transitions beyond $\H^d$.

\subsection{Densities and packings}
A key fact to our approach for large radius is that $\DPois(\H^2,\infty) < \Dopt(\H^2,\infty)$ by using uniqueness and periodicity of the optimizer of $\Dopt(\H^2,\infty)$.  It may be the case that this inequality follows from showing that $\DPois$ tends to zero. \begin{question}\label{q:dpois-infinity}
Let $d \geq 2$.  Do we have $\lim_{r \to \infty}\DPois(\H^d,r) = 0$?
\end{question}

Recall that \cite{bowen2003periodic} proves continuity of $\Dopt(\H^2,r)$.  It is natural to ask if the same holds for $\DPois$.  

\begin{question}
    Let $d \geq 2$.  Is it the case that $r \mapsto \DPois(\H^d,r)$ is continuous?  
\end{question}

Similarly, we suspect that the case of $d = 3$ is a natural next step for continuity of $\Dopt$.

\begin{question}
    Is $r \mapsto \Dopt(\H^3,r)$ continuous? Further, do we have $\Dopt(\H^3,r) = \Dper(\H^3,r)$?
\end{question}

It may be possible to extend the branched covering techniques of \cite{bowen2003existence} in order to prove continuity of $\Dopt(\H^3,\cdot)$; if so, then it may also be possible to prove $\Dopt(\H^3,r) = \Dper(\H^3,r)$ in a similar manner to the $2$-dimensional case since lattices in $\H^3$ have property MD \cite{bowen-tucker-2013}.

While continuity of   $\Dopt$ is interesting in all dimensions, the problem seems quite challenging in arbitrary dimensions.  It would already be interesting to understand if there is continuity at infinity. \begin{question}
    For $d \geq 3$ do we have $\lim_{r \to \infty} \Dopt(\H^d, r) = \Dopt(\H^d,\infty)$?
\end{question}

Theorem \ref{T:continuity2} proves the $d = 2$ case.  The case of $d = 3$ may be more tractable since $\Dopt(\H^3,\infty)$ is known \cite{boroczky1978packing} to match the simplex bound (although not uniquely \cite{kozma2012optimally}).

In order to extend our results to, say, $\H^3$ for all large $r$, one option is to understand the structure of the optimal horoball packings. \begin{question}
    Is it the case that all optimal horoball packings in $\H^3$ are periodic?  
\end{question}

If so, then it follows from Theorem \ref{thm:periodic-not-Poisson-factor} that $D_{\Pois}(\H^3,\infty)<D_{\opt}(\H^3,\infty)$.  It then appears that much of our work can be used to prove a phase transition in $\H^3$ for large $r$.

\subsection{Gibbs measures and probability}

In the case of independent sets on graphs, a rich literature studies the largest independent set achieved by a Bernoulli factor on the infinite regular tree \cite{gamarnik2017limits,rahman2017local}, {with applications to proving limits on certain classes of algorithms}.  One of the key tools used in this area is  understanding the geometry of the collection of independent sets on random regular graphs, since random regular graphs Benjamini-Schramm converge to the infinite regular tree.  In analogy to this, one may seek to understand the structure of radius $r$ packings on random hyperbolic surfaces.

\begin{question}
    Let $M_n$ be either given by: a hyperbolic surface of genus $n$, sampled at random according to Weil-Peterson measure on the moduli space of hyperbolic surfaces; or as a random degree $n$ cover of a fixed hyperbolic surface of genus at least $2$.  Each of these two cases has been shown to Benjamini-Schramm converge to $\H^2$ (\cite{monk2022benjamini,magee2023asymptotic}).  
    
    Consider the maximum density of a radius $r$ disk packing on $M_n$.  Does this density converge in probability as {$n \to\infty$}?   If so, then does this limit converge to $0$ as $r \to \infty$? Note that an affirmative answer to the latter question for either ensemble would confirm Question \ref{q:dpois-infinity} for $d = 2$.
\end{question}

Turning to the space of Gibbs measures, note that a curious corollary of the proof of Theorem \ref{T:main} is that we prove that density is not a function of activity for the hard-sphere model in the hyperbolic plane. This motivates the following problem: 
\begin{question}
    Given radius $r>0$ and density $\delta \in (0,D_{\opt}(\H^2,r))$, is there only one Gibbs measure on $\cP_r(\H^2)$ with density $\delta$? 
\end{question}

Additionally, our proof considers the isometry-invariant Gibbs measures on $\cP_r(\H^2)$.  It is known that there are non-invariant Gibbs measures for the hardcore model on the $\Delta$-regular tree~\cite{spitzer1975markov,kelly1985stochastic} and $\mathbb Z^d$, $d \ge 2$~\cite{dobrushin1968problem}.  In analogy with the discrete case, we ask: 
\begin{question}
    Does there exist a non-isometry-invariant Gibbs measure on $\cP_r(\H^2)$ for sufficiently large $\lam$? In particular, do the Gibbs measures constructed in this paper have trivial tail-sigma algebras? If one of them does not have a trivial tail-sigma algebra then it is not extremal in the space of Gibbs measures and this implies the existence of non-isometry-invariant Gibbs measures.
    
\end{question}

\begin{question}
    In Section \ref{sec:lower-density} we use  spatial birth-death Glauber dynamics starting from the empty packing to obtain a $(r,\lam)$-Gibbs measure which is also a weak Poisson factor. There we have to take a subsequential limit since it is unclear whether the limit exists.  Does the limit actually exist? If so, it would provide a natural class of examples. 
\end{question}

\section*{Acknowledgments}

This project began at a Graduate Mini-School at UT Austin which was funded by NSF DMS--1937215. L.B. is supported in part by NSF grants DMS--2154680 and DMS--2453399.
M.M.\ is supported in part by NSF grants DMS-2336788 and DMS-2246624.  W.P.\ supported in part by NSF grant DMS-2348743.

\bibliographystyle{abbrv}
\bibliography{spheres}

\end{document}